\documentclass{amsart}
\usepackage[utf8]{inputenc}
\usepackage{amssymb}
\usepackage{amsthm}
\usepackage{amsfonts}
\usepackage{amsmath}
\usepackage[colorlinks=true, allcolors=black]{hyperref}
\usepackage{multicol,multirow}
\usepackage{dsfont}
\usepackage{pifont}
\usepackage[text={440pt,575pt},headheight=9pt,centering]{geometry}
\usepackage{comment}
\usepackage{enumitem}
\usepackage{float}
\usepackage{pdflscape}
\usepackage{afterpage}
\usepackage{capt-of}
\usepackage{lipsum}
\usepackage{subcaption}
\usepackage{tabularx}
\usepackage{booktabs}
\usepackage{graphicx}
\usepackage{tikz}
\usetikzlibrary{arrows.meta, decorations.pathmorphing, positioning}
\usepackage{algorithm}
\usepackage{algpseudocode}
\algrenewcommand\algorithmicrequire{\textbf{Input:}}
\algrenewcommand\algorithmicensure{\textbf{Output:}}

\let\oldsection\section
\renewcommand{\section}{\vspace{0.2in}\oldsection}

\let\oldsubsection\subsection
\renewcommand{\subsection}{\vspace{0.1in}\oldsubsection}

\newcommand{\vertiii}[1]{{\left\vert\kern-0.25ex\left\vert\kern-0.25ex\left\vert #1 
    \right\vert\kern-0.25ex\right\vert\kern-0.25ex\right\vert}_\infty}

\DeclareMathOperator*{\argmin}{argmin}
\usepackage{tikz}
\usetikzlibrary{shapes, positioning}

\newtheorem{theorem}{Theorem}
\newtheorem{lemma}[theorem]{Lemma}

\newtheorem{definition}{Definition}

\newtheorem{example}{Example}

\newtheorem{assumption}{Assumption}
\newtheorem{remark}{Remark}

\usepackage{graphicx}

\usepackage{placeins}
\usepackage[comma, sort&compress]{natbib} % has a nice set of citation styles and commands

 \usepackage[foot]{amsaddr}

\usetikzlibrary{calc}
\usepackage{fontawesome5}

\title[Integer programming for invariant predictions]{Convex Mixed-Integer Programming for Invariant Predictions from Multi-environment Data}

\date{}

\author[X. Zhang]{Xiaozhu Zhang $^1$}
\address{$^1$ Department of Statistics, University of Washington}
\email{xzzhang@uw.edu}

\author[R. Mazumder]{Rahul Mazumder $^2$}
\address{$^2$ Sloan School of Management,
Operations Research Center
and MIT Center for Statistics,
Massachusetts Institute of Technology}
\email{rahulmaz@mit.edu}

\author[A. Taeb]{Armeen Taeb $^{1}$}
\email{ataeb@uw.edu}

\begin{document}
\begin{abstract}
Heterogeneous data provide an opportunity to identify causal predictors by revealing which predictive relationships remain invariant.  Under a linear model, we introduce a least-squares coefficient-and-risk (LSCR) invariance principle, requiring both ordinary least squares coefficients and prediction risks to remain constant across environments. Our model neither specifies an intervention mechanism nor assumes a structural causal model for the entire system. We develop an invariance score that quantifies departures from LSCR invariance and use its optimization to construct two complementary procedures for causal predictor identification. First, we propose an estimator that jointly minimizes prediction risk and invariance penalty, and establish non-asymptotic guarantees for estimation and recovery of the causal parents under suitable identifiability conditions. Second, without requiring full identifiability, we develop a testing procedure that minimizes the invariance score over all predictor sets excluding a given variable, enabling valid causal discoveries with asymptotic error control via a multiplier bootstrap. Both procedures admit convex mixed-integer optimization formulations that avoid exhaustive subset enumeration while providing certificates of optimality. Simulations and real data experiments demonstrate substantial statistical and computational improvements over existing methods.
%
%We develop statistically rigorous and computationally tractable procedures for causal predictor identification, enabling full recovery under suitable identifiability conditions and valid partial recovery otherwise. First, we propose an estimator that recovers the causal parents under suitable identifiability conditions, with non-asymptotic guarantees for estimation and support recovery. Second, without requiring full identifiability, we develop a testing procedure that identifies predictors shared by all invariant sets, yielding valid causal discoveries with asymptotic error control via a multiplier bootstrap. 

%We address two key challenges: identifying as many causal predictors as possible while controlling erroneous discoveries, and efficiently searching for invariant structures without exhaustive subset enumeration. 
\end{abstract}
\maketitle
\vspace{-0.1in}
\noindent\small\textbf{Keywords}: discrete optimization, invariance learning, local causal discovery, partial identification
%Data integration; discrete optimization; distribution shift; federated learning; latent projection

%\vspace{-0.1in}
\section{Introduction}

Identifying the causal predictors of a target response is an important problem in causal inference. However, data from a single observational distribution generally cannot distinguish causal predictors from noncausal variables that are associated with the response. Data collected across 
varying settings and conditions, broadly called \emph{environments}, offer an additional source of information for causal discovery. Although interventions or distributional changes may alter the distributions of covariates and their associations with the response, the causal mechanism governing the response remains invariant across environments, provided it is not directly intervened upon.

%that is not directly intervened upon remains invariant across environments.

This idea underlies the invariant causal prediction framework of \citet{peters2016causal}. Let $S^\star$ denote the causal parent set of a target response $Y$. If the causal mechanism for $Y$ is unchanged, then the conditional distribution of $Y$ given $X_{S^\star}$ remains invariant across environments. This establishes a connection between invariance and causality: the true causal parent set must yield an invariant conditional distribution, and any predictor set exhibiting such invariance is a natural candidate for the causal parent set. Motivated by this connection, \citet{peters2016causal} identify invariant predictor sets across environments and take their intersection to recover a subset of $S^\star$. 
%searches for such invariant predictor sets across environments and uses their intersection to discover a subset of the parent set $S^\star$.

Since the work by \citet{peters2016causal}, exploiting invariance across environments has become an important strategy for causal discovery. One line of research estimates invariant predictive models through optimization, typically by incorporating a tractable property of invariance into the estimation objective. Under suitable identifiability conditions, these methods can recover the true causal model. \citet{arjovsky2019invariant} provide an early example with invariant risk minimization (IRM), which 
seeks a data representation such that the same prediction rule minimizes the risk in every environment. However, IRM does not generally guarantee recovery of the causal parent set, even in linear settings, and its practical implementation relies on highly heuristic nonconvex optimization that can be sensitive to initialization and tuning parameters~\citep{rojas2018invariant,Kamath2021DoesIR}.

Other approaches focus on estimating causal regression coefficients. \citet{rothenhausler2019causal} propose Causal Dantzig, which estimates causal coefficients in linear structural models using invariant moment conditions under additive interventions. More recently, \citet{wang2024causal} developed NegDRO, a nonconvex minimax approach that exploits invariance of prediction risks to identify causal coefficients. Although both Causal Dantzig and NegDRO can accommodate hidden confounding, they rely on specific additive intervention mechanisms and assume a linear structural causal model for the entire system, which may limit their applicability. 

In contrast, EILLS, introduced by \citet{fan2024environment}, and CoCo, proposed by \citet{yin2024optimization}, do not require a specific intervention mechanism. Both approaches exploit the first order optimality conditions of least squares regression across environments to identify causal coefficients. EILLS incorporates an invariance penalty into the regression objective, together with indicators for selected features, whereas CoCo relies on continuous nonconvex optimization. Another method is Bayesian invariant prediction (BIP), proposed by \citet{wu2025bayesian}, which uses latent binary indicators to select variables whose conditional relationship with the response remains invariant across environments. %\at{For negdro and causal Dantzig, you also want to mention that they assume a linear SCM on the entire sytem of variables.  }

Among these methods, only EILLS and BIP explicitly select subsets of predictors when identifying an invariant model, while sparse formulations of Causal Dantzig perform variable selection through $\ell_1$ regularization. However, the exact implementations of EILLS and BIP require exhaustive enumeration over candidate predictor subsets, making them computationally prohibitive even in moderate dimensions. Although their approximate implementations alleviate this computational burden, they lack any optimality guarantees. In contrast, $\ell_1$-regularized Causal Dantzig admits a computationally tractable convex formulation, but its reliance on a convex relaxation of variable selection may lead to shrinkage bias and inaccurate support recovery.

The methods discussed above, including IRM, Causal Dantzig, NegDRO, EILLS, CoCo, and BIP, generally require sufficient intervention heterogeneity to guarantee recovery of the causal parent set. Without such heterogeneity, the variables discovered by these methods may include noncausal predictors, and recovery of even a subset of the true causal parent set is not guaranteed.

%Furthermore, their guarantees for recovering the causal parent set require sufficient intervention heterogeneity. When this condition fails, the estimated set may include noncausal variables, with no guarantee of recovering even a subset of the true causal parent set.

An alternative line of work seeks to identify a subset of the causal parent set while controlling type-I error, without requiring full identifiability. As discussed earlier, \citet{peters2016causal} introduce invariant causal prediction (ICP), which tests invariance over all candidate predictor subsets and takes the intersection of the accepted sets. However, exhaustive enumeration of predictor subsets becomes computationally prohibitive as the dimension increases. To address this limitation, \citet{martinet2022variance} formulate causal predictor discovery as a multiple testing problem instead of exhaustive subset enumeration. These tests use Wasserstein variance minimization (WVM) to assess whether the conditional distribution of the response can remain invariant across environments when a given variable is excluded. However, WVM remains computationally expensive. Moreover, both ICP and WVM can have limited statistical power in identifying causal predictors, particularly at moderate sample sizes.
\subsection{Gaps and our contributions}

The preceding discussion highlights two related but distinct tasks in
invariant causal prediction:
\begin{enumerate}[label = (T\arabic*)]
    \item When $S^\star$ is identifiable, how can we accurately
    recover the entire causal parent set?

    \item When $S^\star$ is not fully identifiable, how can we
    identify a subset of $S^\star$ while controlling
    false causal discoveries?
\end{enumerate}

As shown in Table~\ref{tab:related-work}, existing approaches either do not explicitly perform variable selection or generally focus on only one of these two goals. To address both tasks, we introduce an invariance score that quantifies the degree of invariance of a predictor set $S$, with smaller scores indicating greater invariance. For Task~(T1), we incorporate this score as a regularization term in the risk minimization objective to encourage the selection of an invariant predictor set. We establish identifiability conditions for $S^\star$ and show that the resulting estimator recovers $S^\star$ under suitable regularity conditions. For Task~(T2), we adopt the multiple testing framework of \citet{martinet2022variance}. Specifically, for each variable $k$, we search for the minimum invariance score among all predictor sets that exclude $k$. If even this minimum score is sufficiently large, it provides evidence that no invariant predictor set can exclude $k$. Consequently, $k$ must belong to the intersection of all invariant predictor sets and hence to the true causal parent set $S^\star$. We further develop a multiplier bootstrap procedure to approximate the distribution of the minimum score and establish asymptotic type-I error control for false causal discoveries.

\begin{table}[!ht]
\centering
\footnotesize
\setlength{\tabcolsep}{6pt}
\renewcommand{\arraystretch}{1.25}
\begin{tabular}{@{}c|c|c|c|c|c@{}}
\toprule
Method & General interventions & Task 1 & Task 2 & $\ell_0$-Var. selection & Computation \\
\midrule
IRM & \ding{52} & \ding{56} & \ding{56} & \ding{56} & continuous nonconvex \\
Causal Dantzig & \ding{56}  & \ding{52}  & \ding{56} & \ding{56} & convex program \\
NegDRO & \ding{56} & \ding{56} & \ding{56} & \ding{56} & nonconvex minimax \\
EILLS & \ding{52} & \ding{52} & \ding{56} & \ding{52} & enumeration \\
CoCo & \ding{52} & \ding{56}  & \ding{56} & \ding{56} & continuous nonconvex \\
BIP & \ding{52} & \ding{52}   & \ding{56} & \ding{52} & enumeration \\
\midrule
ICP & \ding{52} & only under ident. & \ding{52} & subset testing & $t$-test and $F$-test \\
WVM & \ding{52} & only under ident. & \ding{52} & variable-wise tests & Wasserstein \\
\midrule
Ours & \ding{52} & \ding{52} & \ding{52} & \ding{52} & convex MIP \\
\bottomrule
\end{tabular}
\caption{\small Comparison of related work and our contributions. Note that ICP and WVM can recover $S^\star$ only under identifiability.}
\label{tab:related-work}
\end{table}

Notably, both tasks involve optimization over discrete predictor subsets. To avoid exhaustive subset enumeration, we develop convex mixed-integer programming (MIP) formulations for the invariance score and its subsequent optimization problems. Subset selection is naturally represented through binary variables, and the resulting formulations have convex continuous relaxations, enabling efficient optimization and certificates of global optimality. Unlike existing approaches ICP, EILLS and BIP, our MIP framework allows the use of branch-and-bound algorithms that (i) efficiently prune candidate subsets to reduce the computational burden and (ii) return solutions with guaranteed optimality gaps below a prespecified tolerance. We further develop a scalable coordinate descent heuristic that efficiently finds solutions of high quality. When incorporated into the MIP framework, these solutions can be equipped with certificates of optimality gaps.

Our invariance score is constructed based on the \textit{least-squares coefficient and risk} (LSCR) invariance principle for linear models. This principle requires both the population ordinary least squares (OLS) coefficients and the corresponding OLS risks to remain invariant across environments. Unlike Causal Dantzig and NegDRO, neither the linear structural causal model we consider nor the LSCR principle imposes specific restrictions on the intervention mechanisms, provided that the causal mechanism for the response remains unchanged. Moreover, tests based on the LSCR principle demonstrate greater statistical power than ICP and WVM in our experiments. This improvement can be attributed, in part, to the stronger invariance restrictions imposed by LSCR compared with tests on the regression residuals used in ICP, which can fail to detect changes in regression coefficients. In contrast, WVM involves a highly nonconvex optimization problem, making it susceptible to suboptimal solutions that can compromise its empirical performance. Our MIP framework can also be extended to settings with latent confounding by adopting a different invariance score.

Finally, we demonstrate the performance of our methods using both synthetic data and real data collected from a tightly controlled physical system. First, our estimator for Task~(T1) achieves substantially higher accuracy in recovering causal parents than competing methods as the dimension increases. Second, our procedure for Task~(T2) consistently achieves higher statistical power than competing methods while maintaining type-I error control. For both tasks, our methods remain computationally competitive.

The remainder of the paper is organized as follows. Section~\ref{sec:setup} introduces the problem setup and the LSCR invariance principle. Section~\ref{sec:MIP} presents the MIP formulations for the invariance score and the two complementary tasks. Sections~\ref{sec:task1} and~\ref{sec:task2} investigate Task~(T1) and Task~(T2), respectively. Sections~\ref{sec:experiments} and \ref{sec:realdata} evaluate the proposed approach through synthetic and real data experiments. Section~\ref{sec:extensions} extends our framework to settings with nonlinear causal effects or hidden confounding. Finally, Section~\ref{sec:discussion} discusses future research directions. The code for implementing our methods and reproducing the numerical experiments is available at \url{https://github.com/Xiaozhu-Zhang1998/mip_inv_pred}.

\subsection{Other related literature}
This section reviews three additional lines of related research not discussed earlier and highlights their connections to and distinctions from our proposed framework.

\vspace{0.05in}
\noindent\textbf{Distributional robustness and domain generalization.}
Beyond causal predictor selection, invariance has also been widely used to improve prediction under distributional shifts. One line of work focuses on identifying invariant predictor subsets that generalize to unseen environments. \citet{rojas2018invariant} study this problem in causal transfer learning, while \citet{pfister2021stabilizing} combine regression models based on invariant predictor subsets to improve predictive performance across environments. Another line of work focuses on distributional robustness. \citet{meinshausen2015maximin} introduce maximin effects for robust regression with heterogeneous data, while anchor regression \citep{rothenhausler2021anchor} and DRIG \citep{shen2026causality} aim to achieve robust prediction under distributional shifts caused by interventions. For probabilistic prediction, \citet{henzi2025invariant} propose invariant probabilistic prediction (IPP), which uses proper scoring rules to encourage invariance of prediction risks across environments. While these approaches primarily aim to improve prediction under distributional shifts, our work focuses on identifying causal predictors through invariance across environments.

\vspace{0.05in}
\noindent\textbf{Beyond linear invariant prediction.}
Another line of work extends invariant causal prediction beyond linear models. \citet{heinze2018invariant} develop nonlinear and nonparametric versions of ICP using statistical tests for invariance across environments. \citet{kook2025model} extend ICP to general response types, including categorical, count, and censored outcomes, through transformation models. More recently, \citet{gu2025causality} propose focused adversarial invariant regularization, which uses neural networks and adversarial optimization to identify invariant predictors in nonparametric regression models. These approaches extend invariant causal learning to more flexible model classes, whereas our work focuses on linear models to develop mixed-integer convex formulations with statistical and optimization guarantees. Nonetheless, we describe how our MIP formulations can be extended to handle nonlinear additive models in Section~\ref{sec:extensions}.

\vspace{0.05in}
\noindent\textbf{Mixed-integer optimization in statistical learning.}
Recent advances in mixed-integer programming have made combinatorial optimization increasingly practical for statistical learning. \citet{bertsimas2016best} develop a mixed-integer optimization approach to best subset selection, using initial high-quality solutions to accelerate branch-and-bound algorithms. \citet{bertsimas2020sparse} further introduce a binary convex reformulation and cutting-plane algorithms for scalable $\ell_0$-regularized regression. \citet{hazimeh2020fast} develop efficient coordinate descent and local combinatorial search algorithms for best subset selection. Similar ideas have also been applied to other statistical problems, including multi-task learning \citep{behdin2026multi} and Bayesian network structure learning \citep{manzour2021integer}. These studies demonstrate that exploiting problem structures can substantially improve the computational efficiency of discrete optimization. Convex mixed-integer programming has also been used to learn entire causal graphs \citep{xu2025integer,manzour2021integer,kucukyavuz2023consistent,Zhang2025ConvexMP}, although this is a substantially more challenging problem that typically requires modeling assumptions on the entire system of variables. %Building on these advances, we develop a convex mixed-integer framework for identifying causal predictors of a target response through invariance across environments, without imposing structural modeling assumptions on the entire system.

%\vspace{0.1in}
\subsection{Notation} 
\label{sec:notation}
For a random variable $Z$ with observations $\{Z_i\}_{i=1}^m$, we define the empirical squared norm by $\|Z\|_m^2:=m^{-1}\sum_{i=1}^m Z_i^2$. Similarly, for two random variables $Z$ and $W$ with paired observations $\{(Z_i,W_i)\}_{i=1}^m$, we define the empirical inner product by $\langle Z,W\rangle_m:=m^{-1}\sum_{i=1}^m Z_iW_i$. For any index set $S \subseteq [p]$ and random vector $U \in \mathbb{R}^p$, we use $U_S\in\mathbb{R}^{|S|}$ to denote the subvector of $U$ indexed by $S$. For a vector $w \in \mathbb{R}^p$, we use $\mathrm{supp}(w) = \{j:w_j \neq 0\}$ to denote its support. Lastly, for two sequences $a_n$ and $b_n$, we write $a_n \lesssim b_n$ if there exists a constant $C>0$ and $n_{0} \in \mathbb{N}$ such that $a_{n}\le C b_{n}$ for all $n \geq n_{0}$. We write $a_n \asymp b_n$ if both $a_{n} \lesssim b_{n}$ and $a_{n} \gtrsim b_{n}$ hold. We write $a_n = o(b_n)$ if $a_n /b_n \rightarrow 0$ as $n \rightarrow \infty$.

%We consider multi-environment data $\{(X^e,Y^e)\}_{e\in\mathcal{E}}$, where $X^e\in\mathbb{R}^p$ and $Y^e\in\mathbb{R}$. For any index set $S\subseteq[p]$, we use $X_S^e\in\mathbb{R}^{|S|}$ to denote the subvector of $X^e$ indexed by $S$. We write $X_i^e\in\mathbb{R}^p$ and $Y_i^e\in\mathbb{R}$ for the $i$th observations of $X^e$ and $Y^e$, respectively, and let $n^e$ denote the sample size in environment $e$. We define the total and minimum sample sizes as $n^\star:=\sum_{e\in\mathcal{E}}n^e$ \at{why not $n_\text{total}$ and $n_\text{min}$} and $n:=\min_{e\in\mathcal{E}}n^e$. When needed, we use $\boldsymbol{X}^e\in\mathbb{R}^{n^e\times p}$ and $\boldsymbol{Y}^e\in\mathbb{R}^{n^e}$ to denote the corresponding sample matrix and response vector. 

%\vspace{0.2in}
\section{Setup: causality and invariant predictions from heterogeneous data}
\label{sec:setup}

\vspace{-0.1in}
\subsection{Linear structural causal model}
We suppose we have access to observations of variables under different environments, such as experimental conditions in which some of the variables may have been manipulated, that is, received interventions. To represent this setting, we consider covariates $X \in \mathbb{R}^p$, and response variable $Y \in \mathbb{R}$. 
The interventions on these variables take values in the set $\mathcal{E}$; each $e \in \mathcal{E}$ represents a different environment that generates the random vector $(X^{e},Y^{e})$. We posit that for every $e \in \mathcal{E}$, the covariates $X^e$ and the response $Y^e$ are related via the following linear structural causal model (SCM) for some distribution $\mathcal{F}$:   
%interventions on these variables are generated randomly from adiscrete random variable E taking on values in the set E; eache ∈ E represents a different environment that generates therandom vectors (X e, Y e).
\begin{eqnarray}
\begin{aligned}
Y^e = {\beta^\star}^\top X^{e} + \varepsilon^{e}, \quad \mathrm{supp}(\beta^\star)=S^\star, \quad \varepsilon^{e} \perp X^{e}_{S^\star},\quad \varepsilon^{e} \sim \mathcal{F},\quad  \mathbb{E}[\varepsilon^{e}]=0. %, \quad \mathbb{E}[\varepsilon^{(e)}]=0.
\end{aligned}
\label{eqn:scm_main}
\end{eqnarray}
Here, $\beta^\star \in \mathbb{R}^p$ is a vector encoding causal relationships between the covariates and the response variable, namely $\beta^\star_{j} \neq 0$ if covariate $X^{e}_{j}$ is a causal parent of $Y^e$ in the graph among $(X^{e},Y^{e})$. The subset $S^\star \subseteq \{1,2,\dots,p\}$ represents the covariates that are causal parents of $Y^{e}$, and the coefficients $\beta^\star_{S^\star}$ represent the strength of these causal effects. The SCM \eqref{eqn:scm_main} thus posits that the structure and strength of the causal parents of $Y$ do not change across $e\in \mathcal{E}$. The noise variable $\varepsilon^{e} \in \mathbb{R}$ is assumed to be independent of the causal parents $X^{e}_{S^\star}$. This independence condition corresponds to a causal sufficiency assumption, namely that there is no unobserved confounding simultaneously affecting the covariates and the response variable. In Appendix~\ref{sec:app:confounding}, we discuss extensions to the setting with latent confounding. Importantly, $\varepsilon^{e}$ can have dependencies with other coordinates of $X^{e}$ outside the index set $S^\star$. Such dependencies are expected when some of the covariates are descendants of $Y$ in the causal graph.

The condition that the distribution of $\varepsilon^{e}$ remains the same across all $e\in \mathcal{E}$ means that the response variable does not receive a direct intervention, although its marginal distribution could still be changed due to interventions on its causal ancestors. Finally, the zero-mean condition on $\varepsilon^{e}$ can be assumed without loss of generality, as a nonzero mean can be absorbed into the term ${\beta^\star}^\top X^{{e}}$ by augmenting the covariates with a constant intercept variable. The parameter $\beta^\star$ and the distribution $\mathcal{F}$ are not assumed to be known. 

For every $e \in \mathcal{E}$, we assume that we have $n^e$ independent and identically distributed observations of $(X^e,Y^e)$ from the SCM \eqref{eqn:scm_main}. We further assume that the samples are independent across environments.

\subsection{LSCR-invariant prediction models} 
\label{sec:setup:LSCR}
The SCM \eqref{eqn:scm_main} exhibits important invariances that we highlight in this section. For any $S \subseteq [p]$ and $e\in\mathcal{E}$, define
\begin{eqnarray}
\begin{aligned}
\beta^{e,S} :=  \argmin_{\substack{\tilde\beta\in\mathbb R^p,\
\mathrm{supp}(\tilde\beta)\subseteq S}}
\mathbb E\left[\left( Y^{e}-\tilde\beta^\top X^{e}\right)^2\right], \qquad
\nu^{e,S}:=  \mathbb{E}\left[\left( Y^{e}-{\beta^{e,S}}^\top X^{e}\right)^2\right],
\end{aligned}
\label{eqn:beta_r}
\end{eqnarray}
as the population OLS solution when linearly regressing $Y^e$ on $X_S^{e}$, and the corresponding risk. We propose the following invariance principle based on the OLS coefficients and risk.

\vspace{0.05in}
\begin{definition}[LSCR-invariant prediction model] 
\label{def:LSCR}
Let $\mathrm{supp}(\beta) \subseteq S$ for $\beta\in\mathbb R^p$. We say that $\beta$ is a {least-squares coefficient and risk (LSCR) invariant prediction model} if:
$$ \beta = \beta^{e,S} \quad \text{ and }\quad \nu^{e,S} = \nu^{f,S} \quad  \text{for every }e,f\in\mathcal{E}.$$
The corresponding LSCR-invariant collection is defined as:
\begin{eqnarray*}
\begin{aligned}
%\mathcal{S}_\mathrm{LSC}(\mathcal{E})&:= \{S \subseteq \{1,2,\dots,p\}: \exists \text{ LSC-invariant prediction model }\beta \text{ with support }S\},\\
\mathcal{S}_\mathrm{LSCR}(\mathcal{E})&:=\{S \subseteq \{1,2,\dots,p\}: \beta^{e,S} = \beta^{f,S} \text{ and } \nu^{e,S} = \nu^{f,S}, \ \forall e,f\in\mathcal{E} \},
\end{aligned}
\end{eqnarray*}
which contains all subsets of covariates that result in LSCR-invariant prediction models.

%We say that a coefficient vector $\beta \in \mathbb{R}^p$ with $\mathrm{support}(\beta)=S$ is least squares coefficient (LSC) invariant prediction model if 
%We say that $\beta$ is a least squares coefficient and risk (LSCR) invariant if it is LSC-invariant and for some $\sigma > 0$ $$\text{for all }e\in\mathcal{E},\quad  \mathbb{E}[(Y^e-X^e\beta)^2]=\sigma^2.$$ 
%$$Y^e = X^e\beta + \tilde{\varepsilon}^e, \quad \tilde{\varepsilon}^e \perp X^e_S,\quad   \tilde{\var.$$
%for some distribution $\tilde{\mathcal{F}}$.
\end{definition}

An LSCR prediction model $\beta$ is an OLS solution that remains invariant across environments and whose corresponding residual risk is also constant across environments. Under this definition, the invariant coefficient vector and its support determine each other. Specifically, if $\beta$ is LSCR-invariant, then its support $S=\operatorname{supp}(\beta)$ necessarily belongs to $\mathcal{S}_{\rm LSCR}(\mathcal{E})$. Conversely, if $S\in\mathcal{S}_{\rm LSCR}(\mathcal{E})$, then the associated invariant coefficient is uniquely determined as the population OLS coefficient restricted to $S$. In particular,
\begin{align} \label{eq:model-equiv}
\beta
=
\beta^{e,S}
=
\argmin_{\substack{\tilde\beta\in\mathbb R^p,\
\mathrm{supp}(\tilde\beta)\subseteq S}} \ \sum_{e\in\mathcal{E}}  \  \mathbb{E} \left[ \left( Y^e - \tilde\beta^\top X^e \right)^2  \right].
\end{align}
Thus, the same coefficient is obtained from each individual environment as well as from the pooled population distribution. Throughout the paper, we use the term \emph{model} to refer interchangeably to the coefficient vector $\beta$ and its support $S$ whenever the correspondence is clear from context.

Under the SCM in \eqref{eqn:scm_main}, both the causal coefficient vector $\beta^\star$, which links $X^{e}_{S^\star}$ to $Y^e$, and the distribution of the noise term $\varepsilon^e$ are invariant over $e\in\mathcal{E}$. It therefore follows that $\beta^\star$ is an LSCR prediction model.
However, as illustrated in the following example, unless the data across the environments are sufficiently heterogeneous, the collection $\mathcal{S}_{\rm LSCR}(\mathcal{E})$ may also contain models that include spurious covariates. At one extreme, if there is only a single environment, then every subset $S \subseteq [p]$ belongs to $\mathcal{S}_{\rm LSCR}(\mathcal{E})$.

\vspace{0.05in}
\begin{example}\label{ex:teaser}
Consider the linear SCM in Figure~\ref{fig:illus} with $\mathcal{E}=\{1,2,3\}$. The set $S^\star=\{1,3\}$ denotes the true causal parent set of the response $Y$. We consider several candidate subsets and their corresponding population OLS solutions $\beta^{e,S}$ and risks $\nu^{e,S}$ in Table~\ref{tab:lscr-example}. First, the set $S_1=\{1,3\}$ is the true causal parent set of $Y$, and is therefore LSCR-invariant. In particular, $ \beta^{e,S_1}=\beta^\star$ and $\nu^{e,S_1}=\sigma^2$ for all $e\in\mathcal{E}$. Interestingly, the set $S_2=\{1,2\}$ is also LSCR-invariant, even though $X_2$ is not a direct parent of $Y$. This is because the invariant relationship $X_3^e=\gamma_3X_2^e+\varepsilon_3^e$ allows $X_2$ to serve as an alternative predictor for $X_3$. However, if interventions were also imposed on $\gamma_3$ or $\eta_3$, then $S_2$ would no longer be LSCR-invariant. Thus, distinguishing alternative invariant predictors from the true causal parents requires sufficiently rich intervention mechanisms. In contrast, for $S_3=\{4\}$, both $\beta^{e,S_3}$ and $\nu^{e,S_3}$ vary across environments because $V_1 \neq V_2 \neq V_3$ where $V_e=\operatorname{Var}(Y^e)$. Therefore, $S_3$ is not LSCR-invariant. By enumerating all subsets in this example, we obtain
$$
\mathcal{S}_{\rm LSCR}(\mathcal{E})
=
\left\{
\{1,2\},\ \{1,3\},\ \{1,2,3\},\
\{1,2,4\},\ \{1,3,4\},\ \{1,2,3,4\}
\right\},
$$
where every invariant set contains either $\{1,2\}$ or $\{1,3\}$ as a subset.

% $$
% V_1 = (\beta_1^\star)^2 + (\beta_3^\star\gamma_3)^2 + (\beta_3^\star)^2\eta_3^2 + \sigma^2, \quad
% V_2 = (\beta_1^\star)^2a^2 + (\beta_3^\star\gamma_3)^2 + (\beta_3^\star)^2\eta_3^2 + \sigma^2, \quad
% V_3 = (\beta_1^\star)^2 + (\beta_3^\star\gamma_3)^2a^2 + (\beta_3^\star)^2\eta_3^2 + \sigma^2.
% $$
\label{example:1}
\end{example}

\vspace{0.1in}
\begin{figure}[!ht]
\centering
\resizebox{0.95\textwidth}{!}{
\begin{tikzpicture}[
    >=stealth,
    var/.style={font=\Large},
    env/.style={font=\large\bfseries},
    eq/.style={font=\small, align=left}
]
%%%%%%%%%%%%%%%%%%%%%%%%%%%%%%%%%%%%%%%%%%%%%%%%%%
%% Environment 1
%%%%%%%%%%%%%%%%%%%%%%%%%%%%%%%%%%%%%%%%%%%%%%%%%%
\node[env] at (-2.4,2.7) {$e=1$};
\node[var] (x2a) at (-1.2,1.5) {$X_2$};
\node[var] (x3a) at (0,0.4) {$X_3$};
\node[var] (ya)  at (-1.2,-0.8) {$Y$};
\node[var] (x1a) at (-2.4,0.4) {$X_1$};
\node[var] (x4a) at (-1.2,-2.0) {$X_4$};
\draw[->,thick] (x2a) -- (x3a);
\draw[->,thick] (x1a) -- (ya);
\draw[->,thick] (x3a) -- (ya);
\draw[->,thick] (ya) -- (x4a);
\node[eq] at (-1.2,-4.2) {
$X_1^{(1)} \sim \mathcal N(0,1)$ \\
$X_2^{(1)} \sim \mathcal N(0,1)$ \\
$X_3^{(1)} = \gamma_3 X_2^{(1)} + \mathcal{N}(0,\eta_3^2)$\\
$Y^{(1)} = \beta_1^\star X_1^{(1)} + \beta_3^\star X_3^{(1)} + \mathcal{N}(0,\sigma^2)$\\
$X_4^{(1)} = \gamma_4 Y^{(1)} + \mathcal{N}(0,\eta_4^2)$
};
%%%%%%%%%%%%%%%%%%%%%%%%%%%%%%%%%%%%%%%%%%%%%%%%%%
%% Environment 2
%%%%%%%%%%%%%%%%%%%%%%%%%%%%%%%%%%%%%%%%%%%%%%%%%%
\node[env] at (3.6,2.7) {$e=2$};
\node[var] (x2b) at (4.8,1.5) {$X_2$};
\node[var] (x3b) at (6,0.4) {$X_3$};
\node[var] (yb)  at (4.8,-0.8) {$Y$};
\node[var] (x1b) at (3.6,0.4) {$X_1$};
\node[var] (x4b) at (4.8,-2.0) {$X_4$};
\draw[->,thick] (x2b) -- (x3b);
\draw[->,thick] (x1b) -- (yb);
\draw[->,thick] (x3b) -- (yb);
\draw[->,thick] (yb) -- (x4b);
% hammer indicating intervention
\node at ($(x1b)+(-0.6,0)$) {\Large\faHammer};
\node[eq] at (4.8,-4.2) {
$X_1^{(2)} \sim \mathsf{Unif}\{-a, a\}$\\
$X_2^{(2)} \sim \mathcal{N}(0,1)$\\
$X_3^{(2)} = \gamma_3 X_2^{(2)} + \mathcal{N}(0,\eta_3^2)$\\
$Y^{(2)} = \beta_1^\star X_1^{(2)} + \beta_3^\star X_3^{(2)} + \mathcal{N}(0,\sigma^2)$\\
$X_4^{(2)} = \gamma_4 Y^{(2)} + \mathcal{N}(0,\eta_4^2)$
};
%%%%%%%%%%%%%%%%%%%%%%%%%%%%%%%%%%%%%%%%%%%%%%%%%%
%% Environment 3
%%%%%%%%%%%%%%%%%%%%%%%%%%%%%%%%%%%%%%%%%%%%%%%%%%
\node[env] at (9.6,2.7) {$e=3$};
\node[var] (x2c) at (10.8,1.5) {$X_2$};
\node[var] (x3c) at (12,0.4) {$X_3$};
\node[var] (yc)  at (10.8,-0.8) {$Y$};
\node[var] (x1c) at (9.6,0.4) {$X_1$};
\node[var] (x4c) at (10.8,-2.0) {$X_4$};
\draw[->,thick] (x2c) -- (x3c);
\draw[->,thick] (x1c) -- (yc);
\draw[->,thick] (x3c) -- (yc);
\draw[->,thick] (yc) -- (x4c);
% hammer indicating intervention
\node at ($(x2c)+(-0.6,0)$) {\Large\faHammer};
\node[eq] at (10.8,-4.2) {
$X_1^{(3)} \sim \mathcal{N}(0,1)$\\
$X_2^{(3)} \sim \mathsf{Unif}\{-a,a\} $\\
$X_3^{(3)} = \gamma_3 X_2^{(3)} + \mathcal{N}(0,\eta_3^2)$\\
$Y^{(3)} = \beta_1^\star X_1^{(3)} + \beta_3^\star X_3^{(3)} + \mathcal{N}(0,\sigma^2)$\\
$X_4^{(3)} = \gamma_4 Y^{(3)} + \mathcal N(0,\eta_4^2)$
};
\end{tikzpicture}
}
\caption{\small SCM over four variables across three environments. The hammer indicates the intervened variable. Environment 1 represents an observational setting, and in environments 2 and 3, variables $X_1$ and $X_2$ have received stochastic interventions, respectively.}
\label{fig:illus}
\end{figure}

\begin{table}[!ht]
\centering
{\small
\begin{tabular}{c|c|c|c}
\toprule
 & $S_1=\{1,3\}$ & $S_2=\{1,2\}$ & $S_3=\{4\}$ \\
\midrule
$\beta^{e,S}$
&
$\begin{pmatrix}
\beta_1^\star\\
\beta_3^\star
\end{pmatrix}$
&
$\begin{pmatrix}
\beta_1^\star\\
\beta_3^\star\gamma_3
\end{pmatrix}$
&
$\displaystyle
\frac{\gamma_4V_e}
{\gamma_4^2V_e+\eta_4^2}
$
\\[4mm]
$\nu^{e,S}$
&
$\sigma^2$
&
$(\beta_3^\star)^2\eta_3^2+\sigma^2$
&
$\displaystyle
\frac{V_e\eta_4^2}
{\gamma_4^2V_e+\eta_4^2}
$
\\[4mm]
LSCR-invariant
&
Yes
&
Yes
&
No
\\
\bottomrule
\end{tabular}
}
\caption{\small Population OLS coefficients and risks for different candidate predictor sets. Here $V_e:=\operatorname{Var}(Y^e)$.}
\label{tab:lscr-example}
\end{table}

\vspace{0.05in}
Although the LSCR principle requires invariance only of the first two conditional moments, the SCM \eqref{eqn:scm_main} may exhibit even stronger invariance. In particular, the conditional distribution $Y^e\mid X^e_{S^\star}$ is the same for all environments $e\in\mathcal{E}$. Note that when $Y^e\mid X_S^e\sim \mathcal{N}( X^{e\top}\beta^{e,S},\nu^{e,S})$, LSCR invariance is equivalent to invariance of the full conditional distribution. More generally, this equivalence continues to hold whenever the conditional distribution is fully characterized by $\mathbb{E}\left[ Y^e \mid X^e_S \right] = X^{e\top} \beta^{e,S}$ and $\mathrm{Var}\left[ Y^e \mid X^e_S \right] = \nu^{e,S}$. In addition to being equivalent under these settings, LSCR invariance is considerably more amenable to convex mixed-integer optimization formulations, which enable optimality guarantees. We therefore focus on LSCR invariance throughout the paper. 
%Throughout the paper, we focus on the LSCR invariance.\at{need to say why? for computational reasons}

%\vspace{0.1in}
\subsection{Comparison to other invariance notions}

We compare the LSCR invariance principle with other notions of invariance used in related work.
First, the LSCR principle is essentially the same as that used in ICP~\citep{peters2016causal}, but the two methods employ different testing procedures. For each candidate predictor set $S$, ICP fits a pooled linear regression and tests whether the residual means and variances are invariant across environments using $t$-tests and $F$-tests with Bonferroni correction. However, differences in environment specific regression coefficients may not manifest as differences in the means or variances of the pooled residuals. Consequently, for a subset $S$, ICP can fail to detect violations of invariance in the conditional distribution $Y^e|X_S^e$ even at the population level, as illustrated in the following example.

\vspace{0.05in}
\begin{example}Consider two environments with the invariant causal mechanism $Y^e = X_1^e + \varepsilon^e$, $e \in \{1,2\}$. Suppose additionally, $X_1^e \sim \mathcal{N}(0,v_e)$ and $X_2^e = X_1^e + U^e$, $U^e \sim \mathcal{N}(0,u_e)$ with $X_1^e$, $\varepsilon^e$ and $U^e$ being mutually independent. Suppose $v_1=1$, $v_2=4$, $u_1=4$ and $u_2=1$. Consider set $S = \{2\}$. It is straightforward to see that the distribution $Y^e|X_S^e$ is not invariant across $e$, since $\beta^{1,S} \neq \beta^{2,S}$. However,
a straightforward calculation shows that the pooled residuals have identical distributions across the two environments. Consequently, the criterion used by ICP fails to reject $S$. Since ICP takes the intersection of all non-rejected sets, retaining $S=\{2\}$ alongside $\{1\}$ results in an empty intersection, thereby failing to identify the parent set $\{1\}$.  
\end{example}
This example illustrates the limits of the tests ICP uses for quantifying invariance and helps explain the greater statistical power of our MIP testing procedure for moderate sample sizes, as shown in Figure~\ref{fig:teaser1}.

Second, the condition $\beta = \beta^{e,S}$ for all $e\in\mathcal{E}$ in Definition~\ref{def:LSCR} is essentially equivalent to the invariance principle $\mathbb{E}\left[(Y^e-\beta_S^\top X_S^e)X_S^e\right]=0$ for all $e\in\mathcal{E}$, used in EILLS~\citep{fan2024environment}. However, by additionally requiring invariance of $\nu^{e,S}$ across environments, our proposed LSCR principle can reduce the collection of invariant predictor sets. As illustrated in Example~\ref{ex:identification}, this additional requirement also weakens the identification condition. Intuitively, exploiting more information from the conditional distribution helps distinguish incorrect models through comparisons across environments.

Relatedly, IRM~\citep{arjovsky2019invariant} seeks a representation under which the same prediction rule minimizes risk across environments. When the representation selects a predictor subset and the loss is squared error, this condition reduces to invariance of the corresponding population OLS coefficients. CoCo~\citep{yin2024optimization} similarly exploits coefficient invariance through first order optimality conditions. Like EILLS, both methods rely solely on coefficient invariance, potentially admitting more invariant predictor sets and requiring stronger conditions for causal parent identification. 

%Causal Dantzig~\citep{rothenhausler2019causal} instead exploits invariance of the inner product $\mathbb{E}[X^e(Y^e-\beta^\top X^e)]$ across environments. Unlike LSCR, this principle does not require the moment condition to vanish in each environment, allowing it to accommodate hidden confounding under additive interventions. However, it relies on specific assumptions about the intervention mechanisms. 
%Third, NegDRO~\citep{wang2024causal} considers a risk-invariance principle that requires $\mathbb{E}[(Y^e-\beta^\top X^e)^2]=\mathbb{E}[(Y^f-\beta^\top X^f)^2]$ for all $e,f\in\mathcal{E}$.
%This principle compares unconditional prediction risks and does not require the outcome noise to be independent of the causal predictors, allowing NegDRO to accommodate hidden confounding under suitable intervention assumptions. In contrast, our LSCR principle is based on the conditional relationship between $Y^e$ and a selected predictor set $X_S^e$, requiring both the corresponding population OLS coefficients $\beta^{e,S}$ and minimum OLS risks $\nu^{e,S}$ to be invariant across environments. While our principle does not generally accommodate hidden confounding, it directly connects invariance to the selection of predictor subsets.  The resulting combinatorial optimization problem is addressed through our MIP framework.
Third, Causal Dantzig~\citep{rothenhausler2019causal} exploits invariance of the moment $\mathbb{E}[X^e(Y^e-\beta^\top X^e)]$ across environments, without requiring it to vanish. NegDRO~\citep{wang2024causal} instead uses risk invariance, requiring $\mathbb{E}[(Y^e-\beta^\top X^e)^2]=\mathbb{E}[(Y^f-\beta^\top X^f)^2]$ for all $e,f\in\mathcal{E}$. Both methods can accommodate hidden confounding, but require specific additive intervention mechanisms and a linear structural model for the entire system. Moreover, NegDRO does not explicitly perform variable selection. %In contrast, LSCR requires invariance of both the population OLS coefficients $\beta^{e,S}$ and minimum OLS risks $\nu^{e,S}$ for a selected predictor set $S$. %Unlike Causal Dantzig and NegDRO, LSCR does not generally accommodate hidden confounding. However, it explicitly evaluates invariance over candidate predictor subsets, enabling causal parent identification through our MIP framework. %without exhaustive subset enumeration.

Finally, \cite{martinet2022variance} consider a distributional invariance principle that requires the full distributions of prediction residuals to remain identical across environments, namely $Y^e-f(X_S^e)\overset{d}{=}Y^f-f(X_S^f)$ for all $e,f\in\mathcal E$. Compared with LSCR, this principle exploits information beyond the first two moments and may therefore provide greater statistical power in certain settings. However, testing this invariance requires computationally demanding, generally nonconvex Wasserstein variance minimization, making variable selection challenging in practice. A related approach is BIP~\citep{wu2025bayesian}, which models invariance of the full conditional distribution through a Bayesian framework, requiring $p_e(y\mid x_S)=p_f(y\mid x_S)$ for all $e,f\in\mathcal{E}$. BIP introduces latent binary indicators for variable selection and performs posterior inference over candidate predictor sets. However, its exact implementation requires exhaustive subset enumeration.

\section{Invariant learning via convex mixed-integer programming}
\label{sec:MIP}

\vspace{-0.1in}
\subsection{An invariance score and two invariant learning tasks}
\label{sec:score}
A key component of invariant learning is to quantify the extent to which a candidate model remains stable across environments. Motivated by the LSCR principle, we define the following invariance scores, in population and finite sample, for a given subset $S\subseteq [p]$:
\begin{align}
\begin{split}
    \mathsf{I}(S) &:= \sum_{\substack{e,f \in \mathcal{E} \\ e<f}} \omega^{e,f}\Big[\left\| \beta^{e,S} - \beta^{f,S} \right\|_2^2 + \rho\left( \nu^{e,S} - \nu^{f,S} \right)^2\Big],\\
    \widehat{\mathsf{I}}(S)&:= \sum_{\substack{e,f \in \mathcal{E} \\ e<f}} \omega^{e,f}\Big[\left\| \widehat\beta^{e,S} - \widehat\beta^{f,S} \right\|_2^2 + \rho\left( \widehat\nu^{e,S} - \widehat\nu^{f,S} \right)^2\Big].
\end{split}
\label{eqn:score_pop}
\end{align}
%where $\widehat{\beta}^{e,S}$ and $\widehat{\nu}^{e,S}$ are e
%\at{why do i need two regularization parameters? Why not just have one here?}
The score $\mathsf{I}(S)$ measures violations of the two invariance conditions in Definition~\ref{def:LSCR} through the weighted pairwise discrepancies between $\beta^{e,S}$ and $\beta^{f,S}$, and between $\nu^{e,S}$ and $\nu^{f,S}$, across environments. Here, the weights $\omega^{e,f}>0$ satisfy $\sum_{e,f\in\mathcal{E},e<f} \omega^{e,f}=1$, and we typically choose $\omega^{e,f}= n^en^f/m$ where $m = \sum_{\substack{e',f' \in \mathcal{E},e'<f'}}n^{e'}n^{f'}$. Moreover, $\rho > 0$ is a regularization parameter that weighs different invariance conditions; we discuss its choices below.
By construction,  $\mathsf{I}(\cdot)$ is non-negative and attains the value of zero if and only if $S\in \mathcal{S}_{\rm LSCR}(\mathcal{E})$. We therefore use $\mathsf{I}(S)$ to quantify the degree of invariance of $S$, with smaller values indicating greater invariance across environments. 

The corresponding finite-sample invariance score, $\widehat{\mathsf{I}}(S)$, replaces ${\beta}^{e,S}$ and ${\nu}^{e,S}$ by their empirical counterparts $\widehat {\beta}^{e,S} $ and $\widehat {\nu}^{e,S}$. This empirical invariance score can serve as a building block for a broad range of invariant learning tasks. We describe these tasks below. %Using the reformulation in \eqref{eq:inv-score}, each can be cast as a convex mixed-integer program. 

%\at{still missing that these can be formulated as convex mip?}
%Small values of $\widehat{\mathsf{I}}(S)$ indicate that model $S$ is stable across environments, suggesting that $\widehat{\mathsf{I}}(S)$ can be used to favor invariant predictors. 

\vspace{0.1in}
\noindent\textbf{(Task 1) Invariance-regularized prediction.} Let $\widehat{\mathcal{R}}(\beta)$ denote a weighted mean squared error computed over the pooled data, of a linear model with coefficient vector $\beta \in \mathbb{R}^p$. (We will describe its definition in Section~\ref{sec:task1}). We consider the following optimization problem, which trades off predictive accuracy and invariance across environments while encouraging sparsity through regularization parameters $\lambda_n,\gamma\geq 0$:
\begin{align}
\label{eq:related-task-1}
\min_{\substack{S\subseteq[p],\, \beta\in\mathbb{R}^p}}
\quad
\widehat{\mathcal{R}}(\beta)+ \gamma\widehat{\mathsf{I}}(S) + \lambda_n |S|
\quad\text{subject to: }
\mathrm{supp}(\beta) \subseteq S.
\end{align}
Here, the prediction term $\widehat{\mathcal{R}}(\beta)$ favors models with good predictive performance across environments, the invariance term $\widehat{\mathsf{I}}(S)$ favors sets of predictors with strong invariance across environments, and the sparsity term $|S|$ penalizes the inclusion of additional predictors. The coefficient vector $\beta$ is linked to the invariance and sparsity terms only through the predictor set $S$. In particular, both $\beta$ and the environment specific regression coefficients $\{\widehat{\beta}^{e,S}\}_{e\in\mathcal{E}}$ (that define $\widehat{\mathsf{I}}(S)$) are restricted to the predictors in $S$. For any fixed $S$, the optimal $\beta$ is simply the weighted least-squares estimate obtained by regressing the pooled response $\{Y^e\}_{e\in\mathcal{E}}$ on the pooled predictors $\{X_S^e\}_{e\in\mathcal{E}}$.

Note that the prediction term $\widehat{\mathcal{R}}(\beta)$ can help distinguish among multiple invariant sets. In particular, minimizing $\gamma\widehat{\mathsf{I}}(S)+\lambda_n|S|$ alone can be less powerful in recovering the true set of causal parents $S^\star$, since other subsets may also exhibit invariance across environments. Example~\ref{example:1} illustrates this point: when interventions occur only in environments $e=2$ and $e=3$, both $S_1$ and $S_2$ are LSCR-invariant, but the causal parent set $S_1$ achieves a smaller prediction error than $S_2$. In Section~\ref{sec:task1}, we establish conditions under which the estimator \eqref{eq:related-task-1} recovers the causal parent set $S^\star$ in finite samples.

%We describe choices for $\lambda_n,\gamma$ below. Note that the prediction term $\widehat{\mathcal{R}}(\beta)$ can help distinguish among multiple invariant sets, thereby improving the ability to recover the true set of causal parents $S^\star$. Example~\ref{example:1} illustrates this point: when interventions occur only in environments $e=2$ and $e=3$, both $S_1$ and $S_2$ are LSCR-invariant, but the causal parent set $S_2$ achieves a smaller prediction error than $S_1$. In Section~\ref{sec:task1}, we establish conditions under which the estimator \eqref{eq:related-task-1} recovers the causal parent set $S^\star$ in finite samples. %We also use \eqref{eq:inv-score} to reformulate the estimator as a convex mixed-integer program, enabling its globally optimal solutions to be computed.

\iffalse 
We can use \eqref{eq:inv-score} to obtain the following reformulation of \eqref{eq:related-task-1} as a mixed-integer program:
\begin{align}
\min_{\substack{g\in\{0,1\}^p,\beta\in\mathbb{R}^p}}
\quad
\widehat{\mathcal{R}}(\beta)+ \widehat{\mathcal{I}}(g)\quad\text{subject to: }\beta_i = 0 \text{ if }g_i = 0 \quad\forall i \in [p].
\label{eq:related-task-1-g}
\end{align}
\fi

%analyze the estimator \eqref{eq:related-task-1-g}, reformulate it as a convex mixed-integer program, and establish finite-sample guarantees.

\vspace{0.1in}
\noindent\textbf{(Task 2) Testing for causal parents.} 
Unless there is enough heterogeneity across the environments, solving \eqref{eq:related-task-1} or directly minimizing the invariance score $\widehat{\mathsf{I}}(\cdot)$ is not, in general, sufficient to identify $S^\star$. However, for each $j_0 \in [p]$, the constrained optimization problem
\begin{align} \label{eq:related-task-2}
\min_{S\subseteq[p]}
\quad
\widehat{\mathsf{I}}(S)
\qquad
\text{subject to: }j_0 \notin S
\end{align}
can be used to develop a testing procedure for identifying $S(\mathcal{E}):=
\bigcap_{S\in\mathcal{S}_{\rm LSCR}(\mathcal{E})} S$. Since $S(\mathcal{E})\subseteq S^\star$, the variables contained in $S(\mathcal{E})$ may be interpreted as a subset of causal parents. The optimal objective value of~\eqref{eq:related-task-2} quantifies the strongest degree of invariance attainable among models that exclude a prescribed feature $j_0$. Intuitively, a large optimal objective value indicates that excluding $j_0$ substantially disrupts invariance, suggesting that $j_0\in S(\mathcal{E})$. We therefore use this optimal objective value to construct a test statistic. In Section~\ref{sec:task2}, we develop a bootstrap testing procedure for identifying $\mathcal{S}(\mathcal{E})$. The procedure uses test statistics obtained from solutions to \eqref{eq:related-task-2} and provides asymptotic type-I error control. %We also use \eqref{eq:inv-score} to cast the computation of these test statistics as a convex mixed-integer program.

\vspace{0.05in}
\begin{remark} \label{rem:top-k}
We may wish to enumerate top-$k$ invariant sets. This enumeration can be formulated naturally as a sequence of optimization programs. We first initialize the procedure by setting $\widehat S^1\in \argmin_{S\subseteq[p]} \widehat{\mathsf{I}}(S)$ corresponding to the most invariant model. Suppose that, after $L < k$ iterations, we have obtained distinct models $\widehat S^1,\ldots,\widehat S^L$. To obtain the next distinct model, we solve 
\begin{eqnarray*}
\begin{aligned}
\widehat S^{L+1} \in \argmin_{S \subseteq [p]}&\quad \widehat{\mathsf{I}}(S)\quad 
\text{subject to: } S \neq S^{\ell}\quad \forall \ell \in [L].%\sum_{i:\widehat{g}^\ell_{ij} = 1}(1-g_{i})+ \sum_{i:\widehat{g}^\ell_{i} = 0}g_{i}\geq 1 \quad \forall \ell \in [L].
\end{aligned}
\end{eqnarray*}
In Appendix~\ref{sec:app:top-k}, we describe how this sequence of optimization problems can be formulated as convex quadratic MIPs. 
\end{remark}

\vspace{0.05in}
\begin{remark}
The two tasks involve different tuning parameters. For Task~1, we select $(\gamma,\rho,\lambda_n)$. In practice, we choose these parameters by leave-one-environment-out (LOEO) cross-validation: each environment is successively held out, the estimator is fitted using the remaining environments, and the parameter values minimizing the average held-out mean squared error are selected. For Task~2, only the relative weight $\rho$ needs to be specified. In practice, we could adopt the value of $\rho$ selected for Task~1 when the two tasks are applied to the same data. For both tasks, the default choice is $\rho=1$.
\label{remark:tuning parameter}
\end{remark}

\vspace{0.05in}
\begin{remark} \label{rem:extended-score}
    In addition to the coefficient and risk conditions $\beta=\beta^{e,S}$ and $\nu^{e,S}=\nu^{f,S}$ for all $e,f\in\mathcal E$ in the LSCR invariance principle, we may consider two additional moment conditions for a candidate model $\beta$ to improve statistical power: (i) $\mathbb E[Y^e-X^{e\top}\beta]=\mathbb E[Y^f-X^{f\top}\beta]$ for all $e,f\in\mathcal E$; and (ii) $\mathbb E[X_j^e(Y^e-X^{e\top}\beta)]=\mathbb E[X_j^f(Y^f-X^{f\top}\beta)]$ for all $j\in\operatorname{supp}(\beta)$ and $e,f\in\mathcal E$. Under our assumptions, both conditions are automatically satisfied by any LSCR invariant model and therefore do not change the collection $\mathcal S_{\rm LSCR}(\mathcal E)$. Nevertheless, incorporating their empirical discrepancies into the invariance score may help distinguish non-invariant models and improve statistical power. 
    We define the extended invariance score as
    \begin{align*}
    \begin{split}
        \widehat{\mathsf{I}}_+(S)
    := \widehat{\mathsf{I}}(S) + 
    \gamma_3  \sum_{\substack{e,f \in \mathcal{E} \\ e<f}} \omega^{e,f} \left(\widehat m^{e,S}-\widehat m^{f,S} \right)^2 + \gamma_4\max_{j\in S} \sum_{\substack{e,f\in\mathcal E\\e<f}} \omega^{e,f} (\widehat m_j^{e,S}-\widehat m_j^{f,S})^2,
    \end{split}
    \end{align*}
    where $\widehat m^{e,S}:=\langle
    Y^e-X^{e\top}\widehat{\beta}^S,\mathbf 1
    \rangle_{n^e}$ and $\widehat{m}^{e,S}_j:=\langle
    Y^e-X^{e\top}\widehat\beta^S,X_j^e \rangle_{n^e}$, and $\widehat{\beta}^S$ denotes the pooled OLS estimator restricted to $S$.
    We primarily use this extended score for Task 2 to improve statistical power.
    Details are provided in Appendix~\ref{sec:app:gain-more-power}, where we also show that this invariance score is amenable to convex mixed-integer programming.
\end{remark}

% \at{For MIP-4 test, you are not using just a modified invariance score. As a result, I would define it here as well so it can be referred to late. }

%\vspace{0.1in}
\subsection{Reformulation of optimization problems as convex mixed-integer programs}
The formulations \eqref{eq:related-task-1} and \eqref{eq:related-task-2} do not lead to convenient optimization procedures. This difficulty arises for two reasons. First, for a given support $S$, the quantity $\widehat\beta^{e,S}$ in $\widehat{\mathsf{I}}(S)$ is determined by $S$ through the implicit OLS normal equations. Second, the squared discrepancy $(\widehat\nu^{e,S}-\widehat\nu^{f,S})^2$ in $\widehat{\mathsf{I}}(S)$ generally induces a quartic and non-convex dependence on the regression coefficients, making it unsuitable for exact optimization procedures. 

%The formulation of the finite-sample invariance score, however, does not immediately yield a convenient objective.  First, for a given support $S$, the quantities $\widehat\beta^{e,S}$ and $\widehat\nu^{e,S}$ are linked to $S$ through the implicit OLS normal equations. Second, the squared discrepancy $(\widehat\nu^{e,S}-\widehat\nu^{f,S})^2$ generally induces a quartic and non-convex dependence on the regression coefficients, making it unsuitable for exact optimization procedures. 

Fortunately, the invariance score can be cast within a mixed-integer programming framework, leading to convex MIP formulations for \eqref{eq:related-task-1} and \eqref{eq:related-task-2}. We introduce two sets of new decision variables: (i) A set of binary variables $\{g_j \in \{0,1\}\}_{j=1}^p$, where $g_j = 1$ whenever $j \in S$ and $g_j = 0$ otherwise; (ii) A set of continuous variables $\{\beta^e \in\mathbb{R}^p \}_{e\in \mathcal{E}}$ that parameterize the coefficient for each environment. Then, $\widehat{\mathsf{I}}(S)$ can be written as the solution to the following convex MIP:
\begin{subequations}
\begin{align}
\widehat{\mathsf{I}}(S)= \widehat{\mathcal I}(g)
:=
\min_{\{\beta^e\in\mathbb{R}^p\}_{e\in\mathcal E}}
~&
\sum_{\substack{e,f\in\mathcal E\\ e<f}}\omega^{e,f}\Big[
\|\beta^e-\beta^f\|_2^2
+\rho\bigl(d^{e,f}(\beta^e,\beta^f)\bigr)^2\Big],
\nonumber
\\
\text{subject to:}~&
-M_n g_j \leq \beta_j^e \leq M_n g_j,
&& e\in\mathcal E,\ j\in[p],
\label{eq:inv-score-selection}
\\
&
-\widetilde{M}_n(1-g_j) \leq \Big\langle
X_j^e,\,
Y^e-\sum_{k\in[p]}\beta_k^eX_k^e
\Big\rangle_{n^e}\leq \widetilde{M}_n(1-g_j),
&& e\in\mathcal E,\ j\in[p].
\label{eq:inv-score-normal}
\end{align}
\label{eq:inv-score}
\end{subequations}
Here, the heterogeneity in the OLS risks is written as:
$$
d^{e,f}(\beta^e,\beta^f):= \left\| Y^e \right\|_{n^e}^2 - \left\| Y^f \right\|_{n^f}^2 - \sum_{j\in[p]}\beta^e_j \langle X^e_j, Y^e \rangle_{n^e} + \sum_{j\in[p]}\beta^f_j \langle X^f_j, Y^f \rangle_{n^f}.
$$
See Section~\ref{sec:notation} for the notations $\|\cdot\|_{n^e}$ and $\langle \cdot,\cdot\rangle_{n^e}$. 
The constraints \eqref{eq:inv-score-selection} are so-called big-$M$ constraints that link the binary variables $g$ to the environment specific regression coefficients $\{\beta^e\}_{e\in\mathcal{E}}$. For sufficiently large $M_n$, constraints~\eqref{eq:inv-score-selection} enforce $\beta_j^e=0$ whenever $g_j=0$, and are otherwise nonbinding. Thus, $\mathrm{supp}(\beta^e)\subseteq S$.  Similarly, for sufficiently large $\widetilde M_n$, constraints~\eqref{eq:inv-score-normal} enforce
$\langle
X_j^e,,
Y^e-\sum_{k\in[p]}\beta_k^eX_k^e
\rangle_{n^e}=0$ whenever $g_j=1$, and are nonbinding when $g_j=0$. Together with \eqref{eq:inv-score-selection}, these constraints impose the normal equations for the regression of $Y^e$ on $X_S^e$, so that $\beta_S^e$ is the corresponding OLS coefficient. We discuss choices of $M_n$ and $\widetilde M_n$ below. 

Finally, the objective function reformulates the LSCR invariance score as a convex function of $\{\beta^e\}_{e\in\mathcal{E}}$. Indeed, for an OLS coefficient $\beta^e$, we have
\begin{align*}
\begin{split}
    \widehat{\nu}^{e,S} 
    % &= \left\| Y^e - \sum_{j} X^e_j \beta^e_j \right\|_{n^e}^2 
    = \left\| Y^e \right\|_{n^e}^2 - 2 \sum_{j\in[p]}  \beta^e_j\langle X^e_j, Y^e \rangle_{n^e} + \sum_{j,k \in[p]} \beta^e_j \beta^e_k \langle X^e_j, X^e_k \rangle_{n^e}  
    = \left\| Y^e \right\|_{n^e}^2 -   \sum_{j\in[p]} \beta^e_j\langle X^e_j, Y^e \rangle_{n^e}.
\end{split}
\end{align*}
The last equality follows from the OLS normal equations. 

Interestingly, for a fixed subset $S$ (or its associated binary variables $\{g_j\}_{j=1}^p$), the feasible region of \eqref{eq:inv-score} is a singleton, provided that the OLS solution is unique in each environment. Indeed, as discussed in Section~\ref{sec:setup:LSCR}, the support $S$ uniquely determines the corresponding environment specific OLS coefficients $\{\beta^{e,S}\}_{e\in\mathcal{E}}$. Consequently, the minimization over $\{\beta^e\}_{e\in\mathcal{E}}$ may be omitted when evaluating the objective for a fixed $g$. The key benefit of the reformulation \eqref{eq:inv-score} is that our two invariant learning tasks can be cast as the following convex quadratic MIPs:
%minimizing $\widehat{\mathcal{I}}(g)$ over $g\in\{0,1\}^p$, together with the variants introduced in Section~\ref{sec:MIP:tasks}, directly formulates the search for an invariant set as a convex mixed-integer programming problem.

%Moreover, since $g$ and $S$ encode the same subset, the resulting objective value $\widehat{\mathcal{I}}(g)$ coincides exactly with the finite-sample invariance score $\widehat{\mathsf{I}}(S)$.

%Reformulating the invariance score as in \eqref{eq:inv-score} means that optimization problems \eqref{eq:related-task-1} and \eqref{eq:related-task-2} can be cast as the following convex MIPs:

\begin{align}
\text{Task 1:}&\quad \min_{g\in\{0,1\}^p,\, \beta\in\mathbb{R}^p}
~~
\widehat{\mathcal{R}}(\beta)+ \gamma\widehat{\mathcal{I}}(g)+\lambda_n\sum_{j\in[p]}g_j
\quad\text{subject to:}~~
-M_n g_{j} \leq \beta_j \leq M_n g_{j}\quad\forall j\in[p],\label{eqn:task-1-mip}\\
\text{Task 2:}&\quad \min_{g\in\{0,1\}^p}
~~
\widehat{\mathcal{I}}(g)
\quad\text{subject to:}~~
g_{j_0}=0\label{eqn:task-2-mip}.
\end{align}
Note that the big-$M$ constraints in \eqref{eqn:task-1-mip} ensure that the support of $\beta$ is encoded by $g$: whenever $g_j=0$, $\beta_j = 0$. As discussed next, the formulations \eqref{eqn:task-1-mip} and \eqref{eqn:task-2-mip} allow us to certify global optimality, stop early with explicit guarantees, and develop scalable heuristics that produce solutions of high quality.

\vspace{0.05in}
\begin{remark} Optimization problems \eqref{eqn:task-1-mip} and \eqref{eqn:task-2-mip} depend on the parameters $M_n$ and $\widetilde{M}_n$. One can use heuristic approaches \cite{Park2017BayesianNL,xu2025integer} to select them. Specifically, one can first solve \eqref{eqn:task-1-mip} or \eqref{eqn:task-2-mip} after relaxing the binary variables to $[0,1]$, yielding solutions $\widehat{\beta}^e$ and $\widehat{\beta}$, and then set $M_n = 2\max_{e\in\mathcal{E}}\|\widehat{\beta}^e\|_\infty$ and $\widetilde{M}_n = 2\|\widehat{\beta}\|_\infty$. In practice, however, we avoid specifying $M_n$ and $\widetilde{M}_n$ by using Gurobi's indicator constraints. These allow us to directly enforce $\beta_j^e=0$ for all $e\in\mathcal E$ and $\beta_j=0$ whenever $g_j=0$, leaving Gurobi to handle the corresponding mixed-integer formulation internally.
\end{remark}

%minimizing $\widehat{\mathcal{I}}(g)$ over $g\in\{0,1\}^p$, together with the variants introduced in Section~\ref{sec:MIP:tasks}, directly formulates the search for an invariant set as a convex mixed-integer programming problem. 

%\vspace{0.1in}
%\subsection{Related tasks}
%\label{sec:MIP:tasks}

\iffalse
For example, in procedure~\eqref{eqn:task-1-mip}, one may first obtain an optimal solution $\widehat{S}^{(1)}$, corresponding to the most invariant model, and then exclude it from the feasible model class. Resolving the resulting optimization problem yields $\widehat{S}^{(2)}$, the second most invariant model. Repeating this procedure produces a ranked sequence of invariant sets. Specifically, if $g^\circ\in\{0,1\}^p$ encodes a previously identified support, it can be excluded from the feasible region by imposing the constraint
\begin{align}\label{eq:const-top-k}
    \sum_{j: g^\circ_j=1} (1 - g_j) +\sum_{j:g^\circ_j=0} g_j \geq 1,
\end{align}
which requires the new solution to differ from $g^\circ$ in at least one coordinate.
\fi

\subsection{Exact and scalable heuristic solvers}
\label{sec:MIP:algo}

We discuss two approaches for solving the convex MIPs \eqref{eqn:task-1-mip} and~\eqref{eqn:task-2-mip}. 

\vspace{0.1in}
\noindent\textbf{An exact solver.} One option is to use the mixed-integer optimization solver \texttt{Gurobi}, which can certify global optimality up to prescribed numerical tolerances. The solver uses a branch-and-bound framework. For both formulations, relaxing the binary constraints $g_j\in\{0,1\}$ to $g_j\in[0,1]$ yields a convex quadratic program whose optimal value provides a lower bound on the corresponding mixed-integer problem. These lower bounds are used to prune the search tree: if the bound at a node is no smaller than the objective value of the best feasible solution found so far, then that node and all of its descendants can be discarded. Consequently, the solver can certify optimality without enumerating all $2^p$ possible support configurations.

Lower and upper bounds also provide a certificate of solution quality. The best feasible solution, referred to as the \textit{incumbent}, provides an upper bound on the optimal objective value, whereas the convex relaxations provide a global lower bound. Their difference, referred to as the \textit{optimality gap}, quantifies the suboptimality of the incumbent solution. In particular, the incumbent is globally optimal when this gap is zero, up to numerical tolerances.

Several practical strategies can improve the performance of this exact approach. A high-quality warm start can provide a strong initial upper bound and thereby accelerate pruning; such a warm start can be obtained using the coordinate descent procedure described next. One may also configure the solver to prioritize improvements in the global lower bound, for example, by setting \texttt{MIPFocus=2} in \texttt{Gurobi}. Finally, to reduce computation time, an early-stopping tolerance $\tau^{\rm early}\geq 0$ may be imposed on the optimality gap. If the algorithm terminates with a gap no greater than $\tau^{\rm early}$, the returned solution is guaranteed to be within $\tau^{\rm early}$ of the global optimum.

\begin{figure}[!ht]
    \centering
    \includegraphics[width=0.9\textwidth]{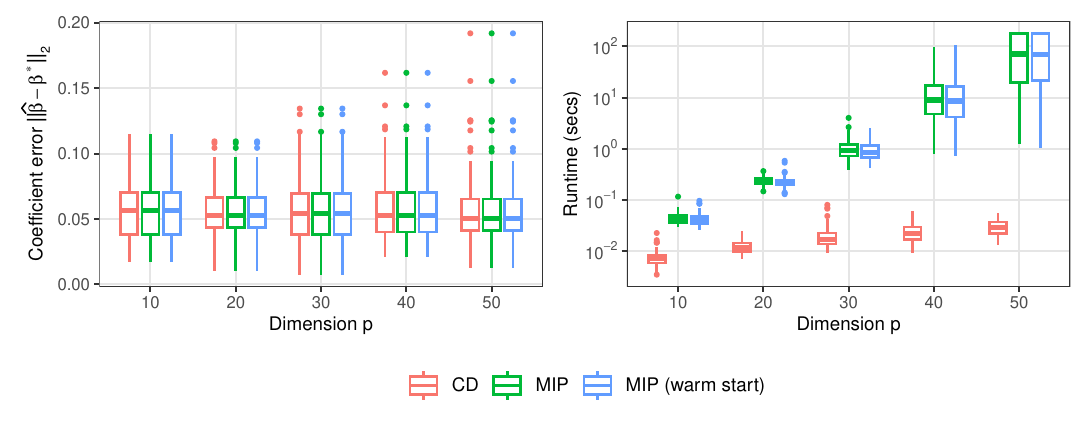}
    \caption{\small Comparison of coefficient estimation error and runtime for Task 1 among heuristic coordinate descent (CD), MIP, and MIP with a CD warm start. Left panel: coefficient estimation error $\|\widehat{\beta}-\beta^\ast\|_2$ across dimensions $p$. Right panel: Runtime across dimensions $p$, shown on a logarithmic scale. Each box summarizes results from 100 independent simulation trials. Implementation details are provided in Appendix~\ref{sec:app:figure-vanilla-cd}. }
    \label{fig:vanilla-cd}
\end{figure}

\vspace{0.1in}
\noindent\textbf{An approximate solver using coordinate descent.}
Coordinate descent provides a substantially more scalable approach for obtaining high-quality feasible solutions. We observe that, for any fixed $g$, the objective values in~\eqref{eqn:task-1-mip} and~\eqref{eqn:task-2-mip} can be evaluated in closed form. This motivates a coordinate descent procedure that initializes from a support $g$ and sequentially updates one coordinate at a time. For each $j\in[p]$, if $g_j = 1$, the algorithm removes $j$ whenever the updated support that sets $g_j = 0$ yields a smaller objective value; otherwise, the support is left unchanged. Similarly, if $g_j=0$, the algorithm adds $j$ whenever the updated support that sets $g_j = 1$ improves the objective. The procedure cycles through the coordinates until no update to a single coordinate decreases the objective. We also employ several practical strategies to accelerate the algorithm, including rank one updates for objective evaluation and local combinatorial optimization. Further details are provided in Appendix~\ref{sec:app:cd-algo}.

Finally, although the coordinate descent procedure is not an exact solver, its solution can be used to initialize the vanilla MIP solver. This provides an initial upper bound of high quality, allowing the MIP solver to further improve the solution and, through its lower bound, either certify global optimality or quantify the remaining optimality gap.

Figure~\ref{fig:vanilla-cd} compares the coordinate descent heuristic (CD) with the exact MIP solver, both with and without a warm start. The MIP solver with a warm start is initialized by the solution obtained from coordinate descent, whereas the vanilla MIP is solved without an initial incumbent. All three methods yield essentially identical estimation performance across the considered dimensions. However, coordinate descent is substantially faster and scales much more favorably with $p$. The computational cost of both MIP variants increases rapidly with $p$, although using the CD solution as a warm start yields a modest runtime improvement over the vanilla MIP.

%\vspace{0.2in}
\section{Task 1: invariance-regularized predictions via convex mixed-integer programming}
\label{sec:task1}
This section studies the estimator \eqref{eqn:task-1-mip} that obtains prediction with invariance regularization. Section~\ref{sec:identifiability} analyzes the estimator in population and presents conditions for which this estimator identifies the causal parents $S^\star$ and the coefficient vector $\beta^\star$. Section~\ref{sec:finite_sample} provides high-dimensional guarantees that allow the number of features to grow with the sample size. Throughout, we set the predictor term $\widehat{\mathcal{R}}(\beta)$ and its population analogue $\mathcal{R}(\beta)$ as:
$$
\widehat{\mathcal{R}}(\beta):= \sum_{e\in\mathcal{E}}{\omega^e}\, \|Y^e-X^e\beta\|_{n^e}^2\quad \text{and}\quad {\mathcal{R}}(\beta):=\sum_{e\in\mathcal{E}}{\omega^e}\, \mathbb{E} \left[ (Y^e-X^e\beta)^2 \right].$$
Thus $\widehat{\mathcal{R}}(\beta)$ is the weighted mean squared error, and $\mathcal{R}(\beta)$ is its corresponding population counterpart. The weights $\omega^e>0$ satisfy $\sum_{e\in\mathcal{E}} \omega^e=1$. In the finite-sample version, we choose $\omega^e= n^e/\sum_{e\in\mathcal{E}}n^e$ for all $e\in\mathcal{E}$.

%the task of identifying an invariant set using the procedure \eqref{eqn:task-1-mip}.

% \vspace{0.1in}
\subsection{Identifiability results in population}
\label{sec:identifiability}
The population analog of estimator \eqref{eqn:task-1-mip} is given by:
\begin{align} \label{eq:pop-task1}
\begin{split}
    (\widetilde{\beta},\, \widetilde{S}) \in \argmin_{\substack{ \beta \in \mathbb{R}^p, S \subseteq[p] } } ~& \sum_{e\in\mathcal{E}} \omega^e\  \mathbb{E} \left[ Y^e - \beta^\top X^e \right]^2 + \gamma \mathsf{I}(S)\quad
    \text{subject to:}~\mathrm{supp}(\beta) \subseteq S.
\end{split}
\end{align}
Here, \eqref{eq:pop-task1} searches over a subset $S$ of covariates and a regression coefficient $\beta$ with support $S$. Its objective is a weighted MSE, computed after pooling data from all environments, plus a regularization term $\mathsf{I}(S)$. Conceptually, the minimization can be decomposed into two stages: first optimizing over the support $S$, and then over the coefficient vector $\beta$. The regularization term encourages the selection of a support with a small invariance score, ideally one that is exactly LSCR-invariant. Once the support solution $\widetilde{S}$ is determined, the subsequent minimization over $\beta$ involves only the MSE term and therefore yields a weighted OLS estimate restricted to $\widetilde{S}$. If $\widetilde{S}\in\mathcal{S}_{\rm LSCR}(\mathcal{E})$, then by \eqref{eq:model-equiv}, the solution $\widetilde{\beta}$ coincides with the environment specific OLS coefficients $\beta^{e,\widetilde{S}}$ for all $e\in\mathcal{E}$. 

When the regularization parameters $\gamma$ and $\rho$ are sufficiently large (which we formalize in Lemma~\ref{lem:identification}), it is apparent that $\widetilde{S}$ must satisfy $\widetilde{S}\in\mathcal{S}_{\rm LSCR}(\mathcal{E})$. In this regime, the solution to \eqref{eq:pop-task1} is equivalent to
$$
\argmin_{\substack{\beta\in\mathbb{R}^p, S\subseteq[p]}} \,
\sum_{e\in\mathcal{E}} \omega^e\  
\mathbb{E} \left[Y^e-\beta^\top X^e\right]^2 \qquad
\text{subject to: }
\mathrm{supp}(\beta) \subseteq S,~\mathsf{I}(S)=0.
$$
Hence, the problem reduces to selecting the most predictive model among all LSCR-invariant models. As alluded to in
Section~\ref{sec:score}, the MSE objective can further distinguish the causal support from other invariant models, particularly when the interventions are not sufficiently heterogeneous. We next present an identification condition under which this distinction is possible, together with the corresponding identification result.

\vspace{0.05in}
\begin{assumption}[Identification]
Let $
G:= \{ j \in[p]: \sum_{e\in\mathcal{E}} \, \omega^e\, \mathbb{E} \left[ X_j^{e} \varepsilon^{e} \right] \neq 0 \}$. For any $S \subseteq [p]$ satisfying $S \cap G \neq \varnothing$, there exists some $e,f \in \mathcal{E}$ such that $
    \beta^{e,S} \neq \beta^{f,S} \text{ or } \nu^{e,S} \neq \nu^{f,S}$. 
\label{ass:identifiability}
\end{assumption}
In essence, this identification condition requires sufficient environmental heterogeneity to distinguish any model containing a variable in $G$. Variables in $G$ are linearly spurious in the sense that they can improve prediction of the noise and hence the prediction risk, despite not belonging to the true causal model.
Under the interventions specified in Assumption~\ref{ass:identifiability}, any model $S$ with $S\cap G\neq\varnothing$ would yield $\mathsf{I}(S)\neq0$, and hence is very likely to be detected.
% In contrast, noncausal models $S$ satisfying $S\cap G=\varnothing$ can be distinguished from the causal model by the MSE risk and therefore require no additional interventions. 

Assumption~\ref{ass:identifiability} is closely related to the identification condition (Condition~4.5) in~\cite{fan2024environment}. Indeed, \cite{fan2024environment} require that for any $S \subseteq [p]$ satisfying $S \cap G \neq \varnothing$, there exists some $e,f \in \mathcal{E}$ such that $\beta^{e,S} \neq \beta^{f,S}$. This similarity arises because both methods adopt an objective of the form MSE plus an invariance penalty, and their penalty essentially sums $\|\beta^{e,S} - \beta^{f,S}\|_2^2$ across $e,f\in\mathcal{E}$. However, the penalty in~\cite{fan2024environment} appears more challenging to formulate as a scalable convex mixed-integer program. 
In addition, our stronger invariance principle, which also exploits heterogeneity in the OLS risks, yields a weaker identification condition and hence greater power. Example~\ref{ex:identification} illustrates this advantage.

Assumption~\ref{ass:identifiability} can also be compared with the sufficient identification conditions in \citet{peters2016causal}. Their Theorem~2 establishes identifiability under several specific intervention schemes, such as single-variable do or noise interventions on every variable other than the target. In contrast, Assumption~\ref{ass:identifiability} does not impose any restrictions on the intervention mechanisms. Moreover, our assumption does not require interventions on variables that are independent of $\varepsilon^e$, such as ancestors of $Y^e$ in a standard SEM. In this sense, Assumption~\ref{ass:identifiability} is less restrictive.

\vspace{0.05in}
\begin{example}
\label{ex:identification}
Consider two environments $e\in\{1,2\}$ and the linear SCM
$$
X_0^e \sim \mathcal N(0,1), \qquad
Y^e = X_0^e + \varepsilon^e, \qquad
X_1^e = a_e Y^e + \eta^e,
$$
where
$\varepsilon^{e}\sim\mathcal N(0,1)$
and $\eta^{e}\sim\mathcal N(0,\upsilon_e)$.
We additionally generate eight independent noise covariates $Z^e_1, \dots, Z^e_8$. All exogenous variables $\varepsilon^e$, $\eta^e$ and $Z^e$ are mutually independent.  
We set $(a_1, a_2) = (1,1/2)$ and $(\upsilon_1, \upsilon_2) = (2, 3/2)$.
The causal support is $S^\star=\{0\}$ while $G=\{1\}$.
For the spurious support $S=\{1\}$, the population OLS coefficient and residual risk are
$
\beta^{e,S}
= 2a_e / (2a_e^2+\upsilon^e)$ and $
\nu^{e,S} = 2 - 4 a_e^2 / ( 2 a_e^2 + \upsilon^e).
$
Under the above parameter choices, we obtain $\beta^{1,S} = \beta^{2,S} = 1/2$, whereas $\nu^{1,S}\neq \nu^{2,S}$. As a result, our identification condition holds but that of~\cite{fan2024environment} is violated. 

For each trial, we generate 100 independent datasets with $n^e=500$ for both environments. We compute, for each method, the proportion of exact support recovery and the proportion of datasets in which the spurious feature $X_1$ is selected. Figure~\ref{fig:identification} shows these proportions averaged over 50 trials. The results demonstrate the limited identification power of \textit{EILLS}~\citep{fan2024environment}, which frequently includes the spurious feature $X_1$. We also report the performance of \textit{ICP}~\citep{peters2016causal}, \textit{Causal Dantzig}~\citep{rothenhausler2019causal}, and \textit{BIP}~\citep{wu2025bayesian}, which also exhibit limited power in this example.

\begin{figure}[!ht]
    \centering
    \includegraphics[width=0.7\textwidth]{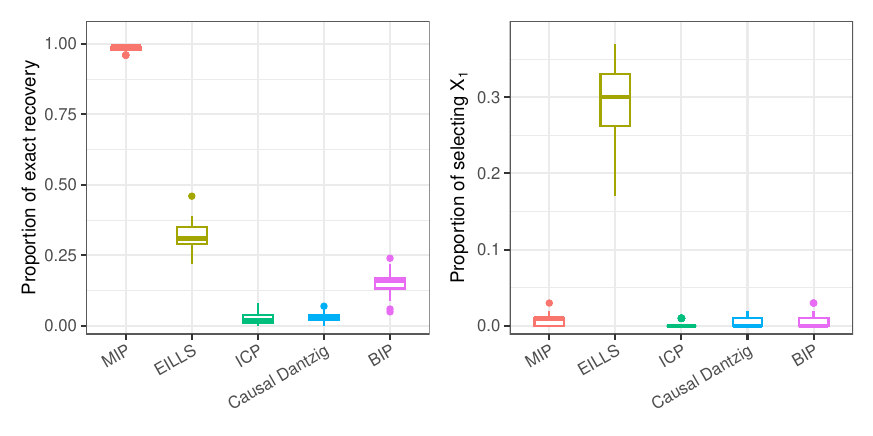}
    \caption{\small Performance in the perturbed version of Example~\ref{ex:identification}. To avoid relying on a parameter configuration that occurs with probability zero, we set
    $(a_1,a_2)=(1,1/2)$ and $(\upsilon_1,\upsilon_2)=(2,3/2)$, each with an additional small Gaussian perturbation. Left panel: proportion of exact support recovery. Right panel: proportion of including the spurious feature $\{1\}$. Each box summarizes results from 50 independent simulation trials. Additional competing methods and full experimental details are provided in Appendix~\ref{sec:app:figure-identification}.}
    \label{fig:identification}
\end{figure}

\end{example}

\vspace{0.05in}
\begin{lemma} \label{lem:identification}
Suppose Assumption~\ref{ass:identifiability} holds. Also assume that there exists $\kappa_L > 0$ such that $\mathbb{E}\left[ X^e X^{e\top} \right] \succeq \kappa_L I_p,\ \forall e\in\mathcal{E} $. Denote
$$
\gamma^\star := \sup_{ S: S \cap G \neq \varnothing } \kappa_L^{-1} \cdot \frac{ \left\| \sum_{e\in\mathcal{E}} \omega^e\ \mathbb{E} [\varepsilon^e X^{e}_S] \right\|_2^2 }{ \mathsf{I}_1(S) + \mathsf{I}_2(S) },
$$
where $\mathsf{I}_1(S) := \sum_{e<f} \omega^{e,f} \|\beta^{e,S} - \beta^{f,S} \|_2^2$ and $
\mathsf{I}_2(S) := \sum_{e<f} \omega^{e,f} (\nu^{e,S} - \nu^{f,S} )^2$. 
When $\gamma \wedge \gamma\rho \geq 2\gamma^\star + \epsilon$ for some $\epsilon >0$, every minimizer of~\eqref{eq:pop-task1} has coefficient vector $\beta^\star$. Moreover, $S^\star$ is the unique smallest selected set among the minimizers.
\end{lemma}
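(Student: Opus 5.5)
The plan is to compare the objective of \eqref{eq:pop-task1} at an arbitrary feasible pair $(\beta,S)$ with its value at $(\beta^\star,S^\star)$. Write $\mathcal{R}(\beta)=\sum_e\omega^e\mathbb{E}[(Y^e-\beta^\top X^e)^2]$ and $\Sigma:=\sum_e\omega^e\mathbb{E}[X^eX^{e\top}]\succeq\kappa_L I_p$. Substituting $Y^e=\beta^{\star\top}X^e+\varepsilon^e$ and putting $\Delta=\beta-\beta^\star$ gives the exact expansion
\[
\mathcal{R}(\beta)-\mathcal{R}(\beta^\star)=\Delta^\top\Sigma\Delta-2\Delta^\top u,\qquad u:=\sum_e\omega^e\mathbb{E}[\varepsilon^eX^e].
\]
Since $\varepsilon^e\perp X^e_{S^\star}$ and $\mathbb{E}\varepsilon^e=0$, we have $u_{S^\star}=0$. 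By definition of $G$, $u_j\neq0$ exactly for $j\in G$, and $G\cap S^\star=\varnothing$. Because $S^\star\in\mathcal{S}_{\rm LSCR}(\mathcal{E})$, we also have $\mathsf{I}(S^\star)=0$, so the objective value at $(\beta^\star,S^\star)$ is $\mathcal{R}(\beta^\star)$. The proof then splits into two cases according to whether $S$ meets $G$.

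\emph{Case $S\cap G=\varnothing$.} Here $\mathrm{supp}(\Delta)\subseteq S\cup S^\star$ avoids $G$, so $\Delta^\top u=0$. The objective gap is therefore $\Delta^\top\Sigma\Delta+\gamma\mathsf{I}(S)\ge\kappa_L\|\Delta\|_2^2$, which is strictly positive unless $\beta=\beta^\star$. So any minimizer in this case has $\beta=\beta^\star$, which forces $S\supseteq S^\star$ and $\mathsf{I}(S)=0$.

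\emph{Case $S\cap G\neq\varnothing$.} I will lower-bound the gap by minimizing over $\Delta$. One subtlety is that $\Delta$ may also be supported on $S^\star\setminus S$, but $u$ vanishes there, so $\Delta^\top u=\Delta_S^\top u_S$. Completing the square with $\Delta^\top\Sigma\Delta\ge\kappa_L\|\Delta\|_2^2\ge\kappa_L\|\Delta_S\|_2^2$ gives
\[
\Delta^\top\Sigma\Delta-2\Delta^\top u\ge-\|u_S\|_2^2/\kappa_L .
\]
Next, $\gamma\mathsf{I}(S)=\gamma\mathsf{I}_1(S)+\gamma\rho\,\mathsf{I}_2(S)\ge(\gamma\wedge\gamma\rho)(\mathsf{I}_1(S)+\mathsf{I}_2(S))$. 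By definition of $\gamma^\star$, $\|u_S\|_2^2/\kappa_L\le\gamma^\star(\mathsf{I}_1+\mathsf{I}_2)(S)$. Assumption~\ref{ass:identifiability} gives $\mathsf{I}_1(S)+\mathsf{I}_2(S)>0$. Combining these,
\[
\text{gap}\ge(\gamma\wedge\gamma\rho-\gamma^\star)(\mathsf{I}_1+\mathsf{I}_2)(S)\ge(\gamma^\star+\epsilon)(\mathsf{I}_1+\mathsf{I}_2)(S)>0,
\]
so no such pair is a minimizer. (The factor $2$ in the hypothesis is more slack than this argument needs; if the authors use cruder bounds, it would absorb them. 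I would also note that $\gamma^\star<\infty$ is implicitly assumed, and that the supremum runs over finitely many sets.)

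\emph{Conclusion.} Every minimizer has $\beta=\beta^\star$ and $S\supseteq\mathrm{supp}(\beta^\star)=S^\star$, and $(\beta^\star,S^\star)$ is itself a minimizer. Hence $S^\star$ is the unique smallest selected set. The main obstacle is the second case: correctly handling the coordinates of $\Delta$ outside $S$, where $u$ vanishes, and verifying that strictness holds even when $u_S=0$ (for instance if $\sum_e\omega^e\mathbb{E}[\varepsilon^e X^e_S]$ is zero although $S\cap G\neq\varnothing$ — impossible by definition of $G$, but worth checking), relying on $\mathsf{I}(S)>0$ from Assumption~\ref{ass:identifiability}.
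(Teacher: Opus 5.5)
Your proof is correct. It uses the same ingredients as the paper: the exact expansion of $\mathcal{R}(\beta)-\mathcal{R}(\beta^\star)$, the fact $u_{S^\star}=0$, the bound $\Sigma\succeq\kappa_L I_p$, and the definition of $\gamma^\star$. Where you differ is in how you treat the cross term.

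The paper does not split into cases. It applies Young's inequality with $\delta=1/2$ to get $\mathcal{R}(\beta)-\mathcal{R}(\beta^\star)\ge\frac{\kappa_L}{2}\|\Delta\|_2^2-2\kappa_L^{-1}\|u_S\|_2^2$. This holds for every $S$; when $S\cap G=\varnothing$ the last term is simply zero. The paper then absorbs $2\kappa_L^{-1}\|u_S\|_2^2\le 2\gamma^\star(\mathsf{I}_1+\mathsf{I}_2)(S)$ into the penalty, which is exactly where the factor $2$ comes from. The result is the single quantitative bound
\[
\mathsf{F}(\beta)-\mathsf{F}(\beta^\star)\ge\tfrac{\kappa_L}{2}\|\beta-\beta^\star\|_2^2+\epsilon\,\bigl(\mathsf{I}_1(S)+\mathsf{I}_2(S)\bigr)
\]
of Lemma~\ref{lem:identification2}, from which both conclusions follow at once.

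In the case $S\cap G\neq\varnothing$ you minimize the quadratic out completely. That shows $\gamma\wedge\gamma\rho>\gamma^\star$ already suffices for population identification. But it uses up all the curvature, so you need the separate case $S\cap G=\varnothing$ to force $\beta=\beta^\star$.

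What the paper's version buys is a uniform margin in both $\|\beta-\beta^\star\|_2^2$ and $\mathsf{I}_1+\mathsf{I}_2$. Lemma~\ref{lem:lower-bound} combines this margin with the concentration bounds to prove Theorems~\ref{thm:convergence} and~\ref{thm:consistency}. As written, your bound for sets meeting $G$ carries no $\|\beta-\beta^\star\|_2^2$ term, so it could not serve that downstream role.

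On the other hand, you handle feasible pairs with $S\supsetneq\mathrm{supp}(\beta)$ and derive the ``unique smallest selected set'' claim explicitly. The paper does neither: it fixes $S=\mathrm{supp}(\beta)$ and leaves that final sentence implicit.
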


Lemma~\ref{lem:identification} presents the identification result of procedure \eqref{eq:pop-task1}. The proof is provided in Appendix~\ref{sec:app:proof:identification}. Under Assumption~\ref{ass:identifiability}, the score $\mathsf{I}_1(S) + \mathsf{I}_2(S)$ must be positive. Intuitively, for models $S$ satisfying $S\cap G\neq\varnothing$, weaker correlations between $X_S^e$ and $\varepsilon^e$, together with larger invariance scores, reduce the penalty strength required to separate these models from the causal model.

%\vspace{0.1in}
\subsection{Finite-sample guarantees}
\label{sec:finite_sample}
We establish statistical guarantees for estimates obtained from \eqref{eqn:task-1-mip}:
\begin{align*}
\begin{split}
    (\widehat{\beta},\,\widehat{S}) \in \argmin_{\beta\in\mathbb{R}^p,\, S\subseteq[p]} 
    ~& \sum_{e\in \mathcal{E}} \omega^e \, \left\| Y^e - X^e \beta \right\|^2_{n^e} + \gamma \sum_{e< f} \omega^{e,f} \left[ \left\| \widehat\beta^{e,S} - \widehat\beta^{f,S} \right\|_2^2 + \rho\left( \widehat\nu^{e,S} - \widehat\nu^{f,S} \right)^2 \right] + \lambda_n |S|, \\
    \text{subject to:}~
    & \mathrm{supp}(\beta) \subseteq S, \\
    & |S| \leq s_n.
\end{split}
\end{align*}
Here we further impose the sparsity constraint $|S| \leq s_n$, which restricts the number of selected features to at most $s_n$. The required conditions on $s_n$ are specified later. The corresponding detailed MIP formulation is provided in Appendix~\ref{sec:app:mip-details}. %Throughout this section,
%we allow the number of nodes $p$ to vary with the sample size $n$.
To reduce computation time, the procedure may be terminated early once the optimality gap falls below a user specified tolerance $\tau^{\rm early}>0$. We denote the corresponding feasible solution by $\widehat{\beta}^{\rm early}$ and $\widehat{S}^{\rm early}$. 

\vspace{0.05in}
\begin{assumption}[Covariate and noise regularity]
    % There exists some universal constants $0<\kappa_L < \kappa_U < +\infty$ such that $\kappa_L I_p \preceq \Sigma^e \preceq  \kappa_U I_p  $ for all $e\in\mathcal{E}$.
    There exist constants $
    0<\kappa_L \leq \kappa_U<\infty$, 
    $\sigma_x\geq 1$ and $\sigma_\varepsilon>0$
    independent of $n$, $p$, and $e$, such that, for every
    $e\in\mathcal E$, the following conditions hold:
\begin{enumerate}[label = (\arabic*)]
\label{assump:regular}
    \item The population covariance matrix $\Sigma^{e}:=\mathbb{E}[X^e X^{e\top}]$ satisfies $\kappa_L I_p \preceq \Sigma^{e} \preceq \kappa_U I_p$.

    \item The standardized covariate vector is sub-Gaussian: for every
   $v\in\mathbb R^p$,
    $$
    \mathbb E\left[
    \exp\left\{
        v^\top (\Sigma^{e})^{-1/2} X^{e}
    \right\}
    \right]
    \leq
    \exp\left(
        \sigma_x^2 \|v\|_2^2 / 2
    \right).
    $$

    \item The noise variable $\varepsilon^{e}$ is sub-Gaussian: for every $\lambda\in\mathbb{R}$,
    $$
    \mathbb E\left[
        \exp\left\{\lambda\varepsilon^{e}\right\}
    \right]
    \leq
    \exp\left(
        \lambda^2\sigma_\varepsilon^2 / 2
    \right).
    $$
\end{enumerate}
\end{assumption}

\vspace{0.05in}
\begin{assumption}[Model sparsity and moderate magnitude of coefficient] 
\label{assump:sparse-moderate}
The quantity $s_n$ that constrains the number of selected features satisfies $s^\star \leq s_n$. Additionally, the true coefficients satisfy
$$
\|\beta^\star\|_2^2 \lesssim  \frac{(\min_{e\in\mathcal{E}} n^e)^{1 - \delta / 2}}
{ (s_n\log p + \log |\mathcal{E}| )^{1 - \delta / 2} } $$
for some $\delta\in (0,2)$.
\end{assumption}

The conditions in Assumption~\ref{assump:regular} are adapted from Conditions 4.2--4.4 of~\cite{fan2024environment}, which impose uniform covariance and tail bounds across environments. They are very standard in high-dimensional linear regression analysis. From Assumption~\ref{assump:sparse-moderate},
the true model size, upper bounded by $s_n$, may grow with the sample size. The added
constraint $\sum_j g_j \leq s_n$ ensures that the estimated model adheres to the same sparsity level. 
Similarly, the magnitude of coefficients $\|\beta^\star\|_2^2$ is allowed to grow with the sample size, but at a rate not too fast. This condition on $\|\beta^\star\|_2^2$ is required only when $\rho \neq 0$, to control the worst case behavior of the term $(\widehat{\nu}^{e,S} - \widehat{\nu}^{f,S})^2$.
We may assume that all environments have the same scale of samples: $n^e \asymp \min_{f\in\mathcal{E}} n^f, \forall e\in\mathcal{E}$. Then some example choices for these quantities are: $s_n \asymp n_0^{1/5} $, $|\mathcal{E}| \lesssim p$, $\delta = 1/2$, and $\|\beta^\star\|_2^2 \lesssim n_0^{3/5} / (\log p)^{3/4} $.

\vspace{0.05in}
\begin{theorem} \label{thm:convergence}
    Suppose conditions in Assumptions~\ref{ass:identifiability}--\ref{assump:sparse-moderate} and Lemma~\ref{lem:identification} hold. Suppose the sample size in each environment is sufficiently large: $\min_{e\in\mathcal{E}} n^e \gtrsim s_n \log p + \log |\mathcal{E}|$. Then, the following are true with probability at least $1 - 7/p$:
    \begin{enumerate}[label = (\arabic*)]
        \item For the optimal solution $\widehat{\beta}$:
        $$
        \left\| \widehat{\beta} - \beta^\star \right\|_2^2 \lesssim \lambda_n s_n + \frac{s_n \log p + \log |\mathcal{E}|}{ \min_{e\in\mathcal{E}} n^e }.
        $$
        
        \item For the early-stopping solution with threshold $\tau^{\rm early}$:
        $$
        \left\| \widehat{\beta}^{\rm early} - \beta^\star \right\|_2^2 \lesssim \lambda_n s_n + \frac{s_n \log p + \log |\mathcal{E}|}{ \min_{e\in\mathcal{E}} n^e } + \tau^{\rm early}.
        $$
    \end{enumerate}
\end{theorem}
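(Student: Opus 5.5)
Our plan follows the basic-inequality strategy of \cite{fan2024environment}, adapted to the invariance score $\widehat{\mathsf I}$. Write $\widehat Q(\beta,S)$ for the empirical objective and compare the minimizer $(\widehat\beta,\widehat S)$ with the feasible point $(\beta^\star,S^\star)$; feasibility uses $|S^\star|=s^\star\le s_n$ from Assumption~\ref{assump:sparse-moderate}. The basic inequality $\widehat Q(\widehat\beta,\widehat S)\le \widehat Q(\beta^\star,S^\star)$ (or $\le \widehat Q(\beta^\star,S^\star)+\tau^{\rm early}$ for the early-stopped solution) drives both parts. Part (2) therefore follows from part (1) by carrying the extra additive $\tau^{\rm early}$ through every step, so we focus on (1).

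\textbf{Step 1: uniform concentration over sparse supports.} On an event of probability at least $1-7/p$, we want uniform bounds over all $S$ with $|S|\le s_n$ and all $e\in\mathcal E$:
\begin{itemize}
\item $\|\widehat\Sigma^e_{SS}-\Sigma^e_{SS}\|_{\rm op}\lesssim \sqrt{(s_n\log p+\log|\mathcal E|)/n^e}$;
\item $\|\langle X^e_S,\varepsilon^e\rangle_{n^e}-\mathbb E[X^e_S\varepsilon^e]\|_2\lesssim\sqrt{(s_n\log p+\log|\mathcal E|)/n^e}$;
\item $|\,\|\varepsilon^e\|_{n^e}^2-\mathbb E(\varepsilon^e)^2|\lesssim\sqrt{(s_n\log p+\log|\mathcal E|)/n^e}$.
\end{itemize}
These come from sub-Gaussian/sub-exponential Bernstein bounds combined with an $\varepsilon$-net over $\binom{p}{s_n}|\mathcal E|$ supports, using Assumption~\ref{assump:regular}. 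Write $\delta_n$ for the resulting rate $(s_n\log p+\log|\mathcal E|)/\min_e n^e$. The sample-size condition keeps $\widehat\Sigma^e_{SS}\succeq(\kappa_L/2)I$. Consequently, $\widehat\beta^{e,S}-\beta^{e,S}$ and $\widehat\nu^{e,S}-\nu^{e,S}$ are controlled uniformly. For the risk this gives $\sqrt{\delta_n}$ times a factor of order $(1+\|\beta^\star\|_2^2)$, and Assumption~\ref{assump:sparse-moderate} with exponent $\delta$ is exactly what makes the squared risk discrepancy $(\widehat\nu^{e,S}-\widehat\nu^{f,S})^2$ close to its population value up to $o(1)$ uniformly.

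\textbf{Step 2: expanding the basic inequality.} Using $Y^e=X^e\beta^\star+\varepsilon^e$, the prediction terms give
$$\sum_e\omega^e\|X^e(\widehat\beta-\beta^\star)\|^2_{n^e}-2\sum_e\omega^e\langle X^e(\widehat\beta-\beta^\star),\varepsilon^e\rangle_{n^e}+\gamma\widehat{\mathsf I}(\widehat S)\le\gamma\widehat{\mathsf I}(S^\star)+\lambda_n s^\star .$$
The cross term splits into a population bias $2(\widehat\beta-\beta^\star)^\top\sum_e\omega^e\mathbb E[X^e\varepsilon^e]$, which is supported on $\widehat S\cap G$, and a fluctuation of order $\sqrt{\delta_n}\|\widehat\beta-\beta^\star\|_2$. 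Since $S^\star$ is invariant, $\mathsf I(S^\star)=0$, and $\widehat{\mathsf I}(S^\star)\lesssim\delta_n$ by Step 1.

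\textbf{Step 3 (main obstacle): absorbing the bias with the invariance penalty.} This is where the population identification argument of Lemma~\ref{lem:identification} has to be made empirical. By the definition of $\gamma^\star$, the bias satisfies
$$2|u^\top b_{\widehat S}|\le \tfrac{\kappa_L}{2}\|u\|_2^2+\tfrac{2}{\kappa_L}\|b_{\widehat S}\|_2^2\le\tfrac{\kappa_L}{2}\|u\|_2^2+2\gamma^\star(\mathsf I_1+\mathsf I_2)(\widehat S),$$
where $u=\widehat\beta-\beta^\star$ and $b=\sum_e\omega^e\mathbb E[X^e\varepsilon^e]$. The condition $\gamma\wedge\gamma\rho\ge 2\gamma^\star+\epsilon$ gives $\gamma\mathsf I(\widehat S)\ge(2\gamma^\star+\epsilon)(\mathsf I_1+\mathsf I_2)(\widehat S)$. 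The difficulty is that we only control $\widehat{\mathsf I}(\widehat S)$, not $\mathsf I(\widehat S)$. We therefore use Step 1 to show $|\widehat{\mathsf I}(S)-\mathsf I(S)|\lesssim\sqrt{\delta_n}\,\mathsf I(S)^{1/2}+\delta_n$ uniformly over $|S|\le s_n$, writing $\|\widehat a-\widehat b\|^2$ versus $\|a-b\|^2$ as a cross term plus a square. The cross terms are then absorbed by the $\epsilon$ slack via Young's inequality. The risk part is the hardest because of the $\|\beta^\star\|_2^2$ growth, and that is precisely where the moment condition of Assumption~\ref{assump:sparse-moderate} enters.

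\textbf{Step 4: conclusion.} After these absorptions we obtain
$$\tfrac{\kappa_L}{4}\|u\|_2^2\lesssim\sqrt{\delta_n}\|u\|_2+\delta_n+\lambda_n s_n .$$
Solving this quadratic inequality gives $\|u\|_2^2\lesssim\lambda_n s_n+\delta_n$, which is part (1). Adding $\tau^{\rm early}$ to the right-hand side of the basic inequality throughout yields part (2).
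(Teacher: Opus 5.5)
\textbf{Verdict: the proposal has a genuine gap in Step 3.} Your architecture is the paper's: a basic inequality against $(\beta^\star,S^\star)$; the bias absorbed through $\gamma^\star$ and the $\epsilon$-slack exactly as in Lemma~\ref{lem:identification2}; Young's inequality on the cross terms of $\widehat{\mathsf I}-\mathsf I$; and a final quadratic inequality. The failure is in how Step 3 controls $\widehat{\mathsf I}(\widehat S)-\mathsf I(\widehat S)$ when $\|\beta^\star\|_2$ grows with $n$.

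\textbf{Why the uniform bound fails.} Write $r_n^2:=(s_n\log p+\log|\mathcal E|)/\min_e n^e$ (your $\delta_n$). You claim $|\widehat{\mathsf I}(S)-\mathsf I(S)|\lesssim r_n\,\mathsf I(S)^{1/2}+r_n^2$ uniformly over $|S|\le s_n$, with $n$-free constants. This is false. For $S\not\supseteq S^\star$, the population residual $Y^e-X^{e\top}\beta^{e,S}$ contains $X^{e\top}_{S^\star\setminus S}\beta^\star_{S^\star\setminus S}$. Hence $\widehat\beta^{e,S}-\beta^{e,S}$ fluctuates at order $r_n(1+\|\beta^\star_{S^\star\setminus S}\|_2)$ and $\widehat\nu^{e,S}-\nu^{e,S}$ at order $r_n(1+\|\beta^\star_{S^\star\setminus S}\|_2)^2$. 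Already for $S=\varnothing$, $\widehat\nu^{e,\varnothing}=\|Y^e\|_{n^e}^2$ fluctuates by roughly $\|\beta^\star\|_2^2/\sqrt{n^e}$.

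So the additive error you can certify uniformly in the risk component, after Young's inequality, is of order $r_n^2(1+\|\beta^\star\|_2)^4$. Your remedy does not repair this. Assumption~\ref{assump:sparse-moderate} only gives $\|\beta^\star\|_2^2\lesssim r_n^{-(2-\delta)}$, so the error is only $O(r_n^{2\delta-2})$, which need not vanish for $\delta\le 1$ (e.g.\ the paper's example $\delta=1/2$). Even when it is $o(1)$, an $o(1)$ additive term cannot produce the rate $\lambda_n s_n+r_n^2$. Consequently the inequality in your Step 4 does not follow from Steps 1--3.

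\textbf{The missing idea: normalise the errors by the estimation error.} Set $u=\widehat\beta-\beta^\star$. Because $\mathrm{supp}(\widehat\beta)\subseteq\widehat S$, you have $\|\beta^\star_{S^\star\setminus\widehat S}\|_2\le\|u\|_2$. So the residual's $\psi_2$-norm is $\lesssim 1+\|u\|_2$, and the additive errors in $\widehat{\mathsf I}_1(\widehat S)$ and $\widehat{\mathsf I}_2(\widehat S)$ become $\lesssim r_n^2(1+\|u\|_2^2)$ and $\lesssim r_n^2(1+\|u\|_2^4)$. This is exactly what \eqref{eq:diff-in-beta}, \eqref{eq:diff-in-sigma} and Lemma~\ref{lem:D-I} encode.
\begin{itemize}
\item The $r_n^2\|u\|_2^2$ term is absorbed by the curvature $\kappa_L\|u\|_2^2$.
\item The quartic term needs an a priori bound on $\|u\|_2$. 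The paper uses $\|u\|_2^2\lesssim 1+\|\beta^\star\|_2^2$ for any pooled OLS fit (Lemma~\ref{lem:OLS}). Together with Assumption~\ref{assump:sparse-moderate} this gives $r_n^2\|u\|_2^4\lesssim r_n^{\delta}\|u\|_2^2$, which is absorbed as well. This, not a uniform $o(1)$ statement, is the actual role of the exponent $\delta$.
\item Within your own setup there is an alternative: drop $\gamma\widehat{\mathsf I}(\widehat S)\ge 0$ from the basic inequality. Since $\|\sum_e\omega^e\mathbb E[X^e_T\varepsilon^e]\|_2=O(1)$ uniformly in $T$, this already yields $\|u\|_2^2\lesssim 1+\lambda_n s_n$, which suffices when $\lambda_n s_n=O(1)$.
\end{itemize}
With either a priori bound, your Step 4 goes through, and part (2) follows by carrying $\tau^{\rm early}$ as you describe.
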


The proof of Theorem~\ref{thm:convergence} is provided in Appendix~\ref{sec:app:proof:convergence}. When $\lambda_n s_n \lesssim (s_n \log p + \log |\mathcal{E}|) / \min_{e\in\mathcal{E}} n^e$, the convergence rate of $\widehat{\beta}$ is dominated by the term $s_n \log p / \min_{e\in\mathcal{E}} n^e$, thereby accommodating high dimensional settings. Moreover, for the early-stopping estimator $\widehat{\beta}^{\rm early}$, if $\tau^{\rm early} \lesssim (s_n \log p + \log |\mathcal{E}|) / \min_{e\in\mathcal{E}} n^e$, then early stopping does not affect its convergence rate.

\vspace{0.05in}
\begin{theorem} \label{thm:consistency}
    Suppose conditions in Assumptions~\ref{ass:identifiability}--\ref{assump:sparse-moderate} and of Lemma~\ref{lem:identification} hold. We also assume that
    \begin{align} \label{eq:beta-min}
        \min_{j\in S^\star} \left| \beta^\star_j \right|^2 \gtrsim \lambda_n + \frac{s_n \log p + \log |\mathcal{E}|}{ \min_{e\in\mathcal{E}} n^e }.
    \end{align}
    Take $\lambda_n \gtrsim (s_n \log p + \log |\mathcal{E}|) / \min_{e\in\mathcal{E}} n^e $. When $\min_{e\in\mathcal{E}} n^e \gtrsim s_n \log p + \log |\mathcal{E}|$,
    the following are true with probability at least $1 - 7/p$:
    \begin{enumerate}[label = (\arabic*)]
        \item The optimal solution model recovers the causal parents: $\widehat{S} = S^\star$.
        \item Whenever the early-stopping threshold satisfies $\tau^{\rm early} \lesssim  (s_n \log p + \log |\mathcal{E}|) / \min_{e\in\mathcal{E}} n^e$, the resulting model recovers the causal parents $\widehat{S}^{\rm early} = S^\star$.
    \end{enumerate}
\end{theorem}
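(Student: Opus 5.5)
The plan is to work on the same high-probability event on which Theorem~\ref{thm:convergence} holds, which has probability at least $1-7/p$. On that event we already have
$$
\|\widehat\beta-\beta^\star\|_2^2\lesssim \lambda_n s_n+r_n,
\qquad r_n:=\frac{s_n\log p+\log|\mathcal E|}{\min_e n^e},
$$
together with the uniform concentration bounds used in its proof. These bounds control, over all $|S|\le s_n$, the deviations of $\widehat\beta^{e,S}$, $\widehat\nu^{e,S}$, the empirical risk $\widehat{\mathcal R}$, and $\widehat{\mathsf I}(S)$ from their population counterparts. We split $\widehat S=S^\star$ into two inclusions: $S^\star\subseteq \widehat S$ (no false negatives) and $\widehat S\subseteq S^\star$ (no false positives).

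\textbf{Step 1: no false negatives.} Since $\mathrm{supp}(\widehat\beta)\subseteq\widehat S$, any $j\in S^\star\setminus\widehat S$ has $\widehat\beta_j=0$. Hence
$$
|\beta^\star_j|^2\le\|\widehat\beta-\beta^\star\|_2^2.
$$
The bound from Theorem~\ref{thm:convergence} as written involves $\lambda_n s_n$, which is too weak to contradict the beta-min condition \eqref{eq:beta-min}. We therefore redo the basic inequality comparing $(\widehat\beta,\widehat S)$ with the feasible point $(\beta^\star,S^\star)$, keeping the penalty terms explicitly:
$$
\widehat{\mathcal R}(\widehat\beta)-\widehat{\mathcal R}(\beta^\star)+\gamma\bigl[\widehat{\mathsf I}(\widehat S)-\widehat{\mathsf I}(S^\star)\bigr]\le \lambda_n\bigl(s^\star-|\widehat S|\bigr).
$$
The left side should be bounded below by $\kappa_L\|\widehat\beta-\beta^\star\|_2^2$ minus a stochastic term of order $\sqrt{r_n}\,\|\widehat\beta-\beta^\star\|_2+r_n$. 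Here $\widehat{\mathsf I}(S^\star)\lesssim r_n$ because $S^\star$ is exactly invariant. For supports meeting $G$, Lemma~\ref{lem:identification} says the invariance gap dominates the spurious correlation term.

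The right side is at most $\lambda_n|S^\star\setminus\widehat S|$. If $k:=|S^\star\setminus\widehat S|\ge1$, then $\|\widehat\beta-\beta^\star\|_2^2\ge k\min_{j\in S^\star}|\beta_j^\star|^2$. Combining the two inequalities gives
$$
k\min_{j\in S^\star}|\beta^\star_j|^2\lesssim \lambda_n k+r_n,
$$
which contradicts \eqref{eq:beta-min} once its implied constant is large enough.

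\textbf{Step 2: no false positives.} Suppose instead $\widehat S\supsetneq S^\star$ or $\widehat S\not\subseteq S^\star$. We consider two cases.

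First, if $\widehat S\cap G\ne\varnothing$, Assumption~\ref{ass:identifiability} and the choice $\gamma\wedge\gamma\rho\ge2\gamma^\star+\epsilon$ give a population objective gap of order $\epsilon\,(\mathsf I_1+\mathsf I_2)(\widehat S)$. That gap should be bounded away from zero uniformly over the finitely many relevant $S$, or at least should exceed the $O(\sqrt{r_n})$ estimation error, which contradicts optimality. This is the delicate point; see Step 3.

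Second, if $\widehat S\cap G=\varnothing$ but $\widehat S\not\subseteq S^\star$, then $\beta^\star$ is feasible for $\widehat S\cup S^\star$ and the extra variables give no population risk reduction, since $\sum_e\omega^e\mathbb E[X_j^e\varepsilon^e]=0$ for $j\notin G$. The empirical risk decrease from adding $m\ge1$ spurious variables is therefore $O(m\,r_n)$ by the usual projection/chi-square-type bound. The invariance term cannot decrease below its value at $S^\star$ by more than $O(r_n)$. The penalty increases by $\lambda_n m$. With $\lambda_n\ge C r_n$ for a large constant $C$, the objective strictly increases, which contradicts optimality. So $\widehat S\subseteq S^\star$, and together with Step 1, $\widehat S=S^\star$.

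\textbf{Step 3: early stopping, and the main obstacle.} For the early-stopping solution, the same argument goes through with an additive slack $\tau^{\rm early}$ on the right side of every basic inequality. Since $\tau^{\rm early}\lesssim r_n$ is absorbed into the $r_n$ and $\lambda_n$ terms with adjusted constants, $\widehat S^{\rm early}=S^\star$ as well.

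I expect the main obstacle to be the uniform control in Step 2. Two things need care. The invariance-penalty fluctuations $|\widehat{\mathsf I}(S)-\mathsf I(S)|$, including the quartic $\nu$-term, must be bounded over all $|S|\le s_n$ using Assumption~\ref{assump:sparse-moderate}. The excess-risk reduction from spurious non-$G$ variables must scale like $m\,r_n$ rather than $s_n r_n$. For the latter I would first try conditioning on $S^\star$ and using a union bound over the $\binom{p}{m}$ added sets, which gives the $m\log p$ factor matched by $\lambda_n m$.
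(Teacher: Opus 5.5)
Your architecture matches the paper's. Beta-min excludes missed parents, $\lambda_n$ excludes extra variables, and the penalty is kept as $\lambda_n(|\widehat S|-s^\star)$ rather than $-2\lambda_n s_n$. Your Case B and early-stopping arguments are fine.

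The genuine gap is Step 2, Case A ($\widehat S\cap G\neq\varnothing$). You leave it open and propose to close it with a margin: $(\mathsf I_1+\mathsf I_2)(S)$ bounded away from zero, or $\gg\sqrt{r_n}$, uniformly over supports meeting $G$. No hypothesis provides this.
\begin{itemize}
\item Assumption~\ref{ass:identifiability} only gives pointwise positivity.
\item The requirement $\gamma^\star<\infty$ bounds the \emph{ratio} of $\|\sum_e\omega^e\mathbb E[\varepsilon^eX^e_S]\|_2^2$ to $(\mathsf I_1+\mathsf I_2)(S)$, not the denominator itself.
\item The family of relevant supports grows with $p$, so "finitely many" gives no uniform margin.
\end{itemize}
The same problem silently affects your Step 1. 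The lower bound you assert there must also hold for $\widehat S$ meeting $G$. If controlling $\widehat{\mathsf I}$ really required a margin, Step 1 would be unjustified as well.

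The missing idea is that the fluctuation of the invariance score is controlled \emph{relative} to $\mathsf I(S)$. Set $D_1(S)=\sum_{e<f}\omega^{e,f}\|(\widehat\beta^{e,S}-\beta^{e,S})-(\widehat\beta^{f,S}-\beta^{f,S})\|_2^2$, which satisfies $D_1(S)\lesssim r_n(1+\|\beta-\beta^\star\|_2^2)$. Cauchy--Schwarz and Young then give
\[
|\widehat{\mathsf I}_1(S)-\mathsf I_1(S)|\le 2\sqrt{\mathsf I_1(S)D_1(S)}+D_1(S)\le \frac{\epsilon}{2\gamma}\,\mathsf I_1(S)+\Bigl(\frac{2\gamma}{\epsilon}+1\Bigr)D_1(S),
\]
and likewise for $\mathsf I_2$ (Lemma~\ref{lem:D-I}). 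In Lemma~\ref{lem:identification2}, the spurious-correlation term is cancelled by $2\gamma^\star(\mathsf I_1+\mathsf I_2)(S)$, leaving a surplus $\epsilon(\mathsf I_1+\mathsf I_2)(S)$. That surplus only has to absorb these cross terms; it is never used for separation.

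Combine this with the risk concentration, and with Lemma~\ref{lem:OLS} plus Assumption~\ref{assump:sparse-moderate} for the quartic $\nu$-term. This yields the paper's Lemma~\ref{lem:lower-bound} in the form
\[
\widehat{\mathsf F}(\beta^{{\rm OLS},S})-\widehat{\mathsf F}(\beta^\star)\ge c\,\|\beta^{{\rm OLS},S}-\beta^\star\|_2^2-Cr_n+\lambda_n(|S|-s^\star)\qquad\text{for all }|S|\le s_n,
\]
whether or not $S$ meets $G$. With this bound, Cases A and B collapse into one. If $S^\star\subsetneq\widehat S$, the right side is at least $\lambda_n-Cr_n>0$, which contradicts the basic inequality.

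The paper organizes the argument by cardinality: $|S|>s^\star$ is handled by $\lambda_n$, and $|S|\le s^\star$ with $S\neq S^\star$ is handled by beta-min. It compares against the strictly negative value $-b^\top A^{-1}b$ attained by pooled OLS on $S^\star$.

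Your second ``obstacle'' also disappears. The uniform risk-reduction bound $O(r_n)$ already contains $s_n\log p$, and the penalty increase is at least $\lambda_n\gtrsim r_n$. So no $m\,r_n$ scaling is needed.
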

The proof of Theorem~\ref{thm:consistency} is provided in Appendix~\ref{sec:app:proof:consistency}. The argument proceeds by showing that the optimal objective value under the true model $S^\star$ is strictly smaller than that under any incorrect model. This separation is ensured, in part, by the beta-min condition in \eqref{eq:beta-min}, which imposes a lower bound on the signal strength of the true causal predictors so that they can be recovered by the estimated model. It also relies on an appropriate choice of $\lambda_n$, which must be sufficiently large to discourage the inclusion of redundant features. Furthermore, the sub-optimal solution from early stopping can also achieve perfect recovery if the corresponding threshold $\tau^{\rm early}$ is not too large.

%\vspace{0.2in}
\section{Task 2: Testing for causal parents via convex mixed-integer programming}
\label{sec:task2}
Although procedure~\eqref{eqn:task-1-mip} can estimate an invariant set, recovering the true causal parent set $S^\star$ generally requires additional, potentially restrictive identifiability assumptions (e.g., Assumption~\ref{ass:identifiability}). Under substantially weaker requirements, this section studies the task of identifying plausible causal predictors using a procedure of the form~\eqref{eqn:task-2-mip}. In particular, the plausible causal predictor set $S(\mathcal{E})= \bigcap_{S \in\mathcal{S}_{\rm LSCR}(\mathcal{E}) }S$ provides useful information about $S^\star$. 
First, since $S(\mathcal{E})\subseteq S^\star$, it provides a valid lower bound on the causal parent set. At the empirical level, we further show that the estimated plausible causal predictor set, denoted by $\widehat{S}(\mathcal{E})$, satisfies $\widehat{S}(\mathcal{E})\subseteq S^\star$ at a designated significance level. This result provides a form of uncertainty quantification by controlling the inclusion of noncausal variables. Second, every variable in $S(\mathcal{E})$ is a causal parent, and recovering even a subset of causal parents can already be informative and useful for practitioners.

Our uncertainty quantification results are asymptotic, with the sample size growing in each environment. Throughout this section, we assume either that the number of predictors $p$ remains fixed as the sample size increases or that the analysis is restricted to a subset of predictors. For example, procedure~\eqref{eqn:task-1-mip} can be applied to one portion of the data to identify an invariant set. The remaining data can then be used to identify which features in this set are likely to be causal parents.

%Our uncertainty quantification results are asymptotic in terms of a growing sample size in each environment. Throughout the section, we assume that either the number of predictors $p$ is fixed as the sample size increases or that the analysis is done on a subset of the predictors. For example, on a subset of the data, \eqref{eqn:task-1-mip} can be deployed to find an invariant set. On the remaining data, we can identify features in this invariant set that are likely to be causal parents.  

%Recall that the collection of plausible causal predictors is defined as $S(\mathcal{E})= \bigcap_{S \in\mathcal{S}_{\rm LSCR}(\mathcal{E}) } S$. We also denote $n := \min_e n^e$ and $n^\star := \sum_e n^e$.
%In this section, we assume that the samples $\{(X_i^e,Y_i^e)\}_{i=1}^{n^e}$ are independent and identically distributed within each environment $e\in\mathcal{E}$ and are independent across environments. The data-generating distributions follow from the SCM in~\eqref{eqn:scm_main}.

\subsection{A preliminary multiple hypothesis framework} To obtain an estimate $\widehat{S}(\mathcal{E})$, a naive approach, following \citet{peters2016causal}, is to test each candidate set $S$ for LSCR invariance and take the intersection of all sets for which the null hypothesis is not rejected.  This, however, scales exponentially with the number of predictors. Instead, we adopt the approach in \citet{martinet2022variance}, which allows us to estimate $\widehat{S}(\mathcal{E})$ from $p$ hypothesis tests. Specifically, for every $k\in [p]$, we consider the null hypothesis: 
$$
H_{0,k}: k\notin S(\mathcal{E}) \quad \Leftrightarrow \quad \exists S\subseteq[p]\setminus\{k\}  \text{ such that } S \text{ is LSCR invariant}. 
$$
Then, we obtain an estimate $\widehat{{S}}(\mathcal{E})$ via
\begin{align} \label{eq:S-hat-defn}
    \widehat{S}(\mathcal{E}):= \{k\in[p]: H_{0,k} \text{ is rejected} \}= \{k\in[p]: \widehat{T}(k) > \widehat{\tau}_k(\alpha)\}.
\end{align}
Here, $\widehat{T}(k)$ represents a test statistic for the hypothesis $H_{0,k}$ and $\widehat{\tau}_k(\alpha)$ represents the corresponding threshold. For the test to be calibrated, for every $k\in[p]$, we want:
\begin{equation}\mathbb{P}_{H_{0,k}}( \widehat{T}(k) > \widehat\tau_k(\alpha) ) \leq \alpha / p + o(1)\quad \forall k\in[p],
\label{eq:test-cond}
\end{equation}
where the factor of $p$ is due to a Bonferroni correction. Then, the estimated plausible causal predictor set $\widehat{S}(\mathcal{E})$ is unlikely to contain any noncausal variables:
\begin{align*}
\begin{split}
    \mathbb{P} \left( \widehat{S}(\mathcal{E}) \not\subseteq S^\star \right) 
    \leq \mathbb{P}\left( H_{0,k} \text{ is rejected for some noncausal $k$} \right)\leq \sum_{k=1}^p \mathbb{P}_{H_{0,k}} \left( \widehat{T}(k) > \widehat{\tau}_k(\alpha) \right) \leq \alpha+o(1).
\end{split}
\end{align*}
\iffalse
Suppose that, for each hypothesis $H_{0,k}$, we can construct a test statistic $\widehat{T}(k)$ and a corresponding threshold $\tau_k(\alpha)$ such that 
$
\mathbb{P}_{H_{0,k}} ( \widehat{T}(k) \geq \tau_k(\alpha) ) \leq \alpha / p
$.
Then the estimated plausible causal predictor set $\widehat{S}(\mathcal{E})$ is unlikely to contain any noncausal variables:
\fi
\iffalse
To calibrate the test statistics, we propose using a bootstrap procedure to obtain critical values $\widehat{\tau}_k(\alpha)$ for each $k\in[p]$. If these critical values satisfy
\begin{equation}\mathbb{P}_{H_{0,k}}( \widehat{T}(k) > \widehat\tau_k(\alpha) ) \leq \alpha / p + o(1)\quad \forall k\in[p],
\label{eq:test-cond}
\end{equation}
then we have $\limsup_{n\rightarrow\infty} \mathbb{P}(\widehat{S}(\mathcal{E}) \not\subseteq S^\star) \leq \alpha$. 
\fi

In Section~\ref{sec:task2:stat}, we describe how the test statistics $\widehat{T}(k)$ can be obtained by mixed-integer programming. In Section~\ref{sec:task2:bootstrap}, we describe a multiplier bootstrapping procedure to obtain the thresholds $\widehat\tau_k(\alpha)$.

%W
%e provide further details on the construction of this test statistic based on the invariance score in Section~\ref{sec:task2:stat}.

%Our Throughout this section, we assume either that $p$ remains fixed as the sample size increases or that multiple testing is performed on a screened set containing at most $K>0$ features

%Throughout this section, we assume either that $p$ remains fixed as the sample size increases or that multiple testing is performed on a screened set containing at most $K>0$ features. Consequently, the Bonferroni correction factor does not depend on the sample size. 

%\vspace{0.1in}
\subsection{Obtaining test statistics via convex mixed-integer programming}
\label{sec:task2:stat}
By leveraging our invariance score $\widehat{\mathsf{I}}(\cdot)$ in \eqref{eqn:score_pop}
and $\widehat{\mathsf{I}}_+(\cdot)$ in Remark~\ref{rem:extended-score}, we construct two types of test statistics $\widehat{T}(k)$ for the hypothesis $H_{0,k}$. Let $n^\star := \sum_e n^e$. Specifically, we obtain the test statistics via the following optimization problems:
\begin{eqnarray}
\begin{aligned}
  \widehat{T}_\text{MIP-test-2}(k)  &:=  \min_{S\subseteq[p]} 
    \quad & n^\star \widehat{\mathsf{I}}(S)\quad 
\text{subject to:}~~ k\notin S \\
\widehat{T}_\text{MIP-test-4}(k)  &:=  \min_{S\subseteq[p]} 
    \quad & n^\star \widehat{\mathsf{I}}_+(S)\quad 
\text{subject to:}~~ k\notin S
\end{aligned}
\label{eq:test-stat}
\end{eqnarray}
\iffalse
\begin{align} \label{eq:test-stat}
\begin{split}
    \widehat{T}_\text{MIP-test-2}(k)  =  \min_{S\subseteq[p]} 
    \quad & n^\star \widehat{I}(S)\quad 
\text{subject to:}~~ k\notin S.
\end{split}
\end{align}
The corresponding detailed MIP formulation is provided in Appendix~\ref{sec:app:mip-details}.
\fi

With the invariance scores (up to a factor of $n^\star$) as its objective, these optimization problems search for the most invariant model that excludes feature $X_k$. Specifically, when $k\in S(\mathcal{E})$, excluding $X_k$ from the feasible models precludes the existence of an invariant set. Consequently, the optimal objective values $\widehat{T}_\text{MIP-test-2}(k)$ and $\widehat{T}_\text{MIP-test-4}(k)$, which quantify the strongest attainable degrees of invariance under this restriction, are expected to be large.
In contrast, when $k\notin S(\mathcal{E})$, excluding $X_k$ does not preclude the existence of an invariant set $S$, and hence $\widehat{T}_{\text{MIP-test-2}}(k)$ and $\widehat{T}_{\text{MIP-test-4}}(k)$ will be small.  Therefore, the null hypothesis $H_{0,k}$ can be tested by choosing an appropriate threshold $\tau_k$ that separates these two cases. The factor $n^\star$ rescales the invariance score so that the test statistic has a nondegenerate limiting distribution.

As described in Section~\ref{sec:MIP}, both of the optimization problems in \eqref{eq:test-stat} can be reformulated as a convex MIP. See also Section~\ref{sec:app:mip-details} for additional details.
% In Section~\ref{sec:task2:bootstrap}, we develop a bootstrap procedure to approximate this distribution and calibrate the corresponding threshold.

\vspace{0.05in}
\begin{remark}
Computing the test statistics in \eqref{eq:test-stat} requires solving a convex MIP for each $k\in[p]$. To reduce computation, one may use our coordinate descent algorithm to obtain approximate solutions, which often perform well in practice. These solutions can also be used as a warm start of the MIP solver, which can then provide certified optimality gaps and, if desired, solve the problems to global optimality.
\end{remark}

\vspace{0.05in}
\begin{example}
\label{ex:bar}
    We continue with the setup in Example~\ref{ex:teaser}, which contains four covariates $(X_1,X_2,X_3,X_4)$ and a response $Y$. Among these covariates, only $X_1$ and $X_3$ are causal parents of $Y$. We further augment the data with nine independent noncausal variables. Across the three environments, every LSCR-invariant set contains either $\{1,2\}$ or $\{1,3\}$ as a subset. Since both $\{1,2\}$ and $\{1,3\}$ are themselves LSCR-invariant, it follows that $S(\mathcal{E})=\{1\}$.
    Figure~\ref{fig:demo-barplot} presents the results obtained using MIP-test-2. Only feature $X_1$ has a test statistic exceeding the threshold. This is because, after excluding $X_1$, no invariant set exists and therefore even the smallest attainable invariance score is large.
    \begin{figure}[!ht]
    \centering
    \includegraphics[width=0.9\textwidth]{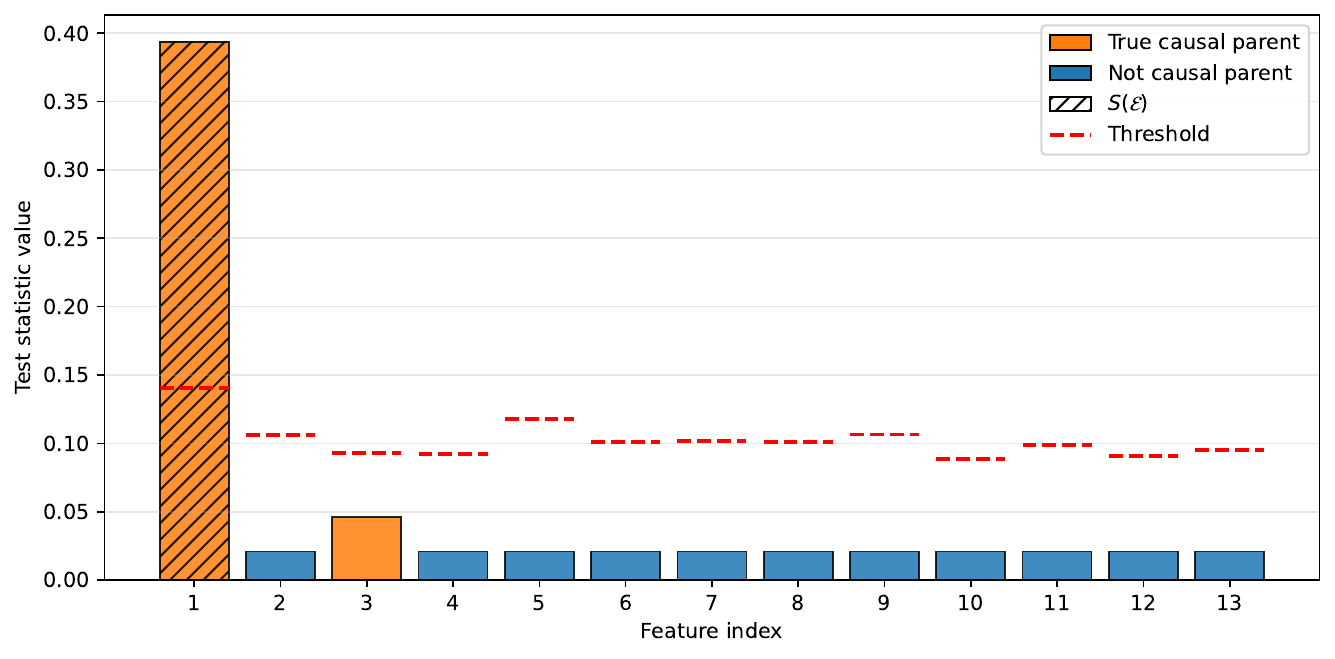}
    \caption{\small Test statistic $\widehat{T}_{\text{MIP-test-2}}(k)$ for every $k$. Orange bars denote the true causal parents, while blue bars denote noncausal variables; the hatched bar indicates the feature in $S(\mathcal{E})$. The red dashed segments show the corresponding bootstrap thresholds that are described in Section~\ref{sec:task2:bootstrap}. Only $X_1$ exceeds its threshold, correctly identifying $S(\mathcal{E})=\{1\}$. Full experimental details are provided in Appendix~\ref{sec:app:figure-teaser}.}
    \label{fig:demo-barplot}
\end{figure}
    
\end{example}

%\vspace{0.1in}
\subsection{A bootstrap procedure for testing}
\label{sec:task2:bootstrap}

Using a multiplier bootstrap procedure, we construct thresholds $\tau_k(\alpha)$ satisfying condition \eqref{eq:test-cond}. Since our approach applies to both test statistics $\widehat{T}_\text{MIP-test-2}(k)$ and $\widehat{T}_\text{MIP-test-4}(k)$, we present it in a general form.

For any $k\not\in S(\mathcal{E})$, let $\mathcal{S}_k$ denote the collection of all LSCR invariant sets excluding feature $X_k$, and let $\widehat{S}$ denote the set encoded by an optimal solution to \eqref{eq:test-stat}. Suppose that $\widehat{S} \in \mathcal{S}_k$. If we set $\tau_k(\alpha)$ to be the $(1-\alpha/p)$-quantile of the distribution of $n^\star \max_{S\in \mathcal{S}_k} \widehat{\mathsf{I}}(S)$, then we have $\mathbb{P}(\widehat{T}(k) > \tau_k(\alpha) ) \leq \alpha/p$. Implementing this strategy requires two key steps. First, because the collection $\mathcal{S}_k$ may be difficult to estimate exactly, we construct an enlarged collection $\mathcal{S}_k^\circ\supseteq\mathcal{S}_k$ such that $\widehat{S}\in\mathcal{S}_k^\circ$ with high probability. Second, we use a multiplier bootstrap procedure to approximate the distribution of $n^\star \max_{S\in \mathcal{S}_k^\circ} \widehat{\mathsf{I}}(S)$.

More specifically, we split the data into two independent subsamples, $\mathcal{D}_1$ and $\mathcal{D}_2$. 
Using $\mathcal D_1$, we construct a candidate collection $\mathcal S_k^\circ$ containing up to $\bar M$ distinct predictor sets. In practice, $\bar M$ can be chosen to provide sufficient diversity among the candidate invariant sets, without requiring the collection to be exhaustive.
One strategy is to generate bootstrap datasets from $\mathcal D_1$ and, for each bootstrap dataset, solve the optimization problem in \eqref{eq:test-stat} to obtain an estimated invariant set. This procedure is repeated until $\bar M$ distinct sets have been collected. We use this strategy in practice for computational efficiency.
Another strategy is to sequentially retrieve sets $S$ in ascending order of the objective value in \eqref{eq:test-stat}, starting with the minimizer, followed by the second best solution, and so forth, until $\bar M$ sets have been retrieved. This sequential procedure is particularly amenable to mixed-integer programming if we use the procedure described in Remark~\ref{rem:top-k}. Details are provided in Appendix~\ref{sec:app:bootstrap}. We also establish a lemma for the second strategy showing that, under a Gaussian model, the resulting collection $\mathcal{S}_k^\circ$ contains $\mathcal{S}_k$ with probability tending to one. Moreover, the condition $\widehat{S}\in\mathcal{S}_k^\circ$ is satisfied as well.

We next obtain a threshold for $\widehat{T}(k)$ using the multiplier bootstrap procedure. Specifically, for each bootstrap replication, the algorithm generates independent Gaussian multipliers and uses them to construct bootstrap analogues of the coefficient and risk components of the invariance score for every $S\in\mathcal{S}_k^\circ$. It then takes the maximum of the resulting bootstrap invariance scores $\widehat{\mathsf I}(S)$ over $S\in\mathcal{S}_k^\circ$, producing one bootstrap realization of the maximal fluctuation. Repeating this procedure yields an empirical approximation to the distribution of $n^\star \max_{S\in\mathcal{S}^\circ_k} \widehat{\mathsf I}(S) $, and the algorithm returns its estimated $(1-\alpha/p)$-quantile as the threshold. The multiplier bootstrap procedure, specifically for the test statistic $\widehat{T}_{\text{MIP-test-2}}(k)$ is described in Algorithm~\ref{algo:bootstrap-threshold}. The extension to $\widehat{T}_{\text{MIP-test-4}}(k)$ is provided in Appendix~\ref{sec:app:gain-more-power}.

\vspace{0.05in}
\begin{remark}
Although our method involves discrete variable selection, the multiplier bootstrap in Algorithm~\ref{algo:bootstrap-threshold} is not used to approximate the distribution of the selected predictor set or the MIP optimizer, for which standard bootstrap methods may be invalid due to nonregularity. Instead, conditional on the candidate collection $\mathcal S_k^\circ$, we use the multiplier bootstrap to approximate the distribution of the largest sampling fluctuation in the invariance statistics across $S\in\mathcal S_k^\circ$. Since $\mathcal S_k^\circ$ is constructed using an independent subsample $\mathcal{D}_1$, it can be treated as fixed when performing inference on $\mathcal D_2$.
\end{remark}

% \at{I think we should add a remark after the algorithm with multiple things:
% 1)  $t_k^b$ can be solved via convex MIP.
% 2) you may say that since it needs to be solved many times you can use coordinate descent and that tends to be accurate.
% 2) even if it is not solved to optimality, we can take add the optimality gap to its value. This ensures that that our guarantees continue to hold?
% }

The following theorem establishes asymptotic type-I error control, ensuring that the estimated set $\widehat{S}(\mathcal{E})$ contains no noncausal variables with probability at least $1-\alpha$. The proof is provided in Appendix~\ref{sec:app:proof:type-1}.

%\widehat{S}(\mathcal{E}) = \{k: \widehat{T}(k)\geq \tau_k(\alpha)\}$$ achieves asymptotic type-I error control. The proof is provided in~\ref{sec:app:proof:type-1}.

% Algorithm ======================
\begin{algorithm}[!ht]
\caption{Bootstrap estimation of the threshold $\widehat{\tau}_k(\alpha)$}
\label{algo:bootstrap-threshold}
\begin{algorithmic}[1]
\Require Multi-environment data 
$\{(X_{\cdot, i}^e,Y_{i}^e)\}_{i=1}^{n^e}$ for $e\in\mathcal{E}$,
collection $\mathcal{S}_k^\circ$, number of bootstrap samples $B$,
and significance level $\alpha$.
\Ensure Bootstrap threshold $\widehat{\tau}_k(\alpha)$.
\For{$b=1,\ldots,B$}
    \State For each $e\in\mathcal{E}$, independently sample
    $
    \xi_1^e,\ldots,\xi_{n^e}^e
    \stackrel{\mathrm{iid}}{\sim}\mathcal{N}(0,1).
    $

    \For{$S\in\mathcal{S}_k^\circ$}

        \State For each $e\in\mathcal{E}$, compute the OLS coefficients
        $\widehat{\beta}^{e,S}$ and 
        $$
        \widehat{r}_i^{e,S} = Y_i^e-(X_{\cdot,i}^{e})^\top\widehat{\beta}^{e,S}, \quad
        \widehat{\Sigma}^e_S = \frac{1}{n^e} \sum_{i=1}^{n^e} X^e_{S,i} X^{e \top}_{S,i}, 
        $$
        $$
        G^{\star,e,S} =
        \frac{1}{\sqrt{n^e}} \sum_{i=1}^{n^e} \xi_i^e X_{S,i}^{e}\widehat{r}_i^{e,S}, \quad
        H^{\star,e,S} =
        \frac{1}{\sqrt{n^e}} \sum_{i=1}^{n^e} \xi_i^e
        \left\{ (\widehat{r}_i^{e,S})^2 - \frac{1}{n^e} \sum_{j=1}^{n^e} (\widehat{r}_j^{e,S})^2
        \right\}.
        $$
        
        \State For each pair $e<f$ in $\mathcal{E}$, compute
        $$
        L_n^{\star,e,f,S} =
        \sqrt{\frac{n^\star}{n^e}} (\widehat{\Sigma}^{e}_S)^{-1}G^{\star,e,S} - \sqrt{\frac{n^\star}{n^f}} (\widehat{\Sigma}^{f}_S)^{-1}G^{\star,f,S},
        $$
        $$
        V_n^{\star,e,f,S} = \sqrt{\frac{n^\star}{n^e}}H^{\star,e,S} - \sqrt{\frac{n^\star}{n^f}}H^{\star,f,S}.
        $$

    \EndFor
    \State Compute
    $$
    t_k^b = \max_{S\in\mathcal{S}_k^\circ} \left\{  \sum_{e<f} \omega^{e,f} \|L_n^{\star,e,f,S}\|_2^2 + \rho \sum_{e<f} \omega^{e,f} |V_n^{\star,e,f,S}|^2 \right\}.
    $$

\EndFor
\State Set
$$
\widehat{\tau}_k(\alpha) = \operatorname{quantile} \left( \{t_k^b\}_{b=1}^B; 1-\alpha/p \right).
$$
\State \Return $\widehat{\tau}_k(\alpha)$.
\end{algorithmic}
\end{algorithm}
% Algorithm ======================

\vspace{0.05in}
\begin{theorem}
\label{thm:type-1}
Suppose that $\widehat{S}\in\mathcal{S}_k$, $\mathcal{S}_k^\circ\supseteq\mathcal{S}_k$, and $n^e/n^\star \rightarrow \rho_e \in(0,1)$ for all $e\in\mathcal{E}$. 
Assume that $\mathbb{E}\, [\|X^e\|_2^4+|Y^e|^4] <\infty$ and $\lambda_{\min}(\Sigma_S^e)>0$.
Assume that the bootstrap sample size $B_n \rightarrow \infty$ as $n \rightarrow \infty$. Then the test statistic $\widehat{T}(k)$ defined in \eqref{eq:test-stat}, together with the bootstrap threshold $\widehat{\tau}_k(\alpha)$ obtained from Algorithm~\ref{algo:bootstrap-threshold}, provides asymptotic type-I error control for testing $H_{0,k}$:
$$
\mathbb{P}_{H_{0,k}}\!\left(
\widehat{T}(k)>\widehat{\tau}_k(\alpha)
\right)
\leq
\frac{\alpha}{p}+o(1)
$$
Consequently, for the estimated plausible causal predictor set defined in \eqref{eq:S-hat-defn}, 
$$
\lim\inf_{n \rightarrow\infty} \mathbb{P}\left( \widehat{S}(\mathcal{E}) \subseteq S^\star \right) \geq 1 - \alpha.
$$
\end{theorem}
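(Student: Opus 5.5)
Under $H_{0,k}$ the collection $\mathcal{S}_k$ is nonempty, and by hypothesis $\widehat S\in\mathcal S_k\subseteq\mathcal S_k^\circ$. The plan is to sandwich the test statistic as
$$\widehat T(k)=n^\star\widehat{\mathsf I}(\widehat S)\le n^\star\max_{S\in\mathcal S_k^\circ\cap\mathcal S_{\rm LSCR}(\mathcal E)}\widehat{\mathsf I}(S)=:W_n .$$
Since $\mathcal S_k^\circ$ is built from the independent subsample $\mathcal D_1$, I condition on it throughout and treat it as a fixed, finite collection of at most $\bar M$ sets. The task then becomes showing that the bootstrap quantile $\widehat\tau_k(\alpha)$ asymptotically dominates the $(1-\alpha/p)$-quantile of $W_n$.

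\textbf{Step 1 (joint CLT).} Fix an invariant $S$ and let $\beta^S,\nu^S$ denote the common population values. Each environment's estimators admit the linearizations
$$\sqrt{n^e}(\widehat\beta^{e,S}-\beta^S)=(\widehat\Sigma^e_S)^{-1}n_e^{-1/2}\sum_i X^e_{S,i}r^{e,S}_i ,\qquad \sqrt{n^e}(\widehat\nu^{e,S}-\nu^S)=n_e^{-1/2}\sum_i\{(r^{e,S}_i)^2-\nu^S\}+o_P(1),$$
where $r^{e,S}=Y^e-X^{e\top}\beta^S$ is the population residual. The remainder in the risk expansion is $\sqrt{n^e}\,\|\widehat\beta^{e,S}-\beta^S\|^2_{\widehat\Sigma}=O_P(n^{-1/2})$, using the normal equations $\mathbb E[X_S r]=0$. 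The fourth-moment assumption makes the vectors $(X_S r, r^2)$ square integrable, and $\lambda_{\min}(\Sigma_S^e)>0$ together with the LLN gives $\widehat\Sigma^e_S\to\Sigma^e_S$. Because the environments are independent and $n^e/n^\star\to\rho_e$, stacking over $e$ and over the finitely many $S\in\mathcal S_k^\circ$ yields joint asymptotic normality. Invariance cancels the population centering in the differences $\widehat\beta^{e,S}-\widehat\beta^{f,S}$ and $\widehat\nu^{e,S}-\widehat\nu^{f,S}$. The vector $(\sqrt{n^\star}(\widehat\beta^{e,S}-\widehat\beta^{f,S}),\sqrt{n^\star}(\widehat\nu^{e,S}-\widehat\nu^{f,S}))_{e<f,S}$ therefore converges to a Gaussian $Z$. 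By the continuous mapping theorem, $W_n\Rightarrow W:=\max_{S}\sum_{e<f}\omega^{e,f}(\|Z_L^{efS}\|^2+\rho|Z_V^{efS}|^2)$, taken over the invariant $S$ in the collection.

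\textbf{Step 2 (multiplier bootstrap consistency).} Conditionally on the data, $(G^{\star,e,S},H^{\star,e,S})$ is exactly Gaussian with covariance equal to the empirical second moments of $(X_{S,i}\widehat r_i,\ \widehat r_i^2-\overline{\widehat r^2})$. The key check is that these empirical moments converge in probability to the population covariances from Step 1. This reduces to replacing $\widehat r$ by $r$ at cost $\|\widehat\beta-\beta\|$ times moments up to order four, which the moment assumption controls. Hence $L_n^\star,V_n^\star$ converge conditionally in distribution to the same Gaussian limit $Z$, in probability. For non-invariant $S$ the bootstrap still produces a tight, nonnegative term, and the statistic $t_k^b$ takes a maximum over all $S\in\mathcal S_k^\circ$. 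So $t_k^b$ stochastically dominates the bootstrap analogue of $W$, and its limit $W^+\ge W$ pointwise on the common Gaussian.

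\textbf{Step 3 (quantiles).} With $B_n\to\infty$, the empirical $(1-\alpha/p)$-quantile converges to the conditional quantile, which converges to $q^+_{1-\alpha/p}\ge q_{1-\alpha/p}(W)$. This is the main obstacle: passing from weak convergence to quantile convergence needs continuity of the limiting distribution at that quantile. I would argue that $W$ is a maximum of finitely many Gaussian quadratic forms, and its cdf is continuous on $(0,\infty)$ unless the limit is degenerate. If $W\equiv0$, then $W_n=o_P(1)$ and any threshold tends to a nonnegative limit, in which case I would use a small slack or a strict-inequality argument. The sandwich then gives $\mathbb P(\widehat T(k)>\widehat\tau_k)\le\mathbb P(W_n>\widehat\tau_k)\to\mathbb P(W>q^+)\le\alpha/p$. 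The set inclusion follows from the Bonferroni union bound displayed before the theorem, summing over noncausal $k$, each of which satisfies $H_{0,k}$ because $S^\star$ is invariant and excludes it.
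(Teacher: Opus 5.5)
Your proposal is correct and follows essentially the paper's route. You bound $\widehat T(k)$ by the maximum invariance score over invariant sets using $\widehat S\in\mathcal S_k$, get a joint Gaussian limit from asymptotic linearity of the environment-wise OLS coefficients and risks, show the multiplier bootstrap reproduces that limit, use that $t_k^b$ dominates the bootstrap maximum over $\mathcal S_k$ because $\mathcal S_k^\circ\supseteq\mathcal S_k$, and finish with $B_n\to\infty$ and Bonferroni.

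The only real difference is in the quantile step. The paper compares distribution functions at finite $n$, via $\widehat G^\star_{n,k}\le G^\star_{n,k}+\Delta_{2,n,k}\le F^\star_{n,k}+\Delta_{2,n,k}\le F_{n,k}+\Delta_{1,n,k}+\Delta_{2,n,k}$. That way it never needs a limiting law $W^+$ for the bootstrap over the non-invariant members of $\mathcal S_k^\circ$, which your argument does use. In turn, your check that the limiting cdf is continuous on $(0,\infty)$, with the fully degenerate case set aside, supplies the continuity the paper simply assumes when it invokes P\'olya's theorem.
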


% \at{a comment a person may have (for example Ali did in your exam) is that if you have to worry about irregularity as bootstrap is known to not be valid for discrete structures. The key distinction is that we are not using the bootstrap to approximate the distribution of a discrete model selection procedure or the optimizer of the mixed-integer program. Such distributions can be nonregular. We use the multiplier bootstrap to approximate the distribution of the maximum of continuous, asymptotically linear empirical processes indexed by candidate predictor subsets.  }

% Beyond type-I error control, the power of the proposed procedure can be further improved by incorporating two additional invariance conditions for a candidate model $\beta$: (i) $\mathbb{E}[ Y^e - X^e\beta ] =\mathbb{E}[ Y^f - X^f\beta ]$ for all $e,f\in\mathcal{E}$; and (ii) $\mathbb{E} [X^e_j (Y^e - X^e \beta)] = \mathbb{E} [X^f_j (Y^f - X^f \beta)] $ for all $j\in \mathrm{supp}(\beta)$ and $e,f\in\mathcal{E}$. Incorporating the corresponding empirical discrepancies into the invariance score provides additional information for distinguishing non-invariant models and can therefore improve the power to detect plausible causal predictors. The details are provided in Appendix~\ref{sec:app:gain-more-power}. \at{this should have been presented earlier}

\vspace{0.05in}
\begin{example}
\label{ex:task2_results}
    We use the same setup as in Example~\ref{ex:bar}. Figure~\ref{fig:teaser1} compares the performance of four methods for recovering $S(\mathcal{E})=\{1\}$ as the minimum sample size $\min_{e\in\mathcal{E}} n^e$ increases. We consider two approaches using MIP: MIP-test-2, which uses the test statistic in~\eqref{eq:test-stat} based on two components of the invariance score, and MIP-test-4, given in~\eqref{eq:mip-task2-power}, which incorporates two additional components into the test statistic. The MIP methods, particularly MIP-test-4, achieve nearly perfect recovery at relatively small sample sizes while remaining computationally efficient in this toy example. The performance gap between MIP-test-2 and MIP-test-4 demonstrates the benefit of incorporating the two additional components into the test statistic. ICP also achieves nearly perfect recovery when the sample size is sufficiently large because it uses the same invariance principle as MIP-test-2, although its testing procedure appears less powerful. Both MIP approaches and ICP achieve the desired type-I error control, whereas the WVM fails to maintain the target coverage.
\end{example}

\vspace{0.05in}
\begin{figure}[!ht]
\centering
\includegraphics[width=\textwidth]{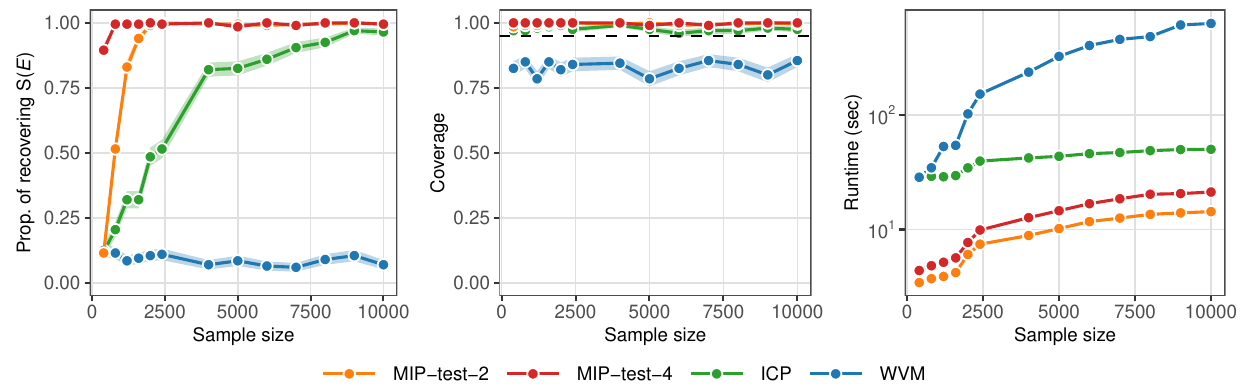}
\caption{\small Performance of Example~\ref{ex:task2_results} as sample size $\min_e n^e$ varies. 
Left panel: proportion of 200 trials in which the true plausible causal predictor set $S(\mathcal E)$ is recovered.
Middle panel: coverage probability, defined as the proportion of 200 trials in which $\widehat S(\mathcal E)\subseteq S^\star$, with the dashed line indicating the target level $0.95$.
Right panel: average runtime over 200 trials, shown on a logarithmic scale. We compare our proposed methods, \textit{MIP-test-2} and \textit{MIP-test-4}, with \textit{ICP}~\citep{peters2016causal} and \textit{WVM}~\citep{martinet2022variance}.
Solid lines show averages and shaded bands indicate $\pm 1$ standard error. 
Additional experimental details are provided in Appendix~\ref{sec:app:figure-teaser}.}
\label{fig:teaser1}
\end{figure}

%\vspace{0.2in}
\section{Experiments}
\label{sec:experiments}

We conduct four complementary simulation studies to investigate different aspects of our proposed framework. Across all experiments, data are generated from linear SCMs following~\eqref{eqn:scm_main}. The true causal parent set $S^\star$ and its coefficient vector $\beta^\star$ remain fixed across environments, so that the causal conditional distribution $Y^e \mid X^e_{S^\star}$ is invariant. Meanwhile, the covariate distributions are perturbed through environment specific interventions. 

We focus on the estimator for Task 1 in Sections~\ref{sec:experiment:stat-behavior}--\ref{sec:experiment:p-baseline} and the testing procedure for Task 2 in Sections~\ref{sec:experiment:teaser2}--\ref{sec:experiment:test-based}. In Section~\ref{sec:experiment:stat-behavior}, we examine the statistical behavior of our invariant estimator~\eqref{eq:related-task-1}, including its dependence on the number of environments and sample size, as well as its robustness to the choice of tuning parameters. In Section~\ref{sec:experiment:p-baseline}, we compare its statistical and computational performance with existing invariant prediction methods. In Section~\ref{sec:experiment:teaser2}, we compare the procedures developed for Tasks~1 and~2 and investigate how their different goals affect empirical performance. Finally, Section~\ref{sec:experiment:test-based} compares the statistical power, coverage, and computational efficiency of different testing procedures as the number of environments increases. Details of data generation mechanisms and implementation are provided in Appendix~\ref{sec:app:experiments}.

\subsection{Behavior of estimator \texorpdfstring{\eqref{eq:related-task-1}}{(5)} }
\label{sec:experiment:stat-behavior}
We generate data from a linear structural equation model with $p=20$ variables and introduce environmental heterogeneity through soft interventions. Across environments, only the linear coefficients used to generate the covariates are perturbed, while the causal mechanism for the response, including its parent set $S^\star$ and structural coefficients $\beta^\star$, remains unchanged. In these experiments, we use the exact MIP solver and solve all trials to global optimality.

Figure~\ref{fig:E_n} shows that both sample size and environmental heterogeneity play important roles in recovery. For a fixed sample size, increasing the number of environments substantially reduces the coefficient estimation error and improves support recovery. With more environments, interventions produce stronger variation and hence more information for distinguishing invariant causal predictors from spurious ones, consistent with our identification theory. In particular, perfect recovery is achieved at considerably smaller sample sizes when more environments are available. For a fixed number of environments, performance also improves steadily as the sample size in each environment increases. The coefficient estimation error decreases and the probability of perfect recovery approaches one, in line with our theory.

\begin{figure}[!ht]
\centering
\includegraphics[width=\textwidth]{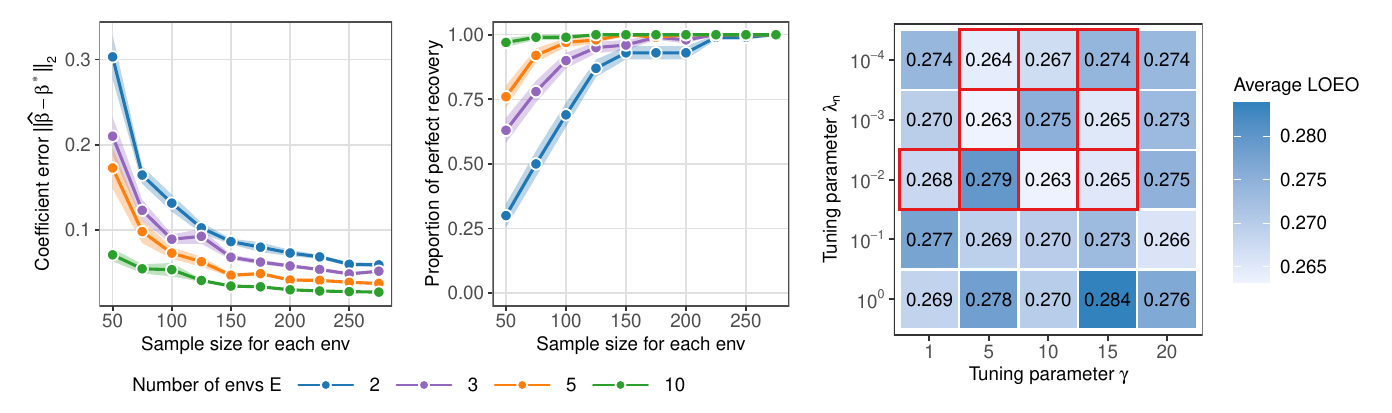}
\caption{\small Statistical behavior and tuning robustness of estimator \eqref{eq:related-task-1}. 
Left panel: average coefficient estimation error $\|\widehat{\beta}-\beta^\star\|_2$ over 100 trials, as the sample size per environment increases, for $|\mathcal{E}|\in\{2,3,5,10\}$ environments. 
Middle panel: proportion of 100 trials achieving perfect recovery of $S^\star$ under the same settings. Solid lines show averages and shaded bands indicate $\pm 1$ standard error. 
Right panel: average leave-one-environment-out (LOEO) criterion over 50 trials, for each combination of tuning parameters $\gamma$ and $\lambda_n$; red outlines indicate tuning combinations with recovery proportion above $0.95$.
Additional experimental details are provided in Appendix~\ref{sec:app:figure-E_n}.}
\label{fig:E_n}
\end{figure}

The right panel examines robustness to the regularization parameters $\gamma$ and $\lambda_n$, with $\rho$ fixed at $1$. Each cell reports the average leave-one-environment-out (LOEO) validation error (see Remark~\ref{remark:tuning parameter}), while the red outlines indicate parameter combinations for which the true causal support $S^\star$ is recovered. Correct recovery occurs over a broad range of $(\gamma,\lambda_n)$ values rather than at a narrowly tuned choice, demonstrating that the proposed estimator is relatively stable with respect to the regularization parameters. Moreover, several of the smallest LOEO validation errors fall within the region of correct recovery, further supporting the effectiveness and robustness of the tuning procedure based on LOEO.

%\vspace{0.1in}
\subsection{Comparison of estimator \texorpdfstring{\eqref{eq:related-task-1}}{(5)} with existing methods}
\label{sec:experiment:p-baseline}
To avoid favoring any particular intervention model, we consider a strong intervention regime. Across environments, the mechanisms generating covariates may differ in their graph structures, structural coefficients, and noise distributions, while the causal conditional distribution $Y^e\mid X_{S^\star}^e$ remains unchanged. We let the number of causal predictors increase with the ambient dimension $p$: we set $|S^\star| = \lfloor 0.4p\rfloor$ when $\lfloor 0.4p\rfloor < 16$, and $|S^\star| = 16 + \lfloor 0.1p\rfloor$ otherwise.
We generate 10 environments with $n^e=500$ observations per environment.
Full details of the data generation process are provided in Appendix~\ref{sec:app:figure-baseline_p}.

\begin{figure}[!ht]
\centering
\includegraphics[width=\textwidth]{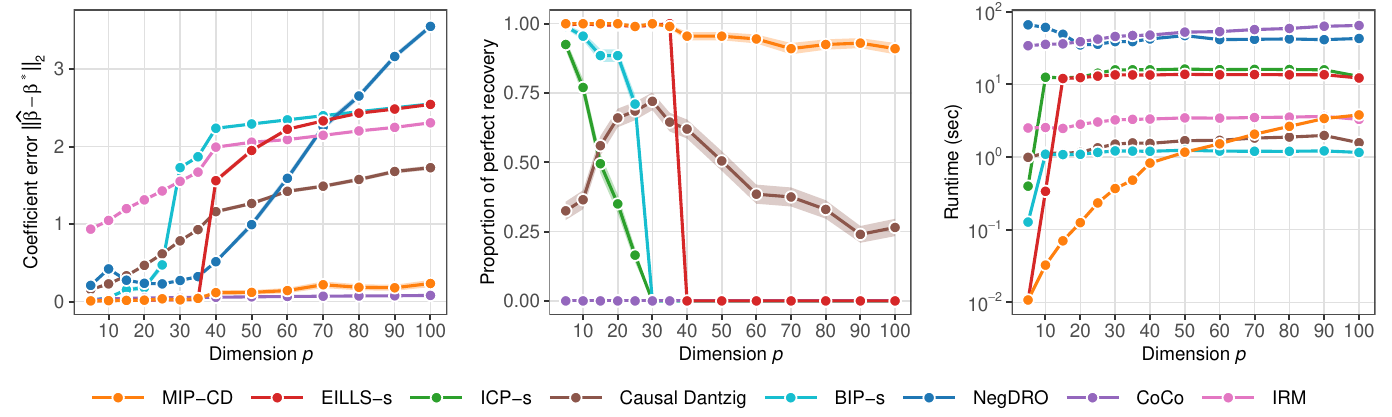}
\caption{\small Performance comparison between our method \textit{MIP-CD} and the baseline methods as the dimension $p$ increases. 
Left panel: average coefficient estimation error $\|\widehat{\beta}-\beta^\star\|_2$ over 200 trials. 
Middle panel: proportion of 200 trials achieving perfect recovery of $S^\star$. 
Right panel: average runtime in seconds over 200 trials, shown on a logarithmic scale. Solid lines show averages and shaded bands indicate $\pm 1$ standard error.
Additional experimental details are provided in Appendix~\ref{sec:app:figure-baseline_p}.}
\label{fig:baseline_p}
\end{figure}

We compare our coordinate descent implementation of (5), \textit{MIP-CD}, with screened \textit{EILLS-s}~\citep{fan2024environment}, screened \textit{ICP-s}~\citep{peters2016causal}, \textit{Causal Dantzig}~\citep{rothenhausler2019causal}, screened \textit{BIP-s}~\citep{wu2025bayesian}, \textit{NegDRO}~\citep{wang2024causal}, \textit{CoCo}~\citep{yin2024optimization}, and \textit{IRM}~\citep{arjovsky2019invariant}. We use \textit{MIP-CD} rather than the exact MIP solver because these experiments involve relatively large dimensions and many simulation trials, for which coordinate descent is computationally more efficient.

Since the exact implementations of \textit{ICP}, \textit{BIP}, and \textit{EILLS} require enumeration over candidate predictor subsets, they do not scale to the larger dimensions considered here. Following the screening strategy used by~\citet{wu2025bayesian}, we first screen the covariates based on the magnitudes of ridge regression coefficients. We keep the top 10 variables for \textit{ICP-s} and \textit{BIP-s} and the top 15 variables for \textit{EILLS-s}. These screening sizes are chosen based on computational feasibility: \textit{ICP-s} with 10 variables and \textit{EILLS-s} with 15 variables have comparable runtimes, both exceeding 10 seconds when screening becomes necessary, while the available implementation of \textit{BIP} is more restrictive. Although approximate implementations are also available, we found that the Gumbel based approximation of \textit{EILLS} rarely selects any variables in our experiments, making the comparison uninformative. Similarly, the exact implementation of \textit{BIP} already exhibits low recovery rates at smaller dimensions. We therefore focus on the screened exact implementations of these methods to provide more informative comparisons. Consequently, once $|S^\star|$ exceeds the screening size, perfect support recovery becomes impossible for these methods, even under ideal screening. In our experiments, this occurs from $p=30$ onward for \textit{ICP-s} and \textit{BIP-s}, and from $p=40$ onward for \textit{EILLS-s}.

% We compare our coordinate descent implementation of \eqref{eq:related-task-1}, \textit{MIP-CD}\at{is there a reason we leave out MIP?}, with screened \textit{EILLS-s}~\citep{fan2024environment}, screened \textit{ICP-s}~\citep{peters2016causal}, \textit{Causal Dantzig}~\citep{rothenhausler2019causal}, screened \textit{BIP-s}~\citep{wu2025bayesian}, \textit{NegDRO}~\citep{wang2024causal}, \textit{CoCo}~\citep{yin2024optimization},
% and \textit{IRM}~\citep{arjovsky2019invariant}. 
% Because the exact implementations of \textit{ICP}, \textit{BIP}, and \textit{EILLS} require enumeration over candidate predictor subsets, they do not scale to the larger dimensions considered here. We therefore first screen the covariates using the magnitudes of ridge regression coefficients \at{is that what is recommended by them?}. We retain the top 10 variables \at{why not larger? enumeration is possible up to 15? Then it makes sense to do $15$ for all of them? Also, doesn't EILLS and BIP have approximation procedures for larger settings? Since you run our approximation method, why not run theirs?} for \textit{ICP-s} and \textit{BIP-s} and the top 15 variables for \textit{EILLS-s}, and then apply the corresponding methods to the screened sets. Consequently, once $|S^\star|$ exceeds the screening size, perfect support recovery is impossible for these methods, even under ideal screening. For the dimensions considered here, this occurs from $p=30$ onward for \textit{ICP-s} and \textit{BIP-s} and from $p=40$ onward for \textit{EILLS-s}.

The remaining baselines involve different compromises. The methods \textit{CoCo}, \textit{NegDRO}, and \textit{IRM} primarily estimate a coefficient vector rather than performing explicit variable selection. Moreover, \textit{Causal Dantzig} and \textit{NegDRO} are developed under additive intervention mechanisms, where interventions are restricted to the noise terms. The comparison therefore evaluates how these methods perform under a more general intervention setting that is not tailored to their respective assumptions.

The right panel compares computational cost. Despite solving a discrete combinatorial optimization problem, our \textit{MIP-CD} remains highly competitive computationally and is the fastest method for $p<50$ in this experiment. Its runtime increases gradually with the dimension and remains comparable to most competing methods at larger $p$. These results demonstrate the scalability of the proposed MIP approach and its computational advantage over explicit subset enumeration.

%\vspace{0.1in}
\subsection{Recovering \texorpdfstring{$S(\mathcal E)$}{S(E)} versus an invariant predictor set}
\label{sec:experiment:teaser2}
We compare methods designed for the two distinct tasks considered in this paper. The first group, including our proposed \textit{MIP-test} in Section~\ref{sec:task2}, \textit{ICP}, and \textit{WVM}, aims to recover the plausible causal predictor set $S(\mathcal E)$ with controlled false discoveries. The second group, including \textit{MIP}, \textit{EILLS} and \textit{BIP-s}, instead aims to identify an invariant set. These two objectives need not coincide when the causal parent set is not identifiable.

\begin{figure}[!ht]
\centering
\includegraphics[width=\textwidth]{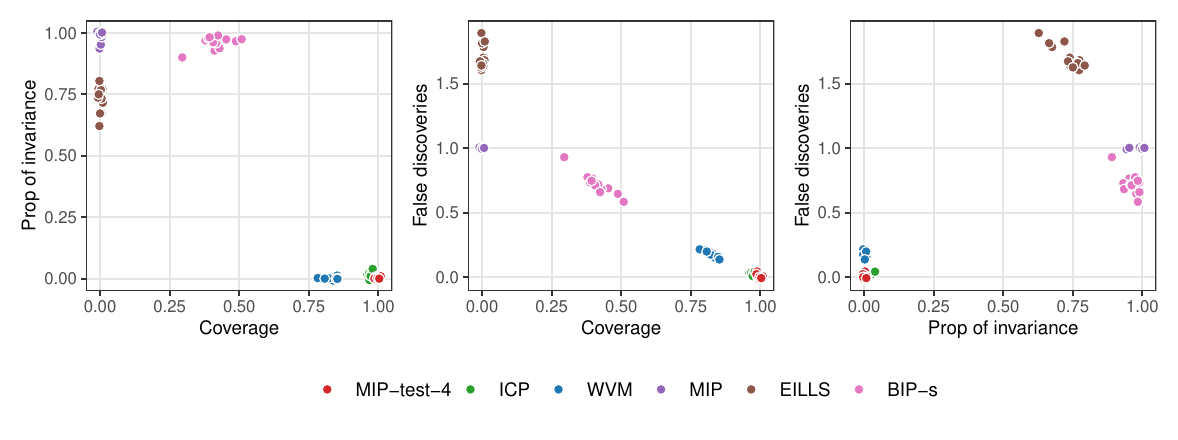}
\caption{\small Trade-offs between coverage, invariance, and false discoveries across methods as the sample size per environment varies. Each point corresponds to one sample size $n^e$ ranging from $400$ to $10000$. 
Left panel: proportion of 200 trials in which the estimated set is LSCR invariant, versus coverage probability, defined as the proportion of 200 trials in which $\widehat{S} \subseteq S^\star$. 
Middle panel: average number of false discoveries $|\widehat{S} \setminus S^\star|$ over 200 trials, versus coverage probability. 
Right panel: average number of false discoveries, versus the proportion of trials in which the estimated set is invariant. 
Additional experimental details are provided in Appendix~\ref{sec:app:figure-teaser}. }
\label{fig:teaser2}
\end{figure}

Figure~\ref{fig:teaser2} shows the trade-off between the two groups of methods using the setup of Example~\ref{ex:bar}. Methods targeting $S(\mathcal E)$, including \textit{MIP-test-4}, \textit{ICP}, and \textit{WVM}, generally achieve high coverage and few false discoveries, but the selected sets are rarely themselves invariant. In contrast, methods targeting an invariant set, including \textit{MIP}, \textit{EILLS}, and \textit{BIP-s}, recover invariant models much more frequently, but tend to have lower coverage and more false discoveries.

This behavior follows from the structure of the example. Here, $S(\mathcal E)=\{1\}$, and multiple invariant models exist: every invariant set contains either $\{1,2\}$ or $\{1,3\}$, whereas their intersection is only $\{1\}$. Importantly, $S(\mathcal E)$ itself is not invariant. Therefore, methods targeting $S(\mathcal E)$ are naturally conservative: they identify predictors shared across all invariant models, leading to few false discoveries, but the resulting set need not itself define an invariant predictive model. By contrast, methods targeting an invariant set seek a richer model satisfying the invariance criterion; because such a set is not unique, however, the selected model need not coincide with the true causal parent set $S^\star$.
% \at{Is this just an analysis on of the earlier examples in the paper? It seems that we are missing an extensisve evaluation of the testing portion of our procedure where we try more complicated setups etc.}

\subsection{Comparison of testing procedures}
\label{sec:experiment:test-based}

We further compare the statistical and computational performance of testing procedures for Task 2, under the intervention regime used in Section~\ref{sec:experiment:p-baseline}. We generate data from a linear structural equation model with $p=10$ covariates and $n^e=500$ observations per environment. In each environment, we generate a directed acyclic graph (DAG) by randomly selecting edges between pairs of nodes according to a common topological ordering, while keeping the edges pointing to the response fixed. The coefficients and noise variances of the covariates also vary across environments, whereas the causal conditional distribution $Y^e \mid X^e_{S^\star}$ remains invariant. Compared with the setting in Section~\ref{sec:experiment:teaser2}, the DAG structure and causal parent set are independently generated in each trial. Increasing the number of environments also provides greater heterogeneity across environments and improves causal identification. We compare our proposed methods, \textit{MIP-test-2} and \textit{MIP-test-4}, with \textit{ICP} and \textit{WVM}.

\begin{figure}[!ht]
\centering
\includegraphics[width=0.95\textwidth]{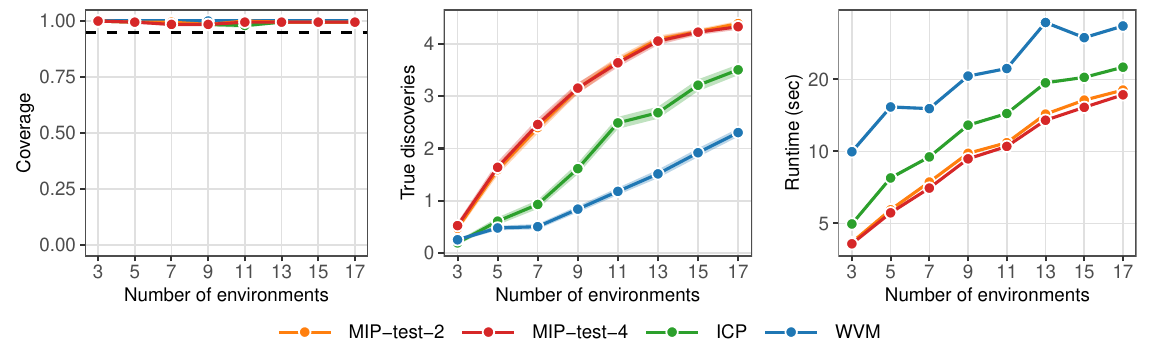}
\caption{\small Performance comparison of testing procedures for Task 2 as the number of environments increases. 
Left panel: coverage probability, defined as the proportion of 200 trials in which $\widehat{\mathcal S}(\mathcal E)\subseteq S^\star$, with the dashed line indicating the target level 0.95. 
Middle panel: average number of true discoveries $|\widehat{S}(\mathcal{E}) \cap S^\star|$ over 200 trials; the number of true causal parents $S^\star$ is randomly selected from $\{4,5\}$. 
Right panel: average runtime in seconds over 200 trials, shown on a logarithmic scale. Solid lines show averages, and shaded bands indicate $\pm1$ standard error.
Additional experimental details are provided in Appendix~\ref{sec:app:test-based}. }
\label{fig:test-based}
\end{figure}

Figure~\ref{fig:test-based} presents the results as the number of environments increases. All four methods maintain coverage above the target level of 0.95, indicating effective control of false causal discoveries in this experiment. Meanwhile, the number of true discoveries increases with the number of environments, as additional environments provide more information for distinguishing causal predictors from spurious variables. Our proposed \textit{MIP-test-2} and \textit{MIP-test-4} consistently identify more true causal predictors than \textit{ICP} and \textit{WVM}, demonstrating greater statistical power. Moreover, both MIP procedures are substantially faster than \textit{WVM} and remain competitive with \textit{ICP}. 
% Overall, these results demonstrate the statistical power and computational efficiency of our testing framework.

\section{Real data: Causal chambers}
\label{sec:realdata}

Causal chambers \citep{gamella2025causal} are devices controlled by computers that allow for manipulating and measuring variables in physical systems, providing a rich testbed for causal discovery algorithms. Following the experimental setup of \citet{gu2025causality}, we consider the light tunnel chamber, which consists of a controllable light source, two rotating polarizers, and sensors measuring light intensity. We aim to recover the causal parents of the infrared intensity measurement $Y=\widetilde{I}_3$ among 11 observed covariates, with the known causal parent set $S^\star = \{R, G, B, \theta_1, \theta_2\}$. In particular, the causal effects of the polarizer angles ($\theta_1$ and $\theta_2$) on light intensity $\widetilde{I}_3$ are nonlinear.

We use the reference environment and 19 intervention environments corresponding to different manipulations of the physical system. To investigate how the number of environments affects causal variable selection, we vary $|\mathcal E|\in\{3,4,\ldots,10\}$, with 1000 observations per environment. For each trial, we randomly permute the intervention environments and construct nested environment sets by sequentially adding interventions to the reference environment. We repeat this procedure for 100 randomized environment configurations and compare our proposed methods, \textit{MIP} and \textit{MIP-test-4}, with \textit{EILLS}, \textit{ICP}, \textit{Causal Dantzig}, and \textit{BIP}. Additionally, we compare with \textit{FAIR}~\citep{gu2025causality}, a method based on neural networks for invariant causal learning. This method is relevant here because of the nonlinear causal effects of $(\theta_1,\theta_2)$ on $\widetilde{I}_3$. Additional experimental details are provided in Appendix~\ref{sec:app:causal-chamber}.

\begin{figure}[!ht]
\centering
\includegraphics[width=\textwidth]{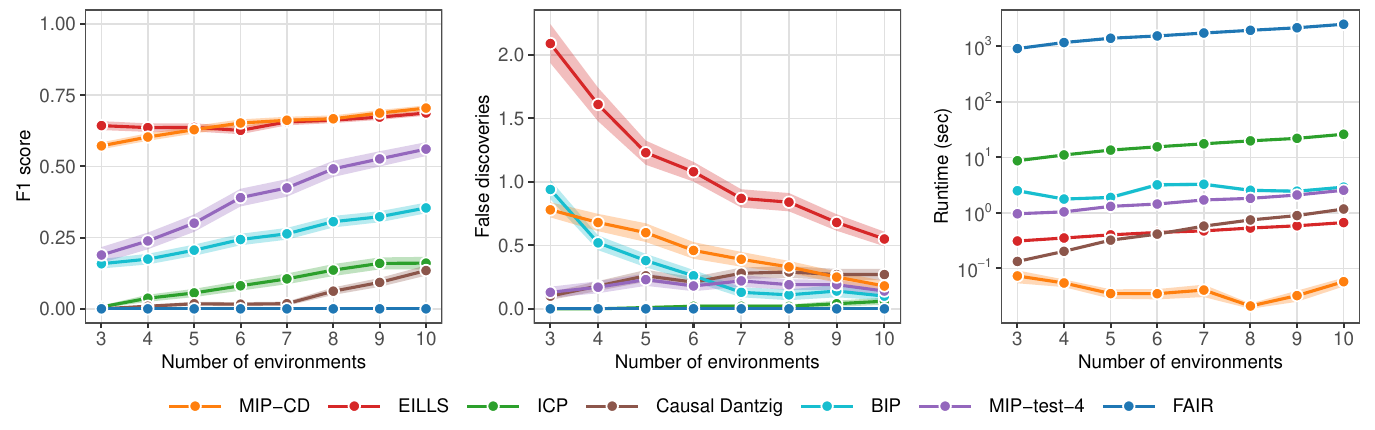}
\caption{\small Performance comparison of invariant causal variable selection methods on the Causal Chamber dataset as the number of environments increases. 
Left panel: average F1 score over 100 trials, for recovering the true causal parent set $S^\star$. 
Middle panel: average number of false discoveries $|\widehat S\setminus S^\star|$ over 100 trials.
Right panel: average runtime in seconds over 100 trials, shown on a logarithmic scale. Solid lines show averages, and shaded bands indicate $\pm 1$ standard error.
Additional experimental details are provided in Appendix~\ref{sec:app:causal-chamber}.}
\label{fig:causal_chamber}
\end{figure}

Figure~\ref{fig:causal_chamber} presents the F1 score, number of false discoveries, and runtime as the number of environments increases. \textit{MIP-CD} achieves the highest F1 score as more environments become available. In particular, compared with \textit{EILLS}, \textit{MIP-CD} achieves a comparable or higher F1 score while producing substantially fewer false discoveries. Meanwhile, \textit{MIP-test} exhibits more conservative variable selection, achieving lower F1 scores than \textit{MIP-CD} and \textit{EILLS}. Even so, it consistently outperforms all other competing methods in terms of F1 score, and produces fewer false discoveries than \textit{MIP-CD} and \textit{EILLS}. Notably, \textit{FAIR} fails to recover any causal predictors in this experiment, possibly due to optimization difficulties associated with its Gumbel procedure. Even with 10,000 optimization iterations and runtimes exceeding 1,000 seconds, the computational effort remains insufficient for \textit{FAIR}. In contrast, both \textit{MIP-CD} and \textit{MIP-test} are substantially more computationally efficient. Overall, these results demonstrate the effectiveness and computational efficiency of our proposed methods in identifying causal predictors from heterogeneous environments in a real physical system.

\section{Extensions to nonlinearity and hidden confounding}
\label{sec:extensions}
\vspace{-0.1in}
\subsection{Extension to nonlinear causal effects}
Our framework can also be extended to settings with nonlinear causal effects. Specifically, consider an additive structural equation model
$$
Y^e=\sum_{j\in S^\star}f_j^\star(X_j^e)+\varepsilon^e,
\qquad
\varepsilon^e\perp X_{S^\star}^e,\quad
\varepsilon^e\sim\mathcal F,\quad
\mathbb E[\varepsilon^e]=0,
$$
where $f_j^\star$ denotes the nonlinear causal effect of $X_j$ on $Y$. To accommodate nonlinear effects, we approximate each $f_j^\star$ using $R_n$ basis functions:
$$
f_j^\star(X_j^e)\approx\sum_{r=1}^{R_n}\theta_{jr}b_r(X_j^e),
$$
where the coefficients $\theta_j=(\theta_{j1},\ldots,\theta_{jR_n})^\top$ are treated as a group associated with each original predictor $X_j$. We can then apply the LSCR invariance principle to the expanded covariates, together with a group $\ell_0$-type penalty. The resulting optimization problem can still be formulated as a convex MIP.
One limitation is that the LSCR invariance principle may not hold exactly under finite basis expansions. In particular, the population OLS coefficients and residual risks may vary across environments due to approximation errors, even when the underlying nonlinear causal mechanism remains invariant. We therefore consider this extension as an empirical investigation and leave theoretical guarantees to future work. Further details on the formulation and numerical results are provided in Appendix~\ref{sec:app:nonlinear}.

\subsection{Extension to hidden confounding}
\label{sec:confounding}
We next consider settings with hidden confounding, which require a different invariance principle. 
Specifically, we consider a linear structural equation model in which exogenous latent variables may affect both the covariates and the response, while interventions change the noise variances of the observed covariates without altering the structural coefficients or the response mechanism.
In this setting, the LSCR principle may no longer hold, since the population OLS coefficients can vary across environments even when the underlying causal coefficients remain invariant.
Nevertheless, under the noise intervention model, we have
$$
\mathbb{E}\left[X_j^e\left(Y^e-X^{e\top}\beta^\star\right)\right]
=
\mathbb{E}\left[X_j^f\left(Y^f-X^{f\top}\beta^\star\right)\right],
\quad \forall j\in[p], \ e,f\in\mathcal{E}.
$$
This motivates an alternative invariance principle based on residual moments, which can accommodate hidden confounding and is also considered in~\citet{rothenhausler2019causal}.
Based on this principle, we define an invariance score that measures discrepancies in these residual moments across environments:
$$
\mathsf{I}_h(\beta)
=
\sum_{e<f}\omega^{e,f}\sum_{j=1}^p
\left( \mathbb{E}\left[X_j^e\left(Y^e - X^{e\top} \beta\right)\right] - \mathbb{E}\left[X_j^f\left(Y^f - X^{f\top} \beta\right)\right] \right)^2.
$$
By construction, $\mathsf{I}_h(\beta)=0$ if and only if the cross moments between the residuals and covariates remain invariant across environments.
We then combine its empirical counterpart with the empirical MSE loss and a sparsity penalty, leading to a convex MIP formulation.
Further details on the model, optimization formulation, and numerical experiments are provided in Appendix~\ref{sec:app:confounding}.

\section{Discussion}
\label{sec:discussion}

We propose a unified MIP framework for invariant causal variable selection and inference across heterogeneous environments. Based on the LSCR invariance principle, our framework addresses both causal parent recovery under identifiability and causal discovery with type-I error control when full identification is not possible. Our methods enjoy both statistical and optimization guarantees, and experiments on synthetic and real data demonstrate their effectiveness.

Our work opens several directions for future research. First, although our framework can be extended to nonlinear additive models through basis expansions, approximation errors may prevent the LSCR invariance principle from holding exactly. Developing more general methods for nonlinear causal effects while preserving computational tractability and optimization guarantees would be valuable. Second, our identification results depend on the heterogeneity of the available environments. Investigating how to design informative interventions or select environments to improve causal identification is an important direction. Finally, although our MIP formulations avoid explicit subset enumeration, solving larger problems to optimality can still be computationally challenging. Developing specialized branch-and-bound algorithms that further exploit the invariance score structure may substantially improve computational efficiency.

\section*{Acknowledgments}
We acknowledge funding from NSF grant DMS-2413074 (Xiaozhu Zhang and Armeen Taeb).
Rahul Mazumder acknowledges funding from Office of Naval Research (N000142512504). 

%\at{references still not consistent. some journals capitalized others are not.}
%\at{make sure references are consistent}

\vspace{0.2in}
\bibliographystyle{agsm}
\bibliography{refs}

@article{fan2024environment,
  title={Environment invariant linear least squares},
  author={Fan, Jianqing and Fang, Cong and Gu, Yihong and Zhang, Tong},
  journal={The Annals of Statistics},
  volume={52},
  number={5},
  pages={2268--2292},
  year={2024},
  publisher={Institute of Mathematical Statistics}
}

@article{peters2016causal,
  title={Causal inference by using invariant prediction: identification and confidence intervals},
  author={Peters, Jonas and B{\"u}hlmann, Peter and Meinshausen, Nicolai},
  journal={Journal of the Royal Statistical Society Series B: Statistical Methodology},
  volume={78},
  number={5},
  pages={947--1012},
  year={2016},
  publisher={Oxford University Press}
}

@article{wang2024causal,
  title={Causal Invariance Learning via Efficient Nonconvex Optimization},
  author={Wang, Zhenyu and Hu, Yifan and B{\"u}hlmann, Peter and Guo, Zijian},
  journal={arXiv preprint arXiv:2412.11850},
  year={2024}
}

@inproceedings{martinet2022variance,
  title={Variance minimization in the wasserstein space for invariant causal prediction},
  author={Martinet, Guillaume G and Strzalkowski, Alexander and Engelhardt, Barbara},
  booktitle={International Conference on Artificial Intelligence and Statistics},
  pages={8803--8851},
  year={2022},
  organization={PMLR}
}

@article{rothenhausler2019causal,
  title={Causal dantzig},
  author={Rothenh{\"a}usler, Dominik and B{\"u}hlmann, Peter and Meinshausen, Nicolai},
  journal={The Annals of Statistics},
  volume={47},
  number={3},
  pages={1688--1722},
  year={2019},
  publisher={JSTOR}
}

@inproceedings{Kamath2021DoesIR,
  title={Does Invariant Risk Minimization Capture Invariance?},
  author={Pritish Kamath and Akilesh Tangella and Danica J. Sutherland and Nathan Srebro},
  booktitle={International Conference on Artificial Intelligence and Statistics},
  year={2021},
}

@article{xu2025integer,
  title={Integer programming for learning directed acyclic graphs from non-identifiable Gaussian models},
  author={Xu, Tong and Taeb, Armeen and K{\"u}{\c{c}}{\"u}kyavuz, Simge and Shojaie, Ali},
  journal={Biometrika},
  pages={asaf032},
  year={2025},
  publisher={Oxford University Press}
}

@article{manzour2021integer,
  title={Integer programming for learning directed acyclic graphs from continuous data},
  author={Manzour, Hasan and K{\"u}{\c{c}}{\"u}kyavuz, Simge and Wu, Hao-Hsiang and Shojaie, Ali},
  journal={INFORMS Journal on Optimization},
  volume={3},
  number={1},
  pages={46--73},
  year={2021},
  publisher={INFORMS}
}

@article{Zhang2025ConvexMP,
  title={Convex Mixed-Integer Programming for Causal Additive Models with Optimization and Statistical Guarantees},
  author={Xiaozhu Zhang and Nir Keret and Ali Shojaie and Armeen Taeb},
  year={2025},
  journal={arxiv preprint arXiv:2511.21126}
}

@article{kucukyavuz2023consistent,
  title={Consistent second-order conic integer programming for learning Bayesian networks},
  author={K{\"u}{\c{c}}{\"u}kyavuz, Simge and Shojaie, Ali and Manzour, Hasan and Wei, Linchuan and Wu, Hao-Hsiang},
  journal={Journal of Machine Learning Research},
  volume={24},
  number={322},
  pages={1--38},
  year={2023}
}

@article{Park2017BayesianNL,
  title={Bayesian Network Learning via Topological Order},
  author={Young Woong Park and Diego Klabjan},
  journal={Journal of Machine Learning Research},
  year={2017},
  volume={18},
pages = {1-30}
}

@article{shen2026causality,
  title={Causality-oriented robustness: Exploiting general noise interventions},
  author={Shen, Xinwei and B{\"u}hlmann, Peter and Taeb, Armeen},
  journal={Journal of the American Statistical Association},
  volume={121},
  number={553},
  pages={704--715},
  year={2026},
  publisher={Taylor \& Francis}
}

@article{behdin2026multi,
  title={Multi-task learning for sparsity pattern heterogeneity: statistical and computational perspectives},
  author={Behdin, Kayhan and Loewinger, Gabriel and Kishida, Kenneth T and Parmigiani, Giovanni and Mazumder, Rahul},
  journal={Journal of the Royal Statistical Society Series B: Statistical Methodology},
  volume={88},
  number={3},
  pages={1029--1059},
  year={2026},
  publisher={Oxford University Press UK}
}

@article{hazimeh2020fast,
  title={Fast best subset selection: Coordinate descent and local combinatorial optimization algorithms},
  author={Hazimeh, Hussein and Mazumder, Rahul},
  journal={Operations Research},
  volume={68},
  number={5},
  pages={1517--1537},
  year={2020},
  publisher={INFORMS}
}

@article{wu2025bayesian,
  title={Bayesian invariance modeling of multi-environment data},
  author={Wu, Luhuan and Yin, Mingzhang and Wang, Yixin and Cunningham, John P and Blei, David M},
  journal={arXiv preprint arXiv:2506.22675},
  year={2025}
}

@article{yin2024optimization,
  title={Optimization-based causal estimation from heterogeneous environments},
  author={Yin, Mingzhang and Wang, Yixin and Blei, David M},
  journal={Journal of Machine Learning Research},
  volume={25},
  number={168},
  pages={1--44},
  year={2024}
}

@article{arjovsky2019invariant,
  title={Invariant risk minimization},
  author={Arjovsky, Martin and Bottou, L{\'e}on and Gulrajani, Ishaan and Lopez-Paz, David},
  journal={arXiv preprint arXiv:1907.02893},
  year={2019}
}

@article{rojas2018invariant,
  title={Invariant models for causal transfer learning},
  author={Rojas-Carulla, Mateo and Sch{\"o}lkopf, Bernhard and Turner, Richard and Peters, Jonas},
  journal={Journal of Machine Learning Research},
  volume={19},
  number={36},
  pages={1--34},
  year={2018}
}

@article{pfister2021stabilizing,
  title={Stabilizing variable selection and regression},
  author={Pfister, Niklas and Williams, Evan G and Peters, Jonas and Aebersold, Ruedi and B{\"u}hlmann, Peter},
  journal={The Annals of Applied Statistics},
  volume={15},
  number={3},
  pages={1220--1246},
  year={2021},
  publisher={JSTOR}
}

@article{rothenhausler2021anchor,
  title={Anchor regression: Heterogeneous data meet causality},
  author={Rothenh{\"a}usler, Dominik and Meinshausen, Nicolai and B{\"u}hlmann, Peter and Peters, Jonas},
  journal={Journal of the Royal Statistical Society: Series B (Statistical Methodology)},
  volume={83},
  number={2},
  pages={215--246},
  year={2021},
  publisher={Wiley Online Library}
}

@article{henzi2025invariant,
  title={Invariant probabilistic prediction},
  author={Henzi, Alexander and Shen, Xinwei and Law, Michael and B{\"u}hlmann, Peter},
  journal={Biometrika},
  volume={112},
  number={1},
  pages={asae063},
  year={2025},
  publisher={Oxford University Press}
}

@article{heinze2018invariant,
  title={Invariant causal prediction for nonlinear models},
  author={Heinze-Deml, Christina and Peters, Jonas and Meinshausen, Nicolai},
  journal={Journal of Causal Inference},
  volume={6},
  number={2},
  pages={20170016},
  year={2018},
  publisher={De Gruyter}
}

@article{gu2025causality,
  title={Causality pursuit from heterogeneous environments via neural adversarial invariance learning},
  author={Gu, Yihong and Fang, Cong and B{\"u}hlmann, Peter and Fan, Jianqing},
  journal={The Annals of Statistics},
  volume={53},
  number={5},
  pages={2230--2257},
  year={2025},
  publisher={Institute of Mathematical Statistics}
}

@article{meinshausen2015maximin,
 author = {Nicolai Meinshausen and Peter Bühlmann},
 journal = {The Annals of Statistics},
 number = {4},
 pages = {1801--1830},
 publisher = {Institute of Mathematical Statistics},
 title = {MAXIMIN EFFECTS IN INHOMOGENEOUS LARGE-SCALE DATA},
 urldate = {2026-10-08},
 volume = {43},
 year = {2015}
}

@article{kook2025model,
  title={Model-based causal feature selection for general response types},
  author={Kook, Lucas and Saengkyongam, Sorawit and Lundborg, Anton Rask and Hothorn, Torsten and Peters, Jonas},
  journal={Journal of the American Statistical Association},
  volume={120},
  number={550},
  pages={1090--1101},
  year={2025},
  publisher={Taylor \& Francis}
}

@article{bertsimas2016best,
  title={Best subset selection via a modern optimization lens},
  author={Bertsimas, Dimitris and King, Angela and Mazumder, Rahul},
  journal={The annals of statistics},
  pages={813--852},
  year={2016},
  publisher={JSTOR}
}

@article{bertsimas2020sparse,
  title={Sparse High-Dimensional Regression: Exact Scalable Algorithms and Phase Transitions},
  author={Bertsimas, Dimitris and Van Parys, Bart},
  journal={The Annals of Statistics},
  volume={48},
  number={1},
  pages={300--323},
  year={2020},
  doi={10.1214/18-AOS1804}
}

@article{gamella2025causal,
  title={Causal chambers as a real-world physical testbed for AI methodology},
  author={Gamella, Juan L and Peters, Jonas and B{\"u}hlmann, Peter},
  journal={Nature Machine Intelligence},
  volume={7},
  number={1},
  pages={107--118},
  year={2025},
  publisher={Nature Publishing Group UK London}
}

\clearpage
\appendix

% \tableofcontents
\renewcommand{\thesection}{\Alph{section}}

\setcounter{section}{0}
\vspace{-0.2in}
\section{Details on MIP algorithms}

\vspace{-0.1in}
\subsection{MIP formulations for both tasks}
\label{sec:app:mip-details}

The procedure \eqref{eqn:task-1-mip} can be formulated as a convex MIP similarly as in Section \ref{sec:MIP}:
\begin{subequations}\label{eq:mip-task1}
\begin{align}
     \min_{
     \substack{ 
     \beta \in\mathbb{R}^p \\
     \{\beta^e \in \mathbb{R}^p\}_{e\in\mathcal{E}} \\ 
     g \in \{0,1\}^p 
     }} 
     \quad&
     \sum_{e\in\mathcal{E}} \omega^e
     \Big\| Y^e - \sum_{k=1}^p X_k^e \beta_k \Big\|_{n^e}^2
     + \gamma \sum_{\substack{e,f\in\mathcal{E}\\e<f}}\omega^{e,f}\Big[ 
     \|\beta^e-\beta^f\|_2^2
     + \rho \left( d^{e,f}(\beta^e,\beta^f) \right)^2 \Big]
     + \lambda_n \sum_{j=1}^p g_j,
     \nonumber\\
\text{subject to:}\quad&
     -M_n g_j \leq \beta_j^e \leq M_n g_j,
     \quad\quad \forall e\in\mathcal{E},\ j\in[p],
     \label{eq:mip-task1-selection}\\
&
     -M_n g_j \leq \beta_j \leq M_n g_j,
     \quad\quad \forall j\in[p],
     \label{eq:mip-task1-selection2}\\
&
     -\widetilde M_n(1-g_j)
     \leq
     \Big\langle X_j^e,
     Y^e-\sum_{k=1}^p\beta_k^eX_k^e
     \Big\rangle_{n^e}
     \leq
     \widetilde M_n(1-g_j),
     \quad\quad \forall e\in\mathcal{E},\ j\in[p].
     \label{eq:mip-task1-normal}
\end{align}
\end{subequations}
Here, the heterogeneity in the OLS risks is written as:
$$
d^{e,f}(\beta^e,\beta^f):= \left\| Y^e \right\|_{n^e}^2 - \left\| Y^f \right\|_{n^f}^2 - \sum_{j\in[p]}\beta^e_j \langle X^e_j, Y^e \rangle_{n^e} + \sum_{j\in[p]}\beta^f_j \langle X^f_j, Y^f \rangle_{n^f}.
$$
Compared with \eqref{eq:inv-score}, the additional selection constraints in \eqref{eq:mip-task1-selection2} ensure that the coefficient vector $\beta$ appearing in the MSE term is restricted to the same set as the environment specific coefficients $\{\beta^e\}_{e\in\mathcal{E}}$.

% Here, $\omega^{(e)} \geq 0$ and $\omega^{(e,f)}\geq 0$ are weights that weigh the impact of each enviornment or a pair of environments in the objective with $\sum_{e\in\mathcal{E}}\omega^{(e)} = \sum_{e,f\in\mathcal{E}, e\neq f}\omega^{(e,f)}=1$. We suggest either choosing the weights to be uniform across enviornments, i.e., $\omega^{(e)} = \frac{1}{|\mathcal{E}|}$ and $\omega^{(e,f)} = \frac{1}{|\mathcal{E}|(|\mathcal{E}|-1)/2}$, or in the finite-sample version of the estimator (discussed shortly), set them based on available data size in each environment. 

% As discussed in Section~\ref{sec:MIP:algo}, this convex MIP formulation maintains valid lower and upper bounds on the objective value throughout the optimization process, and the gap between them provides a certificate of solution quality. When the optimality gap reaches zero, the incumbent solution is globally optimal; we denote the resulting coefficient vector and support by $\widehat{\beta}$ and $\widehat{S}$, respectively. To reduce computation time, the procedure may instead be terminated early once the optimality gap falls below a user-specified tolerance $\tau^{\rm early}>0$. We denote the corresponding feasible solution by $\widehat{\beta}^{\rm early}$ and $\widehat{S}^{\rm early}$. In the following section, we show that, provided $\tau^{\rm early}$ is appropriately controlled, early stopping has only a limited effect on the statistical properties of the resulting estimators.

\vspace{0.1in}
Similarly, the procedure \eqref{eqn:task-2-mip}, and equivalently \eqref{eq:test-stat}, can be formulated as the following convex MIP:
\begin{subequations}
\begin{align}
     \widehat{T}(k) =\min_{
     \substack{ 
     \{\beta^e \in \mathbb{R}^p \}_{e\in\mathcal{E}}  \\ 
     g \in \{0,1\}^p } 
     } \quad&  n^\star\sum_{\substack{e,f \in\mathcal{E}\\e<f}} \omega^{e,f}\Big[  \left\| \beta^e - \beta^f \right\|_2^2 + \rho \left( d^{e,f}(\beta^e,\beta^f) \right)^2\Big], \\
\text{subject to:} \quad& \beta^e_j = 0 \text{ if } g_j = 0, \qquad\qquad\qquad\qquad\qquad\qquad \forall e\in\mathcal{E}, \ j\in [p], \label{eq:mip-task2-selection} \\ 
& \Big\langle X_j^e, Y^e - \sum_{k=1}^p \beta_k^e X_k^e \Big\rangle_{n^e}  = 0 \text{ if } g_j = 1, \qquad\ \  \forall e \in \mathcal{E}, \ j\in[p], \label{eq:mip-task2-normal} \\
& g_{k} = 0. \label{eq:mip-task2-exclude}
\end{align}
\label{eq:mip-task2} 
\end{subequations}
The constraints~\eqref{eq:mip-task2-selection} and \eqref{eq:mip-task2-normal} have the same interpretations as \eqref{eq:inv-score-selection} and \eqref{eq:inv-score-normal}, respectively. The constraint~\eqref{eq:mip-task2-exclude} prevents feature $X_k$ from being selected.

\vspace{0.1in}
\subsection{A heuristic coordinate descent}
\label{sec:app:cd-algo}

We focus on the optimization problem~\eqref{eq:mip-task1}; the coordinate descent algorithm for~\eqref{eq:mip-task2} can be developed similarly and is simpler.

First, we note that, for any fixed support $S$, the objective function in~\eqref{eq:mip-task1} can be evaluated in closed form, and we denote its value by $f(S)$. Specifically, let $\widehat{\beta}^{e,S}$ denote the environment specific OLS solution and define the pooled OLS solution as
$$
\widehat{\beta}^S :=\argmin_{\substack{\tilde\beta\in\mathbb R^p, \
\mathrm{supp}(\tilde\beta)\subseteq S}} \
\sum_{e\in\mathcal{E}} \omega^e\ \widehat{\mathbb{E}} \left[ \left( Y^{e}-\tilde\beta^\top X^{e}\right)^2\right],
$$
Then the objective value depends only on the environment specific and pooled OLS solutions and can be written as
$$
f(S) = \sum_{e\in\mathcal{E}} \omega^e\left\| Y^e - \sum_{k=1}^p X^e_k \widehat{\beta}^S_k \right\|^2_{n^e}
+ \gamma \sum_{e<f} \omega^{e,f} \left\| \widehat{\beta}^{e,S} - \widehat{\beta}^{f,S} \right\|_2^2 + \gamma\rho \sum_{e<f} \omega^{e,f} \left( \widehat{\nu}^{e,S} - \widehat{\nu}^{f,S} \right)^2
+ \lambda_n |S|.
$$
Consequently, $f(S)$ can be evaluated efficiently as long as the environment specific and pooled OLS solutions can be updated efficiently. In a coordinate descent procedure, each candidate move modifies the current support by only one feature, either adding a feature $k\notin S$ or removing a feature $j\in S$. The corresponding inverse Gram matrices and OLS coefficients can therefore be updated using rank one formulas, as described in Algorithms~\ref{algo:rank1-add} and \ref{algo:rank1-remove}. As a result, matrix inversion is required only once for the initial support, while subsequent candidate supports can be evaluated through inexpensive rank one updates.

Our coordinate descent algorithm updates the support one feature at a time. For each $j\in[p]$, if $j\in S$, the algorithm removes $j$ whenever the updated support $S\setminus\{j\}$ yields a smaller objective value; otherwise, the support remains unchanged. Similarly, if $j\notin S$, the algorithm adds $j$ whenever $S\cup\{j\}$ decreases the objective value. The procedure cycles through the coordinates until no addition or removal of any feature further improves the objective.

When the number of variables $p$ is large, repeatedly cycling through all features can be computationally expensive. To accelerate the procedure, we adopt an active set strategy~\citep{hazimeh2020fast,behdin2026multi}. We begin with an initial active set $\mathcal{A}_{\rm active}\subseteq[p]$ and run coordinate descent restricted to this set, imposing
$$
\beta_j^e=0 \quad \text{and}\quad \beta_j=0, \qquad
\forall j\notin\mathcal{A}_{\rm active},\quad e\in\mathcal{E}.
$$
Once coordinate descent converges on the restricted problem, we perform one full pass over all $p$ features. If no coordinate update decreases the objective, then the current solution is also a coordinatewise local optimum of the original problem. Otherwise, we enlarge the active set to include the newly selected features,
$$
\mathcal{A}_{\rm active} \gets \mathcal{A}_{\rm active} \cup 
\left\{ j: \beta_j^e\neq0\text{ for some }e\in\mathcal{E} \ \text{or}\ \beta_j\neq0 \right\},
$$
and repeat the procedure.

Finally, to further improve the quality of the solution obtained by coordinate descent, we optionally apply a local search procedure~\citep{hazimeh2020fast,behdin2026multi}. Starting from the converged support $S$, we consider all swaps between two features, in the form $S\setminus\{j\}\cup\{k\}$, where $j\in S$ and $k\notin S$. If any such swap yields a smaller objective value, we update the support accordingly and restart the coordinate descent procedure. If no improving swap is found, the algorithm terminates at the current solution.

The full algorithm is provided in Algorithms~\ref{algo:coordinate-update}, \ref{algo:local-search} and \ref{algo:coordinate-descent}.

\vspace{0.1in}
\subsection{Enumerating \texorpdfstring{top-$k$}{top-k} invariant sets}
\label{sec:app:top-k}

As discussed in Remark~\ref{rem:top-k}, the top-$k$ most invariant predictor sets can be obtained by solving a sequence of MIP problems. Suppose that we have obtained $L<k$ distinct predictor sets $\widehat S^1,\ldots,\widehat S^L$, ordered by their empirical invariance scores. To obtain the next set, we solve
$$
\widehat S^{L+1} \in \arg\min_{S\subseteq[p]}\widehat{\mathsf I}(S) \quad\text{s.t.}\quad S\neq\widehat S^\ell,\quad \forall\ell\in[L].
$$

Recall that the binary variables $g_j\in\{0,1\}$ indicate whether predictor $j$ is selected. The constraint $S\neq\widehat S^\ell$ can be expressed as
\begin{equation}
\sum_{j\in\widehat S^\ell}(1-g_j)
+\sum_{j\notin\widehat S^\ell}g_j
\geq 1,
\qquad \forall\ell\in[L].
\label{eq:const-top-k1}
\end{equation}
This constraint requires the new predictor set to differ from $\widehat S^\ell$ in at least one variable. Since these constraints are linear in the binary variables, they can be directly added to the MIP formulation of the invariance score without affecting its convexity. By successively adding the exclusion constraints~\eqref{eq:const-top-k1} and resolving the MIP, we obtain the top-$k$ distinct predictor sets in increasing order of their empirical invariance scores.

\vspace{0.2in}
% START Algorithm ===============================
\begin{algorithm}[!ht]
\caption{rank one update when adding a feature}
\label{algo:rank1-add}
\begin{algorithmic}[1]
\Require Current support $S$, feature $j\notin S$, 
OLS coefficients $\{\widehat{\beta}^{e,S}\}_{e\in\mathcal{E}}$ and $\widehat{\beta}^{S}$,
and inverse Gram matrices
$\{(G_S^e)^{-1}\}_{e\in\mathcal{E}}$ and
$(G_S^{\mathrm{pool}})^{-1}$, where
$$
G_S^e:=(\boldsymbol{X}_S^e)^\top \boldsymbol{X}_S^e,
\qquad
G_S^{\mathrm{pool}}:=
\sum_{e\in\mathcal E}\frac{\omega^e}{n_e}(\boldsymbol{X}_S^e)^\top \boldsymbol{X}_S^e.
$$
\Ensure Updated inverse Gram matrices and OLS coefficients on
$R=S\cup\{j\}$.
\For{$e\in\mathcal{E}$}
    \State Compute
    $$
    c^e = \boldsymbol{X}_S^{e\top} \boldsymbol{X}_j^e, \qquad
    d^e = \boldsymbol{X}_j^{e\top} \boldsymbol{X}_j^e, \qquad
    z^e = \boldsymbol{X}_j^{e\top} \boldsymbol{Y}^e.
    $$
    \State Compute
    $$
    u^e=(G_S^e)^{-1}c^e, \qquad
    \delta^e=d^e-(c^e)^\top u^e.
    $$
    \State Update the inverse Gram matrix:
    $$
    (G_R^e)^{-1} =
    \begin{bmatrix}
    (G_S^e)^{-1}+\dfrac{u^e(u^e)^\top}{\delta^e}
    &
    -\dfrac{u^e}{\delta^e}
    \\[6pt]
    -\dfrac{(u^e)^\top}{\delta^e}
    &
    \dfrac{1}{\delta^e}
    \end{bmatrix}.
    $$
    \State Compute the coefficient of the newly added feature, and update:
    $$
    \widehat{\beta}_j^{e,R} = \frac{ z^e-(c^e)^\top\widehat{\beta}^{e,S} } {\delta^e}, \qquad
    \widehat{\beta}^{e,R} =
    \begin{bmatrix}
    \widehat{\beta}^{e,S} - u^e\widehat{\beta}_j^{e,R}
    \\ \widehat{\beta}_j^{e,R}
    \end{bmatrix}.
    $$
\EndFor
\State Compute the pooled quantities
$$
c^{\mathrm{pool}}=\sum_{e\in\mathcal{E}}\omega^e\frac{ \boldsymbol{X}_S^{e\top} \boldsymbol{X}_j^e }{n^e}, \qquad
d^{\mathrm{pool}} = \sum_{e\in\mathcal{E}} \omega^e \frac{ \boldsymbol{X}_j^{e\top} \boldsymbol{X}_j^e }{n^e}, \qquad
z^{\mathrm{pool}} = \sum_{e\in\mathcal{E}}\omega^e \frac{ \boldsymbol{X}_j^{e\top} \boldsymbol{Y}^e }{n^e}.
$$
\State Compute
$$
u^{\mathrm{pool}} = (G_S^{\mathrm{pool}})^{-1}c^{\mathrm{pool}}, \qquad
\delta^{\mathrm{pool}} = d^{\mathrm{pool}} - (c^{\mathrm{pool}})^\top u^{\mathrm{pool}}.
$$
\State Update the inverse Gram matrix:
$$
(G_R^{\mathrm{pool}})^{-1} =
\begin{bmatrix}
(G_S^{\mathrm{pool}})^{-1} +
\dfrac{ u^{\mathrm{pool}}(u^{\mathrm{pool}})^\top }{ \delta^{\mathrm{pool}} }
&
-\dfrac{u^{\mathrm{pool}}}{\delta^{\mathrm{pool}}}
\\[6pt]
-\dfrac{(u^{\mathrm{pool}})^\top}{\delta^{\mathrm{pool}}}
&
\dfrac{1}{\delta^{\mathrm{pool}}}
\end{bmatrix}.
$$
\State Compute the coefficient of the newly added feature, and update:
$$
\widehat{\beta}_j^R = \frac{ z^{\mathrm{pool}} - (c^{\mathrm{pool}})^\top\widehat{\beta}^S }{ \delta^{\mathrm{pool}} }, \qquad
\widehat{\beta}^R =
\begin{bmatrix}
\widehat{\beta}^S - u^{\mathrm{pool}}\widehat{\beta}_j^R
\\
\widehat{\beta}_j^R
\end{bmatrix}.
$$
\State \Return
$\{(G_R^e)^{-1},\widehat{\beta}^{e,R}\}_{e\in\mathcal{E}}$
and
$\{(G_R^{\mathrm{pool}})^{-1},\widehat{\beta}^R\}$.

\end{algorithmic}
\end{algorithm}
% END Algorithm ===============================

\vspace{0.2in}
% START Algorithm ===============================
\begin{algorithm}[t]
\caption{rank one update when removing a feature}
\label{algo:rank1-remove}
\begin{algorithmic}[1]
\Require Current support $S$, feature $j\in S$, OLS coefficients $\{\widehat{\beta}^{e,S}\}_{e\in\mathcal{E}}$ and $\widehat{\beta}^S$, and inverse Gram matrices
$\{(G_S^e)^{-1}\}_{e\in\mathcal{E}}$ and
$(G_S^{\mathrm{pool}})^{-1}$.
\Ensure Updated inverse Gram matrices and OLS coefficients on
$R=S\setminus\{j\}$.
\State Reorder the coordinates, if necessary, so that feature $j$ is the
last coordinate of $S$.
\For{$e\in\mathcal{E}$}
    \State Partition
    $$
    (G_S^e)^{-1} =
    \begin{bmatrix}
    A^e & b^e \\
    (b^e)^\top & d^e
    \end{bmatrix},
    \qquad
    \widehat{\beta}^{e,S} = \begin{bmatrix}
    \widehat{\beta}_R^{e,S}\\
    \widehat{\beta}_j^{e,S}
    \end{bmatrix}.
    $$
    \State Update the inverse Gram matrix and the OLS coefficient:
    $$
    (G_R^e)^{-1} = A^e-\frac{b^e(b^e)^\top}{d^e}, \qquad
    \widehat{\beta}^{e,R} = \widehat{\beta}_R^{e,S} - \frac{b^e}{d^e}\widehat{\beta}_j^{e,S}.
    $$
\EndFor
\State Partition the pooled quantities as
$$
(G_S^{\mathrm{pool}})^{-1} =
\begin{bmatrix}
A^{\mathrm{pool}} & b^{\mathrm{pool}}\\
(b^{\mathrm{pool}})^\top & d^{\mathrm{pool}}
\end{bmatrix}, \qquad
\widehat{\beta}^{S} =
\begin{bmatrix}
\widehat{\beta}_R^S\\
\widehat{\beta}_j^S
\end{bmatrix}.
$$
\State Update
$$
(G_R^{\mathrm{pool}})^{-1} = A^{\mathrm{pool}} - \frac{ b^{\mathrm{pool}}(b^{\mathrm{pool}})^\top }{ d^{\mathrm{pool}} }, \qquad
\widehat{\beta}^{R} = \widehat{\beta}_R^S - \frac{ b^{\mathrm{pool}} }{ d^{\mathrm{pool}} } \widehat{\beta}_j^S.
$$
\State \Return
$\{(G_R^e)^{-1},\widehat{\beta}^{e,R}\}_{e\in\mathcal{E}}$
and
$\{(G_R^{\mathrm{pool}})^{-1},\widehat{\beta}^R\}$.
\end{algorithmic}
\end{algorithm}
% END Algorithm ===============================

\vspace{0.2in}
% START Algorithm ===============================
\begin{algorithm}[t]
\caption{\textsc{CoordinateUpdate}: Coordinate update on a prescribed set }
\label{algo:coordinate-update}
\begin{algorithmic}[1]
\Require Current support $S$, index set $\mathcal{A}\subseteq[p]$, 
objective function $f$, and indicator $\mathtt{converge}$.
\Ensure Updated support $S$ and indicator $\mathtt{changed}$.
\State $\mathtt{changed}\gets\mathtt{False}$.
\Repeat
    \State $\mathtt{updated}\gets\mathtt{False}$.
    \For{$j\in\mathcal{A}$}
        \If{$j\in S$}
            \State $S'\gets S\setminus\{j\}$.
        \Else
            \State $S'\gets S\cup\{j\}$.
        \EndIf
        \State Evaluate $f(S')$ using the corresponding rank one update.
        \If{$f(S')<f(S)$}
            \State $S\gets S'$ and $f(S)\gets f(S')$.
            \State Update the inverse Gram matrices and OLS coefficients.
            \State $\mathtt{updated}\gets\mathtt{True}$.
            \State $\mathtt{changed}\gets\mathtt{True}$.
        \EndIf
    \EndFor
\Until{$\mathtt{updated}=\mathtt{False}$ or $\mathtt{converge}=\mathtt{False}$}
\State \Return $S$ and $\mathtt{changed}$.
\end{algorithmic}
\end{algorithm}
% END Algorithm ===============================

\vspace{0.2in}
% START Algorithm ===============================
\begin{algorithm}[t]
\caption{\textsc{LocalSearch}: Two-feature local search}
\label{algo:local-search}
\begin{algorithmic}[1]
\Require Current support $S$ and objective function $f$.
\Ensure Updated support $S$ and indicator $\mathtt{improved}$.
\State $\mathtt{improved}\gets\mathtt{False}$.
\For{$j\in S$}
    \For{$k\notin S$}
        \State $S'\gets (S\setminus\{j\})\cup\{k\}$.
        \State Evaluate $f(S')$ using a rank one removal followed by a rank one addition.
        \If{$f(S')<f(S)$}
            \State $S\gets S'$ and $f(S)\gets f(S')$.
            \State Update the inverse Gram matrices and OLS coefficients.
            \State $\mathtt{improved}\gets\mathtt{True}$.
            \State \Return $S$ and $\mathtt{improved}$.
        \EndIf
    \EndFor
\EndFor
\State \Return $S$ and $\mathtt{improved}$.
\end{algorithmic}
\end{algorithm}
% END Algorithm ===============================

\vspace{0.2in}
% START Algorithm ===============================
\begin{algorithm}[t]
\caption{\textsc{CoordinateDescent}: Coordinate descent}
\label{algo:coordinate-descent}
\begin{algorithmic}[1]
\Require Initial support $S$, initial active set 
$\mathcal{A}_{\rm active}\supseteq S$, and indicator \texttt{local\_search}.
\Ensure Estimated support $\widehat{S}$.
\State Compute the inverse Gram matrices and OLS coefficients on the initial support $S$.
\Repeat
    \State $(S,\cdot)\gets
    \textsc{CoordinateUpdate}
    (S,\ \mathcal{A}_{\rm active},\ f,\ \mathtt{True})$.
    \State $(S,\ \mathtt{changed})\gets
    \textsc{CoordinateUpdate}
    (S,\ [p],\ f,\ \mathtt{False})$.
    \If{$\mathtt{changed}=\mathtt{True}$}
        \State $\mathcal{A}_{\rm active} \gets \mathcal{A}_{\rm active}\cup S$.
    \EndIf
\Until{$\mathtt{changed} = \mathtt{False}$}
\If{\texttt{local\_search} = \texttt{True}}
    \Repeat
        \State $(S,\ \mathtt{improved})
        \gets\textsc{LocalSearch}(S,\ f)$.
        \If{$\mathtt{improved}=\mathtt{True}$}
            \State $\mathcal{A}_{\rm active} \gets\mathcal{A}_{\rm active}\cup S$.
            \Repeat
                \State $(S,\ \cdot)\gets
                \textsc{CoordinateUpdate}(S,\ \mathcal{A}_{\rm active},\ f,\ \mathtt{True})$.

                \State $(S,\ \mathtt{changed})\gets \textsc{CoordinateUpdate}(S,\ [p],\ f,\ \mathtt{False})$.

                \State $\mathcal{A}_{\rm active} \gets\mathcal{A}_{\rm active}\cup S$.
            \Until{$\mathtt{changed}=\mathtt{False}$}
        \EndIf
    \Until{$\mathtt{improved}=\mathtt{False}$}
\EndIf
\State $\widehat{S}\gets S$.
\State \Return $\widehat{S}$.
\end{algorithmic}
\end{algorithm}
% END Algorithm ===============================

\clearpage
\vspace{-0.2in}
\section{More on the Bootstrap procedure}
\label{sec:app:bootstrap}

\vspace{-0.1in}
\subsection{Procedures that estimate \texorpdfstring{$\mathcal{S}_k^\circ$}{Sk circ} }

In Section~\ref{sec:task2:bootstrap}, we briefly introduced two procedures for constructing a super-collection of $\mathcal{S}_k$. The first procedure, which can be described in Algorithm~\ref{algo:perturb-D1-task2}, collects $\bar M$ candidate sets by repeatedly resampling the observations in $\mathcal{D}_1$ with replacement. The intuition is that these bootstrap perturbations induce sufficient variation in the selected models to recover the invariant sets in $\mathcal{S}_k$. This procedure, however, is largely heuristic and does not come with strong theoretical guarantees.

% Algorithm ===============================
\begin{algorithm}[!ht]
\caption{Construction of $\mathcal{S}_k^\circ$ via perturbation}
\label{algo:perturb-D1-task2}
\begin{algorithmic}[1]
\Require Dataset $\mathcal{D}_1$,
feature index $k$, and maximum number of sets $\bar{M}$.
\Ensure A collection of sets $\mathcal{S}_k^\circ$.
\State Initialize $\mathcal{S}_k^\circ \gets \emptyset$.
\For{$m=1,\ldots,\bar{M}$}
    \State Construct a bootstrap sample $\mathcal{D}_1^{(m)}$ by drawing observations from $\mathcal{D}_1$ with replacement within each environment.
    \State Solve the optimization problem \eqref{eq:mip-task2} using
    $\mathcal{D}_1^{(m)}$.
    \State Let $\widehat{S}^{(m)}$ denote the set encoded by the optimal solution.
    \State Update $ \mathcal{S}_k^\circ \gets \mathcal{S}_k^\circ\cup\{\widehat{S}^{(m)}\}$.
\EndFor
\State \Return $\mathcal{S}_k^\circ$.
\end{algorithmic}
\end{algorithm}
% Algorithm ===============================

The second procedure sequentially retrieves sets $S$ in ascending order of the objective value in~\eqref{eq:mip-task2}, motivated by the intuition that smaller objective values correspond to more invariant models. Consequently, the top-$\bar M$ sets are expected to contain the invariant sets in $\mathcal{S}_k$. The details of this procedure are presented in Algorithm~\ref{algo:top-M-task2}. 
Notably, such a sequential retrieval based on the ordering of objective values would not be readily available without the mixed-integer programming formulation, particularly the exclusion constraints imposed in line~3.
Furthermore, under the Gaussian model $Y^e \mid X^e_S \sim \mathcal{N}(X^{e\top} \beta^{e,S}, \nu^{e,S})$, both requirements for the testing procedure, $\mathcal{S}_k \subseteq \mathcal{S}_k^\circ$ and $\widehat{S} \in \mathcal{S}_k $, can be satisfied with probability tending to one.
The proof of Lemma~\ref{lem:bootstrap-D1} is provided in Appendix~\ref{sec:app:proof:bootstrap-D1}.

\vspace{0.1in}
\begin{lemma}\label{lem:bootstrap-D1}
    Suppose that $\mathcal{D}_1$ and $\mathcal{D}_2$ each contain half of the original dataset, and $|\mathcal{S}_k|\leq \bar{M}$. Under the Gaussian model $Y^e \mid X^e_S \sim \mathcal{N}(X^{e\top} \beta^{e,S}, \nu^{e,S})$ for all $e\in \mathcal{E}$, for the set $\widehat{S}$ encoded by the optimal solution to~\eqref{eq:mip-task2} and the collection $\mathcal{S}^\circ_k$ obtained from Algorithm \ref{algo:top-M-task2}, we have
    $$
    \mathbb{P}\left( \left\{ \mathcal{S}_k \subseteq \mathcal{S}_k^\circ \right\} \cap \left\{ \widehat{S} \in \mathcal{S}_k \right\} \right) \rightarrow 1,
    $$
    as the sample sizes $\min_{e\in\mathcal{E}} n^e\rightarrow \infty$.
\end{lemma}

% Algorithm ===============================
\begin{algorithm}[!ht]
\caption{Sequential retrieval of the top-$\bar{M}$ sets}
\label{algo:top-M-task2}
\begin{algorithmic}[1]
\Require Dataset $\mathcal{D}_1$,
feature index $k$, and maximum number of sets $\bar{M}$.
\Ensure A collection of sets $\mathcal{S}^\circ_k$.
\State Initialize $\mathcal{S}^\circ_k \gets \emptyset$.
\For{$m=1,\ldots,\bar{M}$}
    \State Solve the mixed-integer program \eqref{eq:mip-task2} on $\mathcal{D}_1$, together with exclusion constraints 
    $$
    \sum_{j \in S} (1 - g_j) + \sum_{j \notin S} g_j \geq 1, \quad \forall S\in \mathcal{S}^\circ_k.
    $$
    \State Let $\widehat{S}^{(m)}$ denote the set encoded by the optimal solution.
    \State Update $\mathcal{S}^\circ_k
    \gets
    \mathcal{S}^\circ_k \cup \{\widehat{S}^{(m)}\}$.
\EndFor
\State \Return $\mathcal{S}^\circ_k$.
\end{algorithmic}
\end{algorithm}
% Algorithm ===============================

\vspace{0.1in}
\subsection{Gaining more power}
\label{sec:app:gain-more-power}
Although the two components of the invariance score $\mathsf{I}(S)$, namely the discrepancies in the OLS coefficients and residual variances, are sufficient to characterize LSCR invariance, incorporating additional measures of discrepancy across environments can further improve the power of the resulting test. We additionally consider the following two conditions for a model $\beta$: (i) $\mathbb{E}[ Y^e - X^e\beta ] =\mathbb{E}[ Y^f - X^f\beta ]$ for all $e,f\in\mathcal{E}$; and (ii) $\mathbb{E} [X^e_j (Y^e - X^e \beta)] = \mathbb{E} [X^f_j (Y^f - X^f \beta)] $ for all $j\in \mathrm{supp}(\beta)$ and $e,f\in\mathcal{E}$. Both conditions are automatically satisfied by any LSCR-invariant model and therefore do not alter the collection $\mathcal{S}_{\rm LSCR}(\mathcal{E})$ characterized by the LSCR principle. Nevertheless, incorporating their empirical discrepancies into the invariance score can provide additional information for distinguishing non-invariant models, thereby improving the power to detect plausible causal predictors.

The corresponding MIP formulation for testing $H_{0,k_0}$ can be modified as:
\begin{subequations} \label{eq:mip-task2-power}
\begin{align}
    \min_{\substack{ \beta \in\mathbb{R}^p ,\
     \{\beta^e \in \mathbb{R}^p \}_{e\in\mathcal{E}}  \\ 
     g \in \{0,1\}^p, \ \{ t_l \geq 0 \}_{l=1}^4  } } \quad& n^\star \left( \gamma_1 t_1 + \gamma_2 t_2 + \gamma_3 t_3 + \gamma_4 t_4 \right) \nonumber \\
    \text{subject to:} \quad& \sum_{e<f} \omega^{e,f} \|\beta^e - \beta^f\|_2^2 \leq t_1, \label{eq:mip-task2more-t1}  \\
    & \sum_{e<f} \omega^{e,f} \left( \left\| Y^e \right\|_{n^e}^2 - \left\| Y^f \right\|_{n^f}^2 - \sum_{j=1}^p\beta^e_j \langle X^e_j, Y^e \rangle_{n^e} + \sum_{j=1}^p\beta^f_j \langle X^f_j, Y^f \rangle_{n^f} \right)^2 \leq t_2, \label{eq:mip-task2more-t2} \\ 
    & \sum_{e<f} \omega^{e,f} \left( \Big\langle Y^e - \sum_{j=1}^p X^e_j \beta_j, 1 \Big\rangle_{n^e} -  \Big\langle Y^f - \sum_{j=1}^p X^f_j \beta_j, 1 \Big\rangle_{n^f}  \right)^2 \leq t_3, \label{eq:mip-task2more-t3} \\
    & \sum_{e<f} \omega^{e,f} \left( \Big\langle Y^e - \sum_{j=1}^p X^e_j \beta_j, X^e_k \Big\rangle_{n^e} -  \Big\langle Y^f - \sum_{j=1}^p X^f_j \beta_j, X^f_k \Big\rangle_{n^f}  \right)^2 \leq t_4,  \quad \text{if } g_k =1, \label{eq:mip-task2more-t4} \\
    & \beta^e_j = 0 \text{ if } g_j = 0, \qquad\qquad\qquad\qquad\qquad\qquad \forall e\in\mathcal{E}, \ j\in [p], \label{eq:mip-task2more-selection} \\ 
    & \beta_j = 0 \text{ if } g_j = 0, \qquad\qquad\qquad\qquad\qquad\qquad \forall  j\in [p], \label{eq:mip-task2more-selection1} \\ 
    & \Big\langle X_j^e, Y^e - \sum_{k=1}^p \beta_k^e X_k^e \Big\rangle_{n^e}  = 0 \text{ if } g_j = 1, \qquad\ \  \forall e \in \mathcal{E}, \ j\in[p], \label{eq:mip-task2more-normal} \\
    & \sum_{e\in\mathcal{E}} \frac{n^e}{n^\star} \Big\langle X^e_j, Y^e - \sum_{k=1}^p \beta_k X^e_k \Big\rangle_{n^e}  = 0 \text{ if } g_j = 1, \qquad\quad \ j\in[p], \label{eq:mip-task2more-normal1} \\
    & g_{k_0} = 0.
\end{align}
\end{subequations}
For the objective function, the first two components coincide with those in formulation~\eqref{eq:mip-task2}, through constraints~\eqref{eq:mip-task2more-t1} and \eqref{eq:mip-task2more-t2}. The last two components, introduced through \eqref{eq:mip-task2more-t3} and \eqref{eq:mip-task2more-t4}, capture the additional discrepancies across environments in the residual mean, corresponding to condition~(i), and in the cross moment between residual and active variables, corresponding to condition~(ii).
Constraints~\eqref{eq:mip-task2more-selection} and \eqref{eq:mip-task2more-normal} ensure that each $\beta^e$ is the environment specific OLS solution, whereas constraints~\eqref{eq:mip-task2more-selection1} and \eqref{eq:mip-task2more-normal1} ensure that $\beta$ is the pooled OLS solution based on the full dataset $\{(X^e,Y^e)\}_{e\in\mathcal{E}}$ of all environments. Importantly, the pooled OLS coefficient is used in the additional discrepancy terms. If the environment specific OLS coefficients were used instead, the corresponding empirical moment conditions would hold automatically by the OLS normal equations: for every active predictor $j$ with $g_j=1$, $\langle X_j^e,\,Y^e-\sum_{k=1}^p \beta_k^e X_k^e \rangle_{n^e}=0$, and
$\langle Y^e-\sum_{j=1}^p X_j^e\beta_j^e,\,1 \rangle_{n^e}=0$.
In that case, these quantities would provide no additional information for detecting violations of invariance. Using the pooled OLS solution instead allows the last two components to capture meaningful discrepancies across environments and thereby potentially improve the power of the test.

Using this modified objective value as the test statistic, we accordingly adapt the multiplier bootstrap procedure in Algorithm~\ref{algo:bootstrap-threshold}. In line 4, we additionally compute the pooled OLS coefficient $\widehat{\beta}^S$ and 
$$
\widehat{r}_i^{e,S,\mathrm{pool}}  = Y_i^e-(X_{i}^{e})^\top\widehat{\beta}^S, \quad
\widehat{\Sigma}_S = \sum_{e\in\mathcal{E}} \frac{n^e}{n^\star} \widehat{\Sigma}^e_S, \quad
B^{\star,S} = (\widehat{\Sigma}_{S})^{-1} \left[\frac{1}{\sqrt{n^\star}} \sum_{e\in\mathcal{E}} \sum_{i=1}^{n^e} \xi_i^e X_{S,i}^{e} \widehat{r}_i^{e,S, \rm pool} \right],
$$
$$
M^{\star,e,S} = \frac{1}{\sqrt{n^e}} \sum_{i=1}^{n^e} \xi^e_i \left( \widehat{r}_i^{e,S, \rm pool} - \frac{1}{n^e} \sum_{i=1}^{n^e} \widehat{r}_i^{e,S, \rm pool} \right), 
$$
and
$$
Q_j^{\star,e,S} = \frac{1}{\sqrt{n^e}} \sum_{i=1}^{n^e} \xi^e_i \left(X^e_{ij} \widehat{r}_i^{e,S, \rm pool} - \frac{1}{n^e} \sum_{i=1}^{n^e} X^e_{ij}\widehat{r}_i^{e,S, \rm pool} \right).
$$
In line 5, for each pair $e<f$ in $\mathcal{E}$, we additionally compute
$$
O_n^{\star, e,f, S} = \sqrt{\frac{n^\star}{n^e}} M^{\star,e, S} - \sqrt{\frac{n^\star}{n^f}} M^{\star,f, S} - \left( \frac{1}{n^e} \sum_{i=1}^{n^e} X_{S,i}^{e} -\frac{1}{n^f} \sum_{i=1}^{n^f} X^{f}_{S,i} \right)^\top B^{\star, S}
$$
and
$$
K_{n,j}^{\star, e,f, S} = \sqrt{\frac{n^\star}{n^e}} Q_j^{\star,e, S} - \sqrt{\frac{n^\star}{n^f}} Q_j^{\star,f, S} - \left( \frac{1}{n^e} \sum_{i=1}^{n^e} X_{ij}^e X_{S,i}^{e} -\frac{1}{n^f} \sum_{i=1}^{n^f} X_{ij}^f X_{S,i}^{f} \right)^\top B^{\star, S}.
$$
Finally, the line 7 is modified as:
$$
t_k^b \gets \max_{S \in\mathcal{S}^\circ} \left\{ \sum_{e<f} \omega^{e,f}  \left( 
\gamma_1  \left\| L_n^{\star,e,f,S} \right\|_2^2 + 
\gamma_2  |V_n^{\star, e,f, S}|^2 \ +   
\gamma_3  \left| O_n^{\star, e,f S} \right|^2 \right) + 
\gamma_4\max_{j\in S} \sum_{e<f} \omega^{e,f}  \left| K_{n,j}^{\star, e,f,S} \right|^2  \right\}. 
$$
The validity of this modified bootstrap procedure can be established using the same argument as in Appendix~\ref{sec:app:proof:type-1}.

\section{Extension to nonlinear causal effects}
\label{sec:app:nonlinear}

We now investigate an extension of our framework to nonlinear structural equation models. Although the LSCR invariance principle is developed for linear regression, our variable selection procedure can be extended to nonlinear additive models using basis expansions. Specifically, consider the additive model
$$
Y^e=\sum_{j\in S^\star}f_j^\star(X_j^e)+\varepsilon^e,
\qquad
\varepsilon^e\perp X_{S^\star}^e,\quad
\varepsilon^e\sim\mathcal F,\quad
\mathbb E[\varepsilon^e]=0,
$$
where the nonlinear functions $f_j^\star$ and the noise distribution remain invariant across environments.
To accommodate nonlinear effects, we expand each feature $X_j^e$ into $R_n$ basis functions $b_r(X_j^e)$, $r\in[R_n]$. The nonlinear function $f_j^\star(X_j^e)$ can then be approximated by a linear combination of these basis functions. Denote the expanded covariate vector by
$ Z_j^e= (b_1(X_j^e),\ldots,b_{R_n}(X_j^e) )^\top \in\mathbb R^{R_n}$, and concatenate those vectors into $Z^e\in\mathbb R^{pR_n}$. For a candidate predictor set $S$, we consider the additive regression model
$$
f_S(X_S^e)
=\sum_{j\in S}\sum_{r=1}^{R_n}
\theta_{jr}b_r(X_j^e),
$$
where $\theta_j=(\theta_{j1},\ldots,\theta_{jR_n})^\top$ denotes the coefficient vector associated with the original predictor $X_j$.

We now apply the LSCR invariance principle to the expanded covariates. For each environment $e$ and candidate set $S$, define the population OLS coefficients and corresponding risks as
$$
\theta^{e,S}:=\arg\min_{\theta:\,\theta_j=0,\ j\notin S}
\mathbb E\left[(Y^e-\theta^\top Z^e)^2\right], \qquad
\nu^{e,S}:=\mathbb E\left[
(Y^e-(\theta^{e,S})^\top Z^e)^2
\right].
$$
Let $\widehat\theta^{e,S}$ and $\widehat\nu^{e,S}$ denote their empirical counterparts.
The corresponding empirical invariance score is
$$
\widehat{\mathsf I}_{\rm nonlinear}(S) =
\sum_{e<f}\omega^{e,f}
\left[ \|\widehat\theta^{e,S}
-\widehat\theta^{f,S}\|_2^2
+\rho(\widehat\nu^{e,S}
-\widehat\nu^{f,S})^2 \right].
$$
As in our original framework, we combine this score with the pooled prediction loss and a group $\ell_0$-penalty.
Let $g_j\in\{0,1\}$ indicate whether predictor $j$ is selected, and introduce coefficients $\theta^e\in\mathbb R^{pR_n}$ for each environment. 
The resulting MIP formulation is
\begin{align*}
\begin{split}
\min_{\substack{
\theta\in\mathbb R^{pR_n}, \\
\{\theta^e\in\mathbb{R}^{pR_n}\}_{e\in\mathcal{E}}\\
g\in\{0,1\}^p}}
\quad&
\sum_{e\in\mathcal E}\omega^e
\left\|Y^e-\sum_{j=1}^p\sum_{r=1}^{R_n}
\theta_{jr}b_r(X_j^e)\right\|_{n^e}^2 
+ \lambda_n\sum_{j=1}^p g_j 
+ \gamma\sum_{e<f}\omega^{e,f}
\left[ \|\theta^e-\theta^f\|_2^2
+ \rho\,d^{e,f}(\theta^e,\theta^f)^2
\right],\\
\text{subject to:}\quad&
\theta_{jr}=0 \text{ if } g_j=0,\qquad\forall j\in[p],\ r\in[R_n],\\
& \theta_{jr}^e = 0 \text{ if } g_j = 0, \qquad\forall e\in\mathcal{E},\ j\in[p],\ r\in[R_n], \\
&\left\langle
b_r(X_j^e),\,
Y^e-\sum_{k=1}^p\sum_{s=1}^{R_n}
\theta_{ks}^e b_s(X_k^e)
\right\rangle_{n^e}=0 \text{ if }g_j=1,\qquad\forall e\in\mathcal{E},\  j\in[p],\ r\in[R_n],
\end{split}
\end{align*}
where
$$
d^{e,f}(\theta^e,\theta^f)
:=
\widehat\nu^e(\theta^e)
-\widehat\nu^f(\theta^f), \qquad
\widehat\nu^e(\theta^e)
=
\|Y^e\|_{n^e}^2
-\sum_{j=1}^p\sum_{r=1}^{R_n}
\theta_{jr}^e
\left\langle b_r(X_j^e),Y^e\right\rangle_{n^e}.
$$
For computational efficiency, we may use coordinate descent to obtain approximate solutions, with each coordinate update corresponding to the addition or removal of an entire group of basis functions.

\vspace{0.1in}
\begin{example} \label{ex:nonlinear}
We generate data from a nonlinear structural equation model with $p=6$ covariates and $|S^\star|=2$ causal parents. Specifically, the response is generated according to
$$
Y^e=\sum_{j\in S^\star}\beta_j^\star\tanh(2\sin(X_j^e))+\varepsilon^e, \quad \varepsilon^e\sim \mathcal{N}(0, 0.25),
$$
where the causal coefficients remain invariant across environments. We consider five environments, allowing the structural coefficients and noise variances governing the covariates to vary across environments. We approximate the nonlinear effects using a Fourier basis with four basis functions for each predictor, and apply our group-$\ell_0$ formulation using coordinate descent. We compare the resulting procedure, denoted by nonlinear MIP, with nonlinear ICP~\citep{heinze2018invariant}, WVM~\citep{martinet2022variance}, and FAIR~\citep{gu2025causality}.

Figure~\ref{fig:nonlinear} compares the performance of these methods as the sample size per environment increases. Our proposed nonlinear MIP achieves substantially higher rates of exact support recovery and identifies more true causal predictors than the competing methods. In particular, its recovery probability approaches one as the sample size increases, while the other methods rarely achieve exact recovery in this experiment. Moreover, nonlinear MIP is considerably faster than the competing methods. 
\end{example}

\begin{figure}[!ht]
\centering
\includegraphics[width=\textwidth]{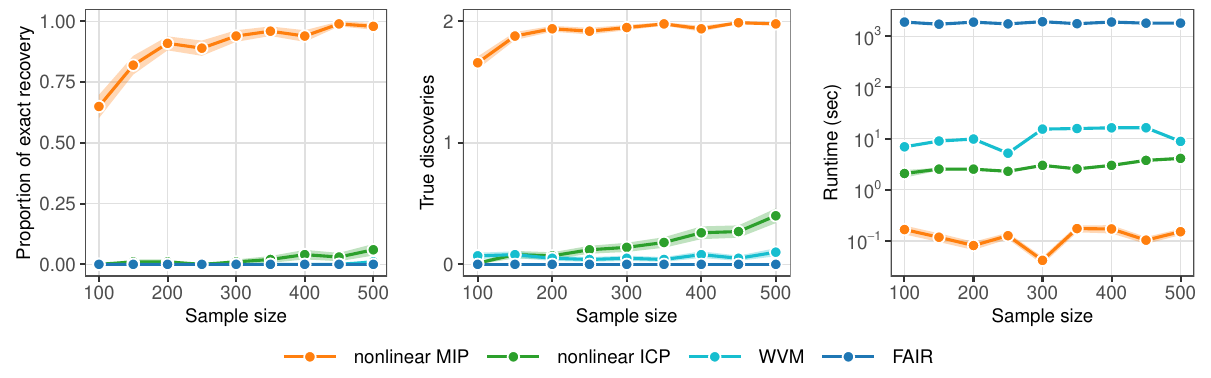}
\caption{\small Performance comparison under nonlinear causal effects as the sample size $n$ increases. 
Left panel: proportion of 100 trials achieving exact recovery of $S^\star$. 
Middle panel: average number of true positives $|\widehat S\cap S^\star|$ over 100 trials. Right panel: average runtime in seconds over 100 trials, shown on a logarithmic scale. Solid lines show averages, and shaded bands indicate $\pm 1$ standard error. Additional experimental details are provided in Appendix~\ref{sec:app:figure-nonlinear}.}
\label{fig:nonlinear}
\end{figure}

One limitation of this extension is that the LSCR invariance principle may not hold exactly under finite basis expansions. To see this, denote the true nonlinear regression function by
$ f^\star(X_{S^\star}^e) =\sum_{j\in S^\star}f_j^\star(X_j^e)$.
Suppose that it admits the representation
$$
f^\star(X_{S^\star}^e) = \sum_{j\in S^\star}\sum_{r=1}^{R_n} \theta^\star_{jr}b_r(X_j^e) + r(X_{S^\star}^e),
$$
where $\theta^\star$ denotes the $L_2$ projection coefficients of $f^\star$ onto the span of selected basis functions under the observational distribution, and $r$ denotes the corresponding approximation error.
Let $B_{S^\star}^e:=\{ b_r(X_j^e) \}_{j\in S^\star,r\in[R_n]}$ denote the vector of basis functions, and define $\Sigma_{S^\star}^e:=\mathbb E[B_{S^\star}^e(B_{S^\star}^e)^\top]$.
By construction, $\mathbb E_{P_{\rm obs}}[B_{S^\star}r(X_{S^\star})]=0$, although this orthogonality need not hold in other environments.
Assuming $\Sigma_{S^\star}^e$ is invertible, the population OLS coefficients satisfy
$$
\theta_{S^\star}^{e,S^\star} =
\theta^\star+  (\Sigma_{S^\star}^e)^{-1}
\mathbb{E}\left[ B_{S^\star}^e r(X_{S^\star}^e) \right].
$$
Similarly, the corresponding residual risk is
$$
\nu^{e,S^\star}
= \sigma_\varepsilon^2
+\mathbb E\left[r(X_{S^\star}^e)^2\right] -
\mathbb E\left[B_{S^\star}^e r(X_{S^\star}^e)\right]^\top
(\Sigma_{S^\star}^e)^{-1}
\mathbb E\left[B_{S^\star}^e r(X_{S^\star}^e)\right],
$$
where $\sigma_\varepsilon^2=\mathbb E[(\varepsilon^e)^2]$ is invariant across environments.
When $r=0$, meaning that the true nonlinear functions lie in the span of the selected basis functions, we have $\theta_{S^\star}^{e,S^\star}=\theta^\star$ and $\nu^{e,S^\star}=\sigma_\varepsilon^2$ for all $e\in\mathcal E$.
Thus, both conditions of the LSCR invariance principle hold at the true causal parent set $S^\star$.
However, when $r\neq0$, the additional terms in the expressions above may depend on the covariate distributions, which can vary across environments.
Consequently, both the population OLS coefficients and residual risks may differ across environments, even though the underlying nonlinear causal mechanism remains invariant.

Increasing the number of basis functions $R_n$ may reduce the approximation error and bring the invariance score closer to zero for $S^\star$.
However, increasing $R_n$ introduces more coefficients to estimate, potentially leading to larger statistical errors and higher computational costs.
Therefore, the choice of basis dimension involves a trade-off between approximation accuracy and estimation accuracy.
We consider this extension as an empirical investigation and leave theoretical guarantees for nonlinear models to future work.

\section{Extensions to when some variables are unobserved}
\label{sec:app:confounding}
We now discuss an extension of the mixed-integer programming framework for invariant prediction in the presence of hidden variables. We consider a structural equation model with noise interventions and latent confounding. Let $H\in\mathbb{R}^q$ denote the hidden variables, $X_A\in\mathbb{R}^{|A|}$ the observed ancestors of the causal parents, $X_{S^\star}\in\mathbb{R}^{|S^\star|}$ the causal parents of $Y$, $Y\in\mathbb{R}$ the response, and $X_D\in\mathbb{R}^{|D|}$ the descendants of $Y$. The latent variables $H$ may directly affect all of these observed variables, while the intervention mechanism, indexed by $e\in\mathcal{E}$, is assumed to be unaffected by $H$. Specifically, for each $e\in\mathcal{E}$,
$$
H^e \sim \mathcal{N}(0, I_q), \qquad
X^e_a = \phi_a^{\top} H^e + \varepsilon^e_a,\ \forall a\in A, \qquad
X^e_{j} = \theta^\top_j X^e_A + \phi_j^\top H^e + \varepsilon^e_j, \ \forall j\in S^\star.
$$
The response and its descendants are generated according to
$$
Y^e = \beta^{\star\top}_{S^\star} X_{S^\star}^e + \phi_Y^{\top} H^e + \varepsilon^e_Y, \quad
X_d^e = \theta_d Y^e + \phi_d^\top H^e + \varepsilon^e_d, \ \forall d\in D.
$$
The noise variables satisfy
$$
\varepsilon^e_a \sim \mathcal{N}(0, \nu_a^e),\ \forall a\in A, \quad
\varepsilon^e_j \sim \mathcal{N}(0, \nu_j^e),\ \forall j\in S^\star, \quad
\varepsilon^e_Y \sim \mathcal{N}(0, \nu_Y), \quad
\varepsilon^e_d \sim \mathcal{N}(0, \nu_d^e), \ \forall d\in D
$$
where the noise variables are mutually independent and independent of $H^e$. Across environments, interventions are allowed to change only the noise levels $\nu_j^e$ of the observed covariates $j\in A\cup S^\star\cup D$, while all structural coefficients, the distribution of $H^e$, and the noise distribution of the response remain invariant. Thus, the environments induce heterogeneity in the covariate distributions without directly intervening on the response mechanism. Such models with noise interventions provide a standard framework for studying causal identification in the presence of latent confounding and have been widely considered in the literature~\citep{rothenhausler2019causal, shen2026causality}.

Since the latent variables $H^e$ are unobserved, the LSCR invariance principle can no longer be expected to identify the structural coefficient $\beta^\star$. In particular, when $S=S^\star$, the population OLS coefficient is given by
$$
\beta^{e,S^\star}_{S^\star} = \beta^\star_{S^\star} + \left(\Sigma^{e}_{S^\star} \right)^{-1} \mathbb{E}\left[X^e_{S^\star}H^{e\top}\right]\phi_Y.
$$
The second term represents the bias induced by latent confounding and, in general, depends on the interventions across environments. Consequently, the OLS coefficient on the true causal parent set need not remain invariant across environments, even though the structural coefficient $\beta^\star$ itself is invariant. This motivates the need for an alternative invariance principle that can accommodate hidden confounding.

\vspace{0.1in}
\begin{definition}[Hidden-invariant prediction model]
\label{def:hidden-invariant}
We say that $\beta\in\mathbb{R}^p$ is a hidden invariant prediction model if
$$
\mathbb{E}\left[ X^e_j (Y^e - X^e \beta) \right] =
\mathbb{E}\left[ X^f_j (Y^f - X^f \beta) \right], \quad \forall j\in[p],\ \forall e,f\in\mathcal{E}.
$$
\end{definition}

A prediction model is invariant under hidden confounding, based on Definition~\ref{def:hidden-invariant}, if the cross moments between its residual and each covariate remain constant across environments. In particular, for the true structural coefficient $\beta^\star$, the cross moments $\mathbb E[X_j^e(Y^e-X^{e\top}\beta^\star)]$ remain invariant across environments for all $j\in[p]$. This follows because the structural coefficients, the distribution of the latent variables, and the response noise distribution remain unchanged across environments, while interventions only affect the noise variances of the covariates. The same invariance principle is also considered in~\cite{rothenhausler2019causal}.
Motivated by this observation, we define the invariance score under hidden confounding as
$$
\mathsf{I}_h(\beta)= \sum_{e<f} \omega^{e,f} \sum_{j=1}^p
\left( \mathbb{E}\left[ X_j^e\left(Y^e-X^{e\top}\beta\right) \right] -
\mathbb{E}\left[ X_j^f\left(Y^f-X^{f\top}\beta\right) \right] \right)^2.
$$
The score quantifies the degree to which the moments between the residuals and covariates vary across environments. In particular, $\mathsf{I}_h(\beta)=0$ if and only if these moments are invariant across environments, so that $\beta$ satisfies the proposed invariance principle under hidden confounding.

With this newly defined invariance score, we propose a procedure with invariance regularization that estimates $\beta^\star$ in the same spirit as~\eqref{eq:pop-task1}:
\begin{align*} 
\begin{split}
    \min_{\substack{ \beta \in \mathbb{R}^p  } }  \quad \sum_{e\in\mathcal{E}} \omega^e\ \mathbb{E} \left[ Y^e - \beta^\top X^e \right]^2 + \gamma \mathsf{I}_h (\beta).
\end{split}
\end{align*}
Furthermore, the finite-sample convex MIP formulation is given by:
\begin{subequations}
\begin{align*}
     \min_{
     \substack{ \beta \in\mathbb{R}^p   \\ 
     g \in \{0,1\}^p } 
     } \quad& \sum_{e\in\mathcal{E}} \omega^e\left\| Y^e - \sum_{k=1}^p X_k^e \beta_k \right\|_{n^e}^2 
     + \lambda \sum_{j=1}^p g_j  \\
     & \qquad + \gamma \sum_{e<f }  \omega^{e,f} \sum_{j=1}^p \left( \Big\langle X_j^e, Y^e - \sum_{k=1}^p X_k^e \beta_k \Big\rangle - \Big\langle X_j^f, Y^f - \sum_{k=1}^p X_k^f \beta_k \Big\rangle \right)^2  \\
\text{subject to:} \quad& \beta_j = 0 \text{ if } g_j = 0, \qquad\forall j\in [p].
\end{align*}
\label{eq:mip-hidden} 
\end{subequations}
where a penalty $\sum_{j=1}^p g_j$ is introduced to encourage sparsity.

\vspace{0.1in}
\begin{example}
\label{ex:hidden}
We generate data according to the noise-intervention model introduced in this section, with $q=2$, $|A|\in\{2,3\} $, $|S^\star|=4$, $|D|\in\{2,3\}$, and eight additional independent covariates. Figure~\ref{fig:hidden} shows that our proposed method, MIP-hidden, achieves the lowest coefficient estimation error and the highest rate of perfect recovery. We focus on comparisons with Causal Dantzig~\citep{rothenhausler2019causal} and negDRO~\citep{wang2024causal}, both of which are also designed to accommodate hidden confounding.
One possible explanation for the improved performance of MIP-hidden is that its objective includes both the MSE loss and an $\ell_0$-type sparsity penalty, in addition to the invariance score. In contrast, Causal Dantzig does not directly account for prediction loss, while negDRO does not explicitly encourage sparsity. The mixed-integer formulation therefore allows MIP-hidden to perform discrete variable selection while also accounting for prediction accuracy, particularly when the underlying causal model is sparse.

\begin{figure}[!ht]
\centering
\includegraphics[width=0.85\textwidth]{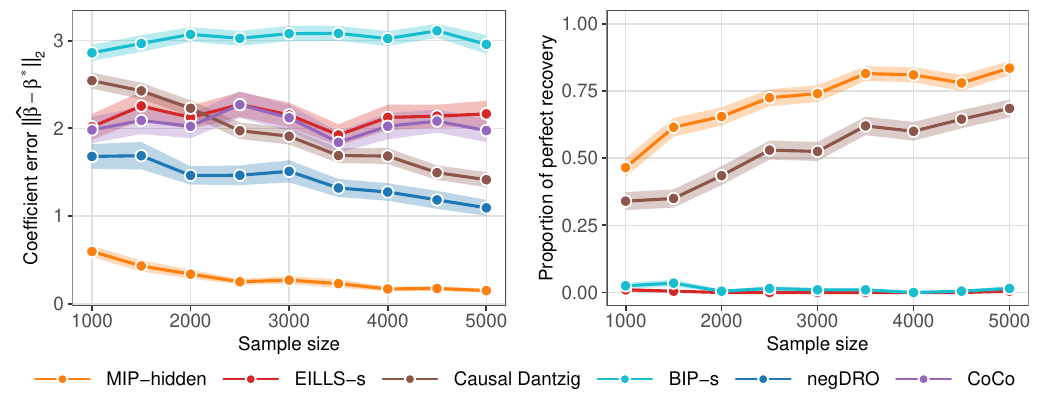}
\caption{\small Performance comparison under hidden confounding and noise interventions (Example~\ref{ex:hidden}) as the sample size $n$ increases. Left panel: average coefficient estimation error $\|\widehat{\beta}-\beta^\star\|_2$ over 200 trials. 
Right panel: proportion of 200 trials achieving perfect recovery of $S^\star$, reported for methods that perform variable selection. Solid lines show averages, and shaded bands indicate $\pm 1$ standard error. Additional competing methods and full experimental details are provided in Appendix~\ref{sec:app:figure-hidden}.}
\label{fig:hidden}
\end{figure}

\end{example}

\section{Proofs of results in Section~\ref{sec:task1}}
\label{sec:app:proof:task1}

In this section, we provide proofs of results regarding the procedure~\eqref{eq:pop-task1}. To facilitate the proof, we define the objective function as $\mathsf{F}(\beta)$, and
$$
\mathsf{R}(\beta) := \sum_{e\in\mathcal{E}} \omega^e\ \mathbb{E}\left[Y^e - \beta^\top X^e \right]^2, \ \
\mathsf{I}_1(S) := \sum_{e<f} \omega^{e,f} \left\|\beta^{e,S} - \beta^{f,S} \right\|_2^2, \ \
\mathsf{I}_2(S) := \sum_{e<f} \omega^{e,f} \left\|\nu^{e,S} - \nu^{f,S} \right\|_2^2.
$$
% We rewrite $\gamma_1 = \gamma$ and $\gamma_2 = \gamma\rho$.
% Note that for any given coefficient vector $\beta$, the subset $S$ is uniquely determined as $S = \mathrm{supp}(\beta)$.
As a result, $\mathsf{I}(S) = \mathsf{I}_1(S) + \rho \mathsf{I}_2(S)$ and $\mathsf{F}(\beta) = \mathsf{R}(\beta) + \gamma \mathsf{I}_1(S) + \gamma \rho \mathsf{I}_2(S)$. We also use $\widehat{\mathsf{F}}(\beta)$, 
$\widehat{\mathsf{R}}(\beta)$, $\widehat{\mathsf{I}}_1(S)$ and $\widehat{\mathsf{I}}_2(S)$ to denote the corresponding finite-sample quantities. 
We use $X_S^{e}\in\mathbb{R}^{|S|}$ to denote the subvector of $X^e$ indexed by $S$, and $X_i^e\in\mathbb{R}^p$ to denote the $i$-th observation of $X^e$. Similarly, we use $Y^e_i\in\mathbb{R}$ to denote the $i$-th observation of $Y^e$. When needed, we write $\boldsymbol{X}^e\in\mathbb{R}^{n^e\times p}$ for the corresponding sample matrix, and $\boldsymbol{Y}^e\in\mathbb{R}^{n^e}$ for the corresponding sample vector. We further define the covariance matrix restricted to $S$ as
$ \Sigma^{e}_S:=\mathbb{E}\!\left[
X^{e}_SX^{e\top}_S\right] $,
and denote its empirical counterpart by $\widehat{\Sigma}^{e}_S$. Finally, for the population OLS coefficient $\beta^{e,S}$ defined in \eqref{eqn:beta_r}, we denote its restriction to the coordinates in $S$ by ${\beta}_S^{e,S}\in\mathbb{R}^{|S|}$; its empirical counterpart is denoted by $\widehat{\beta}^{e,S}_S \in \mathbb{R}^{|S|}$.

\vspace{0.1in}
\subsection{Proof of Lemma~\ref{lem:identification}}
\label{sec:app:proof:identification}
Before the proof, we restate \textbf{Lemma~\ref{lem:identification}} in a more detailed form.

\begin{lemma}
\label{lem:identification2}
Suppose Assumption~\ref{ass:identifiability} and \ref{assump:regular} hold. Denote
$$
\gamma^\star := \sup_{ S: S \cap G \neq \varnothing } \kappa_L^{-1} \cdot \frac{ \left\| \sum_{e\in\mathcal{E}}  \omega^e\ \mathbb{E} [\varepsilon^e X^e_S] \right\|_2^2 }{ \mathsf{I}_1(S) + \mathsf{I}_2(S) }.
$$
When $\gamma \wedge \gamma \rho \geq 2\gamma^\star + \epsilon$ for some $\epsilon >0$, 
$$
\mathsf{F}(\beta) - \mathsf{F}(\beta^\star) \geq \frac{1}{2} \kappa_L \left\| \beta - \beta^\star \right\|_2^2 + \epsilon\cdot \left( \mathsf{I}_1(S) + \mathsf{I}_2(S) \right) .
$$
\end{lemma}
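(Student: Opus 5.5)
The plan is to expand the pooled risk around $\beta^\star$, isolate a quadratic term and a cross term, and absorb the cross term using Young's inequality together with the definition of $\gamma^\star$. Throughout, $S=\mathrm{supp}(\beta)$, and the penalty is evaluated at this $S$. (If $\beta$ is paired with a larger $S\supseteq\mathrm{supp}(\beta)$, the same argument goes through, since the cross term below only involves coordinates in $\mathrm{supp}(\beta)\subseteq S$.)

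\textbf{Step 1: the reference value.} Since $S^\star$ is LSCR-invariant, $\mathsf{I}_1(S^\star)=\mathsf{I}_2(S^\star)=0$. Hence $\mathsf{F}(\beta^\star)=\mathsf{R}(\beta^\star)=\sum_e\omega^e\mathbb{E}[(\varepsilon^e)^2]$.

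\textbf{Step 2: expand the risk.} Write $\Delta:=\beta-\beta^\star$ and use $Y^e=X^{e\top}\beta^\star+\varepsilon^e$. Then
\begin{align*}
\mathsf{R}(\beta)-\mathsf{R}(\beta^\star)=\Delta^\top\Big(\sum_e\omega^e\Sigma^e\Big)\Delta-2\,u^\top\Delta,\qquad u:=\sum_{e}\omega^e\,\mathbb{E}[\varepsilon^eX^e].
\end{align*}
By Assumption~\ref{assump:regular}(1) and $\sum_e\omega^e=1$, the quadratic term is at least $\kappa_L\|\Delta\|_2^2$.

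For the cross term, note that $\Delta$ is supported on $S\cup S^\star$. Because $\varepsilon^e\perp X^e_{S^\star}$ and $\mathbb{E}[\varepsilon^e]=0$, we have $u_{S^\star}=0$. Therefore $u^\top\Delta=u_{S}^\top\Delta_S$, and only the coordinates in $S$ contribute.

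\textbf{Step 3: split into two cases.}

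\emph{Case $S\cap G=\varnothing$.} Here $u_S=0$ by the definition of $G$, so the cross term vanishes. Since $\gamma\wedge\gamma\rho\ge 2\gamma^\star+\epsilon\ge\epsilon$, the penalty satisfies $\gamma\mathsf{I}_1(S)+\gamma\rho\mathsf{I}_2(S)\ge\epsilon(\mathsf{I}_1(S)+\mathsf{I}_2(S))$. Adding the two pieces gives the claim, with $\kappa_L\|\Delta\|_2^2\ge\tfrac12\kappa_L\|\Delta\|_2^2$ to spare.

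\emph{Case $S\cap G\neq\varnothing$.} Apply Young's inequality to the cross term:
\[
2|u_S^\top\Delta_S|\le \tfrac12\kappa_L\|\Delta\|_2^2+2\kappa_L^{-1}\|u_S\|_2^2\le\tfrac12\kappa_L\|\Delta\|_2^2+2\gamma^\star\big(\mathsf{I}_1(S)+\mathsf{I}_2(S)\big).
\]
The second inequality is exactly the definition of $\gamma^\star$, because $\|u_S\|_2^2=\|\sum_e\omega^e\mathbb{E}[\varepsilon^eX^e_S]\|_2^2$. Combining this with the quadratic lower bound and the penalty lower bound $(2\gamma^\star+\epsilon)(\mathsf{I}_1(S)+\mathsf{I}_2(S))$ yields
\[
\mathsf{F}(\beta)-\mathsf{F}(\beta^\star)\ge\tfrac12\kappa_L\|\Delta\|_2^2+\epsilon\big(\mathsf{I}_1(S)+\mathsf{I}_2(S)\big).
\]

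\textbf{Main obstacle.} The only delicate point is ensuring $\gamma^\star<\infty$, so that the hypothesis on $\gamma$ is meaningful and the division in its definition is legitimate. Assumption~\ref{ass:identifiability} gives $\mathsf{I}_1(S)+\mathsf{I}_2(S)>0$ for every $S$ with $S\cap G\neq\varnothing$. There are finitely many such $S$, and each numerator is finite under the moment bounds, so the supremum is a finite maximum.

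\textbf{Consequence for Lemma~\ref{lem:identification}.} The displayed inequality shows that every minimizer must have $\Delta=0$, i.e.\ coefficient vector $\beta^\star$. Any support containing $\mathrm{supp}(\beta^\star)=S^\star$ is admissible for $\beta^\star$, and $S^\star$ itself attains the minimum, so $S^\star$ is the unique smallest selected set among the minimizers.
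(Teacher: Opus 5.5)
Your proof is correct and follows the paper's argument essentially step for step. You expand the pooled risk around $\beta^\star$, use $\mathbb{E}[\varepsilon^e X^e_{S^\star}]=0$ to reduce the cross term to the coordinates in $S$, and apply Young's inequality with weight $1/2$ (the paper's choice $\delta=1/2$); you then absorb $2\kappa_L^{-1}\|u_S\|_2^2$ into $2\gamma^\star(\mathsf{I}_1(S)+\mathsf{I}_2(S))$ via the definition of $\gamma^\star$. Your explicit case split on $S\cap G$ only spells out what the paper leaves implicit, namely $u_S=0$ when $S\cap G=\varnothing$; that case uses $\gamma^\star\ge 0$, which holds automatically when $G\neq\varnothing$ and by the natural convention for an empty supremum otherwise.
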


\begin{proof}
We largely follow the proof idea of Theorem A.1 in~\cite{fan2024environment}. First, denote $\Delta^\beta = \beta - \beta^\star$, $S = \mathrm{supp}(\beta)$ and $\bar S = S \cup S^\star$. Then for any $\delta \in (0,1)$,
\begin{align*}
\begin{split}
    \mathsf{R}(\beta) - \mathsf{R}(\beta^\star) 
    &= \sum_{e\in\mathcal{E}} \omega^e\left( \mathbb{E}\left[Y^e - \beta^\top X^e \right]^2 - \mathbb{E}\left[Y^e - \beta^{\star\top} X^e \right]^2 \right) \\
    &= \sum_{e\in\mathcal{E}} \omega^e\left( \mathbb{E}\left[ (\beta^\star - \beta)^\top X^e + \varepsilon^e \right]^2 - \mathbb{E}\left[ \varepsilon^e \right]^2 \right) 
    = \sum_{e\in\mathcal{E}} \omega^e\left(  (\Delta^\beta)^\top \Sigma^e \Delta^\beta - 2 (\Delta^\beta)^\top \mathbb{E} \left[\varepsilon^e X^e\right] \right) \\
    &= \sum_{e\in\mathcal{E}} \omega^e (\Delta^{\beta}_{\bar S})^\top \Sigma^{e}_{\bar S} \Delta^{\beta}_{\bar S} -2 (\Delta^{\beta}_{\bar S})^\top  \cdot  \sum_{e\in\mathcal{E}} \omega^e  \mathbb{E} \left[\varepsilon^e X^{e}_{\bar S}\right]  \\
    % &= \sum_{e\in\mathcal{E}} \omega^e (\Delta^{\beta}_{\bar S})^\top \Sigma^{e}_{\bar S} \Delta^{\beta}_{\bar S} - \sum_{e\in\mathcal{E}} 2 \omega^e(\Delta^{\beta}_{\bar S})^\top (\Sigma^{e}_{\bar S})^{1/2} (\Sigma^{e}_{\bar S})^{-1/2} \cdot \mathbb{E} \left[\varepsilon^e X^{e}_{\bar S}\right]  \\
    % &\geq (1 - \delta) \sum_{e\in\mathcal{E}} \omega^e(\Delta^{\beta}_{\bar S})^\top \Sigma^{e}_{\bar S}  \Delta^{\beta}_{\bar S}  - \delta^{-1}  \left( \mathbb{E} \left[\varepsilon^e X^{e}_{\bar S} \right] \right)^\top (\Sigma^{e}_{\bar S})^{-1} \cdot \sum_{e\in\mathcal{E}} \omega^e  \mathbb{E} \left[\varepsilon^e X^{e}_{\bar S}\right] \\
    &\geq (1-\delta) \kappa_L  \left\| \beta - \beta^\star \right\|_2^2 - \delta^{-1} \kappa_L^{-1} \left\| \sum_{e\in\mathcal{E}} \omega^e\ \mathbb{E} [ \varepsilon^e X_{S}^{e} ] \right\|_2^2.
\end{split}
\end{align*}
Consequently, by taking $\delta = 1/2$,
\begin{align*}
\begin{split}
    &\mathsf{F}(\beta) - \mathsf{F}(\beta^\star) \\
    ={}& \mathsf{R}(\beta) - \mathsf{R}(\beta^\star) + \gamma \mathsf{I}_1(S) + \gamma\rho \mathsf{I}_2(S) \\
    \geq{}&  \frac{\kappa_L}{2}  \left\| \beta - \beta^\star \right\|_2^2 - 2 \kappa_L^{-1 } \left\| \sum_{e\in\mathcal{E}} \omega^e\ \mathbb{E} [ \varepsilon^e X^{e}_S ] \right\|_2^2 + 2\gamma^\star \left( \mathsf{I}_1(S) + \mathsf{I}_2(S) \right) +  (\gamma - 2 \gamma^\star) \mathsf{I}_1(S) + (\gamma\rho - 2 \gamma^\star) \mathsf{I}_2(S) \\
    \geq{}&  \frac{\kappa_L}{2}  \left\| \beta - \beta^\star \right\|_2^2 + \epsilon \cdot \left( \mathsf{I}_1(S) + \mathsf{I}_2(S) \right),
\end{split}
\end{align*}
where the last inequality follows from the definition of $\gamma^\star$, as well as the conditions $\gamma \wedge \gamma\rho \ge 2 \gamma^\star + \epsilon$.

\end{proof}

\vspace{0.1in}
\subsection{Preliminary results}
\label{sec:app:proof:prelim}

In this section, we present several events with high probabilities that relate the population quantities to their empirical counterparts.

\begin{lemma} 
\label{lem:D-R}
Define the event 
$$
\mathcal{T}_1=\left\{ \max_{ \beta\in \mathcal{B}(s_n) } \left| \mathsf{R}(\beta) - \mathsf{R}(\beta^\star) - \widehat{\mathsf{R}}(\beta) + \widehat{\mathsf{R}}(\beta^\star) \right|\leq (C_0 \delta_n + \frac{\kappa_L}{4}) \left\| \beta - \beta^\star \right\|_2^2 + \frac{ (C_0' \delta_n)^2 }{\kappa_L} 
\right\},
$$
where $C_0, C_0' >0$ are constants, $\mathcal{B}(s_n) = \{\beta\in\mathbb{R}^p :|\mathrm{supp}(\beta)| \leq s_n \}$, and 
$$
\delta_n = \sqrt{ \frac{2s_n + s_n\log p + \log p}{\sum_e n^e} } + \frac{2 s_n + s_n\log p + \log p}{\sum_e n^e} .
$$
Then $\mathbb{P}(\mathcal{T}_1) \geq 1 - 1/p$.
\end{lemma}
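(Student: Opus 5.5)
Writing $\Delta=\beta-\beta^\star$, I would first expand the excess-risk difference so that the quadratic and the linear-in-noise parts separate. Since $Y^e=X^{e\top}\beta^\star+\varepsilon^e$,
\begin{align*}
\widehat{\mathsf R}(\beta)-\widehat{\mathsf R}(\beta^\star)
&=\sum_{e\in\mathcal E}\omega^e\Big(\Delta^\top\widehat\Sigma^e\Delta-2\Delta^\top\langle X^e,\varepsilon^e\rangle_{n^e}\Big),
\end{align*}
and the population version is the same with $\widehat\Sigma^e$ replaced by $\Sigma^e$ and $\langle X^e,\varepsilon^e\rangle_{n^e}$ replaced by $\mathbb E[X^e\varepsilon^e]$. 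With $\omega^e=n^e/\sum_e n^e$, the weighted sums become pooled empirical averages over all $\sum_e n^e$ observations. Hence the quantity in $\mathcal T_1$ is bounded by
\[
\big|\Delta^\top(\widehat\Sigma^{\rm pool}-\Sigma^{\rm pool})\Delta\big|+2\big|\Delta^\top(\widehat m-m)\big|,
\]
where $\widehat m=\sum_e\omega^e\langle X^e,\varepsilon^e\rangle_{n^e}$ and $m=\sum_e\omega^e\mathbb E[X^e\varepsilon^e]$. Since $\beta,\beta^\star\in\mathcal B(s_n)$ and $s^\star\le s_n$, the vector $\Delta$ is supported on a set $T=\operatorname{supp}(\beta)\cup S^\star$ with $|T|\le 2s_n$. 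So it suffices to control both terms uniformly over supports of size at most $2s_n$.

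For the quadratic term, I would fix $T$ and bound $\sup_{\|u\|_2=1,\operatorname{supp}(u)\subseteq T}|u^\top(\widehat\Sigma^{\rm pool}-\Sigma^{\rm pool})u|$. Assumption~\ref{assump:regular}(1)--(2) makes each $u^\top X^e$ sub-Gaussian with variance proxy $\lesssim\sigma_x^2\kappa_U$. The squares are then sub-exponential, and the samples are independent but not identically distributed across environments. Bernstein's inequality for the pooled sum, combined with a $1/4$-net of the unit sphere in $\mathbb R^{|T|}$ (at most $9^{2s_n}$ points), gives a deviation of order $\sqrt{t/N}+t/N$ with $N=\sum_en^e$ and failure probability $e^{-t}$. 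A union bound over the at most $p^{2s_n}$ supports requires $t\asymp 2s_n+s_n\log p+\log p$, which is exactly the $\delta_n$ in the statement. The result is $|\Delta^\top(\widehat\Sigma-\Sigma)\Delta|\le C_0\delta_n\|\Delta\|_2^2$.

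For the linear term, each $X_j^e\varepsilon^e$ is a product of sub-Gaussians and therefore sub-exponential (Assumption~\ref{assump:regular}(3)). The same net and Bernstein argument gives $\|(\widehat m-m)_T\|_2\le C_0'\delta_n$ uniformly over $T$. Young's inequality then yields
\[
2|\Delta^\top(\widehat m-m)|\le \tfrac{\kappa_L}{4}\|\Delta\|_2^2+\tfrac{4}{\kappa_L}(C_0'\delta_n)^2,
\]
and the constant can be absorbed into $C_0'$.

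Allocating probability $1/(2p)$ to each of the two events completes the proof. The main obstacle is the bookkeeping in the uniform-over-supports concentration: handling non-identically distributed pooled samples in Bernstein, checking that the sub-exponential norms are bounded by constants independent of $e$, and calibrating $t$ so that the union bound produces precisely $\delta_n$ with probability $1-1/p$.
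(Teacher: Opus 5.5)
Your proof is correct. Its last step, Young's inequality producing the $\tfrac{\kappa_L}{4}\|\beta-\beta^\star\|_2^2$ and $(C_0'\delta_n)^2/\kappa_L$ terms, is the same as the paper's.

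The difference is in how you obtain the uniform bound $C_0\delta_n\|\Delta\|_2^2+C_0'\delta_n\|\Delta\|_2$. The paper does not prove it; it imports it from Lemma~C.7 of \citet{fan2024environment}, taking $V(s_n,t)=s_n\log(ep/s_n)+s^\star+t$ and $t=\log p$. You derive it from scratch. You split the excess risk into a quadratic and a linear part, use nets on each support, apply Bernstein's inequality to the pooled independent but non-identically distributed sub-exponential summands, and finish with a union bound.

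The citation is shorter, but it leaves the reader to check that the lemma's conditions match Assumption~\ref{assump:regular}. Your version is self-contained. It also shows why $\delta_n$ involves the total sample size $\sum_e n^e$ rather than $\min_e n^e$: the weights $\omega^e=n^e/\sum_e n^e$ turn $\widehat{\mathsf R}$ into a pooled average. And it makes clear that the non-zero mean $\sum_e\omega^e\mathbb E[X^e\varepsilon^e]$ causes no difficulty once centered.

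One bookkeeping remark. Since $T=\mathrm{supp}(\beta)\cup S^\star$ is determined by $\mathrm{supp}(\beta)$, you only need a union over at most $\sum_{j\le s_n}\binom{p}{j}\le (ep/s_n)^{s_n}$ supports. Combined with the $9^{|T|}$ net points, this recovers the paper's count $2s_n+s_n\log p+\log p$ up to an absolute constant. Your cruder count of $p^{2s_n}$ supports also works, because the extra constant factor is absorbed into $C_0$ and $C_0'$. The match with $\delta_n$ then holds up to constants, not ``exactly'' as you state.
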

\begin{proof}
    Appealing to Lemma C.7 of \cite{fan2024environment}, where we let
    $$
    V(s_n, t) = s_n \log(e p / s_n) + s^\star + t \leq s_n (1+\log p) + s_n + t,
    $$
    and let $t = \log p$, we obtain that
    $$
    \max_{\beta\in\mathcal{B}(s_n) } \left| \mathsf{R}(\beta) - \mathsf{R}(\beta^\star) - \widehat{\mathsf{R}}(\beta) + \widehat{\mathsf{R}}(\beta^\star) \right|\leq \left( C_0\delta_n \left\| \beta - \beta^\star \right\|_2^2 + C_0' \delta_n \left\|\beta - \beta^\star \right\|_2 \right).
    $$
    with probability at least $1-1/p$.
    The constants $C_0, C_0'>0$ are only associated with $\kappa_U$, $\sigma_x$ and $\sigma_\varepsilon$.
    Now applying Young's inequality yields
    \begin{align*}
    \begin{split}
       & C_0 \delta_n \left\| \beta - \beta^\star \right\|_2^2 + C_0' \delta_n \left\|\beta - \beta^\star \right\|_2
       \leq C_0 \delta_n \left\| \beta - \beta^\star \right\|_2^2 + \frac{(C_0' \delta_n)^2}{4} \frac{4}{\kappa_L} + \frac{\kappa_L}{4} \left\|\beta - \beta^\star \right\|_2^2 
       \\
       ={}& (C_0 \delta_n + \frac{\kappa_L}{4}) \left\| \beta - \beta^\star \right\|_2^2 + \frac{ (C_0' \delta_n)^2 }{\kappa_L}.
    \end{split}
    \end{align*}
    The result hence follows.
    
\end{proof}

\vspace{0.2in}
\begin{lemma}
    Define the event
    $$
    \mathcal{T}_2 =\left\{ \max_{ \substack{ |S|\leq s_n \\ e\in\mathcal{E} } } \left\| \widehat{\Sigma}^{e}_S - \Sigma^{e}_S \right\|_{\rm op} \lesssim \sqrt{ \frac{s_n \log p + \log |\mathcal{E}|}{\min_{e\in\mathcal{E}} n^e} }   \right\},
    $$
    where $\Sigma^e_S = \mathbb{E}[X^e_S X^{e\top}_S]$.
    Then $\mathbb{P}(\mathcal{T}_2) \geq 1 - 2/p$.
\end{lemma}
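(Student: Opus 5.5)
The plan is a standard sub-Gaussian covariance concentration bound for a fixed support and environment, followed by a union bound over all supports of size at most $s_n$ and all environments. Monotonicity makes it enough to treat supports with $|S| = s_n$ exactly: if $S\subseteq S'$, then $\widehat{\Sigma}^e_S-\Sigma^e_S$ is a principal submatrix of $\widehat{\Sigma}^e_{S'}-\Sigma^e_{S'}$, so its operator norm is no larger. There are at most $\binom{p}{s_n}\le p^{s_n}$ such supports and $|\mathcal{E}|$ environments.

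\textbf{Step 1: fixed $S$ and $e$.} Write $Z_i = (\Sigma^e_S)^{-1/2}X^e_{S,i}$. By Assumption~\ref{assump:regular}(2), restricted to vectors $v$ supported on $S$ (after the appropriate whitening; in the worst case use the full-vector sub-Gaussianity together with $\kappa_L\le\lambda(\Sigma^e)\le\kappa_U$), $X^e_{S}$ is sub-Gaussian with parameter of order $\sigma_x\sqrt{\kappa_U}$. Since
\[
\|\widehat{\Sigma}^e_S-\Sigma^e_S\|_{\rm op}\le \kappa_U\Big\|\tfrac1{n^e}\textstyle\sum_i Z_iZ_i^\top - I\Big\|_{\rm op},
\]
it suffices to control the whitened deviation. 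Take a $1/4$-net $\mathcal N$ of the unit sphere in $\mathbb R^{|S|}$, with $|\mathcal N|\le 9^{s_n}$. For each $u\in\mathcal N$, the variable $(u^\top Z_i)^2-1$ is sub-exponential, so Bernstein's inequality gives
\[
\mathbb P\Big(\big|u^\top(\tfrac1{n^e}\textstyle\sum_i Z_iZ_i^\top-I)u\big|>t\Big)\le 2\exp\{-c n^e\min(t^2,t)\}.
\]
The usual net argument then yields
\[
\|\widehat{\Sigma}^e_S-\Sigma^e_S\|_{\rm op}\le C\kappa_U\sigma_x^2\big(\sqrt{(s_n+u)/n^e}+(s_n+u)/n^e\big)
\]
with probability at least $1-2e^{-u}$.

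\textbf{Step 2: union bound.} Apply Step 1 with $u = s_n\log p+\log|\mathcal{E}|+\log p$. The total failure probability is then at most
\[
2\,p^{s_n}\,|\mathcal{E}|\,e^{-u}\le 2/p.
\]
Replace $n^e$ by $\min_e n^e$ throughout. Under the standing condition $\min_e n^e\gtrsim s_n\log p+\log|\mathcal{E}|$, the linear term is dominated by the square-root term. Absorbing the extra $\log p$ and the $s_n\log 9$ into $s_n\log p$ (valid for $p\ge 2$) gives the stated rate
\[
\sqrt{(s_n\log p+\log|\mathcal E|)/\min_e n^e}.
\]

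The main obstacle is transferring the sub-Gaussian condition, which is stated for $(\Sigma^e)^{-1/2}X^e$ on the full vector, to the subvector $X^e_S$. I would handle it directly. For any $v$ supported on $S$,
\[
v^\top X^e = \big((\Sigma^e)^{1/2}v\big)^\top (\Sigma^e)^{-1/2}X^e,
\]
so $v^\top X^e$ is sub-Gaussian with parameter $\sigma_x\|(\Sigma^e)^{1/2}v\|_2\le\sigma_x\sqrt{\kappa_U}\|v\|_2$. I would therefore run the net argument on the unwhitened quadratic forms $v^\top(\widehat\Sigma^e_S-\Sigma^e_S)v$ directly, which avoids whitening altogether. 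The constants depend only on $\sigma_x$ and $\kappa_U$, as the notation $\lesssim$ allows. Alternatively, this can be cited as in Lemma C.7 of \cite{fan2024environment}.
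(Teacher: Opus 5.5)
Your proposal is correct and follows the same route as the paper. Both arguments bound $\|\widehat{\Sigma}^e_S-\Sigma^e_S\|_{\rm op}$ for a fixed $(S,e)$ by sub-exponential Bernstein concentration with a constant of order $\sigma_x^2\kappa_U$, then take a union bound over supports of size at most $s_n$ and over environments, inflating the deviation parameter by $\log p$ so the failure probability is $2/p$. Your reduction to $|S|=s_n$ via principal submatrices, the explicit net argument, and the remark that dropping the linear term needs $\min_e n^e\gtrsim s_n\log p+\log|\mathcal{E}|$ (which the paper uses implicitly) are more careful versions of the same steps.
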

\begin{proof}
    Recall that we assume $\kappa_L I_p \preceq \Sigma^{e} \preceq \kappa_U I_p $ in Assumption~\ref{assump:regular} part (1). Then for any $S \subseteq [p]$, we also have $\kappa_L I_{|S|} \preceq \Sigma^{e}_S \preceq \kappa_U I_{|S|} $. Since
    $$
    \widehat{\Sigma}^{e}_S = \frac{1}{n^e} \sum_{i=1}^{n^e} X_{S,i}^{e} X_{S,i}^{e\top} 
    \quad \text{ with } \quad 
    \mathbb{E}\left[ \widehat{\Sigma}^{e}_S \right] = \Sigma^{e}_S,
    $$
    using the Bernstein inequality for sub-exponential distributions, 
    we have
    $$
    \left\| \widehat{\Sigma}^{e}_S - \Sigma^{e}_S \right\|_{\rm op} \leq C_1 \sigma^2_x \kappa_U \left( \sqrt{\frac{|S| + t}{n^e}} + \frac{|S| + t}{n^e}\right)
    $$
    with probability at least $1 - 2 e^{-c_1t}$, where $C_1,c_1>0$ are absolute constants. In other words, with probability at least $1 - 2e^{-t} / \sum_{j=0}^{|S|}\binom{p}{j} |\mathcal{E}|$,
    $$
    \left\| \widehat{\Sigma}^{e}_S - \Sigma^{e}_S \right\|_{\rm op} \lesssim \sqrt{\frac{|S| + \log \sum_{j=0}^{|S|} \binom{p}{j} + \log |\mathcal{E}| + t }{n^e}}
    \lesssim \sqrt{ \frac{|S| \log(ep / |S| ) + \log |\mathcal{E}| + t }{n^e} }.
    $$
    Now applying the union bound over all supports for $S$ with $|S| \leq s_n$ and all $|\mathcal{E}|$ environments, and letting $t = \log p$, we have
    $$
    \mathbb{P}(\mathcal{T}_2) \geq 1 - 2/p.
    $$

\end{proof}

\vspace{0.2in}
\begin{lemma}\label{lem:concen-Xr}
    Define the event
    $$
    \mathcal{T}_3 = \left\{ \max_{ \substack{ |S|\leq s_n \\ e\in\mathcal{E} } } \left\|\frac{1}{n^e}  \sum_{i=1}^{n^e} X^{e}_{S,i}\cdot r_i^{e,S}  \right\|_2 \lesssim (1 + \|\beta - \beta^\star\|_2)   \sqrt{\frac{s_n \log p + \log |\mathcal{E}| }{ \min_{e\in\mathcal{E}} n^e }}, \quad \text{with } \mathrm{supp}(\beta) \subseteq S  \right\}, 
    $$
    where $r^{e,S} = Y^e - X^{e\top} {\beta}^{e,S}$.
    Then $\mathbb{P}(\mathcal{T}_3) \geq 1 - 2/p$.
\end{lemma}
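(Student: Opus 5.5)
The plan is a standard sub-exponential concentration argument, made uniform over supports and environments by an $\varepsilon$-net and a union bound, in the same way as for $\mathcal{T}_2$.

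I read the residual as $r^{e,S}=Y^e-X^{e\top}\beta$, with $\beta$ the coefficient entering the bound and $\mathrm{supp}(\beta)\subseteq S$. The relevant case is $\beta=\beta^{e,S}$. Then the population normal equations give $\mathbb{E}[X^e_S r^{e,S}]=0$, so the quantity to be bounded is a centered empirical average.

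\textbf{Tail scale.} Fix $S$ with $|S|\le s_n$ and $e\in\mathcal{E}$. Write
$$r^{e,S}=\varepsilon^e+(\beta^\star-\beta)^\top X^e.$$
By Assumption~\ref{assump:regular}(2)--(3), $\|r^{e,S}\|_{\psi_2}\lesssim \sigma_\varepsilon+\sigma_x\kappa_U^{1/2}\|\beta-\beta^\star\|_2$. This needs no independence between $\varepsilon^e$ and $X^e$, which matters because $\varepsilon^e$ may be correlated with descendants of $Y^e$. For a unit vector $u\in\mathbb{R}^{|S|}$, we also have $\|u^\top X^e_S\|_{\psi_2}\lesssim\sigma_x\kappa_U^{1/2}$. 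By the Orlicz--H\"older inequality, $u^\top X^e_S\, r^{e,S}$ is sub-exponential with $\psi_1$-norm $\lesssim 1+\|\beta-\beta^\star\|_2$, where constants depend only on $\sigma_x,\sigma_\varepsilon,\kappa_U$. Centering changes this norm only by a constant factor.

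\textbf{Net and Bernstein.} Bernstein's inequality gives, for each fixed $u$,
$$\Big|\tfrac{1}{n^e}\textstyle\sum_i u^\top X^e_{S,i}r_i^{e,S}\Big|\lesssim (1+\|\beta-\beta^\star\|_2)\Big(\sqrt{\tfrac{t}{n^e}}+\tfrac{t}{n^e}\Big)$$
with probability at least $1-2e^{-ct}$. Taking a $1/2$-net of the sphere in $\mathbb{R}^{|S|}$, of cardinality $5^{|S|}$, converts this into a bound on the $\ell_2$ norm, at the cost of a factor $2$ and an extra $|S|$ added to $t$.

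\textbf{Union bound.} I then union over all $\sum_{j\le s_n}\binom{p}{j}\le (ep)^{s_n}$ supports and all $|\mathcal{E}|$ environments, choosing $t\asymp s_n\log p+\log|\mathcal{E}|+\log p$. This gives total failure probability at most $2/p$. Under the standing sample-size condition $\min_e n^e\gtrsim s_n\log p+\log|\mathcal{E}|$, the linear term $t/n^e$ is dominated by the square-root term. Replacing $n^e$ by $\min_e n^e$ then yields the displayed rate.

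\textbf{Main obstacle.} The delicate point is the dependence of the residual on $\beta$. The bound must hold simultaneously for all admissible $\beta$, not a single fixed one. For $\beta=\beta^{e,S}$ there is only one coefficient per $(S,e)$, so the union bound above suffices. For a general $\beta$ supported on $S$, I would split
$$r^{e,S}=\big(Y^e-X^{e\top}\beta^{e,S}\big)+(\beta^{e,S}-\beta)^\top X^e.$$
The first piece is handled as above. The second contributes $(\widehat\Sigma^e_S)(\beta^{e,S}-\beta)_S$, which is linear in $\beta$, and I would control it on $\mathcal{T}_2$. If that decomposition leaves an uncentered mean term, I would restrict the event to $\beta=\beta^{e,S}$. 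In that case the factor $1+\|\beta^{e,S}-\beta^\star\|_2$ is the quantity later used in the proofs of Theorems~\ref{thm:convergence} and~\ref{thm:consistency}.
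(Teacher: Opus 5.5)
Your concentration machinery (sub-exponential products via Orlicz--H\"older, Bernstein, a net, and a union bound over supports and environments) is the same as the paper's. The gap comes from how you read the residual.

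\textbf{The misreading.} In the statement, $r^{e,S}=Y^e-X^{e\top}\beta^{e,S}$ is fixed once $(S,e)$ is fixed. The vector $\beta$ enters only through the scale factor. So the event asserts that the single centered average for each $(S,e)$ is at most $(1+\|\beta-\beta^\star\|_2)r_n$ for \emph{every} $\beta$ with $\mathrm{supp}(\beta)\subseteq S$. The random quantity carries no uniformity over $\beta$, so your ``main obstacle'' is not really there. Under your reading $r=Y^e-X^{e\top}\beta$, the event is in fact false for general $\beta$. The average then has mean $\Sigma^e_S(\beta^{e,S}-\beta)_S$, which is of order $\|\beta^{e,S}-\beta\|_2$ rather than $r_n$; this is exactly the uncentered term you anticipate. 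Your fallback, restricting to $\beta=\beta^{e,S}$, proves the bound only with factor $1+\|\beta^{e,S}-\beta^\star\|_2$. That is strictly weaker than the statement.

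\textbf{The missing step.} What you need is a deterministic comparison that converts your factor into the stated one. For any $\beta$ supported on $S$, set $q^{e,\beta}=X^{e\top}(\beta^\star-\beta)+\varepsilon^e$. The population normal equations give $\Sigma^e_S(\beta^{e,S}_S-\beta_S)=\mathbb{E}[X^e_S q^{e,\beta}]$. Hence
$\|\beta^{e,S}-\beta\|_2\le\kappa_L^{-1}\sqrt{\kappa_U}\,(\mathbb{E}[(q^{e,\beta})^2])^{1/2}\lesssim 1+\|\beta-\beta^\star\|_2$.
By the triangle inequality this yields $\|r^{e,S}\|_{\psi_2}\lesssim 1+\|\beta^{e,S}-\beta^\star\|_2\lesssim 1+\|\beta-\beta^\star\|_2$. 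This is precisely the step the paper performs before invoking Bernstein and the union bound. With it, the rest of your argument goes through essentially verbatim.

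\textbf{Your closing remark.} Without this step, your claim about downstream use is also wrong. In \eqref{eq:diff-in-beta}, Lemma~\ref{lem:D-I} and Lemma~\ref{lem:lower-bound}, $\beta$ is the pooled OLS coefficient $\beta^{\mathrm{OLS},S}$. The resulting terms $r_n^2\|\beta-\beta^\star\|_2^2$ and $r_n^2\|\beta-\beta^\star\|_2^4$ must be absorbed into $\tfrac{\kappa_L}{2}\|\beta-\beta^\star\|_2^2$ from Lemma~\ref{lem:identification2}. That absorption requires the scale to be expressed through $\|\beta-\beta^\star\|_2$ for this same $\beta$, not through $\|\beta^{e,S}-\beta^\star\|_2$.
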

\begin{proof}
    Denote $Z_i^{e,S} = X^{e}_{S,i}  r_i^{e,S}$. Due to the property of the population OLS coefficient ${\beta}^{e,S}$, we have
    $\mathbb{E}[Z_i^{e,S}] = 0$. Now using the Bernstein inequality for sub-exponential distributions, we have
    $$
    \left\| \frac{1}{n^e} \sum_{i=1}^{n^e} Z_i^{e,S} \right\|_2 \leq C_2 K^{e,S} \left( \sqrt{\frac{|S| +t}{n^e}} + \frac{|S| + t}{n^e}  \right)
    $$
    with probability at least $1 - 2 e^{-c_2 t}$, where $C_2, c_2 > 0$ are absolute constants. However, the number $K^{e,S}$ depends on $\beta^\star$. To see this, note that
    $$
    K^{e,S} = \sup_{ \|u\|_2 = 1 } \left\| u^\top Z^{e,S} \right\|_{\psi_1} \leq \sup_{\|u\|_2 = 1} \left\| u^\top X^{e}_S \right\|_{\psi_2} \left\| r^{e,S} \right\|_{\psi_2} 
    \lesssim \sigma_x \sqrt{\kappa_U} \left\| r^{e,S} \right\|_{\psi_2} ,
    $$
    where the last inequality follows from Assumption~\ref{assump:regular} part (2). Moreover, 
    for any $\beta$ with $\mathrm{supp}(\beta) \subseteq S$,
    $$
    r^{e,S} = Y^{e} - X^{e\top}_S {\beta}_S^{e,S}
    = X^{e\top} \left( \beta^\star - \beta \right) + \varepsilon^e + X^{e\top}_S \left( \beta_S - {\beta}_S^{e,S} \right);
    $$
    Now let $q^{e,\beta} = X^{e\top} \left( \beta^\star - \beta \right) + \varepsilon^e$, with the population normal equation, we have
    $$
    0 = \mathbb{E}\left[ X^{e}_S r^{e,S} \right] 
    = \mathbb{E} \left[ X^{e}_S q^{e,\beta} \right] + \Sigma^{e}_S \left( \beta_S - {\beta}_S^{e,S} \right).
    $$
    As a result, 
    \begin{align*}
    \begin{split}
        \left\| \beta_S - {\beta}_S^{e,S} \right\|_2
    & = \left\| (\Sigma^{e}_S)^{-1} \mathbb{E}\left[X^{e}_S q^{e,\beta} \right]  \right\|_2
    \leq \frac{1}{\kappa_L} \sup_{\|u\|_2=1} \left| u^\top \mathbb{E}[X^{e}_S q^{e,\beta}] \right| \\
    &\leq \frac{1}{\kappa_L} \sup_{\|u\|_2=1}  \sqrt{ \mathbb{E}\left[( u^\top X^{e}_S)^2 \right] } \sqrt{ \mathbb{E}(q^{e,\beta})^2}
    \lesssim \frac{\sqrt{\kappa_U}}{\kappa_L} \left\| q^{e,\beta} \right\|_{\psi_2} \\
    &\leq \frac{\sqrt{\kappa_U}}{\kappa_L} \left( \left\| X^{e\top}(\beta^\star - \beta ) \right\|_{\psi_2} + \left\| \varepsilon^e \right\|_{\psi_2} \right) 
    \lesssim \frac{\sqrt{\kappa_U}}{\kappa_L} \left( \sigma_x \sqrt{\kappa_U} \|\beta - \beta^\star \|_2 + \sigma_\epsilon  \right),
    \end{split}
    \end{align*}
    where the inequalities follow from Assumption~\ref{assump:regular}.
    Consequently, we have
    \begin{align*}
    \begin{split}
    K^{e,S} 
    &\lesssim  \left\| r^{e,S} \right\|_{\psi_2} 
    \leq \left\| q^{e,\beta} \right\|_{\psi_2} + \left\| X^{e\top}_S \left( \beta_S - \beta^{e,S}_S \right) \right\|_{\psi_2} \\
    & \leq \left( \sigma_x \sqrt{\kappa_U} \|\beta - \beta^\star \|_2 + \sigma_\epsilon  \right) + \sigma_x \sqrt{\kappa_U} \left\| \beta_S - \beta^{e,S}_S \right\|_2 
    \lesssim 1 + \left\| \beta - \beta^\star \right\|_2.
    \end{split}
    \end{align*}
    In other words, with probability at least $1 - 2 e^{-t} / \sum_{j=0}^{|S|}\binom{p}{j} |\mathcal{E}| $,
    $$
    \left\| \frac{1}{n^e} \sum_{i=1}^{n^e} Z_i^{e,S} \right\|_2 \lesssim (1 + \|\beta - \beta^\star\|_2)  \sqrt{ \frac{|S| \log(ep / |S| ) + \log |\mathcal{E}| + t }{n^e} }.
    $$
    Now applying the union bound over all supports for $S$ with $|S| \leq s_n$ and all $|\mathcal{E}|$ environments, and letting $t = \log p$, we have
    $$
    \mathbb{P}(\mathcal{T}_3) \geq 1 - 2/p.
    $$

\end{proof}

\vspace{0.2in}
\begin{lemma}
    Define the event
    $$
    \mathcal{T}_4 = \left\{ \max_{\substack{ |S| \leq s_n \\ e\in\mathcal{E}  } } \left| \frac{1}{n^e} \sum_{i=1}^{n^e} (r_i^{e,S})^2 - \mathbb{E}\left[r^{e,S}\right]^2 \right| 
    \lesssim (1 + \|\beta - \beta^\star\|_2)^2 \sqrt{\frac{s_n \log p + \log |\mathcal{E}| }{ \min_{e\in\mathcal{E}} n^e }}, \quad \text{with } \mathrm{supp}(\beta) \subseteq S \right\},
    $$
    where $r^{e,S} = Y^e - X^{e\top} {\beta}^{e,S}$.
    Then $\mathbb{P}(\mathcal{T}_4) \geq 1 - 2/p.$
\end{lemma}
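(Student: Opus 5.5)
Two facts drive the argument. First, $(r^{e,S})^2$ is sub-exponential with $\psi_1$-norm of order $\|r^{e,S}\|_{\psi_2}^2$. Second, the proof of Lemma~\ref{lem:concen-Xr} already shows $\|r^{e,S}\|_{\psi_2}\lesssim 1+\|\beta-\beta^\star\|_2$ for any $\beta$ with $\mathrm{supp}(\beta)\subseteq S$.

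Concretely, for fixed $S$ and $e$ I would write $r^{e,S}=q^{e,\beta}+X^{e\top}_S(\beta_S-\beta^{e,S}_S)$, with $q^{e,\beta}=X^{e\top}(\beta^\star-\beta)+\varepsilon^e$ as before. That proof bounds $\|q^{e,\beta}\|_{\psi_2}\lesssim \sigma_x\sqrt{\kappa_U}\|\beta-\beta^\star\|_2+\sigma_\varepsilon$. Using the population normal equations, it also bounds $\|\beta_S-\beta^{e,S}_S\|_2$ by the same order, with constants depending only on $\kappa_L,\kappa_U,\sigma_x,\sigma_\varepsilon$. Hence
\[
K^{e,S}:=\|(r^{e,S})^2-\mathbb{E}[(r^{e,S})^2]\|_{\psi_1}\lesssim \|r^{e,S}\|_{\psi_2}^2\lesssim (1+\|\beta-\beta^\star\|_2)^2 .
\]

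Next I would apply the scalar Bernstein inequality for centered sub-exponential variables to the i.i.d.\ sample $\{(r_i^{e,S})^2\}_{i=1}^{n^e}$. This gives
\[
\Bigl|\tfrac{1}{n^e}\textstyle\sum_i (r_i^{e,S})^2-\mathbb{E}[(r^{e,S})^2]\Bigr|\leq C K^{e,S}\Bigl(\sqrt{u/n^e}+u/n^e\Bigr)
\]
with probability at least $1-2e^{-cu}$. I would take $u=\log\bigl(\sum_{j\le s_n}\binom{p}{j}|\mathcal{E}|\bigr)+\log p\lesssim s_n\log p+\log|\mathcal{E}|$, exactly as in the proofs for $\mathcal{T}_2$ and $\mathcal{T}_3$. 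A union bound over all $|S|\le s_n$ and $e\in\mathcal{E}$ then yields total failure probability at most $2/p$.

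Finally, the standing sample-size condition $\min_e n^e\gtrsim s_n\log p+\log|\mathcal{E}|$ makes the ratio $u/n^e$ at most a constant. So the linear term is dominated by the square-root term, and replacing $n^e$ by $\min_e n^e$ gives the stated rate.

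The main obstacle is the uniformity in $\beta$. The event is stated with a $\beta$-dependent right-hand side, yet the residual $r^{e,S}$ depends only on $S$. I would resolve this as the paper implicitly does for $\mathcal{T}_3$: the deviation does not depend on $\beta$, and the proof never needs $\beta$ beyond its role in bounding $\|r^{e,S}\|_{\psi_2}$. Since no union over $\beta$ is needed, the bound holds simultaneously for every admissible $\beta$. In particular it holds with the sharpest choice, which gives $\|r^{e,S}\|_{\psi_2}\lesssim 1+\inf_{\mathrm{supp}(\beta)\subseteq S}\|\beta-\beta^\star\|_2$. This justifies the event as written.
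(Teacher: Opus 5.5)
Your proposal is correct and follows the paper's proof essentially step for step. The paper also borrows $\|r^{e,S}\|_{\psi_2}\lesssim 1+\|\beta-\beta^\star\|_2$ from the proof of Lemma~\ref{lem:concen-Xr}, applies the sub-exponential Bernstein inequality to the centered squares, and takes a union bound over all $|S|\le s_n$ and $e\in\mathcal{E}$ with $t=\log p$.

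Your two extra remarks are sound and make explicit what the paper leaves implicit. Dropping the linear term $u/n^e$ does require the sample-size condition $\min_e n^e\gtrsim s_n\log p+\log|\mathcal{E}|$, which the paper never invokes in this lemma. Uniformity in $\beta$ is automatic, because the deviation depends only on $(S,e)$. One small constant fix: $u$ should be rescaled by $1/c$ for the union bound to give exactly $2/p$.
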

\begin{proof}
    Recall that in the proof of Lemma \ref{lem:concen-Xr}, we showed that $r^{e,S}$ is sub-Gaussian with $\left\| r^{e,S}\right\|_{\psi_2} \lesssim 1 + \|\beta - \beta^\star\|_2$. Then $(r^{e,S})^2 - \mathbb{E}[r^{e,S}]^2$ is sub-exponential with $\left\| (r^{e,S})^2 - \mathbb{E}[r^{e,S}]^2 \right\|_{\psi_1} \lesssim (1 + \|\beta - \beta^*\|_2)^2$. By Bernstein’s inequality, we have
    $$
    \left| \frac{1}{n^e} \sum_{i=1}^{n^e} (r_i^{e,S})^2 - \mathbb{E}[r^{e,S}]^2 \right| \leq C_3 (K^{e,S})^2 \left( \sqrt{\frac{t}{n^e}} + \frac{t}{n^e} \right)
    $$
    with probability at least $1 - e^{-c_3 t}$, where $C_3, c_3 >0$ are absolute constants, and $K^{e,S} \lesssim  1 + \|\beta - \beta^\star\|_2$. In other words, with probability at least $1 - 2 e^{-t} / \sum_{j=0}^{|S|}\binom{p}{j} |\mathcal{E}| $,
    $$
    \left| \frac{1}{n^e} \sum_{i=1}^{n^e} (r_i^{e,S})^2 - \mathbb{E}[r^{e,S}]^2 \right| \lesssim (1 + \|\beta - \beta^\star\|_2)^2 \sqrt{ \frac{|S| \log(ep / |S| ) + \log |\mathcal{E}| + t }{n^e} }.
    $$
    Now applying the union bound over all supports for $S$ with $|S| \leq s_n$ and all $|\mathcal{E}|$ environments, and letting $t = \log p$, we have
    $$
    \mathbb{P}(\mathcal{T}_4) \geq 1 - 2/p.
    $$
\end{proof}

\vspace{0.2in}
\begin{lemma} \label{lem:D-I}
Under the event $\mathcal{T}_2 \cap \mathcal{T}_3 \cap \mathcal{T}_4$, for any $S$ with $|S|\leq s_n$ and $\mathrm{supp}(\beta) \subseteq S$,
\begin{align*}
\begin{split}
   & \left| \mathsf{I}(S) - \mathsf{I}(S^\star) - \widehat{\mathsf{I}}(S) + \widehat{\mathsf{I}}(S^\star) \right|  \\
    \leq{}& \epsilon / (2\gamma) \cdot \left( \mathsf{I}_1(S) + \mathsf{I}_2(S) \right) + ((16+64\rho^2)\gamma/\epsilon + 16 + 64\rho) r_n^2 \\
    &+(16\gamma/\epsilon + 8) r_n^2 \cdot \|\beta - \beta^\star \|_2^2 
    + (64\rho^2\gamma/\epsilon + 32\rho) r_n^2 \cdot \|\beta - \beta^\star\|_2^4,
\end{split}
\end{align*}
where 
$r_n \asymp \sqrt{(s_n \log p + \log |\mathcal{E}|) / \min_{e\in\mathcal{E}} n^e}$, and $\epsilon>0$ is defined in Lemma~\ref{lem:identification}.
\end{lemma}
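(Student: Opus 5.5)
The plan is to handle the coefficient part $\mathsf I_1$ and the risk part $\mathsf I_2$ separately, each through a per-pair perturbation identity. The key observation is that $S^\star$ is LSCR-invariant, so $\mathsf I(S^\star)=0$. The $S^\star$ contribution $\widehat{\mathsf I}(S^\star)$ is then pure noise: it is a special case of the bound for general $S$, taken with $\beta=\beta^\star$ and population discrepancy zero. It contributes only $r_n^2$ terms, and I will absorb it into the constant in front of $r_n^2$. The main task is therefore to bound $|\mathsf I(S)-\widehat{\mathsf I}(S)|$ for $|S|\le s_n$.

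\emph{Coefficient part.} Fix a pair $e<f$. Write $a:=\beta^{e,S}-\beta^{f,S}$ and $\widehat a:=\widehat\beta^{e,S}-\widehat\beta^{f,S}$, so that $\widehat a=a+\Delta$ with $\Delta:=(\widehat\beta^{e,S}-\beta^{e,S})-(\widehat\beta^{f,S}-\beta^{f,S})$. In each environment the OLS error is
\[
\widehat\beta^{e,S}_S-\beta^{e,S}_S=(\widehat\Sigma^e_S)^{-1}\tfrac1{n^e}\textstyle\sum_i X^e_{S,i}r^{e,S}_i .
\]
On $\mathcal T_2$, since $n\gtrsim s_n\log p+\log|\mathcal E|$, we have $\lambda_{\min}(\widehat\Sigma^e_S)\ge\kappa_L/2$. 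Combining this with $\mathcal T_3$ gives $\|\Delta\|_2\lesssim(1+\|\beta-\beta^\star\|_2)r_n$.

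The difference of squared norms satisfies $|\|\widehat a\|^2-\|a\|^2|\le 2\|a\|\|\Delta\|+\|\Delta\|^2$. I will apply Young's inequality to the cross term, in the form $2\|a\|\|\Delta\|\le \frac{\epsilon}{4\gamma}\|a\|^2+\frac{4\gamma}{\epsilon}\|\Delta\|^2$. Summing over pairs with weights $\omega^{e,f}$ (which sum to $1$) yields
\[
\frac{\epsilon}{4\gamma}\mathsf I_1(S)+\Bigl(\frac{4\gamma}{\epsilon}+1\Bigr)C(1+\|\beta-\beta^\star\|_2)^2r_n^2 .
\]
Expanding $(1+x)^2\le 2+2x^2$ produces the $(16\gamma/\epsilon+8)r_n^2\|\beta-\beta^\star\|^2$ and constant-type terms, once the numerical constants are tracked.

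\emph{Risk part.} I will first split $\widehat\nu^{e,S}-\nu^{e,S}$ into two pieces,
\[
\Bigl[\tfrac1{n^e}\textstyle\sum_i (r_i^{e,S})^2-\mathbb E(r^{e,S})^2\Bigr]
\;-\;(\widehat\beta^{e,S}_S-\beta^{e,S}_S)^\top\widehat\Sigma^e_S(\widehat\beta^{e,S}_S-\beta^{e,S}_S).
\]
This is the exact identity for in-sample OLS residuals, because $\widehat r=r-X_S(\widehat\beta-\beta)$ and the normal equations kill the cross term. The first piece is controlled by $\mathcal T_4$ and is $\lesssim(1+\|\beta-\beta^\star\|)^2r_n$. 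The second is $\lesssim(1+\|\beta-\beta^\star\|)^2r_n^2$ by $\mathcal T_2,\mathcal T_3$. Hence $|D|\lesssim(1+\|\beta-\beta^\star\|_2)^2 r_n$, where $D$ is the pairwise deviation.

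The same Young step, $|\widehat b^2-b^2|\le\frac{\epsilon}{4\gamma\rho}b^2+(\frac{4\gamma\rho}{\epsilon}+1)D^2$, multiplied by $\rho$ and summed over pairs, gives
\[
\frac{\epsilon}{4\gamma}\mathsf I_2(S)+(4\rho^2\gamma/\epsilon+\rho)C(1+\|\beta-\beta^\star\|)^4r_n^2 .
\]
Using $(1+x)^4\le 8+8x^4$ produces the $\|\beta-\beta^\star\|^4$ and constant terms with coefficients of the stated form $64\rho^2\gamma/\epsilon+32\rho$. Note that $D^2$ is of order $r_n^2$ because the $\mathcal T_4$ rate is $r_n$ itself, which is what makes everything scale as $r_n^2$.

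The main obstacle is bookkeeping rather than ideas. The deviation bounds must hold uniformly over $S$ with the same reference vector $\beta$ that appears in the events $\mathcal T_3,\mathcal T_4$; this is how those events are stated, with $\mathrm{supp}(\beta)\subseteq S$. The Young weights must be chosen so that the two penalty fractions add up to exactly $\epsilon/(2\gamma)(\mathsf I_1+\mathsf I_2)$, which requires splitting $\epsilon/(2\gamma)$ as $\epsilon/(4\gamma)$ for each of $\mathsf I_1$ and $\mathsf I_2$. Finally, the $\widehat{\mathsf I}(S^\star)$ terms must be absorbed, and they are only $r_n^2$ since $\beta^\star$ has zero deviation. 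I expect the hidden constants from $\mathcal T_2$ through $\mathcal T_4$ to be folded into $r_n$, which is defined only up to $\asymp$.
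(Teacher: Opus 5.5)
Your proposal follows the paper's proof essentially step for step. The shared steps are:
\begin{itemize}
\item the split into the $\mathsf{I}_1$ part and $\rho$ times the $\mathsf{I}_2$ part;
\item the same OLS-error identity and residual-variance identity (your $(\widehat\beta^{e,S}_S-\beta^{e,S}_S)^\top\widehat\Sigma^e_S(\widehat\beta^{e,S}_S-\beta^{e,S}_S)$ is exactly the paper's $(\frac{1}{n^e}\sum_i X^e_{S,i}r^{e,S}_i)^\top(\widehat\Sigma^e_S)^{-1}(\frac{1}{n^e}\sum_i X^e_{S,i}r^{e,S}_i)$);
\item Young's inequality on the cross term (your per-pair version is equivalent to the paper's Cauchy--Schwarz-then-Young);
\item the observation that $\widehat{\mathsf I}(S^\star)$ is pure noise because $\mathsf I(S^\star)=0$.
\end{itemize}

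The one slip is in the bookkeeping. The term $\frac{\epsilon}{2\gamma}(\mathsf I_1(S)+\mathsf I_2(S))$ already allots the full $\frac{\epsilon}{2\gamma}$ to each of $\mathsf I_1$ and $\mathsf I_2$, so no splitting into $\frac{\epsilon}{4\gamma}$ is needed.

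With your weight $\frac{\epsilon}{4\gamma}$, the Young step gives coefficients $(4\gamma/\epsilon+1)\cdot 8=32\gamma/\epsilon+8$ and $\rho(4\rho\gamma/\epsilon+1)\cdot 32=128\rho^2\gamma/\epsilon+32\rho$. These are not the stated $16\gamma/\epsilon+8$ and $64\rho^2\gamma/\epsilon+32\rho$, so your claim that the stated coefficients come out is incorrect as written.

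The fix is to do exactly what the paper does: use $\frac{\epsilon}{2\gamma}$ for the $\mathsf I_1$ part and $\frac{\epsilon}{2\gamma\rho}$ for the $\mathsf I_2$ part before multiplying by $\rho$. This yields $(2\gamma/\epsilon+1)\cdot 8$ and $\rho(2\rho\gamma/\epsilon+1)\cdot 32$. Adding the $S^\star$ contributions $8r_n^2$ and $32\rho r_n^2$ then gives precisely the stated constants.

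Your version is still a valid inequality of the same shape. Every noise term is proportional to $r_n^2$, so the factor-$2$ discrepancy can be absorbed into $r_n$, which is only defined up to $\asymp$.
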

\begin{proof}
% Recall that $\mathsf{I}_1(\beta) = \sum_{e,f} \omega^{e,f} \left\|\beta^{e,S} - \beta^{f,S} \right\|_2^2$ and $\mathsf{I}_2(\beta) = \sum_{e,f} \omega^{e,f} \left|\sigma^{2,e,S} - \sigma^{2,f,S} \right|^2$.
    Recall $\mathsf{I}(S) = \mathsf{I}_1(S) + \rho \mathsf{I}_2(S)$. Therefore, it suffices to bound 
    $$
    \left| \mathsf{I}_1(S) - \mathsf{I}_1(S^\star) - \widehat{\mathsf{I}}_1(S) + \widehat{\mathsf{I}}_1(S^\star) \right| \quad\text{ and }\quad
    \left| \mathsf{I}_2(S) - \mathsf{I}_2(S^\star) - \widehat{\mathsf{I}}_2(S) + \widehat{\mathsf{I}}_2(S^\star) \right|,
    $$
    % Before we proceed, denote $r_n = \sqrt{(s_n \log p + \log |\mathcal{E}|) / \min_{e\in\mathcal{E}} n^e}.$
    which are established in Steps I and II of this proof. The final bound is obtained in Step III.
    
    \vspace{0.1in}
    \textit{\textbf{Step I.}} Bound $\left| \mathsf{I}_1(S) - \mathsf{I}_1(S^\star) - \widehat{\mathsf{I}}_1(S) + \widehat{\mathsf{I}}_1(S^\star) \right|$. 
    For any $e\in\mathcal{E}$, the sample error of OLS coefficients satisfies
    $$
    \widehat{\beta}^{e,S}_S - {\beta}_S^{e,S} = (\widehat{\Sigma}^{e}_S)^{-1} \left( \frac{1}{n^e} \sum_{i=1}^{n^e} X^{e}_{S,i} r^{e,S}_i \right),
    $$
    and under event $\mathcal{T}_2 \cap \mathcal{T}_3$,
    \begin{align} \label{eq:diff-in-beta}
        \left\| \widehat{\beta}^{e,S}_S - {\beta}_S^{e,S} \right\|_2 \leq (1 + \| \beta - \beta^\star \|_2) \cdot r_n
    \end{align}
    for any $\beta$ with $\mathrm{supp}(\beta) \subseteq S$.
    Now denote $\Delta^{e,f,S} := \beta^{e,S} - \beta^{f,S}$, $\zeta^{e,S} := \widehat{\beta}^{e,S} - \beta^{e,S}$, and $D_1(S) := \sum_{e<f} \omega^{e,f} \left\| \zeta^{e,S} - \zeta^{f,S} \right\|_2^2$. Then
    \begin{align*}
    \begin{split}
    \left| \widehat{\mathsf{I}}_1(S) - \mathsf{I}_1(S) \right|
    &= \left| \sum_{e<f} \omega^{e,f} \left\| \Delta^{e,f,S} + (\zeta^{e,S} - \zeta^{f,S}) \right\|_2^2 - \sum_{e<f} \omega^{e,f} \left\| \Delta^{e,f,S} \right\|_2^2 \right| \\
    &= \left| 2 \sum_{e<f} \omega^{e,f} \langle \Delta^{e,f,S} , \zeta^{e,S} - \zeta^{f,S} \rangle + \sum_{e<f} \omega^{e,f} \left\| \zeta^{e,S} - \zeta^{f,S} \right\|_2^2 \right| \\
    &\leq 2 \left( \sum_{e<f} \omega^{e,f} \left\| \Delta^{e,f,S} \right\|_2^2 \right)^{1/2} \left( \sum_{e<f} \omega^{e,f} \left\| \zeta^{e,S} - \zeta^{f,S} \right\|_2^2 \right)^{1/2} + \sum_{e<f} \omega^{e,f} \left\| \zeta^{e,S} - \zeta^{f,S} \right\|_2^2 \\
    &\leq 2 \sqrt{\mathsf{I}_1(S)} \sqrt{D_1(S)} + D_1 (S)  
    \leq \epsilon / (2\gamma) \cdot \mathsf{I}_1(S) + (2\gamma /\epsilon + 1) \cdot D_1(S) .
    \end{split}
    \end{align*}
    Note that
    $$
    D_1(S) = \sum_{e<f} \omega^{e,f} \left\| (\widehat{\beta}^{e,S}- \beta^{e,S}) - (\widehat{\beta}^{f,S}- \beta^{f,S} )\right\|_2^2
    \leq 4 \max_e \left\| \widehat{\beta}^{e,S} - \beta^{e,S} \right\|_2^2 
    \leq 8 r_n^2 + 8 \left\|\beta - \beta^\star \right\|_2^2 r_n^2,
    $$
    where the last inequality follows from~\eqref{eq:diff-in-beta}.
    Moreover,
    $$
    \widehat{\mathsf{I}}_1(S^\star) - \mathsf{I}_1(S^\star) 
    = \widehat{\mathsf{I}}_1(S^\star)
    = D_1(S^\star)  \leq 8r_n ^2 ,
    $$
    by taking $\beta = \beta^\star$ under the true support $S^\star$.
    Putting them together, we have
    \begin{align*}
    \begin{split}
    & \left| \mathsf{I}_1(S) - \mathsf{I}_1(S^\star) - \widehat{\mathsf{I}}_1(S) + \widehat{\mathsf{I}}_1(S^\star) \right| \\
    \leq{}& \epsilon/(2\gamma) \cdot \mathsf{I}_1(S) + (2\gamma /\epsilon + 1) D_1(S)  + D_1(S^\star) 
    \leq \epsilon/(2\gamma) \cdot \mathsf{I}_1(S) +  (16\gamma/\epsilon + 8) (r_n^2 + \left\| \beta - \beta^\star \right\|_2^2 r_n^2) + 8 r_n^2 ,
    \end{split}
    \end{align*}
    for any $\beta\in\mathbb{R}^p$ with $\mathrm{supp}(\beta) = S$.

    \vspace{0.1in}
    \textit{\textbf{Step II.}} Bound $\left| \mathsf{I}_2(S) - \mathsf{I}_2(S^\star) - \widehat{\mathsf{I}}_2(S) + \widehat{\mathsf{I}}_2(S^\star) \right|$. For any $e\in\mathcal{E}$, the sample variances of the OLS residuals satisfy
    $$
    \widehat{\nu}^{ e, S} - \nu^{e,S} = \left[ \frac{1}{n^e} \sum_{i=1}^{n^e} (r_i^{e,S})^2 - \mathbb{E}\left[r^{e,S}\right]^2 \right]
    - \left( \frac{1}{n^e} \sum_{i=1}^{n^e} X^{e}_{S,i} r^{e,S}_i \right)^\top \left( \widehat{\Sigma}^{e}_S \right)^{-1} \left( \frac{1}{n^e} \sum_{i=1}^{n^e} X^{e}_{S,i} r^{e,S}_i \right),
    $$
    and under $\mathcal{T}_2 \cap \mathcal{T}_3 \cap \mathcal{T}_4$, 
    \begin{align} \label{eq:diff-in-sigma}
        \left| \widehat{\nu}^{e, S} - \nu^{e,S} \right| 
    \le (1 + \|\beta - \beta^\star\|_2)^2 \cdot r_n
    \end{align}
    for any $\beta$ with $\mathrm{supp}(\beta) \subseteq S$.
    Now denote $\widetilde{\Delta}^{e,f,S} := \nu^{e,S} - \nu^{f,S}$, $\widetilde{\zeta}^{e,S} := \widehat{\nu}^{e,S} - \nu^{e,S}$, and $D_2(S) := \sum_{e<f} \omega^{e,f} \left| \widetilde{\zeta}^{e,S} - \widetilde{\zeta}^{f,S} \right|^2$. Then as in Step I,
    \begin{align*}
    \begin{split}
    \left| \widehat{\mathsf{I}}_2(S) - \mathsf{I}_2(S)  \right|
    \leq 2 \sqrt{\mathsf{I}_2(S)} \sqrt{D_2(S)} + D_2 (S)  
    \leq \epsilon/(2\rho\gamma) \cdot \mathsf{I}_2(S) + (2\rho\gamma/ \epsilon + 1) \cdot D_2(S) .
    \end{split}
    \end{align*}
    Note that
    $$
    D_2(S) = \sum_{e<f} \omega^{e,f} \left| (\widehat{\nu}^{e,S}- \nu^{e,S}) - (\widehat{\nu}^{f,S}- \nu^{f,S} )\right|^2
    \leq 4 \max_e \left| \widehat{\nu}^{e,S} - \nu^{e,S} \right|^2 
    \leq 32 r_n^2 + 32 \left\|\beta - \beta^\star \right\|_2^4 r_n^2
    $$
    where the last inequality follows from~\eqref{eq:diff-in-sigma}. Moreover,
    $$
    \widehat{\mathsf{I}}_2(S^\star) - \mathsf{I}_2(S^\star) 
    = \widehat{\mathsf{I}}_2(S^\star) 
    = D_2(S^\star) \leq 32 r_n^2,
    $$
    by taking $\beta = \beta^\star$ under the true support $S^\star$.
    Putting them together, we have
    \begin{align*}
    \begin{split}
    & \left| \mathsf{I}_2(S) - \mathsf{I}_2(S^\star) - \widehat{\mathsf{I}}_2(S) + \widehat{\mathsf{I}}_2(S^\star) \right| \\
    \leq{}& \epsilon/(2\rho\gamma) \cdot \mathsf{I}_2(S) + (2\rho\gamma/ \epsilon + 1) D_2(S)  + D_2(S^\star)
    \leq \epsilon / (2\rho\gamma) \cdot \mathsf{I}_2(S) + (64\rho\gamma/\epsilon + 32) (r_n^2 + \left\| \beta - \beta^\star \right\|_2^4 r_n^2) + 32 r_n^2 .
    \end{split}
    \end{align*}

    \vspace{0.1in}
    \textit{\textbf{Step III.}} Combine the results. We combine the bounds obtained in the previous steps and conclude that, for any $S$ with $|S|\leq s_n$,
    \begin{align*}
    \begin{split}
    &    \left|  \mathsf{I}(S) - \mathsf{I}(S^\star) - \widehat{\mathsf{I}}(S) + \widehat{\mathsf{I}}(S^\star) \right| \\
    \leq{}& \left| \mathsf{I}_1(S) - \mathsf{I}_1(S^\star) - \widehat{\mathsf{I}}_1(S) + \widehat{\mathsf{I}}_1(S^\star) \right| + \rho \left| \mathsf{I}_2(S) - \mathsf{I}_2(S^\star) - \widehat{\mathsf{I}}_2(S) + \widehat{\mathsf{I}}_2(S^\star) \right| \\
    \leq{}& \epsilon / (2\gamma) \cdot \left( \mathsf{I}_1(S) + \mathsf{I}_2(S) \right) + ((16+64\rho^2)\gamma/\epsilon + 16 + 64\rho) r_n^2 \\
    &+(16\gamma/\epsilon + 8) r_n^2 \cdot \|\beta - \beta^\star \|_2^2 
    + (64\rho^2\gamma/\epsilon + 32\rho) r_n^2 \cdot \|\beta - \beta^\star\|_2^4.
    \end{split}
    \end{align*}
    
\end{proof}

\vspace{0.2in}
\begin{lemma} \label{lem:OLS}
    Under the event $\mathcal{T}_2 \cap \mathcal{T}_4$, for any $S$ with $|S| \leq s_n$, the corresponding OLS coefficients  
    $$
    \beta^{{\rm OLS}, S} := \argmin_{\mathrm{supp}(\beta) \subseteq S } \sum_{e\in \mathcal{E}} \omega^e \left\| Y^e - X^e \beta \right\|^2_{n^e} 
    $$
    satisfy
    $$
    \left\| \beta^{{\rm OLS}, S} - \beta^\star  \right\|_2^2 \lesssim 1 + \left\| \beta^\star \right\|_2^2.
    $$
\end{lemma}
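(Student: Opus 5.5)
We write $\beta^{{\rm OLS},S}$ in closed form and bound it by comparing with the population pooled OLS on $S$. Let $\widehat\Sigma^{\rm pool}_S:=\sum_e\omega^e\widehat\Sigma^e_S$ and $\widehat c_S:=\sum_e\omega^e\langle X^e_S,Y^e\rangle_{n^e}$, so that $\beta^{{\rm OLS},S}_S=(\widehat\Sigma^{\rm pool}_S)^{-1}\widehat c_S$. On $\mathcal{T}_2$, each $\widehat\Sigma^e_S$ is within $O(r_n)$ of $\Sigma^e_S$ in operator norm. Since $\min_e n^e\gtrsim s_n\log p+\log|\mathcal E|$ makes $r_n$ small, we get $\widehat\Sigma^{\rm pool}_S\succeq(\kappa_L/2)I$ for all $|S|\le s_n$. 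Hence $\|(\widehat\Sigma^{\rm pool}_S)^{-1}\|_{\rm op}\le 2/\kappa_L$, and it remains to control $\|\widehat c_S\|_2$.

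The first step is to avoid a fresh concentration argument for $\widehat c_S$ by using the variational characterization instead. Since $\beta^{{\rm OLS},S}$ minimizes the pooled empirical risk over $\mathrm{supp}\subseteq S$, comparing with the feasible point $0$ gives
\[
\sum_e\omega^e\|Y^e-X^e\beta^{{\rm OLS},S}\|_{n^e}^2\le\sum_e\omega^e\|Y^e\|_{n^e}^2 .
\]
Expanding the left side yields
\[
(\beta^{{\rm OLS},S}_S)^\top\widehat\Sigma^{\rm pool}_S\beta^{{\rm OLS},S}_S\le 2(\beta^{{\rm OLS},S}_S)^\top\widehat c_S .
\]
Equivalently, the OLS identity gives $(\beta^{{\rm OLS}})^\top\widehat\Sigma\beta^{{\rm OLS}}=\widehat c^\top\widehat\Sigma^{-1}\widehat c\le\sum_e\omega^e\|Y^e\|_{n^e}^2$. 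Therefore
\[
\frac{\kappa_L}{2}\|\beta^{{\rm OLS},S}\|_2^2\le\sum_e\omega^e\|Y^e\|^2_{n^e},
\]
and the problem reduces to bounding the empirical second moment of $Y$.

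To bound $\|Y^e\|_{n^e}^2$, use $\mathcal{T}_4$ with $S=\varnothing$ (which has size $\le s_n$) and $\beta=0$. In that case $r^{e,\varnothing}=Y^e$, so
\[
\|Y^e\|^2_{n^e}\le\mathbb E[(Y^e)^2]+(1+\|\beta^\star\|_2)^2 r_n .
\]
Moreover, $\mathbb E[(Y^e)^2]\le 2\kappa_U\|\beta^\star\|_2^2+2\sigma_\varepsilon^2\lesssim 1+\|\beta^\star\|_2^2$ by Assumption~\ref{assump:regular}. With $r_n\lesssim1$ this gives $\|\beta^{{\rm OLS},S}\|_2^2\lesssim1+\|\beta^\star\|_2^2$. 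The triangle inequality $\|\beta^{{\rm OLS},S}-\beta^\star\|_2^2\le2\|\beta^{{\rm OLS},S}\|_2^2+2\|\beta^\star\|_2^2$ then concludes.

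The main obstacle is the lower bound on $\widehat\Sigma^{\rm pool}_S$. It relies on $\mathcal{T}_2$ holding uniformly over supports and environments, and on the sample-size condition making $r_n\le c\,\kappa_L$. This is where the implicit constant in $\min_e n^e\gtrsim s_n\log p+\log|\mathcal E|$ must be large enough. A subtler point is that $\mathcal{T}_4$ is indexed by pairs $(S,\beta)$ with $\mathrm{supp}(\beta)\subseteq S$, so I must check that the choice $S=\varnothing$, $\beta=0$ is admissible. If it is not, I would apply $\mathcal{T}_4$ with $S=S^\star$ and $\beta=\beta^\star$, which controls $\|\varepsilon^e\|^2_{n^e}$. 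I would then bound $\|X^e\beta^\star\|_{n^e}^2\le(\kappa_U+O(r_n))\|\beta^\star\|_2^2$ via $\mathcal{T}_2$ on $S^\star$, and combine the two with $(a+b)^2\le2a^2+2b^2$.
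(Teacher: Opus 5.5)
Your proof is correct and follows essentially the paper's route. The eigenvalue lower bound from $\mathcal{T}_2$, combined with the pooled Pythagorean identity $\sum_e\omega^e\|Y^e\|_{n^e}^2=\sum_e\omega^e\bigl(\|X^e\beta^{{\rm OLS},S}\|_{n^e}^2+\|Y^e-X^e\beta^{{\rm OLS},S}\|_{n^e}^2\bigr)$, gives $\|\beta^{{\rm OLS},S}\|_2^2\lesssim\sum_e\omega^e\|Y^e\|_{n^e}^2$. One correction: the comparison with the feasible point $0$ is vacuous at the OLS solution, where it reduces to $\widehat c^\top\widehat\Sigma^{-1}\widehat c\ge 0$, so "Equivalently" is not right; it is nonnegativity of the residual sum of squares that does the work. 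Your fallback for bounding $\|Y^e\|_{n^e}^2$ is exactly the paper's argument via $\mathcal{T}_4$ at $(S^\star,\beta^\star)$.

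Your primary choice $S=\varnothing$, $\beta=0$ is also admissible, since the union bound defining $\mathcal{T}_4$ ranges over all $|S|\le s_n$, including the empty support. It bounds $\|Y^e\|_{n^e}^2$ directly, without needing $\mathcal{T}_2$ on $S^\star$ to control $\|X^e\beta^\star\|_{n^e}$.
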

\begin{proof}
    Note that
    \begin{align*}
    \begin{split}
        \sum_e \omega^e\lambda_{\min}(\widehat{\Sigma}^e_{S}) \cdot \left\| \beta^{{\rm OLS},S}_S \right\|_2^2  \le \sum_e \omega^e \left( \left\| X_S^{e\top} \beta^{{\rm OLS}, S} \right\|_{n^e}^2 + \left\|Y^e - X_S^{e\top} \beta^{{\rm OLS},S}_S \right\|_{n^e}^2 \right) = \sum_e \omega^e \|Y^e\|_{n_e}^2,
    \end{split}
    \end{align*}
    where the last equality follows from the normal equations.
    On the one hand, under the event $\mathcal{T}_2$,
    we have $\lambda_{\rm min}(\widehat{\Sigma}^e) \asymp 1$. On the other hand, under the event $\mathcal{T}_4$ and by taking $S=S^\star$ and $\beta = \beta^\star$, we have $\|\varepsilon^e\|_{n^e}^2  \lesssim  r_n +\sigma^2_\varepsilon \asymp 1 $. As a result,  
    $$
    \left\| Y^e \right\|_{n^e} \leq \left\| X^{e\top}_{S^\star} \beta^\star_{S^\star} \right\|_{n^e} + \left\| \varepsilon^e \right\|_{n^e} \lesssim  \left\| \beta^\star \right\|_2 + 1.
    $$
    Consequently, 
    $$
    \left\| \beta^{{\rm OLS}, S} - \beta^\star  \right\|_2^2 \leq 2\left\| \beta^{{\rm OLS},S} \right\|_2^2 + 2\left\|\beta^\star \right\|_2^2 \lesssim 1 + \left\| \beta^\star \right\|_2^2.
    $$
    
\end{proof}

\vspace{0.2in}
\begin{lemma} \label{lem:lower-bound}
    Suppose all conditions of Lemma~\ref{lem:identification2} hold.
    Under the event $\mathcal{T}_1 \cap \mathcal{T}_2 \cap \mathcal{T}_3 \cap \mathcal{T}_4$, 
    $$
    \widehat{\mathsf{F}}(\beta) - \widehat{\mathsf{F}}(\beta^\star)  \gtrsim \left\| \beta - \beta^\star \right\|_2^2 - \frac{s_n \log p + \log |\mathcal{E}| }{ \min_{e\in\mathcal{E}} n^e } - 2\lambda_n s_n,
    $$
    where $\beta$ is the OLS coefficients $\beta^{{\rm OLS}, S}$, and $|S|\leq s_n$.
\end{lemma}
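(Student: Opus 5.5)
Write $\widehat{\mathsf{F}}(\beta)=\widehat{\mathsf{R}}(\beta)+\gamma\widehat{\mathsf{I}}(S)+\lambda_n|S|$ with $S=\mathrm{supp}(\beta)$ (or the chosen set $S\supseteq\mathrm{supp}(\beta)$). The plan is to compare the empirical gap with its population counterpart and absorb the deviations using Lemma~\ref{lem:identification2}. Decompose
\begin{align*}
\widehat{\mathsf{F}}(\beta)-\widehat{\mathsf{F}}(\beta^\star)
={}&\bigl[\mathsf{F}(\beta)-\mathsf{F}(\beta^\star)\bigr]
-\bigl[\mathsf{R}(\beta)-\mathsf{R}(\beta^\star)-\widehat{\mathsf{R}}(\beta)+\widehat{\mathsf{R}}(\beta^\star)\bigr]\\
&-\gamma\bigl[\mathsf{I}(S)-\mathsf{I}(S^\star)-\widehat{\mathsf{I}}(S)+\widehat{\mathsf{I}}(S^\star)\bigr]
+\lambda_n(|S|-s^\star).
\end{align*}
This uses $\mathsf{I}(S^\star)=0$, since $S^\star$ is LSCR-invariant. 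The last term is at least $-\lambda_n s_n\ge -2\lambda_n s_n$.

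\textbf{Step 1.} Bound each piece. The first bracket is at least $\tfrac{\kappa_L}{2}\|\beta-\beta^\star\|_2^2+\epsilon(\mathsf{I}_1(S)+\mathsf{I}_2(S))$ by Lemma~\ref{lem:identification2}. On $\mathcal{T}_1$, Lemma~\ref{lem:D-R} bounds the risk deviation by $(C_0\delta_n+\kappa_L/4)\|\beta-\beta^\star\|_2^2+(C_0'\delta_n)^2/\kappa_L$. Lemma~\ref{lem:D-I}, multiplied by $\gamma$, bounds the invariance deviation by
\[
\tfrac{\epsilon}{2}(\mathsf{I}_1+\mathsf{I}_2)+C\,r_n^2\bigl(1+\|\beta-\beta^\star\|_2^2+\|\beta-\beta^\star\|_2^4\bigr).
\]
The $\epsilon$-terms cancel with room to spare and can be dropped, being nonnegative.

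\textbf{Step 2.} Absorb the quadratic and quartic remainders. The sample-size condition $\min_e n^e\gtrsim s_n\log p+\log|\mathcal{E}|$ makes $\delta_n$ and $r_n^2$ small constants. Hence $C_0\delta_n+Cr_n^2\le\kappa_L/16$, and the coefficient of $\|\beta-\beta^\star\|_2^2$ stays at least $\kappa_L/8$. Also $\delta_n^2\lesssim r_n^2$ up to constants, since $\sum_e n^e\ge\min_e n^e$.

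\textbf{Step 3 (main obstacle).} Control the quartic term $r_n^2\|\beta-\beta^\star\|_2^4$, which cannot be absorbed by a fixed multiple of the quadratic term unless $\|\beta-\beta^\star\|_2$ is bounded. This is where it is essential that $\beta=\beta^{{\rm OLS},S}$. Lemma~\ref{lem:OLS} gives $\|\beta-\beta^\star\|_2^2\lesssim 1+\|\beta^\star\|_2^2$, so
\[
r_n^2\|\beta-\beta^\star\|_2^4\lesssim r_n^2(1+\|\beta^\star\|_2^2)\,\|\beta-\beta^\star\|_2^2 .
\]
Assumption~\ref{assump:sparse-moderate} gives $\|\beta^\star\|_2^2\lesssim r_n^{-(2-\delta)}$, so $r_n^2\|\beta^\star\|_2^2\lesssim r_n^{\delta}\to 0$, which is small for large $n$. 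The quartic term therefore folds into the quadratic coefficient as in Step 2.

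(If one prefers not to multiply this way, the alternative is to split $r_n^2\|\beta-\beta^\star\|_2^4\le r_n^2\|\beta-\beta^\star\|_2^2\cdot C(1+\|\beta^\star\|_2^2)$ directly. This is the same computation.)

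Collecting terms gives $\widehat{\mathsf{F}}(\beta)-\widehat{\mathsf{F}}(\beta^\star)\ge c\|\beta-\beta^\star\|_2^2-C'r_n^2-2\lambda_n s_n$, which is the claim after recalling $r_n^2\asymp(s_n\log p+\log|\mathcal{E}|)/\min_e n^e$.
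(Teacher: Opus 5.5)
Your proposal is correct and follows the paper's proof essentially step for step. It uses the same decomposition of $\widehat{\mathsf{F}}(\beta)-\widehat{\mathsf{F}}(\beta^\star)$ into the population gap, the deviations $\mathsf{D}_{\mathsf{R}}$ and $\gamma\mathsf{D}_{\mathsf{I}}$ (bounded via Lemmas~\ref{lem:identification2}, \ref{lem:D-R}, \ref{lem:D-I}) and $\lambda_n(|S|-s^\star)$, and it handles the quartic remainder exactly as the paper does, via Lemma~\ref{lem:OLS} and Assumption~\ref{assump:sparse-moderate}, giving $r_n^2\|\beta-\beta^\star\|_2^4\lesssim r_n^{\delta}\|\beta-\beta^\star\|_2^2$. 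You are more explicit than the paper about keeping the quadratic coefficient positive; note only that the sample-size condition makes $r_n$ (hence $r_n^\delta$) a small constant rather than forcing $r_n^\delta\to 0$, which is all the argument needs.
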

\begin{proof}
Consider the decomposition:
\begin{align*}
\begin{split}
    \widehat{\mathsf{F}}(\beta) - \widehat{\mathsf{F}}(\beta^\star)
    = -\gamma\mathsf{D}_{\mathsf{I}}(S, S^\star) - \mathsf{D}_{\mathsf{R}}(\beta, \beta^\star) + \mathsf{F}(\beta) - \mathsf{F}(\beta^\star) + \lambda_n(|S| - s^\star),
\end{split}
\end{align*}
where
$$
\mathsf{D}_{\mathsf{I}}(S, S^\star) := \mathsf{I}(S) - \mathsf{I}(S^\star) - \widehat{\mathsf{I}}(S) + \widehat{\mathsf{I}}(S^\star),
$$
which can be bounded using Lemma~\ref{lem:D-I}; 
and
$$
\mathsf{D}_{\mathsf{R}}(\beta, \beta^\star) := \mathsf{R}(\beta) - \mathsf{R}(\beta^\star) - \widehat{\mathsf{R}}(\beta) + \widehat{\mathsf{R}}(\beta^\star),
$$
which can be bounded using Lemma~\ref{lem:D-R}. Moreover, Lemma~\ref{lem:identification2} provides a bound for $\mathsf{F}(\beta) - \mathsf{F}(\beta^\star)$. Putting them together, we obtain:
\begin{align*}
\begin{split}
    \widehat{\mathsf{F}}(\beta) - \widehat{\mathsf{F}}(\beta^\star) 
    \geq&  - \frac{\epsilon}{2} \cdot \left( \mathsf{I}_1(S) + \mathsf{I}_2(S) \right)
    - \gamma\left( (16+64\rho^2)\gamma/\epsilon + 16 + 64\rho \right) r_n^2 \\
    &\quad - \gamma \left( 16 \gamma/\epsilon + 8 \right) r_n^2 \cdot \|\beta - \beta^\star \|_2^2  
    - \gamma \left( 64\rho^2\gamma/\epsilon + 32\rho \right) r_n^2 \cdot \|\beta - \beta^\star\|_2^4. \\
    &\quad- (C_0 \delta_n + \frac{\kappa_L}{4}) \left\| \beta - \beta^\star \right\|_2^2 - \frac{ (C_0' \delta_n)^2 }{\kappa_L}
    + \frac{1}{2} \kappa_L \left\| \beta - \beta^\star \right\|_2^2 + \epsilon\cdot \left( \mathsf{I}_1(S) + \mathsf{I}_2(S) \right) \\
    &\quad + \lambda_n(|S|- s^\star) \\
    &\gtrsim \left\| \beta - \beta^\star \right\|_2^2 - r_n^2 - \delta_n^2 - r_n^2 \left\| \beta - \beta^\star \right\|_2^4 - 2 \lambda_n s_n.
\end{split}
\end{align*}
Further note that $\delta_n \lesssim r_n$, and by Lemma~\ref{lem:OLS},
$$
r_n^2 \left\| \beta - \beta^\star \right\|_2^4 
\lesssim r_n^2 (1 + \left\|\beta^\star \right\|_2^2) \cdot \left\| \beta - \beta^\star \right\|_2^2 \lesssim (r_n^2+ r_n^{\delta}) \cdot  \left\| \beta - \beta^\star \right\|_2^2 
\lesssim r_n^{\delta} \left\| \beta - \beta^\star \right\|_2^2,
$$
where the second to the last inequality follows from Assumption~\ref{assump:sparse-moderate}.
As a result,
$$
 \widehat{\mathsf{F}}(\beta) - \widehat{\mathsf{F}}(\beta^\star)  
 \gtrsim \left\| \beta - \beta^\star \right\|_2^2 - r_n^2 - 2\lambda_n s_n.
$$

\end{proof}

\vspace{0.1in}
\subsection{Proof of Theorem~\ref{thm:convergence}}
\label{sec:app:proof:convergence} 

\begin{proof}
    We first provide the proof for the optimal solution $\widehat{\beta}$.
    Appealing to the basic inequality and Lemma~\ref{lem:lower-bound}:
    $$
    \left\| \widehat{\beta} - \beta^\star \right\|_2^2 - \frac{s_n \log p + \log |\mathcal{E}| }{ \min_{e\in\mathcal{E}} n^e } - 2\lambda_n s_n
    \lesssim \widehat{\mathsf{F}}(\widehat{\beta}) - \widehat{\mathsf{F}}(\beta^\star) 
    \leq 0 .
    $$
    The result hence follows.
    
    \vspace{0.05in}
    For the early-stopping solution $\widehat{\beta}^{\rm early}$, we have
    $$
    \widehat{\mathsf{F}}(\widehat{\beta}^{\rm early}) - \widehat{\mathsf{F}}(\beta^\star) 
    = \left( \widehat{\mathsf{F}}(\widehat{\beta}^{\rm early}) - \widehat{\mathsf{F}}(\widehat{\beta}) \right) + \left( \widehat{\mathsf{F}}(\widehat{\beta}) - \widehat{\mathsf{F}}(\beta^\star)  \right) 
    \leq \tau^{\rm early} +0 = \tau^{\rm early} .
    $$
    Then combining with the results in Lemma~\ref{lem:lower-bound}, we have
     $$
    \left\| \widehat{\beta}^{\rm early} - \beta^\star \right\|_2^2 - \frac{s_n \log p + \log |\mathcal{E}| }{ \min_{e\in\mathcal{E}} n^e } - 2\lambda_n s_n
    \lesssim \widehat{\mathsf{F}}(\widehat{\beta}^{\rm early}) - \widehat{\mathsf{F}}(\beta^\star) 
    \leq \tau^{\rm early} .
    $$
    The result hence follows.
\end{proof}

\vspace{0.1in}
\subsection{Proof of Theorem~\ref{thm:consistency}}
\label{sec:app:proof:consistency}

\subsubsection{Proof of part (1).}

\begin{proof}
    This proof proceeds in three steps.  In each step, we
solve an optimization problem that is more constrained than the original one.
\begin{itemize}
    \item In step I, we restrict the feasible region to 
    $$
    \mathcal{R}_1 = \left\{ \beta\in\mathbb{R}^p: \mathrm{supp}({\beta}) = \mathrm{supp}(\beta^\star) \right\},
    $$ and derive an upper bound for $\min_{\beta\in\mathcal{R}_1} \widehat{\mathsf{F}}(\beta) - \widehat{\mathsf{F}}(\beta^\star)$. We denote this upper bound by $u$.

    \vspace{0.05in}
    \item In step II, we restrict the feasible region to 
    $$
    \mathcal{R}_2 = \left\{\beta\in\mathbb{R}^p: |\mathrm{supp}({\beta})| \leq s^\star,\ \mathrm{supp}({\beta})\neq\mathrm{supp}(\beta^\star) \right\},
    $$ 
    and derive a lower bound for $\min_{\beta\in\mathcal{R}_2} \widehat{\mathsf{F}}(\beta) - \widehat{\mathsf{F}}(\beta^\star)$. We denote this lower bound by $l_1$, and verify that $u<l_1$ when $\min_{e\in\mathcal{E}} n^e$ is sufficiently large.

    \vspace{0.05in}
    \item In step III, we restrict the feasible region to 
    $$
    \mathcal{R}_3 = \left\{\beta\in\mathbb{R}^p: |\mathrm{supp}({\beta})| > s^\star \right\},
    $$
    and derive a lower bound for $\min_{\beta\in\mathcal{R}_3} \widehat{\mathsf{F}}(\beta) - \widehat{\mathsf{F}}(\beta^\star)$. We denote this lower bound by $l_2$, and verify that $u<l_2$ when $\min_{e\in\mathcal{E}} n^e$ is sufficiently large.
\end{itemize}

\vspace{0.05in}
The three steps above guarantee that the solution to the original problem, $\argmin_\beta \widehat{\mathsf{F}}(\beta) $, falls inside $\mathcal{R}_1$ when $\min_{e\in\mathcal{E}} n^e$
is sufficiently large. This
proof relies on the event $\mathcal{T}_1 \cap \mathcal{T}_2 \cap \mathcal{T}_3 \cap \mathcal{T}_4 $, which holds with probability at least $1 - 7/p$.

\vspace{0.1in}
\textit{\textbf{Step I.}}
In this step, we use $\bar\beta$ to denote the solution to $\min_{\beta \in \mathcal{R}_1} \widehat{\mathsf{F}}(\beta) - \widehat{\mathsf{F}}(\beta^\star) $.
Note that for any $\beta\in\mathcal{R}_1$, we have
$
  \widehat{\mathsf{F}}(\beta) - \widehat{\mathsf{F}}(\beta^\star)
= \widehat{\mathsf{R}}(\beta) - \widehat{\mathsf{R}}(\beta^\star).
$
Therefore, $\bar{\beta}$ is the pooled OLS coefficients, and
\begin{align*}
\begin{split}
    \bar{\beta}_{S^\star}
&:= \left( \sum_e \omega^e\frac{\boldsymbol{X}^{e\top}_{S^\star} \boldsymbol{X}^e_{S^\star}}{n^e} \right)^{-1} \left( \sum_e \omega^e\frac{\boldsymbol{X}^{e\top}_{S^\star} \boldsymbol{Y}^e}{n^e} \right)  
= \beta^\star + \left( \sum_e \omega^e\frac{\boldsymbol{X}^{e\top}_{S^\star} \boldsymbol{X}^e_{S^\star}}{n^e} \right)^{-1} \left( \sum_e \omega^e\frac{\boldsymbol{X}^{e\top}_{S^\star} \boldsymbol{\varepsilon}^e}{n^e} \right) .
\end{split}
\end{align*}
Let $A:= \sum_e \omega^e\boldsymbol{X}^{e\top}_{S^\star} \boldsymbol{X}^e_{S^\star} / n^e $ and $b:= \sum_e \omega^e\boldsymbol{X}^{e\top}_{S^\star} \boldsymbol{\varepsilon}^e / n^e $. Then we can rewrite $\bar\beta_{S^\star} - \beta^{\star}_{S^\star} =  A^{-1} b$, and hence
\begin{align*}
\begin{split}
    \widehat{\mathsf{R}}(\bar{\beta}) - \widehat{\mathsf{R}}(\beta^\star)
&=  \sum_e \omega^e\left( \left\| \boldsymbol{Y}^e - \boldsymbol{X}_{S^\star}^e \bar{\beta}_{S^\star} \right\|_{n^e}^2 - \left\| \boldsymbol{Y}^e - \boldsymbol{X}^e_{S^\star} \beta^\star_{S^\star} \right\|_{n^e}^2  \right) \\
&= \sum_e \omega^e\left( \left\| \boldsymbol{\varepsilon}^e - \boldsymbol{X}^e_{S^\star} (\bar\beta_{S^\star} - \beta^\star_{S^\star} ) \right\|_{n^e}^2 - \left\| \boldsymbol{\varepsilon}^e \right\|_{n^e}^2  \right) \\
&=  (\bar\beta_{S^\star} - \beta^\star_{S^\star} )^\top A (\bar\beta_{S^\star} - \beta^\star_{S^\star} ) - 2 (\bar\beta_{S^\star} - \beta^\star_{S^\star} )^\top b \\
&= (A^{-1} b)^\top A (A^{-1} b) - 2 (A^{-1} b)^\top b
= -b^{\top} A^{-1} b \\
&= - \left(  \sum_e \omega^e \frac{ \boldsymbol{X}^{e\top}_{S^\star} \boldsymbol{\varepsilon}^e }{n^e} \right)^{\top}
\left( \sum_e \omega^e\frac{\boldsymbol{X}^{e\top}_{S^\star} \boldsymbol{X}^e_{S^\star}}{n^e} \right)^{-1}
\left(  \sum_e \omega^e \frac{ \boldsymbol{X}^{e\top}_{S^\star} \boldsymbol{\varepsilon}^e }{n^e} \right).
\end{split}
\end{align*}
Now denote 
$
\Sigma_{S} :=  \sum_e \omega^e\Sigma^{e}_S. 
$
and
$
\widehat{\Sigma}_S :=  \sum_e \omega^e\widehat{\Sigma}^{e}_S.
$
Appealing to Weyl's inequality, 
\begin{align*}
\begin{split}
    \lambda_{\min} \left( \widehat{\Sigma}_{S} \right) 
&\geq  \lambda_{\min} \left( \Sigma_S \right) - \left\| \Sigma_S - \widehat{\Sigma}_S \right\|_{\rm op} 
\geq \lambda_{\min} \left( \Sigma_S \right)  - \sum_e \omega^e\left\|\Sigma^{e}_S - \widehat{\Sigma}^{e}_S \right\|_{\rm op} \\
&\geq \lambda_{\min} \left( \Sigma_S \right) - \sqrt{ \frac{s_n \log p + \log |\mathcal{E}|}{\min_{e\in\mathcal{E}} n^e} }
\geq \kappa_L  - \sqrt{ \frac{s_n \log p + \log |\mathcal{E}|}{\min_{e\in\mathcal{E}} n^e} }.
\end{split}
\end{align*}
When $\min_{e\in\mathcal{E}} n^e \gtrsim s_n \log p + \log |\mathcal{E}| $, we obtain $\lambda_{\rm min}(\widehat{\Sigma}_S) > 0$ for all $S$. As a result, 
$$
\left( \sum_e \omega^e\frac{\boldsymbol{X}^{e\top}_{S^\star} \boldsymbol{X}^e_{S^\star}}{n^e} \right) \succ 0,
$$
and $\widehat{\mathsf{R}}(\bar{\beta}) - \widehat{\mathsf{R}}(\beta^\star) < 0$ when $\min_{e\in\mathcal{E}} n^e$ is sufficiently large. We set the strict upper bound of $\min_{\beta \in \mathcal{R}_1} \widehat{\mathsf{F}}(\beta) - \widehat{\mathsf{F}}(\beta^\star)$ as $u = -b^\top A^{-1}b / 2 < 0$.

\vspace{0.1in}
\textit{\textbf{Step II.}} In this step, we overload the notation $\bar{\beta}$ to denote the solution to $\min_{\beta \in \mathcal{R}_2}\ \left\{ \widehat{\mathsf{F}}(\beta) - \widehat{\mathsf{F}}(\beta^\star) \right\} $. We also denote $\bar{S} := \mathrm{supp}(\bar\beta)$. Recall from Lemma~\ref{lem:lower-bound} that for any feasible $\beta$, 
 $$
\widehat{\mathsf{F}}(\beta) - \widehat{\mathsf{F}}(\beta^\star) 
\gtrsim \left\| \beta - \beta^\star \right\|_2^2 - \frac{s_n \log p + \log |\mathcal{E}| }{ \min_{e\in\mathcal{E}} n^e } + \lambda_n (|\bar{S}| - s^\star).
$$
Now we attempt to connect this lower bound with the number of missed causal parents $\eta_1:= |S^\star \setminus \bar{S}|$. We
also define $\eta_2:= |\bar{S} \setminus S^\star|$. For any $\bar{\beta}\in\mathcal{R}_2$, we consider two cases.

\vspace{0.05in}
\textit{\underline{Case 1:}} If $|\bar{S}| = s^\star$, then $\eta_1 = \eta_2 \geq 1$. As a result, 
$$
\left\| \bar{\beta} - \beta^\star \right\|_2^2 
\geq \eta_1 \min_{j\in S^\star} \left| \beta^\star_j \right|^2
\geq \min_{j\in S^\star} \left| \beta^\star_j \right|^2,
$$
and consequently,
$$
\widehat{\mathsf{F}}(\bar \beta) - \widehat{\mathsf{F}}(\beta^\star)  \gtrsim \min_{j\in S^\star} \left| \beta^\star_j \right|^2 - \frac{s_n \log p + \log |\mathcal{E}| }{ \min_{e\in\mathcal{E}} n^e }.
$$
We call this lower bound $l_1^{(1)}$. Appealing to the beta-min condition~\eqref{eq:beta-min}, we have $l_1^{(1)} \geq 0 > u$ when $\min_{e\in\mathcal{E}} n^e$ is sufficiently large.

\vspace{0.05in}
\textit{\underline{Case 2:}} If $|\bar{S}| < s^\star$, then $\eta_2 \geq 0$ and $\eta_1 = s^\star - |\bar{S}| + \eta_2$. As a result, 
$$
\left\| \bar{\beta} - \beta^\star \right\|_2^2 
\geq \eta_1 \min_{j\in S^\star} \left| \beta^\star_j \right|^2
\geq (s^\star - |\bar{S}| ) \min_{j\in S^\star} \left| \beta^\star_j \right|^2,
$$
and consequently,
\begin{align*}
\begin{split}
    \widehat{\mathsf{F}}(\bar\beta) - \widehat{\mathsf{F}}(\beta^\star)  
&\gtrsim (s^\star - |\bar{S}|) \min_{j\in S^\star} \left| \beta^\star_j \right|^2 - \frac{s_n \log p + \log |\mathcal{E}| }{ \min_{e\in\mathcal{E}} n^e } + \lambda_n (|\bar{S}| - s^\star) \\
& \geq (s^\star - |\bar{S}|) \left( 
\min_{j\in S^\star} |\beta^\star_j|^2 - \lambda_n - \frac{s_n \log p + \log |\mathcal{E}| }{ \min_{e\in\mathcal{E}} n^e }
\right) \\
&\geq 
\min_{j\in S^\star} |\beta^\star_j|^2 - \lambda_n - \frac{s_n \log p + \log |\mathcal{E}| }{ \min_{e\in\mathcal{E}} n^e }.
\end{split}
\end{align*}
We call this lower bound $l_1^{(2)}$. Again, appealing to the beta-min condition~\eqref{eq:beta-min}, we have $l_1^{(2)} \geq 0 > u$ when $\min_{e\in\mathcal{E}} n^e$ is sufficiently large.

\vspace{0.2in}
\textit{\textbf{Step III.}} In this step, we overload the notation $\bar{\beta}$ again to denote the solution to the optimization $\arg\min_{\beta \in \mathcal{R}_3}\ \left\{ \widehat{\mathsf{F}}(\beta) - \widehat{\mathsf{F}}(\beta^\star) \right\} $. We also overload $\bar{S}:=\mathrm{supp}(\bar\beta)$. Similarly as in Step II, we have
\begin{align*}
\begin{split}
    \widehat{\mathsf{F}}(\bar\beta) - \widehat{\mathsf{F}}(\beta^\star)  
&\gtrsim  \left\| \bar{\beta} - \beta^\star \right\|_2^2 - \frac{s_n \log p + \log |\mathcal{E}| }{ \min_{e\in\mathcal{E}} n^e } + \lambda_n (|\bar{S}| - s^\star) 
\geq  \lambda_n - \frac{s_n \log p + \log |\mathcal{E}| }{ \min_{e\in\mathcal{E}} n^e }.
\end{split}
\end{align*}
We call this lower bound $l_2$. With the choice $\lambda_n \gtrsim  (s_n \log p + \log |\mathcal{E}|) / \min_{e\in\mathcal{E}} n^e  $ and when $\min_{e\in\mathcal{E}} n^e$ is sufficiently large, we have $l_2 \geq 0 > u$.

\end{proof}

\vspace{0.2in}
\subsubsection{Proof of part (2)}
We use the same proof technique of part (1). From the basic inequality,
$$
\widehat{\mathsf{F}}(\widehat{\beta}^{\rm early}) - \widehat{\mathsf{F}}(\beta^\star) \leq \widehat{\mathsf{F}}(\widehat{\beta}) - \widehat{\mathsf{F }}(\beta^\star) + \tau^{\rm early} .
$$
Consequently, the upper bound of Step I becomes $u = -b^\top A^{-1}b/2 + \tau^{\rm early} $. When $\tau^{\rm early} \lesssim (s_n \log p + \log |\mathcal{E}|) / \min_{e\in\mathcal{E}} n^e$, we still have $l_1^{(1)} > u$, $l_1^{(2)} > u$, and $l_2 > u$ when $\min_{e\in\mathcal{E}} n^e$ is sufficiently large. The result hence follows.

\section{Proofs of results in Section~\ref{sec:task2}}
In this section, we present the proofs of results in Section~\ref{sec:task2} and Appendix~\ref{sec:app:bootstrap}. Without loss of generality, we focus on the case of two environments, $\mathcal{E}=\{e,f\}$. 
We rewrite $\gamma_1 = \gamma$ and $\gamma_2 = \gamma\rho$ in this section. 
We denote 
$$
U_n(k) := \min_{S\notin \mathcal{S}_k}\ n^\star \left\{ \gamma_1 \left\| \widehat{\beta}^{e,S} - \widehat{\beta}^{f,S}  \right\|_2^2 + \gamma_2 \left| \widehat{\nu}^{e,S} - \widehat{\nu}^{f,S} \right|^2 \right\},
$$
$$
T_n(k) := \max_{S\in \mathcal{S}_k}\ n^\star \left\{ \gamma_1 \left\| \widehat{\beta}^{e,S} - \widehat{\beta}^{f,S}  \right\|_2^2 + \gamma_2 \left| \widehat{\nu}^{e,S} - \widehat{\nu}^{f,S} \right|^2 \right\},
$$
and
$$
T_n^\circ(k) := \max_{S\in \mathcal{S}_k^\circ}\ n^\star \left\{ \gamma_1 \left\| \widehat{\beta}^{e,S} - \widehat{\beta}^{f,S}  \right\|_2^2 + \gamma_2 \left| \widehat{\nu}^{e,S} - \widehat{\nu}^{f,S} \right|^2 \right\}.
$$
We also adopt the notation introduced in Appendix~\ref{sec:app:proof:task1}.

\vspace{0.1in}
\subsection{Proof of Lemma~\ref{lem:bootstrap-D1}}
\label{sec:app:proof:bootstrap-D1} 
\begin{proof}
We suppress the parentheses involving $k$ in the notation whenever there is no ambiguity. 
It suffices to show that $\mathbb{P}_{H_{0,k}}(T_n < U_n) \rightarrow 1$, since it implies that
$$
1 - o(1) \leq \mathbb{P}_{H_{0,k}}(T_n < U_n) 
\leq \mathbb{P}_{{H_{0,k}}} \left(
\left\{ \mathcal{S}_k \subseteq \mathcal{S}_k^\circ \right\} \cap \left\{  \widehat{S} \in \mathcal{S}_k \right\} \right)
\leq 1.
$$

Now consider the working model 
$Y^e \mid X^e_S \sim \mathcal{N}(X^{e\top} \beta^{e,S}, \nu^{e,S})$,
where $\beta^{e,S}$ is the population OLS solution of $(X_S^e, Y^e)$. Let $r^{e,S} = Y^e - X^{e\top}_S {\beta}^{e,S}_S  $, we have
$$
\mathbb{E}\left[ X_S^e r^{e,S}  \right] = 0, \quad
\mathbb{E}\left[ (r^{e,S})^2 \right] = \nu^{e,S}, \quad \forall e\in\mathcal{E.}
$$
As a result, for the sample OLS solutions,
\begin{align*}
\begin{split}
    \widehat{\beta}^{e,S}_S - \widehat{\beta}^{f,S}_S 
    &= \beta^{e,S}_S - \beta^{f,S}_S + \left( \frac{ \boldsymbol{X}_S^{e\top}\boldsymbol{X}^e_S }{ n^e} \right)^{-1} \left( \frac{ \boldsymbol{X}^{e\top}_S \boldsymbol{r}^{e,S} }{n^e} \right)
    - \left( \frac{ \boldsymbol{X}_S^{f\top} \boldsymbol{X}^f_S }{ n^f} \right)^{-1} \left( \frac{ \boldsymbol{X}^{f\top}_S \boldsymbol{r}^{f,S} }{n^f} \right)  \\
    &= \beta_S^{e,S} - \beta_S^{f,S} + o_p(1),
\end{split}
\end{align*}
and
\begin{align*}
\begin{split}
    \widehat{\nu}^{e,S} - \widehat{\nu}^{f,S}
&= \frac{1}{n^e}\sum_{i=1}^{n^e} \left( Y^e_i - (\widehat{\beta}^{e,S}_S)^\top X^{e}_{S,i} \right)^2 - \frac{1}{n^f}\sum_{i=1}^{n^f} \left( Y^f_i - (\widehat{\beta}_S^{f,S})^\top X^{f}_{S,i} \right)^2
= \nu^{e,S} - \nu^{f,S} + o_p(1) .
\end{split}
\end{align*}
Therefore, by continuous mapping theorem, we have
$$
T_n / n^\star = \max_{S \in \mathcal{S}_k } \left[ \gamma_1 \left\| {\beta}^{e,S} - {\beta}^{f,S} \right\|_2^2 + \gamma_2 \left| {\nu}^{e,S} - {\nu}^{f,S} \right|^2  \right] +o_p(1)
$$
and
$$
U_n/n^\star = \min_{S \notin \mathcal{S}_k } \left[ \gamma_1 \left\| {\beta}^{e,S} - {\beta}^{f,S} \right\|_2^2 + \gamma_2 \left| {\nu}^{e,S} - {\nu}^{f,S} \right|^2  \right] +o_p(1)
$$
For any $S \in\mathcal{S}_k$, we have $\beta^{e,S} = \beta^{f,S}$ and $\nu^{e,S} = \nu^{f,S}$. However, for any $S \notin\mathcal{S}_k$, either $\beta^{e,S} \neq \beta^{f,S}$ or $\nu^{e,S} \neq \nu^{f,S}$. These imply that
$$
(U_n - T_n ) / n^\star = C + o_p(1)
$$
for some constant $C>0$, 
and the result follows.

\end{proof}

\vspace{0.1in}
\subsection{Intermediate results}
In this section, we present several intermediate results that will be used later to prove Theorem~\ref{thm:type-1}. Recall that we focus on the case of two environments, $\mathcal{E}=\{e,f\}$. We further denote the random variables constructed from bootstrap (line 8 in Algorithm~\ref{algo:bootstrap-threshold}) by
$$
T_n^\star (k) :=
\max_{S\in\mathcal{S}_k}
\left\{ \gamma_1 \|L_n^{\star,e,f,S}\|_2^2
+ \gamma_2 |V_n^{\star,e,f,S}|^2
\right\},
$$
and
$$
T_n^{\circ,\star} (k) :=
\max_{S\in\mathcal{S}_k^\circ}
\left\{ \gamma_1 \|L_n^{\star,e,f,S}\|_2^2
+ \gamma_2 |V_n^{\star,e,f,S}|^2
\right\}.
$$
For simplicity, we suppress the parentheses in the notation whenever there is no ambiguity.
Note that for every bootstrap realization, we have $T_n^\star \leq T_n^{\circ,\star}$. We also define
$$
F_{n,k}(t) := \mathbb{P}_{H_{0,k}}\left( T_n \leq t \right), \quad
F^\star_{n,k}(t) := \mathbb{P}_{H_{0,k}}\left( T^\star_n \leq t \mid \mathcal{D}_2 \right), \quad
G^\star_{n,k}(t) := \mathbb{P}_{H_{0,k}}\left( T^{\circ,\star}_n \leq t \mid \mathcal{D}_2 \right).
$$
With $B_n$ i.i.d. bootstrap draws $\{T^{\circ,\star}_{n,b}\}_{b=1}^{B_n}$ from the bootstrap distribution $T_n^{\circ, \star}$, we define the empirical bootstrap CDF 
$$
\widehat{G}^{\star}_{n,k}(t) := \frac{1}{B_n} \sum_{b=1}^{B_n} \mathds{1}\left( T^{\circ,\star}_{n,b} \leq t \right).
$$

\vspace{0.2in}
\begin{lemma} 
\label{lem:F-Fstar}
Suppose that $n^e / n^\star \rightarrow\rho_e >0$ and $n^f / n^\star \rightarrow\rho_f >0$. Then the bootstrap distribution $F_{n,k}^\star$ consistently approximates $F_{n,k}$:
$$
\sup_t \left| F_{n,k}(t) - F^\star_{n,k}(t) \right| = o_p(1).
$$  
\end{lemma}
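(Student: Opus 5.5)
We prove the statement by showing that $T_n$ and $T_n^\star$ (conditionally on $\mathcal{D}_2$) have the same Gaussian-quadratic limit, and then upgrading weak convergence to uniform convergence of distribution functions.

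\textbf{Step 1: linearize the statistic.} Fix $S\in\mathcal{S}_k$. Under $H_{0,k}$ we have $\beta^{e,S}=\beta^{f,S}$ and $\nu^{e,S}=\nu^{f,S}$, so each summand of $T_n$ is a pure fluctuation. Using $\widehat{\beta}^{e,S}_S-\beta^{e,S}_S=(\widehat{\Sigma}^e_S)^{-1} n_e^{-1}\sum_i X^e_{S,i}r^{e,S}_i$ and the residual-variance expansion from Step II of the proof of Lemma~\ref{lem:D-I}, we write
\[
\sqrt{n^\star}\bigl(\widehat{\beta}^{e,S}_S-\widehat{\beta}^{f,S}_S\bigr)=\sqrt{\tfrac{n^\star}{n^e}}(\Sigma^e_S)^{-1}G^{e,S}-\sqrt{\tfrac{n^\star}{n^f}}(\Sigma^f_S)^{-1}G^{f,S}+o_p(1),
\]
where $G^{e,S}=n_e^{-1/2}\sum_i X^e_{S,i}r^{e,S}_i$. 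Likewise,
\[
\sqrt{n^\star}\bigl(\widehat\nu^{e,S}-\widehat\nu^{f,S}\bigr)=\sqrt{\tfrac{n^\star}{n^e}}H^{e,S}-\sqrt{\tfrac{n^\star}{n^f}}H^{f,S}+o_p(1),
\]
with $H^{e,S}=n_e^{-1/2}\sum_i\{(r_i^{e,S})^2-\nu^{e,S}\}$. The quadratic remainder $G^\top\widehat\Sigma^{-1}G/n^e$ is $O_p(1/n)$ and so negligible after scaling. The finite fourth moments give $\widehat\Sigma^e_S\to\Sigma^e_S$ and a CLT for $(G^{e,S},H^{e,S})$. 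Since $\mathcal{S}_k$ is finite, independence across environments together with $n^e/n^\star\to\rho_e$ yields joint convergence of the stacked vector $W_n=(W_n^S)_{S\in\mathcal{S}_k}$ to a centered Gaussian $W$ with covariance $\Omega$. The continuous mapping theorem then gives $T_n\rightsquigarrow T:=\max_S q_S(W^S)$, with $q_S$ the fixed quadratic form determined by $\gamma_1,\gamma_2$.

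\textbf{Step 2: bootstrap analogue.} Conditionally on the data, $(L_n^{\star,e,f,S},V_n^{\star,e,f,S})_{S}$ is exactly Gaussian with covariance $\widehat\Omega_n$, built from the empirical second moments of $X^e_{S,i}\widehat r^{e,S}_i$ and of the centered $(\widehat r^{e,S}_i)^2$, and sharing multipliers $\xi_i^e$ across $S$ as in Step 1. We show $\widehat\Omega_n\to_p\Omega$. This requires replacing $\widehat r$ by $r$ (the difference is controlled by $\|\widehat\beta^{e,S}-\beta^{e,S}\|=O_p(n^{-1/2})$) and applying the LLN to products of degree-four polynomials in $(X,Y)$. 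This is the main obstacle: the terms $(\widehat r_i^2)^2$ are of fourth order in the residuals, so the LLN needs $\mathbb{E}\|X\|^4,\mathbb{E}Y^4<\infty$ at least for the cross terms. For the full $H$ covariance one would need an eighth moment; I would either verify that the stated moments suffice via a uniform integrability and truncation argument, or note that an additional moment condition is implicitly needed. Given $\widehat\Omega_n\to_p\Omega$, the conditional law of $T_n^\star$ converges in probability (e.g.\ in bounded Lipschitz distance) to the law of $T$.

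\textbf{Step 3: uniformity.} $T$ is a maximum of finitely many nondegenerate quadratic forms of a Gaussian vector, so its CDF is continuous: $q_S(W^S)$ has no atoms except possibly at $0$, and $\mathbb{P}(T=0)=0$ whenever $\Omega$ is not zero on every block. By P\'olya's theorem, $\sup_t|F_{n,k}(t)-\mathbb{P}(T\le t)|\to0$. Applying the subsequence characterization of convergence in probability together with P\'olya again gives $\sup_t|F^\star_{n,k}(t)-\mathbb{P}(T\le t)|=o_p(1)$. The triangle inequality then concludes the proof.
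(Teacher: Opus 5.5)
Your proposal follows the paper's proof essentially step for step: linearize and stack over the finite collection $\mathcal{S}_k$, apply the CLT, use that the multiplier vector is exactly Gaussian given $\mathcal{D}_2$ with covariance $\widehat\Omega_n=\Omega+o_p(1)$, then apply continuous mapping to the maximum of quadratic forms, P\'olya's theorem, and the triangle inequality; your check that the limiting CDF is continuous supplies a point the paper simply assumes. The eighth-moment worry is unfounded, however, because consistency of $\widehat\Omega_n$ needs only the weak LLN for integrable summands, and every entry of the second-moment matrix being estimated (e.g.\ $\bigl((r_i^{e,S})^2-\nu^{e,S}\bigr)^2$) is a polynomial of degree at most four in $(X^e,Y^e)$, hence integrable under $\mathbb{E}\,[\|X^e\|_2^4+|Y^e|^4]<\infty$ by H\"older, while replacing $\widehat r_i^{e,S}$ by $r_i^{e,S}$ only adds averages of such degree-four terms multiplied by powers of $\|\widehat\beta^{e,S}-\beta^{e,S}\|_2=o_p(1)$.
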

\begin{proof}

For notational simplicity, we present the proof for $\mathcal{S}_k = \{S, \bar S\}$; the same argument applies to any fixed finite $\mathcal S_k$ by stacking the corresponding components. 
For each $A\in\mathcal{S}_k$, define
$$
L_n^A := \sqrt{n^\star}\left(\widehat{\beta}^{e,A}-\widehat{\beta}^{f,A}\right), \qquad
V_n^A := \sqrt{n^\star}\left(\widehat{\nu}^{e,A}-\widehat{\nu}^{f,A}\right),
$$
Under $H_{0,k}$, both $S$ and $\bar S$ are invariant. By the asymptotic linearity of the OLS coefficient and residual variance,
we have for any $A \in \mathcal{S}_k$,
\begin{align*}
\begin{split}
    L_n^{A}& 
= \sqrt{\frac{n^\star}{n^e}} \left( \frac{ \boldsymbol{X}_A^{e\top} \boldsymbol{X}^e_A }{n^e} \right)^{-1} \frac{ \boldsymbol{X}_A^{e\top} {\boldsymbol{r}}^{e,A} }{\sqrt{n^e}} - \sqrt{\frac{n^\star}{n^f}} \left( \frac{ \boldsymbol{X}_A^{f\top} \boldsymbol{X}^f_A }{n^f} \right)^{-1} \frac{ \boldsymbol{X}_A^{f\top} {\boldsymbol{r}}^{f, A} } {\sqrt{n^f}},
\end{split}
\end{align*}
and
$$
V^A_n = \sqrt{\frac{n^\star}{n^e}} \frac{1}{\sqrt{n^e}} \sum_{i=1}^{n^e} ({r}_i^{e,A})^2 - \sqrt{\frac{n^\star}{n^f}} \frac{1}{\sqrt{n^f}} \sum_{i=1}^{n^f} ({r}_i^{f,A})^2 + o_p(1),
$$
where ${r}^{e,A} = Y^e - X^{e\top}_A {\beta}^{e,A}_A$.
Now stack these quantities as $W_n = \left[ L_n^S, L_n^{\bar S},  V_n^S,  V_n^{\bar S} \right]^\top$; then we obtain
$$
W_n =
\sqrt{\frac{n^\star}{n^e}} \frac{1}{\sqrt{n^e}} \sum_{i=1}^{n^e}\psi_i^e -
\sqrt{\frac{n^\star}{n^f}} \frac{1}{\sqrt{n^f}} \sum_{i=1}^{n^f}\psi_i^f + o_p(1),
$$
where for $g\in\{e,f\}$,
$$
\psi_i^g
:=
\begin{bmatrix}
({\Sigma}^{g}_S)^{-1}X_{S,i}^{g} \cdot r_i^{g, S}\\
({\Sigma}^{g}_{\bar S})^{-1} X^{g}_{\bar S, i} \cdot r_i^{g,\bar S}\\
(r_i^{g,S})^2-\nu^{g,S}\\
(r_i^{g,\bar S})^2-\nu^{g,\bar S}
\end{bmatrix}, \quad \text{and} 
\quad \mathbb{E}\left[ \psi^g_i \right] = 0.
$$
Since $n^e/n^\star\to\rho_e>0$ and $n^f/n^\star\to\rho_f>0$, the multivariate central limit theorem and Slutsky's theorem imply
$$
W_n \rightsquigarrow W\sim\mathcal{N}(0,\Gamma), \quad \text{where}\quad 
\Gamma= \frac{1}{\rho_e}\operatorname{Var}(\psi_i^e) +
\frac{1}{\rho_f}\operatorname{Var}(\psi_i^f).
$$

Now denote the corresponding bootstrap vector by
$
W_n^\star := \left[ L_n^{\star, S}, L_n^{\star, \bar S}, V_n^{\star, S}, V_n^{\star, \bar S} \right]^\top. 
$
Note that
$$
W_n^\star = \sqrt{\frac{n^\star}{n^e}} \frac{1}{\sqrt{n^e}} \sum_{i=1}^{n^e} \xi^e_i \widehat{\psi}_i^e -  \sqrt{\frac{n^\star}{n^f}} \frac{1}{\sqrt{n^f}} \sum_{i=1}^{n^f} \xi^f_i \widehat{\psi}_i^f, \quad
\text{where} \quad 
\widehat{\psi}_i^g
:=
\begin{bmatrix}
(\widehat{\Sigma}^{g}_S)^{-1}X_{S,i}^{g} \cdot \widehat{r}_i^{g, S}\\
(\widehat{\Sigma}^{g}_{\bar S})^{-1} X^{g}_{\bar S, i} \cdot \widehat{r}_i^{g,\bar S}\\
(\widehat{r}_i^{g,S})^2- \widehat{\nu}^{g,S}\\
(\widehat{r}_i^{g,\bar S})^2- \widehat{\nu}^{g,\bar S}
\end{bmatrix}.
$$
Conditional on $\mathcal{D}_2$, we have
$W_n^\star\mid\mathcal{D}_2 \sim \mathcal{N}(0,\widehat{\Gamma}_n)$, where
$$
\widehat{\Gamma}_n = \frac{n^\star}{(n^e)^2} \sum_{i=1}^{n^e} \widehat{\psi}_i^e \widehat{\psi}_i^{e\top} + \frac{n^\star}{(n^f)^2} \sum_{i=1}^{n^f} \widehat{\psi}_i^f \widehat{\psi}_i^{f\top}, 
$$
since $\mathbb{E}[\xi^g_i \xi^g_j] = \mathds{1}(i =j)$ for all $g\in\{e,f\}$. Furthermore, by OLS consistency, the law of large numbers, and Slutsky's theorem, we have
\begin{align*}
\begin{split}
    \widehat{\Gamma}_n 
&= \frac{1}{\rho_e} \frac{1}{n^e} \sum_{i=1}^{n^e} {\psi}^e_i {\psi}^{e\top}_i + \frac{1}{\rho_f} \frac{1}{n^f} \sum_{i=1}^{n^f} {\psi}^f_i {\psi}^{f\top}_i + o_p(1) \\
&=\frac{1}{\rho_e}  \mathbb{E}\left[  \psi_i^e \psi_i^{e\top} \right] + \frac{1}{\rho_f} \mathbb{E} \left[ \psi_i^f \psi_i^{f\top} \right] + o_p(1) = \Gamma +o_p(1),
\end{split}
\end{align*}

Finally,  define the continuous function
$$
g(w) := \max_{A\in\mathcal{S}_k} \left\{ \gamma_1\|w_L^A\|_2^2 + \gamma_2|w_V^A|^2 \right\}.
$$
Then $T_n=g(W_n)$ and $ T_n^\star=g(W_n^\star)$.
By the continuous mapping theorem, we have $ T_n\rightsquigarrow g(W)$; conditional on $\mathcal{D}_2$, we have $T_n^\star \rightsquigarrow g(W)$ in probability.
Provided that the distribution function $F(t)$ of $g(W)$ is continuous, P\'{o}lya's theorem yields
$$
\sup_t \left|F_{n,k}(t)-F(t) \right|\to 0, \quad
\sup_t \left| F_{n,k}^\star(t) - F(t) \right|= o_p(1).
$$
As a result,
$$
\sup_t \left| F_{n,k}(t)-F_{n,k}^\star(t) \right|
\leq 
\sup_t \left|F_{n,k}(t)-F(t) \right| + \sup_t \left| F_{n,k}^\star(t) - F(t) \right| =
o_p(1).
$$

\end{proof}

\vspace{0.2in}
\begin{lemma}
\label{lem:Gstarhat-Gstar}
The empirical bootstrap distribution $\widehat{G}_{n,k}^\star$ uniformly approximates $G_{n,k}^\star$ as the bootstrap sample size $B_n\rightarrow\infty$:
$$
\sup_t \left| \widehat{G}^\star_{n,k}(t) - G^\star_{n,k}(t) \right| = o_p(1).
$$
\end{lemma}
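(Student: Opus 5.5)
The plan is to condition on the data $\mathcal D_2$, together with the collection $\mathcal S_k^\circ$ built from the independent subsample $\mathcal D_1$. Under this conditioning, the bootstrap draws $T^{\circ,\star}_{n,1},\ldots,T^{\circ,\star}_{n,B_n}$ produced by Algorithm~\ref{algo:bootstrap-threshold} are i.i.d. with common distribution function $G^\star_{n,k}$. The Gaussian multipliers $\xi^e_i$ are drawn afresh and independently in each replication, and everything else entering $t_k^b$ (the OLS fits, residuals, $\widehat\Sigma^e_S$) is a deterministic function of $\mathcal D_2$ and $\mathcal S_k^\circ$. Therefore $\widehat G^\star_{n,k}$ is, conditionally, the empirical CDF of $B_n$ i.i.d. draws from $G^\star_{n,k}$.

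Apply the Dvoretzky--Kiefer--Wolfowitz inequality (Massart's constant) conditionally. This bound is distribution-free: it does not require continuity of $G^\star_{n,k}$ and holds for any distribution on $\mathbb R$. For every $\eta>0$ it gives
\[
\mathbb P\Bigl(\sup_t \bigl|\widehat G^\star_{n,k}(t)-G^\star_{n,k}(t)\bigr|>\eta \,\Big|\, \mathcal D_1,\mathcal D_2\Bigr)\le 2\exp(-2B_n\eta^2).
\]
The right-hand side is nonrandom. Taking expectations over $(\mathcal D_1,\mathcal D_2)$ therefore yields the same bound for the unconditional probability, which tends to $0$ as $B_n\to\infty$. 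This is exactly the claim $\sup_t|\widehat G^\star_{n,k}(t)-G^\star_{n,k}(t)|=o_p(1)$. Alternatively, the Glivenko--Cantelli theorem applied conditionally, combined with dominated convergence, gives the same conclusion, but DKW is cleaner because of its uniformity in the underlying distribution.

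The only point needing care is that the underlying distribution $G^\star_{n,k}$ changes with $n$, i.e.\ we face a triangular array rather than a fixed law. This is precisely why the distribution-free, nonasymptotic DKW bound should be used instead of a limit theorem: the bound does not depend on $G^\star_{n,k}$, so no continuity or convergence of $G^\star_{n,k}$ is needed here. Measurability of the supremum is not an issue, since it can be taken over rational $t$ together with left limits.
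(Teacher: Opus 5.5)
Your proposal is correct and takes the same route as the paper, whose proof is exactly a conditional application of the Dvoretzky--Kiefer--Wolfowitz inequality giving the nonrandom bound $2\exp(-2B_n\epsilon^2)$. Your extra remarks fill in details the paper leaves implicit: you condition on $\mathcal D_1$ as well, so that $\mathcal S_k^\circ$ is fixed and the draws are genuinely i.i.d., and you note that DKW's distribution-free form handles the $n$-dependent law $G^\star_{n,k}$.
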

\begin{proof}
By the Dvoretzky–Kiefer–Wolfowitz inequality, for every $\epsilon>0$,
$$
\mathbb{P} \left( \sup_t \left| \widehat{G}^\star_{n,k}(t) - G^\star_{n,k}(t) \right| > \epsilon \mid \mathcal{D} \right) \leq  2 \exp(-2B_n\epsilon^2).
$$
The result hence follows.

\end{proof}

\vspace{0.1in}
\subsection{Proof of Theorem~\ref{thm:type-1}}
\label{sec:app:proof:type-1}
\begin{proof}
The proof proceeds in two steps. In the first step, we show the type-I error control for each hypothesis $H_{0,k}$. In the second step, we show the asymptotic probabilistic guarantee for $\{ \widehat{S}(\mathcal{E}) \subseteq S^\star \}$.

\vspace{0.1in}
\textit{\textbf{Step I.}}
Define the sample quantile
$$
\widehat{t}_{n,k}^\circ := \inf\left\{ t: \widehat{G}^\star_{n,k}(t) \geq 1- \alpha/p\right\}, 
$$
and the distribution discrepancy
$$
\Delta_{1,n,k}:= \sup_t \left| F_{n,k}(t) - F^\star_{n,k}(t) \right|, \quad
\Delta_{2,n,k} := \sup_t \left| \widehat{G}^\star_{n,k}(t) - G^\star_{n,k}(t) \right|.
$$
By Lemma~\ref{lem:F-Fstar} and~\ref{lem:Gstarhat-Gstar}, we know $\Delta_{1,n,k} + \Delta_{2,n,k} = o_p(1)$, and hence we can choose a deterministic sequence $\epsilon_n \downarrow 0$ such that $\mathbb{P}_{H_{0,k}}(\Delta_{1,n,k} + \Delta_{2,n,k} > \epsilon_n) = o(1)$. Now consider the population quantile 
$$
q_{n,k}:= F^{-1}_{n,k}(1 - \alpha/p - \epsilon_n).
$$
On the event $A_{n,k}:=\{ \Delta_{1,n,k} + \Delta_{2,n,k} \leq \epsilon_n \}$, for all $t < q_{n,k}$, we have $F_{n,k}(t)< 1- \alpha/p - \epsilon_n$, and hence
$$
\widehat{G}^\star_{n,k}(t) \leq G^\star_{n,k}(t) + \Delta_{2n,k} \leq F^\star_{n,k}(t) + \Delta_{2n,k} \leq
F_{n,k}(t) + \Delta_{1n,k} + \Delta_{2n,k} < 1- \alpha / p.
$$
The second inequality follows from the fact that $T_n^\star \leq T_n^{\circ, \star}$.
This implies that the CDF $\widehat{G}^{\star}_{n,k}$ cannot reach $1- \alpha/p$ before $q_{n,k}$, and consequently $q_{n,k}\leq \widehat{t}^\circ_{n,k}$ on the event $A_{n,k}$. As a result, we have
\begin{align*}
\begin{split}
    \mathbb{P}_{H_{0,k}} \left( T_n > \widehat{t}^\circ_{n,k} \right)
    &\leq \mathbb{P}_{H_{0,k}} \left( \left\{ T_n > \widehat{t}^\circ_{n,k} \right\} \cap \left\{ A_{n,k} \right\} \right) + \mathbb{P}_{H_{0,k}} \left(\left\{ A_{n,k} \right\}^c \right) \\
    &\leq \mathbb{P}_{H_{0,k}} \left(  T_n > q_{n,k}  \right) + \mathbb{P}_{H_{0,k}}( \Delta_{1,n,k} + \Delta_{2,n,k} > \varepsilon_n) \\
    &\leq \alpha/p  + o(1).
\end{split}
\end{align*}
Finally, by the definition of $T_n(k)$ and the assumption that $\widehat{S} \in \mathcal{S}_k$, we have $\widehat{T}(k) \leq T_n(k)$. Furthermore, $\widehat{t}_{n,k}^\circ$ coincides with the bootstrap threshold $\widehat{\tau}_k(\alpha)$ defined in line 10 of Algorithm~\ref{algo:bootstrap-threshold}. As a result, 
$$
\mathbb{P}_{H_{0,k}} \left(\widehat{T}(k) > \widehat{\tau}_k(\alpha) \right) 
\leq \mathbb{P}_{H_{0,k}}\left( T_n(k) > \widehat{t}_{n,k}^\circ \right) \leq \alpha / p + o(1).
$$

\vspace{0.2in}
\textit{\textbf{Step II.}}
The result follows immediately that
$$
\mathbb{P} \left( \widehat{S}(\mathcal{E}) \not\subseteq S^\star \right) 
\leq \sum_{k=1}^p \mathbb{P}_{H_{0,k}} \left( \widehat{T}(k) > \widehat{\tau}_k(\alpha) \right) \leq p \cdot \frac{\alpha}{p} + o(1) = \alpha +o(1).
$$
\end{proof}

\section{Details on experiments}
\label{sec:app:experiments}

We first describe the synthetic data generation and intervention mechanisms we used. 
We generate data from linear Gaussian structural equation models over $p$ covariates $X=(X_1,\ldots,X_p)$ and a response $Y$. We first sample a common topological ordering of the $p+1$ variables and place $Y$ at a prespecified position in this ordering. The causal parent set $S^\star$ of $Y$ is sampled from the variables preceding $Y$, with $|S^\star|$ drawn between $s_{\min}$ and $s_{\max}$. For $j\in S^\star$, the invariant coefficients are sampled as
$$
\beta_j^\star = \xi_j U_j,\qquad
\xi_j\in\{-1,1\},\quad
U_j\sim \mathsf{Unif}(l_\star,u_\star),
$$
independently. Thus, in every environment,
$$
Y=\sum_{j\in S^\star}\beta_j^\star X_j+\varepsilon_Y,
\qquad
\varepsilon_Y\sim N(0,\sigma_Y^2),
$$
For the remaining structural equations, coefficients are sampled independently as a random sign multiplied by a uniform magnitude on $[l,u]$. The parameter $p_{\rm act}$ controls the sparsity of the covariate DAG: for each pair of variables consistent with the topological ordering, an edge is included with probability $p_{\rm act}$. Conditional on the resulting structural coefficients, observations are generated recursively according to the common topological ordering.

\vspace{0.05in}
Under \textbf{the soft-intervention mechanism}, the DAG structure is sampled once and shared across all environments. We first generate a baseline coefficient matrix $A^{(0)}$. The observational environment uses this baseline mechanism. For each subsequent environment $e>0$, every active edge into a covariate $X_j$ is independently selected with probability $p_{\rm resample}$; selected coefficients are then redrawn from the same signed uniform distribution, while the edge support remains unchanged. The noise standard deviations of the covariates are sampled once from $\mathsf{Unif}(\sigma_{X,\min},\sigma_{X,\max})$ and are also shared across environments. Hence, environments differ through perturbations of existing structural coefficients while preserving both the DAG support and the noise distributions.

\vspace{0.05in}
Under \textbf{the all-intervention mechanism}, the topological ordering and the response mechanism are still shared across environments, but the mechanisms generating covariates are resampled independently in every environment. Specifically, for each environment and for every admissible pair of covariates, an edge is included independently with probability $p_{\rm act}$, and its coefficient is sampled afresh from the signed uniform distribution. In addition, under the all-intervention mechanism, the covariate noise standard deviations are independently resampled in each environment from $\sigma_{X_j}^{(e)} \sim \mathsf{Unif}(\sigma_{X,\min},\sigma_{X,\max})$, whereas $\sigma_Y$ remains fixed.

\vspace{0.1in}
\subsection{Experiment of Figure~\ref{fig:vanilla-cd}}
\label{sec:app:figure-vanilla-cd}

We generate data using the soft-intervention mechanism described above. We vary the ambient dimension over $p\in\{10,20,30,40,50\}$ and generate 3 environments with $n^e=200$ observations per environment. For each value of $p$, we conduct 100 independent replications.  
The SEM is generated with $p_{\rm act}\in\{0.1,0.2\}$, $l=u=0.1$, $l_\star=0.5$, $u_\star = 1$, $\sigma_{X,\min}=0.5$, $\sigma_{X,\max}=1$, $\sigma_Y=0.5$, $s_{\min}=4$, $s_{\max}=5$, and $p_{\rm resample}=0.05$.

We compare the heuristic coordinate descent (CD) with the vanilla MIP formulation, both with and without a warm start. All three methods use the same objective parameters $\lambda=0.01$, $\gamma_1=5$, and $\rho=1$.
For coordinate descent, we initialize with the empty set. The MIP with a warm start is initialized using the support returned by coordinate descent, whereas the vanilla MIP is initialized from the empty set. For both MIP variants, we set \texttt{MIPFocus=2} and \texttt{Cuts=2}, and impose a time limit of 180 seconds. 

\vspace{0.1in}
\subsection{Experiment of Figure~\ref{fig:identification}}
\label{sec:app:figure-identification}
We use the data generation mechanism described in Example~\ref{ex:identification}. To avoid relying on the exact parameter configuration $(a_1,a_2)=(1,1/2)$ and $(v_1,v_2)=(2,3/2)$, which occurs with probability zero under a continuous parameter distribution, we independently perturb these parameters for each generated dataset according to
$$
a_e=(a_e)_0+\delta_{a,e},\qquad
v_e=(v_e)_0+\delta_{v,e},
$$
where $\delta_{a,e},\delta_{v,e}\overset{\mathrm{i.i.d.}}{\sim}\mathcal N(0,0.05^2)$, with $(a_1)_0=1,(a_2)_0=1/2$ and $(v_1)_0=2,(v_2)_0=3/2$.

\vspace{0.05in}
We compare \textit{MIP} with \textit{EILLS}, \textit{ICP}, \textit{Causal Dantzig}, and \textit{BIP}. We describe the implementation details:
\begin{itemize}
    \item \textit{MIP}: we set $\lambda_n=0.5$, $\gamma_1=50$, $\rho=1$, and $s_{\max}=10$; we use \texttt{MIPFocus=3} and impose a time limit of 60 seconds. 

    \vspace{0.05in}
    \item \textit{EILLS}: we set its penalty parameter to $\gamma=50$.

    \vspace{0.05in}
    \item \textit{ICP}: we use significance level $\alpha=0.05$. 

    \vspace{0.05in}
    \item \textit{Causal Dantzig}: we use the hidden variable implementation with $\alpha=0.1$.

    \vspace{0.05in}
    \item \textit{BIP}: we restrict the candidate models to size at most 10, use ridge stabilization $\epsilon=10^{-4}$, and report the maximum a posteriori model. No intercept is included for any of these methods.
\end{itemize}

\vspace{0.1in}
\subsection{Experiment of Figure~\ref{fig:E_n}}
\label{sec:app:figure-E_n}

\begin{itemize}
    \item \textbf{Left and middle panels.} 
    We generate data using the soft-intervention mechanism described above with ambient dimension $p=20$. We vary the number of environments over $|\mathcal E|\in\{2,3,5,10\}$ and the sample size per environment over $n^e\in\{50,75,\ldots,275\}$. For each combination of $(|\mathcal E|,n^e)$, we conduct 100 independent replications. The SEM is generated with $p_{\rm act}\in\{0.1,0.2\}$, $\ell=u=0.1$, $\ell_\star=0.5$, $u_\star=1$, $\sigma_{X,\min}=0.5$, $\sigma_{X,\max}=1$, $\sigma_Y=0.5$, $s_{\min}=4$, $s_{\max}=5$, and $p_{\rm resample}=0.05$.
    We apply the coordinate descent implementation of the invariant estimator with $\lambda_n=0.01$, $\gamma_1=5$, and $\rho=1$. Coordinate descent is initialized from the empty set. 

    \vspace{0.05in}
    \item \textbf{Right panel.} We use the same soft-intervention mechanism and DGP parameters as above, fixing $p=20$, $|\mathcal E|=10$, and $n^e=150$. We vary the tuning parameters over $\gamma\in\{1,5,10,15,20\}$ and $\lambda_n\in\{10^{-4},10^{-3},10^{-2},10^{-1},1\}$, while fixing $\rho=1$. For each combination of $(\gamma,\lambda_n)$, we conduct 50 independent replications. We compute the leave-one-environment-out (LOEO) validation criterion and apply coordinate descent with the corresponding tuning parameters, initialized from the empty set. Each cell in the heatmap reports the average LOEO criterion across replications, while red outlines indicate parameter combinations for which the proportion of perfect recovery of $S^\star$ exceeds $0.95$.

\end{itemize}

\vspace{0.1in}
\subsection{Experiment of Figure~\ref{fig:baseline_p}}
\label{sec:app:figure-baseline_p}
We generate data using the all-intervention mechanism described above. We vary the ambient dimension over $p\in\{5,10, \dots, 90,100\}$ and generate 10 environments with $n^e=500$ observations per environment. The number of causal parents is set to $|S^\star|=\lfloor0.4p\rfloor$ when $\lfloor0.4p\rfloor<16$, and $|S^\star|=16+\lfloor0.1p\rfloor$ otherwise. For each value of $p$, we conduct 200 independent replications. The SEM is generated with $p_{\rm act}\in\{0.1,0.15\}$, $\ell=0.2$, $u=0.5$, $\ell_\star=u_\star=0.5$, $\sigma_{X,\min}=0.5$, $\sigma_{X,\max}=1$, and $\sigma_Y=0.5$.

We compare our proposed method, \textit{MIP-CD}, with \textit{EILLS-s}, \textit{ICP-s,} \textit{Causal Dantzig}, \textit{BIP-s}, \textit{NegDRO}, \textit{CoCo}, and \textit{IRM}. For \textit{EILLS-s}, \textit{ICP-s}, and \textit{BIP-s}, we first screen the covariates using the magnitudes of ridge regression coefficients with regularization parameter $\lambda=1$, keeping the top 15 variables for \textit{EILLS-s} and the top 10 for \textit{ICP-s} and \textit{BIP-s}. We describe the implementation details below:
\begin{itemize}
    \item \textit{MIP-CD}: we set $\rho=1$, while $\lambda_n\in\{10^{-4},10^{-3},10^{-2},10^{-1},1\}$ and $\gamma\in\{1,5, 10, 15, 20\}$ are chosen based on LOEO criterion. Coordinate descent is initialized from the empty set.

    \vspace{0.05in}
    \item \textit{EILLS-s}: we set its penalty parameter to $\gamma=20$.

    \vspace{0.05in}
    \item \textit{ICP-s}: we use significance level $\alpha=0.05$.

    \vspace{0.05in}
    \item \textit{Causal Dantzig}: we use the hidden variable implementation with $\alpha=0.1$ and exclude the intercept.

    \vspace{0.05in}
    \item \textit{BIP-s}: we enumerate candidate models over the 10 screened variables, use ridge stabilization $\epsilon=10^{-4}$, and report the maximum a posteriori model without an intercept.

    \vspace{0.05in}
    \item \textit{NegDRO}: we set its penalty parameter to $\gamma=20$.

    \vspace{0.05in}
    \item \textit{CoCo}: we set the regularization parameter to $\lambda=10$, learning rate to $0.01$, and maximum number of iterations to $10{,}000$, using equal environment weights and excluding the intercept.

    \vspace{0.05in}
    \item \textit{IRM}: we set the penalty weight to $10$, penalty annealing iterations to $100$, $\ell_2$ regularization parameter to $10^{-3}$, learning rate to $10^{-3}$, and number of optimization steps to $500$.
    
\end{itemize}

\vspace{0.1in}
\subsection{Experiment of Figures~\ref{fig:demo-barplot}, \ref{fig:teaser1} and \ref{fig:teaser2}}
\label{sec:app:figure-teaser}
We use the structural equation model described in Example~\ref{ex:teaser}, with parameters $\gamma_3=\gamma_4=\beta_1^\star=\beta_3^\star=1$, $\eta_3=\eta_4=\sigma=1$, and intervention strength $a=3$. We additionally generate nine independent standard Gaussian covariates, yielding an ambient dimension of $p=13$. For each environment, we generate two independent datasets with sample sizes $n_1^e=300$ and $n_2^e\in\{400,800,\dots, 10000\}$. For each value of $n_2^e$, we conduct 200 independent replications. The true causal parent set is $S^\star=\{1,3\}$, while the plausible causal predictor set is $S(\mathcal E)=\{1\}$.

We compare \textit{MIP-test-2} and \textit{MIP-test-4} with \textit{ICP} and \textit{WVM}, as well as \textit{MIP}, \textit{EILLS}, \textit{Causal Dantzig}, and \textit{BIP-s}. For \textit{MIP-test-2} and \textit{MIP-test-4}, we use the first dataset to construct candidate supports and the second dataset for hypothesis testing. For the remaining methods, we combine the two datasets within each environment. We describe the implementation details below:
\begin{itemize}
    \item \textit{MIP-test-2} and \textit{MIP-test-4}: we set $\alpha=0.05$, $\overline M=5$, and use $B=300$ multiplier bootstrap replications. We set $\rho=1$ for \textit{MIP-test-2} and $(\gamma_1, \gamma_2, \gamma_3, \gamma_4)=(1,1,1,1)$ for \textit{MIP-test-4}. Both methods use coordinate descent to solve the underlying optimization problems, with candidate supports obtained through ridge screening.

    \vspace{0.05in}
    \item \textit{ICP}: we use significance level $\alpha=0.05$.

    \vspace{0.05in}
    \item \textit{WVM}: we use significance level $\alpha=0.05$ and $B=300$ bootstrap replications.

    \vspace{0.05in}
    \item \textit{MIP}: we use coordinate descent with $\lambda_n=0.01$, $\gamma_1=5$, and $\rho=1$, initialized from the empty set.

    \vspace{0.05in}
    \item \textit{EILLS}: we set its penalty parameter to $\gamma=20$ and exclude the intercept.

    \vspace{0.05in}
    \item \textit{Causal Dantzig}: we use the hidden variable implementation with $\alpha=0.05$ and exclude the intercept.

    \vspace{0.05in}
    \item \textit{BIP-s}: we first keep the top 10 variables using ridge screening, enumerate candidate subsets of these variables, and report the maximum a posteriori model. We use ridge stabilization $\epsilon=10^{-4}$ and exclude the intercept.
\end{itemize}

\vspace{0.1in}
\subsection{Experiment of Figure~\ref{fig:test-based}}
\label{sec:app:test-based}

We generate data using the all-intervention mechanism described above. We fix the ambient dimension at $p=10$ and the sample size at $n^e=500$ per environment, while varying the number of environments. For each setting, we conduct 200 independent replications. The SEM is generated with $p_{\rm act}\in\{0.1,0.15\}$, $\ell=0.2$, $u=0.5$, $\ell^\star=u^\star=0.5$, $\sigma_{X,\min}=0.5$, $\sigma_{X,\max}=1$, $\sigma_Y=0.5$, $s_{\min}=4$, and $s_{\max}=5$.

We compare \textit{MIP-test-2} and \textit{MIP-test-4} with \textit{ICP} and \textit{WVM}. For both MIP procedures, we randomly split the observations in each environment, using 20\% to construct candidate predictor sets and the remaining 80\% for hypothesis testing. For \textit{ICP} and \textit{WVM}, we use the full dataset. We describe the implementation details below:

\begin{itemize}
    \item \textit{MIP-test-2} and \textit{MIP-test-4}: we set $\alpha=0.05$, $\bar M=3$, and use $B=500$ multiplier bootstrap replications. We set $\rho=1$ for \textit{MIP-test-2} and $(\gamma_1, \gamma_2, \gamma_3, \gamma_4)=(1,1,1,1)$ for \textit{MIP-test-4}. Both methods use coordinate descent to solve the underlying optimization problems.

    \vspace{0.05in}
    \item \textit{ICP}: We use significance level $\alpha=0.05$.

    \vspace{0.05in}
    \item \textit{WVM}: We use significance level $\alpha=0.05$ and $B=300$ bootstrap replications.

\end{itemize}

\vspace{0.1in}
\subsection{Experiment of real data: the causal chambers in Section~\ref{sec:realdata}}
\label{sec:app:causal-chamber}

We consider the light tunnel dataset from Causal Chambers~\citep{gamella2025causal}, following the causal discovery setup of \citet{gu2025causality}. The response $Y=\widetilde I_3$ represents the infrared intensity measured after two polarizers. 
We consider 11 observed covariates: the RGB intensities $(R,G,B)$ of the light source, the two polarizer angles $(\theta_1,\theta_2)$, three measurements $(\widetilde V_1,\widetilde V_2,\widetilde V_3)$ of visible light intensity, two additional measurements $(\widetilde I_1,\widetilde I_2)$ of infrared intensity, and the measured electrical current $\widetilde C$. The known causal parent set is $S^\star= \{R,G,B,\theta_1,\theta_2\}$.
See Figure~\ref{fig:causal_chamber_dag} for the full ground truth.

\begin{figure}[!ht]
\centering
\begin{tikzpicture}[
    scale=0.7,
    >=Stealth,
    every node/.style={font=\small},
    every path/.style={line width=0.5pt},
    shorten >=2pt,
    shorten <=2pt
]
% Light source
\node (rgb) at (0,0) {$(R,G,B)$};
% Upper measurements (symmetric)
\node (v2) at (-3.0,2.2) {$\widetilde V_2$};
\node (i2) at (-1.0,2.2) {$\widetilde I_2$};
\node (i1) at ( 1.0,2.2) {$\widetilde I_1$};
\node (v1) at ( 3.0,2.2) {$\widetilde V_1$};
% Lower measurements
\node (c)  at (-3.0,-2.2) {$\widetilde C$};
\node (i3) at (-0.9,-2.2) {\textcolor{red}{$\widetilde I_3$}};
\node (v3) at ( 1.4,-2.2) {$\widetilde V_3$};
% Polarizer angles
\node (t1) at (4.0,-0.7) {$\theta_1$};
\node (t2) at (4.0,-2.2) {$\theta_2$};
% Effects of RGB intensities
\foreach \v in {v2,i2,i1,v1,c,i3,v3}{
    \draw[->] (rgb) -- (\v);
}
% Effects of polarizer angles
\draw[->] (t1) to[out=205,in=25] (i3);
\draw[->] (t1) -- (v3);
\draw[->] (t2) to[out=165,in=-25] (i3);
\draw[->] (t2) -- (v3);
% Effect of infrared intensity on visible measurement
\draw[->] (i3) -- (v3);
\end{tikzpicture}
\caption{\small 
Ground truth of the causal structure in the light tunnel chamber,
adapted from \citet{gu2025causality}.
The response is $Y=\widetilde I_3$, with causal parent set
$S^\star=\{R,G,B,\theta_1,\theta_2\}$.
}
\label{fig:causal_chamber_dag}
\end{figure}

We consider one reference environment \texttt{uniform\_reference}, and 19 intervention environments corresponding to different manipulations of the physical system. 
We vary the number of environments over $|\mathcal E|\in \{3,4,\ldots,10\}$, with $n^e=1000$ observations per environment. Specifically, we conduct 100 independent trials. In each trial, we randomly sample 1000 observations from each environment and generate a random permutation of the 19 intervention environments. We then construct nested environment sets by always including the reference environment and sequentially adding intervention environments according to the random ordering. For example, when $|\mathcal E|=3$, we include the reference environment and the first two randomly selected intervention environments. When $|\mathcal E|=4$, we keep these three environments and add the next intervention environment. This procedure continues until $|\mathcal E|=10$, and the selected environments and their corresponding observations remain unchanged within each trial as additional environments are introduced. 
Finally, all covariates and the response are standardized using the empirical means and standard deviations computed from the reference environment, with the same standardization applied to all intervention environments.

We compare our proposed methods, \textit{MIP-CD} and \textit{MIP-test}, with \textit{EILLS}, \textit{ICP}, \textit{Causal Dantzig}, \textit{BIP}, and \textit{FAIR}. We describe the implementation details below:

\begin{itemize}
    \item \textit{MIP-CD}: we set $\lambda_n=0.01$, $\gamma_1=5$, and $\rho=1$. Coordinate descent is initialized from the empty set.

    \vspace{0.05in}
    \item \textit{MIP-test}: we first use 20\% of the observations in each environment to obtain a candidate support using \textit{MIP-CD}, with the same tuning parameters as above. We then restrict the subsequent variable selection to this candidate support, allowing \textit{MIP-test} to serve as a diagnostic procedure for \textit{MIP-CD}. The remaining observations are further divided into two subsets, containing 20\% and 80\% of the remaining data, respectively, for constructing $\mathcal{S}^\circ_k$ and conducting hypothesis tests. We use \textit{MIP-test-4} with $(\gamma_1,\gamma_2,\gamma_3,\gamma_4)=(1,1,1,1)$, significance level $\alpha=0.05$, $\overline M=5$, and $B=500$ multiplier bootstrap replications. The underlying optimization problems are solved using coordinate descent. 
    
    \vspace{0.05in}
    \item \textit{EILLS}: we set its penalty parameter to $\gamma=20$ and exclude the intercept.

    \vspace{0.05in}
    \item \textit{ICP}: we use significance level $\alpha=0.05$ and report the intersection of all accepted predictor sets.

    \vspace{0.05in}
    \item \textit{Causal Dantzig}: we use the hidden variable implementation with $\alpha=0.05$ and exclude the intercept.

    \vspace{0.05in}
    \item \textit{BIP}: we enumerate candidate models of size at most 10 among the 11 observed covariates, use ridge stabilization $\epsilon=10^{-4}$, and report the maximum a posteriori model without an intercept.

    \vspace{0.05in}
    \item \textit{FAIR}: we use the implementation based on neural networks with Gumbel variable selection. We set the invariance penalty to $\gamma=36$, the number of optimization iterations to 10,000, and the variable selection threshold to $0.9$. We use an accelerated Gumbel temperature annealing schedule to accommodate the reduced number of optimization iterations.
\end{itemize}

\vspace{0.05in}
For each randomized environment configuration, we evaluate the estimated causal parent set $\widehat S$ using the F1 score,
    $$
    \mathrm{F1}=\frac{2|\widehat S\cap S^\star|}
    {2|\widehat S\cap S^\star|+|\widehat S\setminus S^\star|+|S^\star\setminus\widehat S|},
    $$
    the number of false discoveries $|\widehat S\setminus S^\star|$, and runtime.

\vspace{0.1in}
\subsection{Experiment of Figure~\ref{fig:nonlinear}}
\label{sec:app:figure-nonlinear}
We generate data from the nonlinear structural equation model described in Example~\ref{ex:nonlinear}. We fix the ambient dimension at $p=6$ and the number of environments at $|\mathcal E|=5$, while varying the sample size per environment over $n^e\in \{100,150,\ldots,500\}$. The true causal parent set has size $|S^\star|=2$. For each sample size, we conduct 100 independent replications.
We use the all-intervention mechanism described above, replacing each linear structural relationship with the nonlinear function $\tanh(2\sin(x))$. The SEM is generated with $p_{\rm act}=0.2$, $\ell=0.1$, $u=0.5$, $\ell^\star=0.5$, $u^\star=1$, $\sigma_{X,\min}=0.5$, $\sigma_{X,\max}=1$, and $\sigma_Y=0.5$. Across environments, the structural coefficients and noise variances generating the covariates are independently resampled, while the nonlinear response mechanism remains invariant.

We compare our proposed \textit{nonlinear MIP} with \textit{nonlinear ICP}, \textit{WVM}, and \textit{FAIR}. We describe the implementation details below:

\begin{itemize}
    \item Nonlinear MIP: We expand each covariate using four Fourier basis functions: $\sin(x)$, $\cos(x)$, $\sin(2x)$ and $\cos(2x)$. We set $\lambda_n=0.01$, $\gamma=0.5$, and $\rho=1$. The resulting group $\ell_0$ optimization problem is solved using coordinate descent, initialized from the empty set.

    \vspace{0.05in}
    \item \textit{Nonlinear ICP}: We use the \texttt{nonlinearICP} package with the invariant residual distribution test and GAM fitting. We set the significance level to $\alpha=0.05$ and restrict the maximum candidate set size to two.

    \vspace{0.05in}
    \item \textit{WVM}: We use significance level $\alpha=0.05$ and $B=300$ bootstrap replications.

    \vspace{0.05in}
    \item \textit{FAIR}: We use the implementation based on neural networks with Gumbel variable selection. We set the invariance penalty to $\gamma=36$, the number of optimization iterations to 10,000, and the variable selection threshold to 0.9.
\end{itemize}

\vspace{0.1in}
\subsection{Experiment of Figure~\ref{fig:hidden}}
\label{sec:app:figure-hidden}
We generate data using the noise-intervention model with hidden confounding described in Example~\ref{ex:hidden}. We consider $|\mathcal{E}|=10$ environments and vary the sample size per environment over $n^e\in\{1000,1500,\ldots,5000\}$. For each sample size, we conduct 200 independent replications. We set $q=2$, $|S^\star|=4$, and independently sample $|A|,|D|\in\{2,3\}$, with eight additional independent standard Gaussian covariates. The structural coefficients are generated with $\beta_{S^\star}^\star=(1,1,1,1)^\top$, $\theta_d=0.1$ for all $d\in D$. The remaining coefficients for the relationships among $H$, $X_A$, $X_{S^\star}$, and $X_D$ are independently drawn from $\mathcal N(0,0.3^2)$. The coefficients for the direct effect of $H$ on $Y$ are independently drawn from $\mathcal N(0,1)$. We set the variance of the response noise to $\nu_Y=0.2^2$ and the variances of the covariate noise terms to $\nu_j^e=(0.2+0.05(e-1))^2$ for all $j\in A\cup S^\star\cup D$ and $e\in\{1,\ldots,10\}$. All structural coefficients remain fixed across environments. 

We compare our proposed method, \textit{MIP-hidden}, with \textit{EILLS-s}, \textit{Causal Dantzig}, \textit{BIP-s}, \textit{negDRO}, and \textit{CoCo}. For \textit{EILLS-s} and \textit{BIP-s}, we first screen the covariates using ridge regression, keeping the top 15 and 10 variables, respectively. We describe the implementation details below:
\begin{itemize}
    \item \textit{MIP-hidden}: we set $\lambda=0.5$, $\gamma=300$, and $s_{\max}=10$. We use \texttt{MIPFocus=3} and \texttt{Cuts=3}, and impose a time limit of 60 seconds.

    \vspace{0.05in}
    \item \textit{EILLS-s}: we set its penalty parameter to $\gamma=20$ and exclude the intercept.

    \vspace{0.05in}
    \item \textit{Causal Dantzig}: we use the hidden variable implementation with $\alpha=0.1$ and exclude the intercept.

    \vspace{0.05in}
    \item \textit{BIP-s}: we enumerate candidate models over the 10 screened variables, use ridge stabilization $\epsilon=10^{-4}$, and report the maximum a posteriori model without an intercept.

    \vspace{0.05in}
    \item \textit{negDRO}: we set its penalty parameter to $\gamma=20$.

    \vspace{0.05in}
    \item \textit{CoCo}: we set the regularization parameter to $\lambda=10$, learning rate to $0.01$, and maximum number of iterations to $10{,}000$, using equal environment weights and excluding the intercept.
\end{itemize}

\end{document}